\documentclass[12pt]{article}
\usepackage{xcolor}
\usepackage{amsmath,amssymb,amsthm,bbm}
\usepackage[colorlinks,citecolor=blue,urlcolor=blue,linkcolor=blue]{hyperref}
\usepackage{fullpage}
\usepackage{natbib}
\usepackage[onehalfspacing]{setspace}
\usepackage{dsfont}
\usepackage{tikz}
\usepackage{pgfplots}
\numberwithin{equation}{section}
\AtBeginDocument{%
   \setlength\abovedisplayskip{0.05in}
   \setlength\belowdisplayskip{0.08in}}

\newtheorem{assumption}{Assumption}
\newtheorem{definition}{Definition}
\newtheorem{lemma}{Lemma}
\newtheorem{proposition}{Proposition}
\newtheorem{corollary}{Corollary}
\newtheorem{theorem}{Theorem}

\newtheorem{example}{Example}
\newtheorem{obs}{Observation}
\newcommand{\E}{\mathbb{E}}
\newcommand{\R}{\mathbb{R}}

\newcommand{\m}[1]{\ensuremath{\boldsymbol {#1}}}

\title{Complexity and Misspecification\thanks{We thank Simone Cerreia-Vioglio, Piotr Dworczak, Giacomo Lanzani, Massimo Marinacci, Lasse Mononen, Arthur Robson, Larry Samuelson, Jakub Steiner,  and Tomasz Strzalecki for helpful comments, and NSF grant SES-2417162 for financial support.}}
\author{Drew Fudenberg\thanks{Department of Economics, MIT, drew.fudenberg@gmail.com} 
        \and Florian Mudekereza\thanks{Department of Economics, MIT, florianm@mit.edu}}
\date{\today}
\begin{document}

\maketitle
\thispagestyle{empty}

\begin{abstract}

We propose a tractable model of repeated decision problems that combines concern about model misspecification, as in robust control, with a complexity cost, such as Shannon entropy, that makes pessimistic beliefs trade off statistical plausibility against simplicity. In a static setting, stronger complexity aversion selects more concentrated worst-case beliefs and tilts choice toward actions whose adverse scenarios are harder to summarize with a simple narrative. In a dynamic learning environment,  complexity aversion can eliminate the endogenous cycles generated by misspecification concerns alone. We use the model to explain  scale heterogeneity in discrete choice, probability neglect, and home bias.

\par\noindent\textit{Keywords}: learning, complexity, misspecification, robustness.
\end{abstract}

\newpage
\setcounter{page}{1}

\section{Introduction}

Economic agents routinely rely on simplified models to make decisions in complex
environments such as financial markets, insurance, and macroeconomic forecasting.
These models distill reality into a small number of parameters and probabilistic
relationships, but agents may worry that such structured representations are
only approximations of the true data-generating process. A large literature on robust
control, beginning with \citet{hansen01,hansen08}, addresses this concern by modeling
decision makers who choose actions to perform well under worst-case distortions of a
reference model. In  this framework,  concerns about misspecification are captured by allowing Nature to distort the reference model, with larger departures penalized according to relative entropy.  But these penalties depend only on statistical distance: they do not distinguish between a diffuse worst-case belief and a simple, concentrated adverse narrative.

In contrast, a
long tradition in statistics and learning theory dating back to Occam’s razor emphasizes a preference
for simpler explanations over complex ones \citep{box76,mdl78}. This raises a natural
question: how should an agent trade off concern about misspecification against a
preference for simple, plausible worst-case narratives? We answer this question by adding a  penalty on the entropy of the worst-case distortion. In the baseline model, we measure complexity using Shannon entropy, because it preserves the convex-duality structure underlying robust control and yields a closed-form Gibbs distortion.  Importantly, we penalize the entropy of worst-case distortions, not the richness of the model class itself, so simplicity disciplines the construction of adversarial explanations rather than restricting which structured models the agent considers.

Our analysis builds on \citeauthor{miss25}’s (\citeyear{miss25})  framework for decision making under model misspecification, which distinguishes uncertainty about models from concerns that all of the models being considered may be wrong. \citet{lanzani25} studies the dynamic implications of this approach when misspecification concerns adjust endogenously as  model fit changes. We extend this dynamic framework by adding a preference for simple worst-case beliefs: we penalize the entropy of the distortion, so among all equally pessimistic worst-case scenarios, Nature prefers the one that is most concentrated.  In the main text we suppose that the complexity penalty is proportional to the entropy of the distribution, so that the  worst-case beliefs admit a closed-form characterization with
a Gibbs structure, combining an exponential tilt in utilities with a power
transformation of the structured model.  However, this functional form is not essential; as we explain in Online Appendix \ref{app:general_complexity}, it suffices to use the more general additive perturbations of \citet{fudenberg2015stochastic}.

We first study this criterion in a static environment.  The key result in this setting is that increasing complexity aversion strictly reduces the entropy of worst-case distortions (except in degenerate cases). This creates a notable asymmetry: risky actions can be made pessimistic via concentrated, low-entropy distortions---such as overweighting a severe market crash---whereas safe actions cannot easily be made pessimistic this way; their adverse scenarios must remain diffuse. As a result, complexity aversion tilts choice toward actions that are ``safe'' in a precise sense: actions whose worst-case narratives are harder to describe simply. 
 
\par We then embed the static criterion into the dynamic learning environment of
\citet{lanzani25}. The agent updates beliefs about structured models using Bayes rule,
and their robustness parameter evolves with relative model fit. We show
that Lanzani’s dynamic selection result carries over to our setting: long-run empirical
frequencies of actions correspond to mixed robust equilibria under  the modified
criterion. Complexity aversion reshapes this equilibrium set by penalizing actions
whose worst-case explanations are simple and sharply concentrated.  In a two-action environment with a
safe and a risky arm, we show that sufficiently strong complexity aversion eliminates
all mixed robust equilibria. Thus, the endogenous \textit{cycles} between safe and risky actions that
arise under robustness alone disappear, and long-run behavior converges to a pure
action.
\par To illustrate, consider a central bank in a small open economy choosing between a hard currency peg (safe) and a managed float (risky). The float offers better stabilization when the bank’s structured model is reliable, but leaves the economy exposed to unexpected shocks when the model is misspecified. When the bank's model has a forecast failure, its  misspecification concern rises, forcing a retreat to the peg. This leads to a calm period in which realized outcomes are more in line with the bank’s model, reducing misspecification concerns and tempting a return to the float, which generates a cycle between   risky and safe regimes.
Such policy cycling is costly, as it undermines credibility and creates instability.  We show how a preference for simple worst-case narratives eliminates these cycles, ensuring long-run policy stability, and that  this stabilization can improve long-run realized payoffs.  
In our open-economy example, this implies the central bank can  enhance  welfare by prioritizing simpler economic narratives.  More generally, when multiple risky actions are available, complexity aversion
acts as a selection device: actions whose worst-case distortions have lower entropy are
eliminated first as the preference for simplicity increases.

\par In our framework, the simplicity motive does not play a uniform role across parameter regimes. When the simplicity motive is moderate relative to misspecification concerns, the preference remains within the \textit{average robust control} (ARC) family axiomatized by \citet{lanzani25w}: it is observationally equivalent to ARC, but with a transformed benchmark  within each model. However, when the simplicity motive is sufficiently strong, the criterion no longer has an ARC structure: evaluation collapses to the most adverse outcomes within each structured model, and further increases in the simplicity motive do not affect choices. When misspecification concerns vanish, the preference reduces to a subjective version of the \textit{entropy-modified expected utility} criterion axiomatized by \citet{mononen25}.

\par In addition to  stabilizing learning dynamics, our framework generates new explanations for several well-documented empirical phenomena.  First, we apply our framework to discrete choice problems by embedding complexity-augmented robustness into the rational inattention framework of \citet{matvejka15}. When the agent's worst-case narrative concentrates on specific adverse states, choice becomes less stochastic in  those states. This provides a microfoundation for scale heterogeneity in discrete choice \citep{swait93,fiebig10} and probability neglect in behavioral economics \citep{sunstein02,sunstein03}. 
Second, we apply our framework to the stochastic-growth setting of \citet{growth22}. Standard explanations of equity home bias are invariant to relabeling of loss states and therefore cannot distinguish assets that share the same reference payoff distribution. A preference for simple worst-case narratives breaks this invariance: a foreign asset whose losses are concentrated in a small number of focal scenarios is perceived as more fragile, generating home bias even without expected-return differences, hedging motives, or market segmentation. The same degree of misspecification can generate very different growth losses depending on how returns load on states, which matches  the empirical observation that large uncertainty shocks can persist with much smaller growth losses than standard models would imply  \citep[e.g.,][]{shock14,shock17}.

\subsection{Related Work}

Our work sits at the intersection of robust control, information theory, and 
learning theory. We highlight some key connections:

\paragraph{Robust control and misspecification.}

Our framework is grounded in the literature on variational preferences for decision-making under ambiguity \citep{mmr06}. \citet{hansen01,hansen08} introduced and developed the multiplier criterion, where agents form worst-case beliefs subject to a relative entropic penalty to account for potential misspecification of a reference model. \citet{strz11} provided the axiomatic foundations for this criterion within the class of variational preferences. Subsequently, \citet{miss25} provided a formal separation between  ambiguity aversion and fear of misspecification in a maxmin  setting, and also outlined a Bayesian version of their model. 
 \citet{lanzani25}  axiomatized a special case of this Bayesian robust approach for the multiplier criterion and developed its learning dynamics, where the concern for misspecification evolves endogenously with model fit. We extend Lanzani’s dynamic framework by incorporating a preference for simple worst-case beliefs: Instead of fearing  any statistically plausible distortion, our agent specifically fears distortions that are both plausible and parsimonious. Specifically, simplicity is penalized using  Shannon entropy in our baseline analysis. This is consistent with  \citeauthor{tomasz24}'s (\citeyear{tomasz24}) use of an entropic penalty in variational Bayes settings to microfound non-Bayesian updating rules such as the ``exponential'' updating rule documented in \citet{ben19} and \citet{rabin21}. More broadly, the use of worst-case beliefs as a model of ambiguity aversion originates with
\citet{gilboa89}. Our approach departs from multiple-prior models by endogenizing the
selection of pessimistic beliefs through relative entropy and complexity penalties, which yields
sharp comparative statics in both static and dynamic environments.

\paragraph{Expected utility and Complexity Aversion} An adjacent literature studies complexity aversion without model uncertainty. On the experimental side, \citet{bernheim20} shows that valuations of lotteries are sensitive to event-splitting manipulations,  \citet{enke23} shows that a measure of cognitive uncertainty  predicts systematic biases in risky choice,  \citet{enke25} extends this perspective to intertemporal choice, and  \citet{ortoleva25} provides evidence that  choice patterns are shaped by complexity considerations.

\par The closest theoretical paper is \citet{gabaix25}, which introduces ``first-order'' complexity aversion through imperfect signals about the optimal action.  We show that the high-simplicity version of our framework generates the same reduced-form objective as first-order complexity aversion in the paper's leading examples. \citet{mononen25} provides an axiomatic foundation for the entropy-modified expected utility using event-splitting considerations;  our framework nests a subjective version of this model. \citet{puri25} studies simplicity in risky choice through the support size of lotteries,  a different measure of complexity than entropy. A bit further afield, \citet{oprea20} studies the complexity of decision rules using automata theory,  and \citet{lipman95} models bounded rationality using coarse representations. 

\paragraph{Narratives and coarse reasoning.}
\citet{shiller17} defends the inclusion of ``narratives'' in formal economic analysis, defining them as simple, contagious explanations of events that drive decision-making. This focus on simplicity aligns with theories of coarse reasoning, such as \citet{mullainathan2008coarse}, where agents act on limited representations by grouping distinct states into broad categories. Our framework operationalizes this preference for simple narratives through an entropic penalty on worst-case beliefs. By favoring low-entropy distortions, our model captures an agent who is most concerned by simple, concentrated explanations of failure,  effectively ``coarsening'' the worst-case scenario.

\paragraph{Information-theoretic preferences.}
The entropic penalty on worst-case beliefs relates to rational inattention 
\citep{sims03} and information-theoretic models of choice \citep{matvejka15}. 
In those frameworks, agents incur information processing costs proportional to 
entropy reduction. Here, we interpret the complexity parameter as a shadow price on the entropy of 
worst-case scenarios, linking preferences for simple explanations 
to information capacity constraints.\footnote{Formally, since our baseline criterion adds entropy linearly to Nature’s minimization problem, it is the Lagrangian form of an equivalent problem in which worst-case beliefs are chosen subject to an upper bound on their entropy. Hence, our complexity parameter is naturally interpreted as the shadow price of one extra unit of information-processing capacity, available to construct the adversarial belief.}

\paragraph{Model selection and learning.}
While minimum description length (MDL; \citeauthor{mdl78}, \citeyear{mdl78}) 
is a popular complexity measure for model selection, it is designed for settings 
with fixed exogenous data. Our framework differs in two ways: (i) data is 
endogenously generated by the agent's repeated action choices, and (ii) complexity 
operates through the agent's construction of worst-case beliefs within a robust 
criterion, not through model selection itself. \citet[][Section 6.3]{gr19} surveys some challenges of using MDL under misspecification. More generally, the idea  
that simpler models tend to be more robust motivates our approach. For example, \citet{blumer87} 
shows that hypotheses with lower VC dimension incur smaller generalization error 
in learning problems. We operationalize this principle through entropy penalties 
on worst-case beliefs: agents prefer concentrated explanations for observed 
data, which embodies the Occam's razor intuition that simpler models are more 
defensible under misspecification.

\section{Static Environment}
\label{sec:static}
This section  introduces the static complexity-augmented criterion, characterizes the 
worst-case beliefs in closed form, and shows  how complexity aversion 
shapes pessimistic distortions.

\subsection{Primitives and running example}

Let $A$ be a finite set of actions and $Y$ a finite set of outcomes. For each
$a\in A$ and $y\in Y$, the agent's per-period utility is $u(a,y)\in\R$.

Following \citet{hansen01} and \citet{lanzani25}, a \emph{structured model} specifies action-dependent outcome distributions
$\m{q}:=(q_a)_{a\in A}\in\Delta(Y)^A$.\footnote{Throughout, we use bold notation for distributions in $\Delta(Y)^A$ to distinguish them from those in $\Delta(Y)$.} The agent entertains a finite set of such models
$Q\subset\Delta(Y)^A$ and holds a posterior belief $\pi\in\Delta(Q)$.

For any distributions $p,q\in\Delta(Y)$ with $p\ll q$, let
$$
R(p\Vert q)
:=\sum_{y\in Y} p(y)\log\frac{p(y)}{q(y)},
\qquad
H(p):=-\sum_{y\in Y} p(y)\log p(y)
$$
denote relative entropy or Kullback-Leibler (KL) divergence and Shannon entropy, respectively.

\begin{assumption}[Full support]\label{ass:full_support}
For each $\m{q}\in Q$, each $a\in A$, and each $y\in Y$, $q_a(y)>0$.
\end{assumption}

Throughout, we will use the following running  example of an agent who must choose between a constant ``safe'' payoff and a state-dependent ``risky''
payoff under misspecified models. The example is rich enough to illustrate how worst-case beliefs
are distorted and how the new parameter $\mu$ can flip the ranking between safe and risky,
yet simple enough that all expressions can be written in closed form and reused in the
dynamic analysis.

\begin{example}[Safe vs risky arm]\label{ex:running}
Let $A=\{r,s\}$, where $r$ is a \emph{risky} action and $s$ is a \emph{safe} action.
Let $Y=\{g,b\}$ with $g$ a ``good'' outcome and $b$ a ``bad'' outcome. Utilities are
$$
  u(r,g)=1,\quad u(r,b)=0,\qquad
  u(s,g)=u(s,b)=\bar u\in(0,1).
$$
Thus, $s$ delivers a constant payoff $\bar u$ and $r$ delivers a high payoff in $g$ and a
low payoff in $b$.

The set of structured models is $Q=\{\m{q}^H,\m{q}^L\}$, where for each $\m{q}\in Q$,
$$
  q_r^H(g)=p_H,\quad q_r^H(b)=1-p_H,\qquad
  q_r^L(g)=p_L,\quad q_r^L(b)=1-p_L,
$$
with $p_H>p_L$ so that $\m{q}^H$ is ``optimistic'' and $\m{q}^L$ is ``pessimistic'' about the risky
arm. Under the safe arm both models agree:
$
  q_s^H = q_s^L = \tfrac{1}{2}\delta_g+\tfrac{1}{2}\delta_b.$
We take an arbitrary full-support prior $\pi_0\in\Delta(Q)$, for instance
$\pi_0(\m{q}^H)=\pi_0(\m{q}^L)=1/2$.
\end{example}

\subsection{Robust criterion with complexity aversion}
We now introduce the complexity-augmented criterion. For each action $a$ and model $q$, Nature chooses a distorted distribution $p$ to minimize payoffs, subject to a KL penalty (robustness) and an entropic penalty (complexity).  Fix parameters $\lambda>0$ and $\mu\in\mathbb{R}$. For each action $a\in A$ and
model $\m{q}\in Q$, the agent evaluates $a$ by solving
\begin{equation}
\label{eq:static_criterion}
v_{\lambda,\mu}(a;\m{q})
:=
\min_{p_a\in\Delta(Y)}
\left\{
\E_{p_a}[u(a,y)]
+ \frac{1}{\lambda}R(p_a\Vert q_a)
 + \mu H(p_a)
\right\}.
\end{equation}
The term $R(p_a\Vert q_a)/\lambda$ in \eqref{eq:static_criterion} is the standard entropic penalty that discourages
Nature from choosing models $p_a$ that are too far from the structured model $q_a$, while the term $\mu H(p_a)$
penalizes or rewards the entropy of the distortion itself depending on the sign of $\mu$. When $\mu=0$, (\ref{eq:static_criterion}) becomes \citeauthor{hansen01}'s (\citeyear{hansen01}) multiplier criterion. For $\mu>0$, the adversary favors low-entropy (more
concentrated) distortions; for $\mu<0$, the adversary prefers high-entropy (more diffused)
distortions. Notice that the penalty on the entropy is consistent with the penalty in \citet[][eq. (6)]{tomasz24}.\footnote{\citet[][p. 6]{tomasz24}  also interprets $\mu>0$ as capturing an agent who prefers to have ``simple theories of the world.''} When $\lambda\rightarrow0$ and $\mu\neq0$, (\ref{eq:static_criterion}) becomes $\E_{q_a}[u(a,y)]+\mu H(q_a)$, which is a subjective version of  \citeauthor{mononen25}'s (\citeyear{mononen25}) entropy-modified expected utility preference that  captures an attitude toward complexity when the agent has no misspecification concerns. When $\lambda\rightarrow\infty$ and $\mu=0$, (\ref{eq:static_criterion}) becomes \citeauthor{gilboa89}'s (\citeyear{gilboa89}) maxmin criterion where the set of models is the entire simplex $\Delta(Y)$, which reflects an agent who is excessively concerned about misspecification. Section \ref{sec:role_mu} discusses the interpretation of $\mu$ and Online Appendix \ref{app:general_complexity} shows that all our main results extend to a more general class of complexity functionals that nests Shannon entropy $H(\cdot)$.

Given a posterior $\pi\in\Delta(Q)$, the overall value of action $a$ is
\begin{equation}\label{eq:posterior_value}
  V_{\lambda,\mu}(a;\pi)
  := \sum_{\m{q}\in Q} v_{\lambda,\mu}(a;\m{q}) \pi(\m{q}).
\end{equation}
The associated static choice rule chooses any action in
$$
  \operatorname{BR}^{\mu}_\lambda(\pi)
  :=
  \arg\max_{a\in A} V_{\lambda,\mu}(a;\pi).
$$

The inner minimization \eqref{eq:static_criterion}
represents an adversarial process: given a model $\m{q}$ and action $a$, Nature chooses a
distorted distribution $p_a$ that makes payoffs low, but pays a penalty for
moving away from $q_a$ and for using a complicated (high-entropy) explanation. The 
maximization over $a$ in $\operatorname{BR}^{\mu}_\lambda(\pi)$ captures the agent's robust choice
 given their posterior $\pi$ over $Q$. When $\mu=0$ (no complexity penalty), \eqref{eq:posterior_value} coincides with the \emph{average robust control} (ARC) criterion in  \citet[][]{lanzani25}, which is a special case of the  ``smooth Bayesian'' criterion in  \citet[][]{miss25}.

Thus, the new parameter $\mu$ trades off misspecification concerns
against complexity concerns: a higher $\mu$ means that among all ways of being pessimistic,
the agent prefers those with simpler probability distributions. Our complexity-augmented criterion in \eqref{eq:posterior_value} therefore provides a tractable and parsimonious unification of many existing decision criteria in the literature including \citet{gilboa89}, \citet{hansen01}, \citet{lanzani25}, and \citet{mononen25}. Section \ref{subsec:gabaix} shows how our framework relates to \citet{gabaix25}.

The next lemma characterizes the solution to \eqref{eq:static_criterion} in closed form.

\begin{lemma}\label{lem:closed_form}
Suppose Assumption \ref{ass:full_support} holds and define
$$
  \kappa := \frac{1}{\lambda}-\mu,\qquad
  \beta := \frac{1}{1-\lambda\mu}.
$$
If $\kappa>0$ (i.e., $\mu<1/\lambda$), then for every $(a,\m{q})\in A\times Q$, \eqref{eq:static_criterion} has a unique minimizer
\begin{equation}\label{eq:p_star_closed_form}
  \hat{p}_{\lambda,\mu}(a;\m{q})(y)
  = \frac{\exp\{-u(a,y)/\kappa\} q_a(y)^{\beta}}
         {\sum_{z\in Y}\exp\{-u(a,z)/\kappa\} q_a(z)^{\beta}}
  \qquad (y\in Y).
\end{equation}
Moreover,
$v_{\lambda,\mu}(a;\m{q})$ and $\hat{p}_{\lambda,\mu}(a;\m{q})$ are continuous in
$(\lambda,\mu,\m{q})$ on $\{\kappa>0\}$, and the following  envelope identities hold for all $(\lambda,\mu,\m{q})$ with $\kappa>0$:
\begin{equation}\label{eq:envelope_mu}
  \frac{\partial}{\partial\mu}v_{\lambda,\mu}(a;\m{q})
  = H\big(\hat{p}_{\lambda,\mu}(a;\m{q})\big),
  \qquad
  \frac{\partial}{\partial\lambda}v_{\lambda,\mu}(a;\m{q})
  = -\frac{1}{\lambda^2} 
    R\big(\hat{p}_{\lambda,\mu}(a;\m{q})\Vert q_a\big).
\end{equation}
\end{lemma}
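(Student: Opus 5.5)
The plan is to rewrite the objective in \eqref{eq:static_criterion} as a single relative-entropy problem against a tilted reference measure. We have $R(p\Vert q)=-H(p)-\sum_y p(y)\log q(y)$, so
$$
\E_p[u]+\tfrac{1}{\lambda}R(p\Vert q_a)+\mu H(p)=\sum_y p(y)u(a,y)-\tfrac{1}{\lambda}\sum_y p(y)\log q_a(y)-\kappa H(p).
$$
Because $\kappa>0$ and $1/(\lambda\kappa)=\beta$, this equals
$$
\kappa\Big[\sum_y p(y)\big(u(a,y)/\kappa-\beta\log q_a(y)\big)+\sum_y p(y)\log p(y)\Big].
$$
Let $w(y):=\exp\{-u(a,y)/\kappa\}q_a(y)^{\beta}$, which is strictly positive by Assumption \ref{ass:full_support}. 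Set $Z:=\sum_z w(z)$ and $\hat p:=w/Z$. The bracket then equals $R(p\Vert \hat p)-\log Z$, so the objective is $\kappa R(p\Vert\hat p)-\kappa\log Z$. By Gibbs' inequality $R(p\Vert\hat p)\ge 0$, with equality if and only if $p=\hat p$. This gives the unique minimizer \eqref{eq:p_star_closed_form} and the value $v_{\lambda,\mu}(a;\m q)=-\kappa\log Z$. One should check that $\hat p\ll q_a$ so that $R$ is finite, and that the domain of $p$ is the whole simplex. The function $p\mapsto R(p\Vert q_a)$ is finite on all of $\Delta(Y)$ under full support, so no boundary issues arise.

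Continuity follows from the closed form. On $\{\kappa>0\}$ we have $\lambda\mu<1$, so $\beta$ and $1/\kappa$ are continuous in $(\lambda,\mu)$. The weights $w$ are continuous and positive in $(\lambda,\mu,\m q)$ on the full-support set, so $Z>0$, and both $\hat p$ and $-\kappa\log Z$ are continuous.

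For the envelope identities, the cleanest route avoids general envelope theorems. Write $F(p;\lambda,\mu)$ for the objective. The partial derivatives of $F$ with respect to the parameters are $\partial_\mu F=H(p)$ and $\partial_\lambda F=-R(p\Vert q_a)/\lambda^2$. Since $v=\min_p F$, we have $v(\lambda',\mu')\le F(\hat p_{\lambda,\mu};\lambda',\mu')$ with equality at $(\lambda,\mu)$. So $v$ is touched from above by a function that is differentiable in the parameters, which bounds the derivative one-sidedly. On the other hand, $v=-\kappa\log Z$ is explicitly differentiable, being a composition of smooth functions on an open set. A differentiable function lying below a differentiable function with equality at a point must share its gradient there. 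This yields $\partial_\mu v=H(\hat p)$ and $\partial_\lambda v=-R(\hat p\Vert q_a)/\lambda^2$.

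As a cross-check, one can differentiate $-\kappa\log Z$ directly. Using $\partial_\mu\kappa=-1$ and $\partial_\mu\beta=\lambda\beta^2$, one gets $\partial_\mu v=\log Z-\kappa\,\E_{\hat p}[u/\kappa^2+\lambda\beta^2\log q_a]$, and substituting $\log\hat p=-u/\kappa+\beta\log q_a-\log Z$ should give $H(\hat p)$.

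The main obstacle is the step that rewrites the objective as a KL divergence. It requires recognizing that $\kappa\beta=1/\lambda$ and handling the sign of $\kappa$, since positivity of $\kappa$ is exactly what makes this a minimization of a strictly convex function. The differentiability argument is routine once the closed form is available.
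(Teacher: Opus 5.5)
Your proof is correct, and it takes a different route from the paper's. The paper proceeds in four stages:
\begin{enumerate}
\item strict convexity of $p\mapsto\sum_y[u(a,y)p(y)+\kappa p(y)\log p(y)]-\frac{1}{\lambda}\sum_y p(y)\log q_a(y)$ on the compact simplex gives existence and uniqueness;
\item a perturbation argument, using that the slope of $x\log x$ is $-\infty$ at $0$, shows the minimizer is interior;
\item Lagrangian first-order conditions then yield the Gibbs form;
\item the envelope identities follow from the standard envelope theorem after one coordinate is eliminated.
\end{enumerate}

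You instead complete the relative entropy and write the objective as $\kappa R(p\Vert\hat p)-\kappa\log Z$. This delivers existence, uniqueness, the closed form and the value $v_{\lambda,\mu}(a;\m{q})=-\kappa\log Z$ at once, and makes the interiority step unnecessary. It is the same device the paper uses later, in the proof of Theorem \ref{prop:arc_representation}.1 and for the value identity in the home-bias proof.

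Your envelope argument is also more self-contained than citing an envelope theorem. The smooth map $(\lambda',\mu')\mapsto F(\hat p_{\lambda,\mu};\lambda',\mu')$ dominates the smooth function $v$, with equality at an interior point of the open set $\{\lambda>0,\ \kappa>0\}$, so the gradients coincide. This needs only the explicit differentiability of $-\kappa\log Z$.

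What the paper's route buys is generality. The convexity-plus-first-order-conditions argument does not rely on the Shannon form. It is the one that carries over to the general complexity functionals $W$ in Online Appendix \ref{app:general_complexity}, where no completion into a single relative entropy or closed form is available.

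One small slip, in your cross-check only: since $\partial_\mu(-u/\kappa)=-u/\kappa^2$, the payoff term has the wrong sign. The correct computation is $\partial_\mu v=\log Z-\kappa\,\E_{\hat p}[-u/\kappa^2+\lambda\beta^2\log q_a]=\log Z+\E_{\hat p}[u]/\kappa-\beta\,\E_{\hat p}[\log q_a]$. This equals $H(\hat p)$ after using $\kappa\lambda\beta^2=\beta$.
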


The worst–case distortion $\hat p_{\lambda,\mu}(a;\m{q})$ has a Gibbs form: it reweights
model $q_a$ by a payoff tilt $e^{-u(a,y)/\kappa}$ and a power–prior term $q_a(y)^\beta$.
When $\mu=0$, $\beta=1$, so we recover the standard entropic distortion
$\hat p_{\lambda,0}\propto q_a e^{-\lambda u}$. When $\mu>0$, the coefficient $\kappa$
shrinks and $\beta>1$, so the distortion places more weight on low–utility outcomes and
leans more heavily on model $q_a$, resulting in a worst–case belief that is more concentrated and typically also more
pessimistic.\footnote{Higher $\mu$ does not always increase pessimism: if $q_a$ is already concentrated on high-utility states, a preference for simple narratives can amplify this prior concentration, which can make the worst-case belief less pessimistic.}

The envelope identities  show that $\mu$ and $\lambda$ operate through different channels.
The marginal effect of $\mu$ on the worst–case value equals the entropy of the distortion,
while the marginal effect of $\lambda$ is proportional to minus the KL divergence from $q_a$.
Thus, misspecification allows the worst–case belief to move farther
from the reference model,  complexity concerns restrict how diffuse
that belief can be.

The next result shows that increasing $\mu$ strictly reduces the entropy of the worst-case
distortion except in the knife–edge case where $u(a,\cdot)$ is affine in $\log q_a(\cdot)$.
\begin{proposition}[Entropy strictly decreasing in complexity aversion]
\label{prop:entropy_decreasing_mu}
Fix $\lambda > 0$. For each $\mu < 1/\lambda$ and each $(a,\m{q})$, let 
$\hat{p}_{\lambda,\mu}(a;\m{q})$ be the worst-case distortion from 
Lemma \ref{lem:closed_form}. Assume that for every $(a,\m{q})$ there are no constants $b_1, b_2$ such that  payoffs satisfy: 
$u(a,y) = b_1 + b_2 \log q_a(y)$ for all $y \in Y$. Then, $\mu \mapsto H(\hat{p}_{\lambda,\mu}(a;\m{q}))$ is strictly decreasing.
\end{proposition}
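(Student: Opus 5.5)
Fix $\lambda>0$ and $(a,\m{q})$, and write $q=q_a$, $u(y)=u(a,y)$, $\ell(y)=\log q(y)$; these are well defined by Assumption \ref{ass:full_support}. By Lemma \ref{lem:closed_form}, the worst-case distortion is a Gibbs distribution
$$\hat p_\mu(y)\propto \exp\{\phi_\mu(y)\},\qquad \phi_\mu(y):=-u(y)/\kappa+\beta\,\ell(y),$$
with $\kappa=1/\lambda-\mu$ and $\beta=1/(1-\lambda\mu)=1/(\lambda\kappa)$. Hence $\phi_\mu=\tfrac{1}{\kappa}\big(-u+\tfrac1\lambda\ell\big)=\theta\,g$, where $\theta:=1/\kappa$ and $g:=\ell/\lambda-u$ does not depend on $\mu$. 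The plan is therefore to reparametrize. As $\mu$ ranges over $(-\infty,1/\lambda)$, the map $\mu\mapsto\theta=1/(1/\lambda-\mu)$ is strictly increasing onto $(0,\infty)$. So it suffices to show that for the one-parameter exponential family $p_\theta(y)\propto e^{\theta g(y)}$, the entropy $\theta\mapsto H(p_\theta)$ is strictly decreasing on $(0,\infty)$.

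For that family, write $Z(\theta)=\sum_y e^{\theta g(y)}$. Then $H(p_\theta)=\log Z(\theta)-\theta\,\E_{p_\theta}[g]$. Using $(\log Z)'=\E_{p_\theta}[g]$ and $\tfrac{d}{d\theta}\E_{p_\theta}[g]=\operatorname{Var}_{p_\theta}(g)$, differentiation gives
$$\frac{d}{d\theta}H(p_\theta)=\E_{p_\theta}[g]-\E_{p_\theta}[g]-\theta\operatorname{Var}_{p_\theta}(g)=-\theta\operatorname{Var}_{p_\theta}(g).$$
Since $\theta>0$, the derivative is $\le 0$. It is strictly negative whenever $g$ is non-constant on $Y$, because $p_\theta$ has full support, so the variance is positive. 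Then $dH/d\mu=(dH/d\theta)(d\theta/d\mu)<0$ with $d\theta/d\mu=\theta^2>0$. A derivative that is negative everywhere on an interval gives strict monotonicity.

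The remaining step is to translate the hypothesis. The function $g$ is constant exactly when $u(y)=b_1+\tfrac1\lambda\log q(y)$ for some $b_1$. This is excluded by the assumption that there are no $b_1,b_2$ with $u=b_1+b_2\log q$; take $b_2=1/\lambda$. The assumption is in fact stronger than needed. As an alternative to the $\theta$-reparametrization, one can also check the sign directly through the envelope identity \eqref{eq:envelope_mu}, since $dH/d\mu=\partial^2 v/\partial\mu^2$, but the exponential-family route is cleaner.

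The main obstacle is the bookkeeping in the reparametrization: seeing that the two $\mu$-dependent coefficients $1/\kappa$ and $\beta$ combine into a single scalar multiple of a fixed function $g$, with the ratio $\beta\kappa=1/\lambda$ independent of $\mu$. Once this is verified, the proof is the standard fact that entropy decreases along a Gibbs family as inverse temperature rises.
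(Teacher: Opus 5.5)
Your proposal is correct and follows essentially the paper's proof. The paper writes $\hat p_{\lambda,\mu}(a;\m{q})\propto\exp\{\beta(\mu)\varphi\}$ with $\varphi=\log q_a-\lambda u(a,\cdot)$; your choice $\theta=\lambda\beta$, $g=\varphi/\lambda$ gives the identical exponent $\theta g=\beta\varphi$. It then proves $H'(t)=-t\,\mathrm{Var}_{p_t}(\varphi)<0$ for the finite exponential family and composes with the strictly increasing map $\mu\mapsto\beta(\mu)$, just as you do, and your remark that only the case $b_2=1/\lambda$ of the hypothesis is needed is also right.
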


  Proposition \ref{prop:entropy_decreasing_mu} shows
that $\mu$ really is a complexity parameter: as the agent becomes more complexity averse
(higher $\mu$), the ``worst-case story'' about outcomes becomes simpler and more
concentrated. The condition that $u(a,\cdot)$ and $q_a$ are nondegenerate rules out knife-edge
cases where the distortion cannot move probability mass in a meaningful way. In those
generic cases, the adversary optimally sacrifices entropy to make low-payoff states more
likely when the agent puts more weight on simplicity.

\subsection{Static illustration in the running example}

This section illustrates the static effect of $\mu$ using the safe-versus-risky environment from
Example \ref{ex:running}. For concreteness, we fix some specific values for our primitives: $\lambda=1$, $p_H=\frac{7}{10}$, $p_L=\frac{3}{10}$,
$\bar u=\frac{7}{12}$, and consider a single model $\m{q}$ for the risky arm with $q_r(g)=\frac{7}{10},q_r(b)=\frac{3}{10}$, so $\pi$ is degenerate at $\m{q}$.

\begin{example}\label{ex:static_running}
\noindent
\emph{--- Risky arm.} For the risky action $r$, the worst-case distortion
$\hat{p}_{\lambda,\mu}(r;\m{q})$ is given by \eqref{eq:p_star_closed_form}. A simple numerical
calculation yields:
\begin{itemize}
  \item $\mu=0$ (benchmark). Here $\kappa=1,\beta=1$ and
  $$
    \hat{p}_{\lambda,\mu}(r;\m{q})(g)\approx 0.462,\qquad \hat{p}_{\lambda,\mu}(r;\m{q})(b)\approx0.538.
  $$
  \item $\mu=\frac{4}{10}$ (complexity aversion). Here $\kappa=0.6,\beta\approx1.667$ and
  $$
    \hat{p}_{\lambda,\mu}(r;\m{q})(g)\approx 0.437,\qquad \hat{p}_{\lambda,\mu}(r;\m{q})(b)\approx0.563.
  $$
  \item $\mu=-\frac{4}{10}$ (complexity love). Here $\kappa=1.4,\beta\approx0.714$ and
  $$
    \hat{p}_{\lambda,\mu}(r;\m{q})(g)\approx 0.473,\qquad \hat{p}_{\lambda,\mu}(r;\m{q})(b)\approx0.527.
  $$
\end{itemize}
As $\mu$ increases, the distortion becomes more concentrated and more pessimistic.

\medskip
\noindent
\emph{--- Safe arm.} For the safe action $s$, the payoff $u(s,y)=\bar u$ is constant across
states. The Gibbs form \eqref{eq:p_star_closed_form} then implies that the payoff tilt
cancels and $\hat{p}_{\lambda,\mu}(s;\m{q})$ is uniform on $\{g,b\}$ for all $\mu$. Its entropy
is therefore maximal and independent of $\mu$, whereas for the risky arm the worst-case
distribution becomes less entropic as $\mu$ rises.

\medskip
\noindent
\emph{--- Ranking of actions.}  $V_{\lambda,\mu}(r;\pi)$ and
$V_{\lambda,\mu}(s;\pi)$ are computed by plugging $\hat{p}_{\lambda,\mu}$ into
\eqref{eq:static_criterion} and averaging with respect to $\pi$. For the parameter values
above, we can verify numerically that:
\begin{itemize}
  \item for $\mu=0$ the risky arm is strictly preferred to the safe arm;
  \item for $\mu>0$ sufficiently large, the safe arm becomes strictly preferred because the
  worst-case distortion against $r$ becomes very concentrated on the bad outcome;
  \item for all $\mu<0$, the risky arm is more attractive because
  worst-case beliefs are forced to be diffuse, so there is a unique cutoff $\mu^*\in(0,\frac{4}{10})$ at which the ranking flips. 
\end{itemize}
\end{example}
This example illustrates that $\mu$ reweights the risky–safe tradeoff: 
complexity aversion allows simple (concentrated) worst-case narratives 
for risky outcomes, pushing toward safety.

\section{Dynamic Environment and Long-Run Behavior}
\label{sec:dynamic}

We embed our static criterion into the dynamic learning environment of  \citet{lanzani25}: the agent chooses actions over time,
updates beliefs over a finite set of structured models by Bayes rule, and adjusts their
misspecification concern based on relative model fit. 
\subsection{Dynamic environment}

As in the static environment, there is a finite set of actions $A$ and a finite outcome space $Y$.
The agent’s per-period payoff is given by a utility function
$u:A\times Y\to\mathbb{R}$.
The true data-generating process (DGP) is
$\m{p}^\star=(p^\star_a)_{a\in A}\in\Delta(Y)^A$. The agent entertains a finite set of structured models
$Q\subset\Delta(Y)^A$ and begins with a full-support prior $\pi_0\in\Delta(Q)$.
At each time $t \in \mathbb{N}$, the agent observes outcomes, updates the posterior $\pi_t$ via Bayes rule, and updates the misspecification concern $\lambda_t$ based on how much the best model in $Q$ underperforms relative to an unrestricted class of models. 
For $t\ge0$, let
$
  h_t=(a_1,y_1,\dots,a_t,y_t)\in (A\times Y)^t
$
denote the history up to time $t$, with $h_0=\varnothing$.
Let $H_t=(A\times Y)^t$ be the set of histories of length $t$,
$\mathcal{F}_t$ the sigma-field generated by $H_t$,
and $\mathcal{H}:=\bigcup_{t\ge0}H_t$ the set of all finite histories.

Following \citet{lanzani25}, the agent evaluates the adequacy of the structured model
class by comparing its likelihood to that of the unrestricted benchmark $\Delta(Y)^A$.\footnote{Since $Y$ is finite and all structured models have full support, we follow \citeauthor{lanzani25}'s (\citeyear[][Section 2.2]{lanzani25}) suggestion by choosing the entire simplex $\Delta(Y)^A$ as the set of ``alternative unstructured'' models.}
The agent's misspecification concern is allowed to depend on how well their models fit the data.  
Specifically, at each history $h_t$, the agent's  misspecification concern depends on the  log-likelihood ratio (LLR) between the structured model class 
$Q$ and the unrestricted model class $\Delta(Y)^A$:
$$
  \operatorname{LLR}(h_t; Q)
  := 
  -\log \frac{\max_{\m{q} \in Q} \prod_{\tau=1}^t q_{a_\tau}(y_\tau)}
             {\max_{\m{p} \in \Delta(Y)^A} \prod_{\tau=1}^t p_{a_\tau}(y_\tau)},
$$
for all $t$ and all $h_t\in H_t$. The numerator is the maximum likelihood of data under the structured models 
in $Q$, whereas the denominator is the maximum likelihood under unrestricted models. 
A large LLR indicates a poor fit of $Q$ to data, which suggests a high risk of misspecification. The agent adjusts this over  time according to
\begin{equation}\label{eq:rho_def}
      \lambda_t := \frac{1}{ct} \operatorname{LLR}(h_t; Q),\qquad (t>0); 
      \end{equation}
   define this updating rule as $\Lambda(h_t):=\frac{1}{ct}\operatorname{LLR}(h_t;Q)$, where $c>0$ is a constant.  Normalizing the LLR by time makes misspecification concern only respond to persistent per-observation model failures, rather than mechanically rising just because more data have accumulated. \citet{lanzani25} provides normative justifications for  this rule by showing that it is a rationality benchmark for repeated decision problems. 
   
      At each $t$, given misspecification concern $\lambda_t$, the agent evaluates actions using
the criterion of Section \ref{sec:static}, with  complexity parameter set to
\begin{equation*}
  \mu_t := \bar{\mu}\hspace{0.02in}\lambda_t, 
\end{equation*}
where $\bar{\mu}\ge 0$ is a fixed preference parameter.  The period-$t$ evaluation for action $a$ given model $\m{q}\in Q$ becomes 
\begin{equation}\label{eq:dynamic} 
v_{\lambda_t,\bar{\mu} \lambda_t}(a;\m{q}) = \min_{p_a \in \Delta(Y)} \left\{ \E_{p_a}[u(a,y)]  +  \frac{1}{\lambda_t} R(p_a \Vert q_a)  +  \bar{\mu} \lambda_t H(p_a) \right\}  \end{equation}
which is the dynamic analogue of \eqref{eq:static_criterion}. Scaling the complexity parameter proportionally to $\lambda_t$ ensures that complexity
aversion is active only when misspecification concerns are relevant.  In contrast, a fixed $\mu_t\equiv\mu > 0$ would prevent convergence to standard Bayesian choice when the agent is correctly specified, which can reduce welfare in the long run (Online Appendix \ref{sec:welfare}). Thus, setting $\mu_t = \bar{\mu}\lambda_t$ should be viewed as a reduced-form representation of the idea that complexity aversion is \textit{countercyclical}: it tightens when model fit deteriorates.

The entropic penalty affects behavior only through the shape of the worst-case beliefs. When $\lambda_t$ is small, the entropic penalty is negligible and the agent behaves approximately as a Bayesian, with worst-case beliefs approaching the structured model. When $\lambda_t$ is large, worst-case distortions are both more pessimistic and more concentrated, reflecting a preference for simple adversarial explanations. As $\lambda_t\to0$, both penalties vanish and the evaluation converges to Bayesian expected utility under $\pi_t$ (Proposition \ref{prop:correct_spec_limit}). Since the static minimization problem preserves its convex structure, the worst-case belief retains the same closed-form characterization as in Lemma \ref{lem:closed_form}, with $\mu$ replaced by $\mu_t$. Thus, all envelope and comparative-statics arguments carry over to the dynamic setting.

\begin{assumption}[Uniform positivity of $\kappa_t$]\label{ass:kappa_pos}
There is $\bar{\lambda}<\infty$ such that $0<\lambda_t\le \bar{\lambda}$ for all $t$
a.s. Fix $\bar{\mu} \geq 0$ such that $\bar{\mu} \bar{\lambda}^2 < 1$, so
that $\kappa_t := \frac{1}{\lambda_t} - \bar{\mu} \lambda_t > 0$ for all $t$.
\end{assumption}

Given a posterior $\pi_t\in\Delta(Q)$, the overall value of action $a$ at $(\lambda_t,\bar{\mu},\pi_t)$ is
$$
  V_{\lambda_t,\bar{\mu} \lambda_t}(a;\pi_t)
  := \sum_{\m{q}\in Q} v_{\lambda_t,\bar{\mu} \lambda_t}(a;\m{q}) \pi_t(\m{q}),
$$
where $v_{\lambda_t,\bar{\mu} \lambda_t}$ is defined by \eqref{eq:dynamic}.
For every $t$, the associated best-reply correspondence is
$$
  \operatorname{BR}^{\bar{\mu}}_{\lambda_t}(\pi_t)
  :=
  \arg\max_{a\in A} V_{\lambda_t,\bar{\mu} \lambda_t}(a;\pi_t).
$$
A (pure) \textit{policy} is a measurable $\sigma:\mathcal{H}\rightarrow A$ that specifies an
action for each history. Given a policy $\sigma$ and the true DGP $\m{p}^\star$, the induced
probability measure on histories is denoted $\mathbb{P}_{\sigma}$.

\begin{definition}[$\bar{\mu}$-optimal policies]\label{def:mu_optimal}
Fix $\bar{\mu} \geq 0$ satisfying $\bar{\mu} \bar{\lambda}^2 < 1$ and the updating rule
\eqref{eq:rho_def}. A policy $\sigma$ is \emph{$\bar{\mu}$-optimal} if, for every $t$ and
every history $h_t$,
$$
  \sigma_t(h_t)\in\operatorname{BR}^{\bar{\mu}}_{\lambda_t}(\pi_t),
$$
where $\lambda_t=\Lambda(h_t)$, $\mu_t = \bar{\mu} \lambda_t$, and $\pi_t$ is the posterior
induced by history $h_t$.
\end{definition}

  A $\bar{\mu}$-optimal policy is simply one that
behaves myopically optimally in every period according to the $(\lambda_t,\mu_t)$-criterion
with $\mu_t = \bar{\mu} \lambda_t$. The agent updates beliefs and $\lambda_t$ as in
\citet{lanzani25}, but when choosing actions they  now care about the complexity of worst-case
explanations, not only about their fit. This makes the dynamic impact of $\bar{\mu}$
transparent: any differences in long-run behavior compared to the baseline model are due to
this changed static objective, not to any new learning assumption.

\begin{definition}[$\Lambda$-limit frequencies]\label{def:limit_freq}
Let $\sigma$ be a $\bar{\mu}$-optimal policy and let $\alpha_t(h_t)\in\Delta(A)$ denote the
empirical frequency of actions up to time $t$ on history $h_t$:
$$
  \alpha_t(h_t)(a)
  :=
  \frac{1}{t}\sum_{\tau=1}^t \mathbbm{1}\{a_\tau=a\},
  \qquad (a\in A).
$$
A random mixed action $\alpha^\Lambda\in\Delta(A)$ is a \emph{$\Lambda$-limit frequency} if there
exists a $\bar{\mu}$-optimal policy $\sigma$ such that, for every $\varepsilon>0$,
$$
  \mathbb{P}_{\sigma}\Big(
    \limsup_{t\to\infty}
    \big\|\alpha_t(h_t)-\alpha^\Lambda\big\|_1 \le \varepsilon
  \Big)
  > 0.
$$
Equivalently, with positive probability the empirical frequencies get and remain arbitrarily
close to $\alpha^\Lambda$ along the realized history.
\end{definition}

\subsection{Mixed $c$-robust equilibria for the $\bar{\mu}$-criterion}

We adapt Lanzani's static concept of mixed $c$-robust equilibrium to the
$\bar{\mu}$-criterion. Fix the true DGP $\m{p}^\star$ and the set of structured models $Q$.
For any mixed action $\alpha\in\Delta(A)$, let
$$
  D(\m{q};\alpha)
  :=
  \sum_{a\in A}\alpha(a) R\big(p^\star_a\Vert q_a\big)
$$
denote the average model misfit of $\m{q}$ under $\alpha$, and define the set of best-fit
models
$$
  Q(\alpha)
  :=
  \arg\min_{\m{q}\in Q} D(\m{q};\alpha).
$$

\begin{definition}[Mixed $c$-robust equilibrium]
\label{def:mixed_equilibrium_mu}
A triple $(\alpha,\eta,\tau)\in\Delta(A)\times\Delta(Q)\times\R_+$ is a
\emph{mixed $c$-robust equilibrium for the $\bar{\mu}$-criterion} if
\begin{enumerate}
  \item $\text{\normalfont supp}(\eta)\subset Q(\alpha)$;
  \item for every $a\in A$ with $\alpha(a)>0$,
  $a\in\operatorname{BR}^{\bar{\mu}}_\tau(\eta)$, where the best reply is evaluated at
  $(\lambda,\mu) = (\tau, \bar{\mu} \tau)$;
  \item the misspecification concern $\tau$ equals the average misfit per unit $c>0$:
  $$
    \tau
    =
    \frac{1}{c}\min_{\m{q}\in Q}D(\m{q};\alpha)
    =
    \frac{1}{c}D(\m{q};\alpha)
    \quad\text{for all }\m{q}\in\text{\normalfont supp}(\eta).
  $$
\end{enumerate}
\end{definition}

 A mixed $c$-robust equilibrium is a self-consistent
long-run configuration. The mixed action $\alpha$ describes the frequencies with which the
agent plays each action; $Q(\alpha)$ is the set of models that best fit the data generated
by $\alpha$; the belief $\eta$ puts weight only on those well-fitting models; and $\tau$ is
the strength of misspecification concern suggested by the data. 
 Relative to
Lanzani's framework, the only change is that best replies are evaluated using the
complexity-augmented criterion.

The next result extends \citet[][Theorem 3]{lanzani25} to the $\bar{\mu}$-criterion.

\begin{proposition}
\label{prop:dynamic_selection}
Fix $\bar{\mu} \geq 0$ satisfying $\bar{\mu} \bar{\lambda}^2 < 1$ and a
$\bar{\mu}$-optimal policy $\sigma$. Let $\alpha^\Lambda$ be any $\Lambda$-limit frequency of
$\sigma$. Then, there exist
$\eta^{\bar{\mu}}\in\Delta(Q)$ and $\tau^{\bar{\mu}}\ge0$ such that
$(\alpha^\Lambda,\eta^{\bar{\mu}},\tau^{\bar{\mu}})$ is a mixed $c$-robust equilibrium for the
$\bar{\mu}$-criterion.
\end{proposition}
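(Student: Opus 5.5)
The plan is to follow the structure of the proof of \citet[][Theorem 3]{lanzani25}. The learning side of the argument (Bayesian posterior concentration and the law of large numbers for $\lambda_t$) does not depend on the static criterion. The only place the criterion enters is through upper hemicontinuity of the best-reply correspondence, so we will check that property for the $\bar\mu$-criterion using Lemma \ref{lem:closed_form}.

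Fix a $\Lambda$-limit frequency $\alpha^\Lambda$ and work on the positive-probability event on which $\limsup_t\|\alpha_t-\alpha^\Lambda\|_1\le\varepsilon$. Taking a countable sequence $\varepsilon_n\downarrow0$ and a diagonal argument, we may work along histories where $\alpha_t\to\alpha^\Lambda$ up to arbitrarily small error.

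\emph{Step 1 (misspecification concern).} Since $Y$ is finite, the unrestricted maximum likelihood for each action is the empirical outcome frequency. A strong law of large numbers for martingale differences, applied action by action along the subsequence of times each action is played, gives
$$\frac1t\operatorname{LLR}(h_t;Q)=\min_{\m q\in Q}\sum_a\alpha_t(a)\,\widehat{R}_t(a,\m q)+o(1),$$
where $\widehat R_t(a,\m q)\to R(p^\star_a\Vert q_a)$ whenever $a$ is played infinitely often. Actions played with vanishing frequency contribute $\alpha_t(a)\cdot O(1)$, which is bounded by Assumption \ref{ass:full_support}. Hence $\lambda_t\to\tau^{\bar\mu}:=\frac1c\min_{\m q}D(\m q;\alpha^\Lambda)$ up to $O(\varepsilon)$, which gives condition 3.

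\emph{Step 2 (posterior).} By the same log-likelihood computation, the posterior odds of $\m q$ against $\m q'$ behave like $\exp\{-t(D(\m q;\alpha_t)-D(\m q';\alpha_t))+o(t)\}$. So every limit point $\eta^{\bar\mu}$ of $\pi_t$ is supported on $Q(\alpha^\Lambda)$, which is condition 1. Since $\Delta(Q)$ is compact, we can pass to a subsequence along which $\pi_t\to\eta^{\bar\mu}$ and $\lambda_t\to\tau^{\bar\mu}$.

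\emph{Step 3 (optimality).} If $\alpha^\Lambda(a)>0$, then $a$ is chosen at a positive fraction of dates, and in particular along a subsequence on which $(\pi_t,\lambda_t)\to(\eta^{\bar\mu},\tau^{\bar\mu})$. At each such date $a\in\operatorname{BR}^{\bar\mu}_{\lambda_t}(\pi_t)$. Lemma \ref{lem:closed_form} gives continuity of $v_{\lambda,\mu}(a;\m q)$ in $(\lambda,\mu)$ on $\{\kappa>0\}$, and $\mu=\bar\mu\lambda$ is continuous in $\lambda$. Assumption \ref{ass:kappa_pos} ensures $\kappa\ge 1/\bar\lambda-\bar\mu\bar\lambda>0$ uniformly, so the limit stays in the continuity region. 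It follows that $V_{\lambda,\bar\mu\lambda}(a;\pi)$ is jointly continuous in $(\lambda,\pi)$ and the argmax is upper hemicontinuous. Hence $a\in\operatorname{BR}^{\bar\mu}_{\tau^{\bar\mu}}(\eta^{\bar\mu})$, which is condition 2.

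The boundary case $\tau^{\bar\mu}=0$ needs a separate check: there $v\to\E_{q_a}u$, and continuity at $\lambda=0$ follows from the closed form because $\kappa\to\infty$ and $\beta\to1$.

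\emph{Main obstacle.} The hardest step is Step 3's synchronization. Conditions 1–3 must hold simultaneously at one common subsequence. Moreover, "positive frequency" of $a$ must be turned into "$a$ is played at dates where $(\pi_t,\lambda_t)$ is near the limit." The difficulty is that $\pi_t$ need not converge when $Q(\alpha^\Lambda)$ is not a singleton.

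I would handle this as in Lanzani. First show that $\lambda_t$ converges, so its limit is common to all dates. Then, for each action $a$ in the support, extract the set of limit points of $\pi_t$ along the dates where $a$ is played. Finally, use a convexity argument: $V$ is linear in $\pi$, so if $a$ is optimal at $\eta_a$ and the convex combination of these limits is formed, a single $\eta$ must be chosen so that all support actions are best replies. If this combination step fails, I would fall back on the fact that posterior log-odds between best-fit models move by $O(1)$ per period. That means $\pi_t$ varies slowly, and every support action can be shown to be optimal at a common limit point of $\pi_t$.
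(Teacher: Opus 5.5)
Your reduction is the paper's proof. The paper only observes that the learning objects (Bayes updating over the finite $Q$, and $\lambda_t=\Lambda(h_t)$) are unchanged, so \citet[][Theorem 3]{lanzani25} identifies the limit concern $\tau^{\bar\mu}$ and the posterior concentration on $Q(\alpha^\Lambda)$ verbatim. It then checks only that $\operatorname{BR}^{\bar\mu}_\lambda$ has nonempty closed values and a compact graph, using the continuity in Lemma \ref{lem:closed_form}, Assumption \ref{ass:kappa_pos}, and the fact that $\operatorname{BR}^{\bar\mu}_0$ is the SEU best reply. Your Step 3 continuity check supplies exactly that, including the uniform bound $\kappa\ge(1-\bar\mu\bar\lambda^2)/\bar\lambda$ and the $\lambda\downarrow 0$ limit. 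If you import the remainder of Lanzani's argument, as the paper does, you have the paper's proof.

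The gap is in the part where you try to close the synchronization step yourself.

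\begin{itemize}
\item \textbf{The convexity step does not give a common belief.} Suppose you have action-specific limits $\eta_a$ with $\operatorname{supp}\eta_a\subseteq Q(\alpha^\Lambda)$ and $a\in\operatorname{BR}^{\bar\mu}_{\tau^{\bar\mu}}(\eta_a)$. Linearity of $V$ in $\pi$ only makes each set $\{\pi: a\in\operatorname{BR}^{\bar\mu}_{\tau^{\bar\mu}}(\pi)\}$ convex. It does not produce a common point, and a convex combination of the $\eta_a$ need not make any support action optimal.

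\item \textbf{A counterexample.} Take two best-fit models and let $\theta$ be the weight on the first. Let $a$ be uniquely optimal for $\theta>2/3$, a third action $c$ uniquely optimal for $\theta\in(1/3,2/3)$, and $b$ uniquely optimal for $\theta<1/3$. Then $\eta_a$ and $\eta_b$ exist, but no belief makes both $a$ and $b$ optimal. Your intermediate-value version does work when $|A|=2$, which covers Theorem \ref{thm:mu_kills_cycles}, but the proposition is stated for general finite $A$.

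\item \textbf{The fallback fails too.} Posterior log-odds increments are bounded but do not vanish, so $\pi_t$ does not vary slowly. Even a slowly moving belief path could cross the middle region in a vanishing fraction of periods. That leaves no common limit point at which all support actions are optimal.
\end{itemize}

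So, as written, the self-contained route does not deliver a single $\eta^{\bar\mu}$. That is precisely the non-trivial content the paper imports from Lanzani's Theorem 3, with $\operatorname{BR}^{\bar\mu}$ in place of $\operatorname{BR}$.

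A related point concerns your ``diagonal argument.'' Each event $\{\limsup_t\|\alpha_t-\alpha^\Lambda\|_1\le\varepsilon\}$ can have positive probability while their intersection over $\varepsilon$ is null. The argument should therefore derive approximate equilibrium conditions on each such event separately. It should then pass to $\varepsilon\to0$ using compactness of $[0,\bar\lambda]\times\Delta(Q)$ and the closed graph of the best reply.
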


Proposition \ref{prop:dynamic_selection} says that the empirical behavior of a $\bar{\mu}$-optimal agent over the long run is described by a
mixed $c$-robust equilibrium, so only the static notion of ``best reply'' has changed. Thus, any new dynamic
phenomena we derive will be due to how $\bar{\mu}$ reshapes the set of mixed $c$-robust
equilibria.

\subsection{Correct specification and the Bayesian limit}
\label{subsec:correct_specification}

We now describe what the dynamic framework implies when the agent is \emph{correctly
specified}. In this case, the structured family $Q$ contains the true model $\m{p}^\star$, so the LLR statistic that determines
$\lambda_t$ eventually stops detecting misspecification.

\begin{assumption}[Correct specification]\label{ass:correct_spec}
There exist $\m{q}^\star\in Q$ such that $q^\star_a = p^\star_a$,
  for all $a\in A.$
\end{assumption}

Assumption \ref{ass:correct_spec} says that the agent's structured models are rich enough
to include a model $\m{q}^\star$ that coincides with the true DGP. Recall that statistic $\lambda_t=\Lambda(h_t)$ measures
how much better the best model in $Q$ fits the data than the best model in $\Delta(Y)^A$. Under
correct specification, $Q$ does \emph{not} systematically underperform relative to $\Delta(Y)^A$,
so the LLR statistic has no reason to grow, and therefore misspecification concerns will vanish in the long run.

\begin{proposition}
\label{prop:correct_spec_limit}
Suppose Assumptions \ref{ass:full_support}, \ref{ass:kappa_pos}, and
\ref{ass:correct_spec} hold. Fix $\bar{\mu} \geq 0$ satisfying
$\bar{\mu} \bar{\lambda}^2 < 1$. Let $\sigma$ be any $\bar{\mu}$-optimal policy and
let $\alpha^\Lambda$ be any $\Lambda$-limit frequency of $\sigma$. Then, on the event
$\{\alpha_t\to \alpha^\Lambda\}$, for every action $a\in A$ with $\alpha^\Lambda(a)>0$, $V_{\lambda_t,\bar{\mu}\lambda_t}(a;\pi_t)
  \longrightarrow
  \E_{p^\star_a}[u(a,y)]$ $\mathbb{P}_\sigma\text{-a.s.}$
\end{proposition}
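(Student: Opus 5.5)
The plan has three steps: show that $\lambda_t\to0$ almost surely; show that the posterior weight of models agreeing with $\m{p}^\star$ on frequently played actions tends to one; and combine both with the continuity and closed form of Lemma \ref{lem:closed_form}.

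\emph{Step 1: $\lambda_t\to0$.} Under Assumption \ref{ass:correct_spec}, the numerator of the LLR is at least $\prod_\tau p^\star_{a_\tau}(y_\tau)$. Hence
$$
0\le \operatorname{LLR}(h_t;Q)\le \sum_{a\in A} N_t(a)\,\big[\hat H_{t}(a)\big]_{\text{ML gap}},
$$
where the right-hand side is the log-likelihood ratio between the per-action empirical frequencies $\hat p_{t,a}$ and $p^\star_a$ on the $N_t(a)$ periods in which $a$ was played. This equals $\sum_a N_t(a)\,R(\hat p_{t,a}\Vert p^\star_a)$. For each action, the outcomes on the subsequence of periods in which $a$ is played are i.i.d.\ draws from $p^\star_a$; this follows from optional skipping, since the policy is adapted. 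So either $N_t(a)$ stays bounded, or the strong law and full support give $R(\hat p_{t,a}\Vert p^\star_a)\to0$. Dividing by $ct$ and using $N_t(a)/t\le1$ then yields $\lambda_t\to0$ almost surely. I expect this step to be the main obstacle. The issue is making the optional-skipping/SLLN argument uniform over the random, policy-dependent counts $N_t(a)$; a bounded $N_t(a)$ contributes at most $N_t(a)\max_y\log(1/p^\star_a(y))$, which vanishes after division by $t$. Combined with $\mu_t=\bar\mu\lambda_t\to0$, we get $\kappa_t\sim1/\lambda_t\to\infty$ and $\beta_t=1/(1-\bar\mu\lambda_t^2)\to1$.

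\emph{Step 2: posterior concentration on the played actions.} Fix the event $\{\alpha_t\to\alpha^\Lambda\}$ and an action $a$ with $\alpha^\Lambda(a)>0$, so that $N_t(a)\ge \alpha^\Lambda(a)t/2$ eventually. For any $\m{q}\in Q$ with $q_a\ne p^\star_a$, the log posterior odds of $\m{q}$ against $\m{q}^\star$ equal
$$
\log\frac{\pi_0(\m{q})}{\pi_0(\m{q}^\star)}-\sum_{b\in A} N_t(b)\Big[R(\hat p_{t,b}\Vert q_b)-R(\hat p_{t,b}\Vert p^\star_b)\Big]
$$
(up to the exact identity for empirical log-likelihoods). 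For every $b$, the bracketed term is asymptotically at least $-o(1)$ times $N_t(b)$ on played actions, because $\hat p_{t,b}\to p^\star_b$ and $R(p^\star_b\Vert q_b)\ge0$. For $b=a$ it is at least a positive constant times $t$. Hence $\pi_t(\m{q})\to0$. The terms with bounded $N_t(b)$ are bounded thanks to Assumption \ref{ass:full_support}. The conclusion is that $\pi_t$ concentrates on $\{\m{q}: q_a=p^\star_a \text{ for all } a \text{ with } \alpha^\Lambda(a)>0\}$.

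\emph{Step 3: conclusion.} Write
$$
V_{\lambda_t,\bar\mu\lambda_t}(a;\pi_t)=\sum_{\m{q}}\pi_t(\m{q})\,v_{\lambda_t,\bar\mu\lambda_t}(a;\m{q}).
$$
By the continuity in Lemma \ref{lem:closed_form}, extended to $\lambda=0$, each $v_{\lambda,\bar\mu\lambda}(a;\m{q})$ converges to $\E_{q_a}[u(a,y)]$ as $\lambda\to0$. For the extension I would use the sandwich
$$
\min_y u+\text{(penalties)}\le v_{\lambda,\mu}(a;\m{q})\le \E_{q_a}[u]+\mu H(q_a),
$$
obtained by plugging in $p=q_a$. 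For the lower bound, the Gibbs form \eqref{eq:p_star_closed_form} with $1/\kappa_t\to0$ and $\beta_t\to1$ gives $\hat p\to q_a$, and the value is at least $\E_{\hat p}[u]-|\mu_t|\log|Y|$. Because $Q$ is finite, this convergence is uniform over $\m{q}$. Finally, the values are bounded, the weights of models with $q_a\neq p^\star_a$ vanish by Step 2, and the remaining models give $\E_{p^\star_a}[u(a,y)]$. This establishes the limit $\E_{p^\star_a}[u(a,y)]$ $\mathbb{P}_\sigma$-a.s.
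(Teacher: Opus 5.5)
Your proof is correct and has the same three-part structure as the paper's. First, $\lambda_t\to0$, hence $\mu_t=\bar\mu\lambda_t\to0$. Second, the posterior concentrates on models with $q_a=p^\star_a$. Third, a sandwich for $v_{\lambda_t,\mu_t}(a;\m{q})$ is combined with uniform boundedness over the finite $Q$.

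The difference is in how the first two steps are justified. The paper cites Lanzani's Lemma 6, which gives $\lim\lambda_t=\frac1c\min_{\m{q}}D(\m{q};\alpha^\Lambda)$, equal to $0$ here since $D(\m{q}^\star;\cdot)\equiv0$. It then cites the likelihood-ratio argument of his Theorem 3 to get concentration on $Q(\alpha^\Lambda)$, and uses $D=0$ to force $q_a=p^\star_a$. You instead prove both facts directly. Plugging $\m{q}^\star$ into the numerator gives $\operatorname{LLR}(h_t;Q)\le\sum_a N_t(a)R(\hat p_{t,a}\Vert p^\star_a)=o(t)$. That same $o(t)$ quantity is what bounds the negative terms in the posterior log-odds against $\m{q}^\star$. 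This makes your argument self-contained, and your Step 1 slightly stronger: $\lambda_t\to0$ a.s. on the whole space, not only on $\{\alpha_t\to\alpha^\Lambda\}$. The uniformity issue you flag as the main obstacle disappears under the standard construction, in which the $n$-th outcome after action $a$ is the $n$-th term of a fixed i.i.d.\ $p^\star_a$ sequence.

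For the lower bound of the sandwich, the paper uses $v_{\lambda_t,0}\le v_{\lambda_t,\mu_t}$ (valid since $\mu_t\ge0$ and $H\ge0$) together with the log-sum-exp form of the multiplier value. You use instead that the Gibbs minimizer converges to $q_a$, because $1/\kappa_t\to0$ and $\beta_t\to1$. The paper's version never needs the explicit minimizer, so it carries over directly to the general complexity functionals of the Online Appendix. Yours relies on the closed form of Lemma \ref{lem:closed_form}.
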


The mechanism has two parts. First, by Proposition \ref{prop:dynamic_selection}, the
$\Lambda$-limit frequency $\alpha^\Lambda$ corresponds to a mixed $c$-robust equilibrium
$(\alpha^\Lambda,\eta^{\bar{\mu}},\tau^{\bar{\mu}})$, and the long-run posterior $\pi_t$
concentrates on the set of best-fitting models $Q(\alpha^\Lambda)$. For actions played
infinitely often, the true model $\m{q}^\star$ must belong to $Q(\alpha^\Lambda)$ under correct
specification, so the posterior eventually assigns all weight to models that coincide with
$\m{p}^\star$ on the played actions. Second, because $\mu_t = \bar{\mu} \lambda_t \to 0$ as
$\lambda_t \to 0$, both the robustness penalty and the complexity penalty vanish.
Consequently, the worst-case distortion $\hat{p}_{\lambda_t,\mu_t}(a;\m{q})$ converges to
$q_a = p^\star_a$ for any $\m{q} \in Q(\alpha^\Lambda)$, and the agent's evaluation coincides
with Bayesian expected utility.  
 The specification
$\mu_t = \bar{\mu} \lambda_t$ thus serves as ``cognitive scaffolding'': complexity
constraints are instrumental tools for learning that are removed once the true model is
identified on path.

\subsection{Complexity aversion and elimination of $\lambda$-cycles}

We specialize to the two-action case $A=\{r,s\}$ of the running example and show that
complexity aversion can eliminate $\lambda$-driven cycles. The key object is the value
difference between $r$ and $s$. Fix $(\lambda,\mu)$ with $\mu = \bar{\mu} \lambda$ and $\kappa=1/\lambda-\mu>0$, a posterior
$\pi\in\Delta(Q)$, and distinguished actions $r,s\in A$. Define the value difference
\begin{equation}\label{eq:Delta_def}
  \Delta(\lambda,\bar{\mu},\pi)
  :=
  V_{\lambda,\bar{\mu}\lambda}(r;\pi) - V_{\lambda,\bar{\mu}\lambda}(s;\pi).
\end{equation}
For $a\in\{r,s\}$, define the average entropy of the worst-case
distortions as
\begin{align}
  H_a(\lambda,\bar{\mu},\pi)
    &:= \sum_{\m{q}\in Q}  \pi(\m{q}) H\big(\hat{p}_{\lambda,\bar{\mu}\lambda}(a;\m{q})\big).
    \label{eq:Ha_def}
\end{align}
Define the \emph{switching surface} as
\begin{align}\label{eq:switch}
  S
  :=
  \big\{(\lambda,\bar{\mu},\pi): \Delta(\lambda,\bar{\mu},\pi)=0\big\},
\end{align}
i.e., the set of parameters where the agent is indifferent between $r$ and $s$.

\begin{lemma}\label{lem:Delta_derivatives}
Suppose Assumption \ref{ass:full_support} holds. Then:
\begin{enumerate}
  \item For each $(\lambda,\bar{\mu},\pi)$ with $\kappa=1/\lambda-\bar{\mu}\lambda>0$, the
  partial derivative $\partial\Delta/\partial\bar{\mu}$ exists and is given by
  $\frac{\partial}{\partial\bar{\mu}}\Delta(\lambda,\bar{\mu},\pi)
      = \lambda \big( H_r(\lambda,\bar{\mu},\pi) - H_s(\lambda,\bar{\mu},\pi) \big).$
  \item In particular, on the switching surface $S$ in eq. (\ref{eq:switch}), the sign of
  $\partial\Delta/\partial\bar{\mu}$ equals the sign of the entropy difference
  $H_r(\lambda,\bar{\mu},\pi)-H_s(\lambda,\bar{\mu},\pi)$.
\end{enumerate}
\end{lemma}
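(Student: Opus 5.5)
The plan is to reduce the statement to the envelope identity \eqref{eq:envelope_mu} in Lemma \ref{lem:closed_form}, combined with the chain rule. For fixed $(\lambda,\pi)$ and each pair $(a,\m{q})$, write $v_{\lambda,\bar{\mu}\lambda}(a;\m{q}) = g_{a,\m{q}}(\bar{\mu}\lambda)$, where $g_{a,\m{q}}(\mu):=v_{\lambda,\mu}(a;\m{q})$. On the region $\kappa=1/\lambda-\bar{\mu}\lambda>0$ we have $\mu=\bar{\mu}\lambda<1/\lambda$, which is exactly the region where Lemma \ref{lem:closed_form} applies.

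The first step is to confirm that $g_{a,\m{q}}$ is differentiable in $\mu$ with derivative $H(\hat p_{\lambda,\mu}(a;\m{q}))$. Lemma \ref{lem:closed_form} gives this directly. As a backup, or to make the envelope step self-contained, observe that $v$ is a pointwise minimum over $p\in\Delta(Y)$ of functions affine in $\mu$. It is therefore concave in $\mu$, and by Danskin's theorem it is differentiable wherever the minimizer is unique, with derivative equal to the $\mu$-coefficient evaluated at that minimizer. The Gibbs formula \eqref{eq:p_star_closed_form} supplies uniqueness, and the same formula makes the minimizer continuous.

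The second step applies the chain rule. Since $\partial(\bar{\mu}\lambda)/\partial\bar{\mu}=\lambda$, we get $\partial v_{\lambda,\bar{\mu}\lambda}(a;\m{q})/\partial\bar{\mu}=\lambda H(\hat p_{\lambda,\bar{\mu}\lambda}(a;\m{q}))$. Because $Q$ is finite, differentiating the finite sum defining $V$ term by term yields $\partial V/\partial\bar{\mu}=\lambda H_a(\lambda,\bar{\mu},\pi)$, with $H_a$ as in \eqref{eq:Ha_def}. Subtracting the expression for $s$ from the one for $r$ in \eqref{eq:Delta_def} then gives part 1.

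Part 2 follows at once, since $\lambda>0$ means multiplying by $\lambda$ does not change the sign. The restriction to the surface $S$ plays no role in the sign identity itself; it only fixes where the identity will be used later.

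The main obstacle is the differentiability claim, namely justifying that the envelope identity holds in the direction of $\bar{\mu}$ with $\lambda$ held fixed. Lemma \ref{lem:closed_form} already provides this. If one wanted a direct check, one could substitute the closed form back in to obtain the value as $-\kappa\log\sum_z e^{-u/\kappa}q_a(z)^\beta$, up to the terms produced by the reparametrization, and differentiate explicitly. I would rely on the lemma instead.
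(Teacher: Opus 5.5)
Your proposal is correct and follows essentially the same route as the paper: apply the envelope identity $\partial v_{\lambda,\mu}/\partial\mu = H(\hat p_{\lambda,\mu})$ from Lemma~\ref{lem:closed_form}, use the chain rule with $\mu=\bar{\mu}\lambda$, differentiate term by term over the finite $Q$, and subtract the values for $s$ and $r$. Part~2 then follows from $\lambda>0$. Your Danskin-style backup is a valid self-contained justification of the envelope step, since the objective is affine in $\mu$ and has a unique minimizer when $\kappa>0$, but the paper simply invokes Lemma~\ref{lem:closed_form} for this, as you ultimately do.
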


  Lemma \ref{lem:Delta_derivatives} shows that the
marginal effect of complexity aversion on the risky-safe tradeoff is governed entirely by
the difference in worst-case entropies. If the worst-case explanation of the risky arm is
more concentrated (lower entropy) than that of the safe arm, then increasing $\bar{\mu}$
tilts the value difference in favor of the safe arm. The factor $\lambda$ reflects the fact
that the effective complexity penalty is $\mu = \bar{\mu} \lambda$: at higher levels of
misspecification concern, the same increase in $\bar{\mu}$ has a larger effect on the
value difference. More formally, the dynamic evolution $\mu_t=\bar{\mu}\lambda_t$ and the $\lambda$-factor in Lemma \ref{lem:Delta_derivatives} formalize the idea that complexity aversion is endogenously countercyclical: when data fit is poor and $\lambda_t$ is high, the simplicity constraint is tighter. To obtain a clean dynamic selection result, we impose two assumptions.

\begin{assumption}[Equilibrium range of $\lambda$]\label{ass:U}
There exists a compact interval $[\underline{\lambda},\bar{\lambda}] \subset (0,\infty)$
with the following property: for every parameter $\bar{\mu} \geq 0$
satisfying $\bar{\mu} \bar{\lambda}^2 < 1$ and every mixed $c$-robust equilibrium
$(\alpha,\eta,\tau)$ for the $\bar{\mu}$-criterion in the sense of
Definition \ref{def:mixed_equilibrium_mu} such that
$\alpha(r) > 0$, 
the associated misspecification concern $\tau$ belongs to
$[\underline{\lambda},\bar{\lambda}]$.
\end{assumption}

Assumption \ref{ass:U} restricts attention to
environments in which, whenever both 
 $r$ and $s$ are in the support of the equilibrium
randomization, the associated misspecification concern $\tau$ is neither
arbitrarily small nor arbitrarily large. Instead, $\tau$ always lies in a fixed interval $[\underline{\lambda},\bar{\lambda}]$. It holds whenever the true DGP lies in the interior of the convex hull of $Q$, so that no structured model can fit the long-run data arbitrarily well or arbitrarily poorly. 

\begin{assumption}[Uniform entropy gap]\label{ass:uniform_entropy_gap}
Let $[\underline{\lambda},\bar{\lambda}]$ be as in Assumption \ref{ass:U}.
There exist constants $\bar{\mu}_0 \ge 0$ and $H^{\ast} > 0$ such that for all
$(\lambda,\pi) \in [\underline{\lambda},\bar{\lambda}] \times \Delta(Q)$ and all
$\bar{\mu} \ge \bar{\mu}_0$ satisfying $\bar{\mu}\lambda^2<1$, the following implication holds: $\Delta(\lambda,\bar{\mu},\pi)\ge 0
 \Longrightarrow 
  H_s(\lambda,\bar{\mu},\pi) - H_r(\lambda,\bar{\mu},\pi) \ge H^{\ast}.$
\end{assumption}

By Lemma \ref{lem:Delta_derivatives}, Assumption \ref{ass:uniform_entropy_gap} implies
$\frac{\partial}{\partial \bar{\mu}}\Delta(\lambda,\bar{\mu},\pi)\le -\lambda H^\ast$  when $\Delta(\lambda,\bar{\mu},\pi)\ge0$. It imposes a uniform lower bound on the entropy gap
$H_s(\lambda,\bar{\mu},\pi)-H_r(\lambda,\bar{\mu},\pi)$ over the equilibrium range
$\lambda\in[\underline{\lambda},\bar{\lambda}]$ and all admissible $(\bar{\mu},\pi)$. In particular,
whenever the agent is indifferent between risky and safe, the worst-case explanation of the safe arm is
strictly more complex than that of the risky arm, and the difference is uniformly bounded.
 In the running example, this is natural: the safe arm has
state-independent payoffs, so pessimism about $s$ cannot be generated by concentrating on
any particular state; pessimistic narratives about $s$ must remain relatively diffuse. By
contrast, pessimism about $r$ can be implemented by piling probability on the bad state,
yielding a low-entropy worst-case distribution. Thus, at any point of indifference, the
safe arm is supported by a ``richer'' worst-case story.

\begin{theorem}[Complexity aversion eliminates $\lambda$-cycles]
\label{thm:mu_kills_cycles}
Let the
misspecification concern evolve according to the normalization rule
$\lambda_t = \Lambda(h_t)$ in \eqref{eq:rho_def}. Suppose $A=\{r,s\}$ and
Assumptions \ref{ass:U} and \ref{ass:uniform_entropy_gap} hold. Then, there exists
$\bar{\mu}^{\ast}>0$ such that the following holds for every
$\bar{\mu}\in[\bar{\mu}^{\ast},1/\bar{\lambda}^{2})$:
\begin{enumerate}
  \item[\textnormal{(i)}] Every mixed $c$-robust equilibrium
  $(\alpha^{\bar{\mu}},\eta^{\bar{\mu}},\tau^{\bar{\mu}})$ for the $\bar{\mu}$-criterion
  satisfies $\alpha^{\bar{\mu}}(s)=1$.
  \item[\textnormal{(ii)}] Thus, for
  all $\bar{\mu}\in[\bar{\mu}^{\ast},1/\bar{\lambda}^{2})$ every $\Lambda$-limit frequency $\alpha^\Lambda$ of any
  $\bar{\mu}$-optimal policy is pure and equals the safe action. In particular, the
  $\lambda$-driven cycles between $r$ and $s$ that arise in \citeauthor{lanzani25}'s (\citeyear{lanzani25}) $\bar{\mu}=0$
  benchmark cannot occur when $\bar{\mu}$ is sufficiently large.
\end{enumerate}
\end{theorem}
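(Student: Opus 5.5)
The plan is to argue by contradiction on the equilibrium conditions, using Lemma \ref{lem:Delta_derivatives} together with Assumption \ref{ass:uniform_entropy_gap} to show that $\Delta$ becomes uniformly negative on the relevant range of $(\lambda,\pi)$ once $\bar{\mu}$ is large. Part (ii) will then follow from Proposition \ref{prop:dynamic_selection}.

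\textbf{Step 1: a uniform bound on $\Delta$.} Since $\Delta$ is continuous in $(\lambda,\pi)$ on the compact set $[\underline{\lambda},\bar{\lambda}]\times\Delta(Q)$ (Lemma \ref{lem:closed_form}), I take $M:=\max_{\lambda,\pi}\Delta(\lambda,\bar{\mu}_0,\pi)<\infty$. Here I assume $\bar{\mu}_0\bar{\lambda}^2<1$; otherwise the claim is vacuous. Fix $(\lambda,\pi)$ in this set. Along the ray $\bar{\mu}\mapsto\Delta(\lambda,\bar{\mu},\pi)$, Lemma \ref{lem:Delta_derivatives} gives differentiability. Assumption \ref{ass:uniform_entropy_gap} gives
\[
\partial_{\bar{\mu}}\Delta\le-\lambda H^\ast\le-\underline{\lambda}H^\ast
\quad\text{whenever } \Delta\ge0 .
\]
A standard one-dimensional comparison argument then applies. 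Once $\Delta$ is negative it can never return to $0$, because at the first return point the derivative would be strictly negative, which is impossible. While $\Delta\ge0$, it decreases at rate at least $\underline{\lambda}H^\ast$. Hence
\[
\Delta(\lambda,\bar{\mu},\pi)<0 \quad\text{for all } \bar{\mu}>\bar{\mu}_0+\frac{\max(M,0)}{\underline{\lambda}H^\ast}=:\bar{\mu}^\ast ,
\]
uniformly in $(\lambda,\pi)$. If $\bar{\mu}^\ast\ge 1/\bar{\lambda}^2$, the stated interval is empty and the claim is trivial. I enlarge $\bar{\mu}^\ast$ slightly if needed so that the inequality holds at $\bar{\mu}^\ast$ itself.

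\textbf{Step 2: part (i).} Let $(\alpha,\eta,\tau)$ be a mixed $c$-robust equilibrium with $\bar{\mu}\ge\bar{\mu}^\ast$, and suppose $\alpha(r)>0$. By Assumption \ref{ass:U}, $\tau\in[\underline{\lambda},\bar{\lambda}]$. Condition 2 of Definition \ref{def:mixed_equilibrium_mu} requires $r\in\operatorname{BR}^{\bar{\mu}}_\tau(\eta)$, that is, $\Delta(\tau,\bar{\mu},\eta)\ge0$. This contradicts Step 1. Hence $\alpha(s)=1$.

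\textbf{Step 3: part (ii).} Proposition \ref{prop:dynamic_selection} shows that any $\Lambda$-limit frequency is the action component of some mixed $c$-robust equilibrium for the $\bar{\mu}$-criterion. By Step 2 this component equals $\delta_s$, so mixed long-run behavior, and therefore cycling, is ruled out.

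\textbf{Main obstacle.} The delicate point is Step 1. Assumption \ref{ass:uniform_entropy_gap} controls the derivative only on the region $\{\Delta\ge0\}$, so the comparison argument must verify that the trajectory cannot re-enter this region. It must also ensure the bound is uniform over $(\lambda,\pi)$, which is where compactness and continuity via Lemma \ref{lem:closed_form} are needed. A secondary issue is that the admissible domain $\bar{\mu}\lambda^2<1$ must contain the whole segment $[\bar{\mu}_0,\bar{\mu}]$. This holds because $\lambda\le\bar{\lambda}$ and $\bar{\mu}<1/\bar{\lambda}^2$.
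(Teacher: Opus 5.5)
Your proposal is correct and follows essentially the same route as the paper: a uniform threshold $\bar{\mu}_0+K_0/(\underline{\lambda}H^\ast)$, nudged slightly upward, comes from a one-dimensional comparison argument in $\bar{\mu}$ based on Lemma \ref{lem:Delta_derivatives} and Assumption \ref{ass:uniform_entropy_gap}; Assumption \ref{ass:U} then rules out $\alpha(r)>0$, and Proposition \ref{prop:dynamic_selection} gives part (ii). Your first-return-point step is a slightly cleaner version of the paper's two-case integration argument: at the first zero after a negative stretch, the left difference quotient of $\Delta$ is nonnegative, which contradicts $\partial_{\bar{\mu}}\Delta\le-\underline{\lambda}H^\ast$.
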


  When $\bar{\mu}=0$, \citet[][Section 4]{lanzani25} shows that the dynamic
adjustment of $\lambda_t$ can generate mixed $c$-robust equilibria in which the agent
endogenously cycles between risky and safe: bad model fit after $r$ raises $\lambda_t$ and
push toward $s$, while good model fit after $s$ lowers $\lambda_t$ and favors 
$r$. With $\bar{\mu}>0$, the static comparison at any point of indifference is tilted
toward the safe arm because its worst-case explanation is more complex.
Theorem \ref{thm:mu_kills_cycles} shows that, once $\bar{\mu}$ passes a threshold,
these small tilts accumulate: mixed equilibria with both actions in support disappear, and
only pure equilibria survive. The long-run empirical frequencies therefore put all mass on the safe action.
This  shows that a preference for simple worst-case narratives can
eliminate dynamic instability driven by misspecification concerns alone.

If Assumption \ref{ass:uniform_entropy_gap} is dropped, the complexity penalty need not tilt
incentives toward safety: the safe arm may admit an equally simple (low-entropy) worst-case narrative,
or the entropy ranking may reverse, so increasing $\bar{\mu}$ need not shrink (and may even expand)
the region of indifference and mixing. In that case, the endogenous adjustment of $\lambda_t$ can
continue to sustain mixed play and $\lambda$-cycles even for large $\bar{\mu}$.

Figure \ref{fig:mu_kills_cycles_illustration}  illustrates the conclusion of  Theorem \ref{thm:mu_kills_cycles} in the environment of Example \ref{ex:running}. We consider a posterior
$\pi_\theta$ with $\pi_\theta(\m{q}^H)=\theta$ and $\pi_\theta(\m{q}^L)=1-\theta$, and plot the value
difference $\Delta(\lambda,\bar{\mu},\theta):=\Delta(\lambda,\bar{\mu},\pi_{\theta})$ in \eqref{eq:Delta_def} as a function of $\lambda$.  

\begin{figure}[hbt!]
    \centering
    \includegraphics[width=1\textwidth]{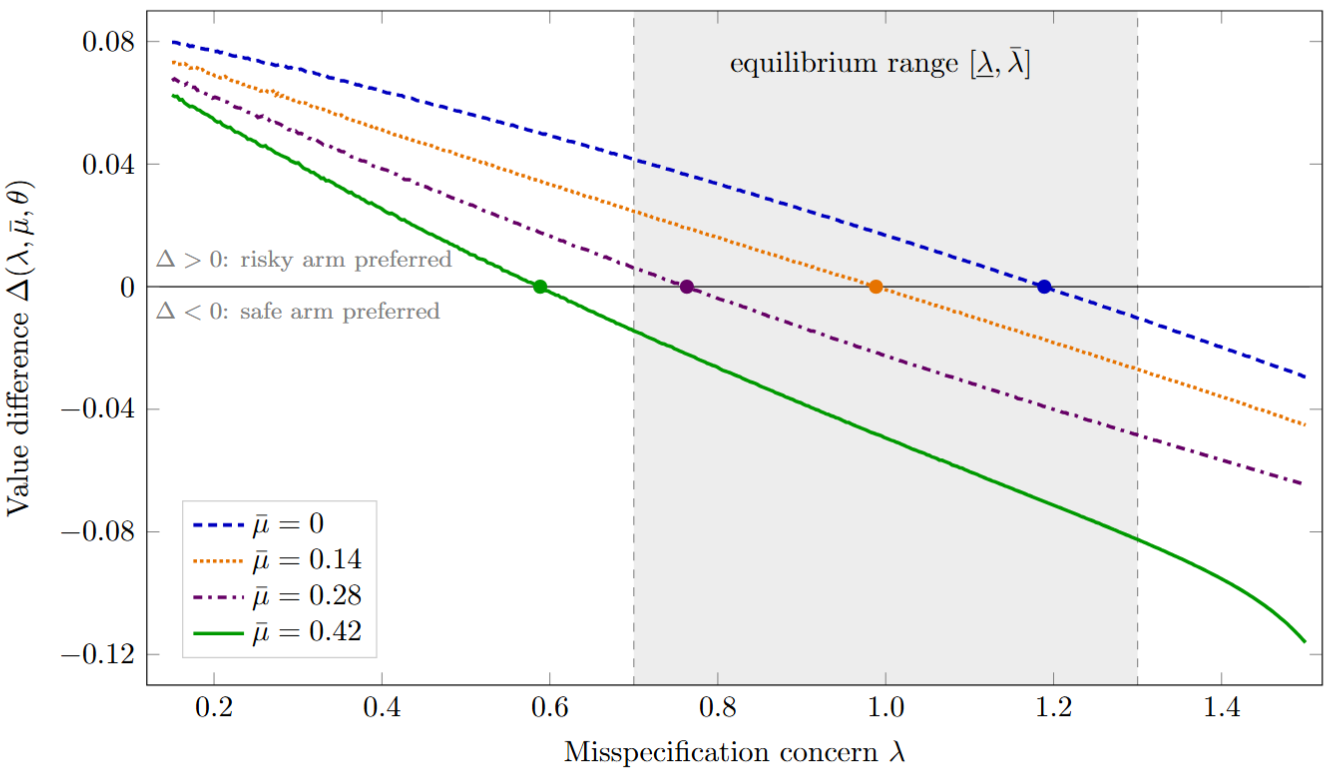} 
    \caption{Stronger complexity aversion ($\bar{\mu}$) pushes the indifference 
point $\Delta=0$ outside the equilibrium range (shaded), eliminating 
$\lambda$-cycles. Parameters: $p_H=0.85$, $p_L=0.05$, $\theta=0.90$, 
$\bar{u}=0.68$, $[\underline{\lambda},\bar{\lambda}]=[0.70,1.30]$.}
\label{fig:mu_kills_cycles_illustration}
\end{figure}
 Above the horizontal axis,  $\Delta(\lambda,\bar{\mu},\theta)>0$, so the risky arm is preferred; below it,   $\Delta(\lambda,\bar{\mu},\theta)<0$, so the safe arm is preferred; and where the curves cross the axis, the agent is indifferent between the two arms. The curve with $\bar{\mu}=0$ is \citeauthor{lanzani25}'s  (\citeyear{lanzani25}) benchmark with no complexity aversion. Since this benchmark curve crosses the horizontal axis within the equilibrium range of $\lambda$, there are equilibrium levels of misspecification concern at which both arms can be supported, so the endogenous cycles between safe and risk arms due to robustness can arise. The next two curves, corresponding to larger but still moderate values of $\bar{\mu}$, shift the indifference point to the left, but not enough to rule out cycles. In contrast, at $\bar{\mu}=0.42$, the indifference point lies strictly to the left of the equilibrium range, so the curve is already below the horizontal axis when it enters that range, which means that only the safe arm can be supported in equilibrium.
 \subsubsection{Economic Interpretation of Assumption \ref{ass:uniform_entropy_gap}}

Online Appendices \ref{sec:lanzani25_taxation} and \ref{sec:chamb20_entropy_gap} illustrate why Assumption \ref{ass:uniform_entropy_gap} is natural in the taxation example of \citeauthor{lanzani25} (\citeyear{lanzani25}) and the portfolio-choice environment of \citeauthor{chamb20} (\citeyear{chamb20}). In the taxation example, the agent chooses between not working and earning a positive target gross income under a progressive tax schedule. Working increases income, but it also increases exposure to mistakes about the tax schedule. Assumption \ref{ass:uniform_entropy_gap} captures this asymmetry: tax misspecification is more of a concern if  the agent works. In the portfolio-choice example, the investor chooses between a risk-free asset and a  risky asset, so terminal wealth is either fixed or fully exposed to the realized return path. The risky portfolio offers upside, but it also magnifies losses after sequences of poor returns. Assumption \ref{ass:uniform_entropy_gap} captures that caution is most relevant for the portfolio exposed to the risk of bad returns. 

\subsubsection{Extreme environments}

Theorem \ref{thm:mu_kills_cycles} may hold vacuously in extreme environments where
the risky arm is so attractive that even the largest admissible degree of complexity aversion
cannot overturn its advantage over the safe arm. In those cases, the interval
$[\bar{\mu}^\ast,1/\bar{\lambda}^2)$ in Theorem \ref{thm:mu_kills_cycles} may be empty: the admissibility restriction
$\bar{\mu}\bar{\lambda}^2<1$ binds before complexity aversion has enough capacity to affect the
risky-safe arm comparison. 
\par A natural way to rule out these extreme environments is to bound the risky arm's
\emph{initial} advantage at the baseline $\bar{\mu}_0$ from Assumption
\ref{ass:uniform_entropy_gap}. Let
$$
K_0
:=
\sup_{(\lambda,\pi)\in[\underline{\lambda},\bar{\lambda}]\times\Delta(Q)}
\max\{\Delta(\lambda,\bar{\mu}_0,\pi),0\}.
$$
By Lemma \ref{lem:Delta_derivatives} and Assumption \ref{ass:uniform_entropy_gap},
whenever $\Delta(\lambda,\bar{\mu},\pi)\geq 0$, increasing $\bar{\mu}$ lowers the risky arm's advantage $\Delta$ at
rate at least $\underline{\lambda}H^\ast$. Thus, before admissibility binds, the minimum guaranteed total tilt that complexity aversion can generate is
$\underline{\lambda}H^\ast(1/\bar{\lambda}^2-\bar{\mu}_0)$.

\begin{assumption}[Initial risky advantage not too large]
\label{ass:initial_risky_advantage}
$K_0
<
\underline{\lambda}H^\ast(1/\bar{\lambda}^2-\bar{\mu}_0).$
\end{assumption}

Assumption \ref{ass:initial_risky_advantage} says that the risky arm's initial advantage at $\bar{\mu}_0$ is small enough that the simplicity motive
can affect behavior before reaching the admissibility limit $1/\bar{\lambda}^2$.

\begin{corollary}\label{cor:nonvacuity_theorem1}
In Theorem \ref{thm:mu_kills_cycles}, suppose Assumption
\ref{ass:initial_risky_advantage} also holds. Then, the threshold
$\bar{\mu}^\ast$ can be chosen in $(\bar{\mu}_0,1/\bar{\lambda}^2)$. Hence, the interval
$[\bar{\mu}^\ast,1/\bar{\lambda}^2)$ is nonempty.
\end{corollary}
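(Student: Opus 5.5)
The plan is to pick the threshold explicitly and then reuse the argument behind Theorem \ref{thm:mu_kills_cycles}. The natural choice is
$$
\bar{\mu}^\ast := \bar{\mu}_0 + \frac{K_0}{\underline{\lambda}H^\ast} + \delta ,
$$
with $\delta>0$ small. By Assumption \ref{ass:initial_risky_advantage}, $K_0/(\underline{\lambda}H^\ast) < 1/\bar{\lambda}^2-\bar{\mu}_0$, so $\delta$ can be chosen to give $\bar{\mu}^\ast<1/\bar{\lambda}^2$. Since $\delta>0$, we also get $\bar{\mu}^\ast>\bar{\mu}_0$. It therefore remains to show that this $\bar{\mu}^\ast$ does the job in Theorem \ref{thm:mu_kills_cycles}. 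By the proof of that theorem, it suffices to establish
$$
\Delta(\lambda,\bar{\mu},\pi)<0 \quad\text{for all } (\lambda,\pi)\in[\underline{\lambda},\bar{\lambda}]\times\Delta(Q),\ \bar{\mu}\in[\bar{\mu}^\ast,1/\bar{\lambda}^2).
$$
Given this strict inequality, $r$ is never a best reply at any admissible $\tau$. Assumption \ref{ass:U} then rules out $\alpha(r)>0$ in every mixed $c$-robust equilibrium, and Proposition \ref{prop:dynamic_selection} yields part (ii).

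To prove the strict inequality, fix $(\lambda,\pi)$ and consider $g(\bar{\mu}):=\Delta(\lambda,\bar{\mu},\pi)$ on $[\bar{\mu}_0,1/\bar{\lambda}^2)$. Admissibility holds throughout, because $\bar{\mu}\lambda^2\le\bar{\mu}\bar{\lambda}^2<1$. The function $g$ is differentiable by Lemma \ref{lem:Delta_derivatives}, and it is continuous by Lemma \ref{lem:closed_form}. Combining Lemma \ref{lem:Delta_derivatives} with Assumption \ref{ass:uniform_entropy_gap} gives
$$
g'(\bar{\mu})\le-\lambda H^\ast\le-\underline{\lambda}H^\ast \quad\text{whenever } g(\bar{\mu})\ge0 .
$$
We also have $g(\bar{\mu}_0)\le K_0$ by the definition of $K_0$.

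The main step, and the only delicate one, is a one-sided comparison argument showing
$$
g(\bar{\mu})\le \max\{K_0-\underline{\lambda}H^\ast(\bar{\mu}-\bar{\mu}_0),\,0\}\quad\text{for all } \bar{\mu},
$$
with strict negativity once $\bar{\mu}-\bar{\mu}_0>K_0/(\underline{\lambda}H^\ast)$. The derivative bound is available only on the set $\{g\ge0\}$, so the argument proceeds in two steps.

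\textbf{Step 1: a zero is never left.} Suppose $g(\bar{\mu}_1)\le 0$ at some point $\bar{\mu}_1$. We claim $g$ cannot return to nonnegative values with positive slope. Let $\bar{\mu}_2>\bar{\mu}_1$ satisfy $g(\bar{\mu}_2)>0$, and let $\bar{\mu}_3$ be the last zero of $g$ in $[\bar{\mu}_1,\bar{\mu}_2]$, which exists by continuity. On $[\bar{\mu}_3,\bar{\mu}_2]$ we have $g\ge0$, so $g'<0$ there. The mean value theorem then forces $g(\bar{\mu}_2)<g(\bar{\mu}_3)=0$, a contradiction.

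\textbf{Step 2: the decline is fast while $g$ is positive.} Consider an interval starting at $\bar{\mu}_0$ on which $g>0$. The mean value theorem gives $g(\bar{\mu})\le K_0-\underline{\lambda}H^\ast(\bar{\mu}-\bar{\mu}_0)$ there. Hence $g$ must reach $0$ no later than $\bar{\mu}_0+K_0/(\underline{\lambda}H^\ast)$, and by Step 1 it stays $\le 0$ afterward.

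It remains to upgrade $g\le0$ to $g<0$ at every $\bar{\mu}\ge\bar{\mu}^\ast$. Suppose $g(\bar{\mu})=0$ at such a point. Then $g'(\bar{\mu})\le-\underline{\lambda}H^\ast<0$, so $g>0$ just to the left of $\bar{\mu}$. That contradicts $g\le 0$ on $(\bar{\mu}_0+K_0/(\underline{\lambda}H^\ast),\bar{\mu})$, an interval that is nonempty thanks to the margin $\delta$. Every constant used is uniform in $(\lambda,\pi)$, so the bound holds uniformly over $[\underline{\lambda},\bar{\lambda}]\times\Delta(Q)$. This completes the argument.
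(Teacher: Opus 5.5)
Your proposal is correct and follows the paper's argument. The paper likewise picks $\bar{\mu}^\ast$ strictly between $\bar{\mu}_0+K_0/(\underline{\lambda}H^\ast)$ and $1/\bar{\lambda}^2$, an interval that is nonempty by Assumption \ref{ass:initial_risky_advantage}, and then invokes the uniform negativity of $\Delta$ from Step 2 of the proof of Theorem \ref{thm:mu_kills_cycles}. The only difference is that you re-derive that negativity yourself rather than citing it: your last-zero argument in Step 1 absorbs the paper's separate $f(\bar{\mu}_1)=0$ subcase, and you recover strictness at the end with the local derivative argument, which is fine but not needed.
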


\subsection{Complexity aversion and switching arms}
Having established the stabilizing role of complexity aversion in the binary safe-versus-risky case, we now turn to richer choice sets. When several risky arms are available, with or without a safe benchmark, complexity aversion discourages switching and helps select among the available alternatives.
\subsubsection{Safe arm and multiple risky arms}

We extend the running example (Example \ref{ex:running}) to an environment with one safe arm $s$ and 
$r_1,\dots,r_K$ risky arms, for $K\geq2$. We continue to assume that $Q=\{\m{q}^H,\m{q}^L\}$ and parameterize posterior beliefs by the weight
$\theta\in[0,1]$ on the optimistic model: $\pi_\theta(\m{q}^H)=\theta$ and
$\pi_\theta(\m{q}^L)=1-\theta$. For each risky arm $r_i$, define its risky-safe value difference
\begin{equation}\label{eq:Delta_i_def_main}
  \Delta_i(\lambda,\bar{\mu},\theta)
  :=
  V_{\lambda,\bar{\mu}\lambda}(r_i;\pi_\theta)
  -
  V_{\lambda,\bar{\mu}\lambda}(s;\pi_\theta).
\end{equation}

As in the two-arm case, the effect of complexity aversion on each risky-safe comparison 
is governed by the entropy gap between their worst-case distortions. When beliefs are parameterized by the posterior weight on an optimistic model, 
each risky arm admits a one-dimensional comparison against the safe arm. 
The slope of this comparison with respect to optimistic beliefs is constant, 
so each risky arm is characterized by a single belief threshold.

\begin{assumption}[Uniform entropy gap for each risky arm]\label{ass:uniform_entropy_gap_multi}
Let $[\underline{\lambda},\bar{\lambda}]$ be as in Assumption \ref{ass:U}. 
There exist constants $H_1^\ast,\dots,H_K^\ast>0$ such that for all
$(\lambda,\pi)\in[\underline{\lambda},\bar{\lambda}]\times\Delta(Q)$, all
$\bar{\mu} \geq 0$ with $\bar{\mu} \lambda^2 < 1$, and for each risky arm $r_i$, $H_s(\lambda,\bar{\mu},\pi)-H_{r_i}(\lambda,\bar{\mu},\pi) \geq  H_i^\ast.$
\end{assumption}

Assumption \ref{ass:uniform_entropy_gap_multi} requires that the safe arm’s 
worst-case distortion be uniformly more entropic than that of each risky arm 
over the relevant parameter range. This ensures that increasing $\bar{\mu}$ 
tilts every risky-safe comparison in favor of the safe arm.

\begin{assumption}[Monotone effect of the optimistic model]\label{ass:monotone_theta_multi}
For every $\lambda\in[\underline{\lambda},\bar{\lambda}]$, every $\bar{\mu} \geq 0$ 
with $\bar{\mu} \lambda^2 < 1$, and every risky arm $r_i$, 
shifting posterior weight toward the optimistic model strictly raises the 
risky-safe value difference.
\end{assumption}

Assumption \ref{ass:monotone_theta_multi} guarantees that learning in favor 
of the optimistic model raises the relative value of each risky arm, 
so each arm admits a well-defined belief threshold.
\begin{proposition}\label{prop:multi_thresholds}
Suppose Assumptions \ref{ass:full_support}, \ref{ass:U},
\ref{ass:uniform_entropy_gap_multi}, and \ref{ass:monotone_theta_multi} hold. 
Fix any $\lambda\in[\underline{\lambda},\bar{\lambda}]$ and any 
$\bar{\mu}$ satisfying $\bar{\mu}\lambda^2<1$. Assume further that for each risky arm $r_i$, $\Delta_i(\lambda,\bar{\mu},0)<0<\Delta_i(\lambda,\bar{\mu},1).$ Then, for each risky arm $r_i$:

\begin{enumerate}
  \item[(i)] There exists a unique belief threshold $\theta_i^\ast(\bar{\mu})\in(0,1)$ 
  such that $r_i$ is (weakly) optimal relative to the safe arm 
  if and only if the posterior weight on the optimistic model 
  exceeds $\theta_i^\ast(\bar{\mu})$.

  \item[(ii)]  If $\bar\mu_2>\bar\mu_1$ are such that
$\Delta_i(\lambda,\bar\mu_j,0)<0<\Delta_i(\lambda,\bar\mu_j,1),$ for $j=1,2,$
then the corresponding thresholds satisfy
$\theta_i^\ast(\bar\mu_2)>\theta_i^\ast(\bar\mu_1).$
\end{enumerate}
\end{proposition}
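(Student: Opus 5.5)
Part (i) follows from the structure of $\Delta_i$ in $\theta$, and part (ii) from monotonicity in $\bar\mu$ combined with (i).

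\textbf{Part (i).} By \eqref{eq:posterior_value}, $V_{\lambda,\bar\mu\lambda}(a;\pi_\theta)=\theta v_{\lambda,\bar\mu\lambda}(a;\m{q}^H)+(1-\theta)v_{\lambda,\bar\mu\lambda}(a;\m{q}^L)$. Hence, for fixed $(\lambda,\bar\mu)$,
$$\Delta_i(\lambda,\bar\mu,\theta)=\Delta_i(\lambda,\bar\mu,0)+\theta\big[\Delta_i(\lambda,\bar\mu,1)-\Delta_i(\lambda,\bar\mu,0)\big]$$
is affine in $\theta$. Its slope is strictly positive by Assumption \ref{ass:monotone_theta_multi}; it is also positive directly from the sign hypothesis $\Delta_i(\cdot,0)<0<\Delta_i(\cdot,1)$. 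An affine, strictly increasing function that is negative at $0$ and positive at $1$ has a unique root
$$\theta_i^\ast(\bar\mu)=\frac{-\Delta_i(\lambda,\bar\mu,0)}{\Delta_i(\lambda,\bar\mu,1)-\Delta_i(\lambda,\bar\mu,0)}\in(0,1).$$
Moreover, $\Delta_i\ge 0$ if and only if $\theta\ge\theta_i^\ast$. This is exactly the statement that $r_i$ is weakly optimal relative to $s$.

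\textbf{Part (ii).} I first show that $\bar\mu\mapsto\Delta_i(\lambda,\bar\mu,\theta)$ is strictly decreasing for each fixed $\theta$.
\begin{itemize}
\item Lemma \ref{lem:Delta_derivatives}(1) applies verbatim with $r$ replaced by $r_i$, since its proof uses only the envelope identity \eqref{eq:envelope_mu} arm by arm. The chain rule through $\mu=\bar\mu\lambda$ gives $\partial_{\bar\mu}\Delta_i=\lambda\big(H_{r_i}-H_s\big)$.
\item Assumption \ref{ass:uniform_entropy_gap_multi} holds for every admissible $\bar\mu$ and every $\pi$, not only on the switching surface. So $\partial_{\bar\mu}\Delta_i\le-\lambda H_i^\ast<0$ throughout $[\bar\mu_1,\bar\mu_2]$.
\item Every point of that interval is admissible, because $\bar\mu_2\lambda^2<1$ and admissibility is an interval condition.
\item Continuity from Lemma \ref{lem:closed_form} and the mean value theorem then give $\Delta_i(\lambda,\bar\mu_2,\theta)\le\Delta_i(\lambda,\bar\mu_1,\theta)-\lambda H_i^\ast(\bar\mu_2-\bar\mu_1)$ for all $\theta$.
\end{itemize}

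Now evaluate at $\theta=\theta_i^\ast(\bar\mu_1)$. There $\Delta_i(\lambda,\bar\mu_1,\theta)=0$, so $\Delta_i(\lambda,\bar\mu_2,\theta_i^\ast(\bar\mu_1))<0$. By part (i) applied at $\bar\mu_2$, which is legitimate because the sign hypothesis holds at $\bar\mu_2$, $\Delta_i(\lambda,\bar\mu_2,\cdot)$ is negative exactly on $[0,\theta_i^\ast(\bar\mu_2))$. Therefore $\theta_i^\ast(\bar\mu_1)<\theta_i^\ast(\bar\mu_2)$.

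\textbf{Main obstacle.} The delicate step is the derivative formula. One must justify differentiating $v$ in $\mu$ along the ray $\mu=\bar\mu\lambda$ with $\lambda$ fixed, and confirm that the envelope identity holds uniformly on the admissible set $\kappa>0$. This is where Lemma \ref{lem:closed_form} is invoked. Everything else is the elementary affine-in-$\theta$ argument.
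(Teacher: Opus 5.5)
Your proof is correct and follows essentially the same route as the paper: affinity of $V$ in $\pi$ gives a constant positive slope in $\theta$ and hence a unique threshold, and the envelope identity together with Assumption~\ref{ass:uniform_entropy_gap_multi} makes $\Delta_i$ strictly decreasing in $\bar\mu$, which pushes the threshold up. The remaining differences are cosmetic: you note that the positive slope already follows from the sign hypothesis, so Assumption~\ref{ass:monotone_theta_multi} is redundant here, and you evaluate at $\theta_i^\ast(\bar\mu_1)$ where the paper uses a set-inclusion argument.
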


Proposition \ref{prop:multi_thresholds} shows that each risky arm admits a belief threshold 
$\theta_i^\ast(\bar{\mu})$, and stronger complexity aversion raises these thresholds. 
Thus, as $\bar{\mu}$ increases, experimentation with risky arms requires increasingly 
optimistic beliefs. Moreover, the magnitude of the entropy gap governs how rapidly thresholds shift. 
Risky arms whose worst-case distortions are especially concentrated 
(low entropy) experience larger downward shifts in relative value as $\bar{\mu}$ rises, 
and are therefore eliminated first. Complexity aversion acts as a cross-sectional 
selection device among risky alternatives. In the dynamic model, these shifting thresholds imply that risky arms disappear sequentially as $\bar{\mu}$ increases.
\subsubsection{Only risky arms}

We now consider the case in which all arms are risky, so there is no safe benchmark. 
In this environment, we show that complexity aversion acts as a cross-sectional selection device among risky alternatives without necessarily inducing conservatism.

For any two actions $i,j \in A$, define the pairwise value difference 
$$
  \Delta_{ij}(\lambda,\bar{\mu},\pi)
  :=
  V_{\lambda,\bar{\mu}\lambda}(i;\pi)
  -
  V_{\lambda,\bar{\mu}\lambda}(j;\pi).
$$
As in the safe-risky case, the effect of complexity aversion on this comparison 
is governed by the entropy gap between the corresponding worst-case distortions.

\begin{lemma}\label{lem:pairwise_Delta}
Suppose Assumption \ref{ass:full_support} holds. Then, for any 
$i,j\in A$ and any $(\lambda,\bar{\mu},\pi)$ with 
$\kappa=1/\lambda-\bar{\mu}\lambda>0$, $\frac{\partial}{\partial\bar{\mu}}
\Delta_{ij}(\lambda,\bar{\mu},\pi)
=
\lambda \big( H_i(\lambda,\bar{\mu},\pi) - H_j(\lambda,\bar{\mu},\pi) \big).$
\end{lemma}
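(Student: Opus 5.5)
The plan is to reduce the claim to the envelope identity \eqref{eq:envelope_mu} of Lemma \ref{lem:closed_form} and the chain rule, exactly as in Lemma \ref{lem:Delta_derivatives}. The only difference from that lemma is that the two actions compared are arbitrary rather than the distinguished pair $(r,s)$. Fix $i,j\in A$, $\lambda>0$, $\pi\in\Delta(Q)$, and $\bar\mu$ with $\kappa=1/\lambda-\bar\mu\lambda>0$. By definition,
$$\Delta_{ij}(\lambda,\bar\mu,\pi)=\sum_{\m{q}\in Q}\pi(\m{q})\big[v_{\lambda,\bar\mu\lambda}(i;\m{q})-v_{\lambda,\bar\mu\lambda}(j;\m{q})\big].$$
Because $Q$ is finite, differentiation in $\bar\mu$ passes through the sum term by term.

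For each fixed $(a,\m{q})$, I would write $g_a(\bar\mu):=v_{\lambda,\mu}(a;\m{q})\big|_{\mu=\bar\mu\lambda}$. The condition $\kappa>0$ is equivalent to $\mu=\bar\mu\lambda<1/\lambda$. This is an open condition in $\bar\mu$, so Lemma \ref{lem:closed_form} applies on a neighborhood of the given $\bar\mu$. There the lemma gives $\partial v/\partial\mu=H(\hat p_{\lambda,\mu}(a;\m{q}))$, and by its continuity statement this derivative is continuous in $\mu$. The chain rule with $d\mu/d\bar\mu=\lambda$ then yields
$$g_a'(\bar\mu)=\lambda\, H\big(\hat p_{\lambda,\bar\mu\lambda}(a;\m{q})\big).$$
Summing against $\pi$ and subtracting the $j$-term from the $i$-term gives $\lambda\,(H_i-H_j)$, where $H_a$ is the average worst-case entropy defined as in \eqref{eq:Ha_def} with $a$ in place of $r$ or $s$.

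The only point needing care is justifying the envelope identity, since it is taken as given from Lemma \ref{lem:closed_form}. If one wanted to verify it independently, the route would be a Danskin-type argument. The objective is affine in $\mu$, and the minimizer is unique and interior by the Gibbs form \eqref{eq:p_star_closed_form} under Assumption \ref{ass:full_support}. Uniqueness makes the min-value differentiable with derivative equal to the $\mu$-coefficient $H(p)$ evaluated at the minimizer. Given the earlier lemma, however, the step is routine, and I expect no substantive obstacle beyond noting that $\lambda$ is held fixed, so that the dependence of $\mu$ on $\bar\mu$ is linear.
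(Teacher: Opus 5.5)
Your proposal is correct and follows essentially the same route as the paper, which proves the lemma by repeating the argument of Lemma \ref{lem:Delta_derivatives} for arbitrary $i,j$. That argument uses the envelope identity $\partial v/\partial\mu=H(\hat p_{\lambda,\mu}(a;\m{q}))$ from Lemma \ref{lem:closed_form}, the chain rule with $\mu=\bar\mu\lambda$, and term-by-term differentiation over the finite set $Q$.
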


Lemma \ref{lem:pairwise_Delta} implies that increasing $\bar{\mu}$ tilts comparisons 
in favor of actions whose worst-case distortions are more entropic, and against those 
whose pessimistic narratives are simpler (more concentrated).

\paragraph{Pairwise entropy dominance.}
To obtain a clear elimination result, we require that one risky arm has a uniform 
entropy advantage over another across the relevant parameter range.

\begin{assumption}[Pairwise entropy dominance]\label{ass:pairwise_entropy}
There exist distinct actions $i,j\in A$, a constant $H_{ij}^\ast>0$, 
and a baseline $\bar{\mu}_0 \geq 0$ such that:

\begin{enumerate}
  \item[(i)] For all $(\lambda,\bar{\mu},\pi)
  \in
  [\underline{\lambda},\bar{\lambda}]
  \times[\bar{\mu}_0,1/\bar{\lambda}^2)
  \times\Delta(Q),$ let $H_j(\lambda,\bar{\mu},\pi) - H_i(\lambda,\bar{\mu},\pi)
  \ge
  H_{ij}^\ast.$

  \item[(ii)] The initial value advantage of $i$ at $\bar{\mu}_0$ is not too large:
  \begin{align}\label{eq:Mij_def}
  K_{ij}
  :=
  \sup_{(\lambda,\pi)\in[\underline{\lambda},\bar{\lambda}]\times\Delta(Q)}
  \max\{\Delta_{ij}(\lambda,\bar{\mu}_0,\pi),0\}
  <
  \underline{\lambda} H_{ij}^\ast\big(1/\bar{\lambda}^2-\bar{\mu}_0\big).
  \end{align}
\end{enumerate}
\end{assumption}

Assumption \ref{ass:pairwise_entropy} is the multi-arm analogue of Assumptions \ref{ass:uniform_entropy_gap} and \ref{ass:initial_risky_advantage}. It requires that arm $j$’s worst-case distortions 
be uniformly more entropic than arm $i$’s over the relevant parameter range, and that 
this entropy advantage be quantitatively strong enough to overturn any baseline 
value advantage of $i$ as $\bar{\mu}$ increases.

\begin{proposition}[Elimination without a safe arm]
\label{prop:all_risky_elimination}
Suppose Assumptions \ref{ass:full_support} and 
\ref{ass:pairwise_entropy} hold. Then, there exists 
$\bar{\mu}^{\ast}\in(\bar{\mu}_0,1/\bar{\lambda}^2)$ such that for all
$\bar{\mu}\in[\bar{\mu}^{\ast},1/\bar{\lambda}^2)$,
$\lambda\in[\underline{\lambda},\bar{\lambda}]$, and $\pi\in\Delta(Q)$, $V_{\lambda,\bar{\mu}\lambda}(j;\pi)
>
V_{\lambda,\bar{\mu}\lambda}(i;\pi).$
In particular, for all such $\bar{\mu}$, action $i$ is never a best reply.
\end{proposition}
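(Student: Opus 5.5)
The plan is to integrate the derivative formula of Lemma \ref{lem:pairwise_Delta} in $\bar{\mu}$ and use the uniform entropy gap of Assumption \ref{ass:pairwise_entropy}(i) to turn it into a uniform linear decrease of $\Delta_{ij}$. Fix $(\lambda,\pi)\in[\underline{\lambda},\bar{\lambda}]\times\Delta(Q)$.

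For every $\bar{\mu}\in[\bar{\mu}_0,1/\bar{\lambda}^2)$ we have $\bar{\mu}\lambda^2<1$, so $\kappa>0$. Hence Lemma \ref{lem:closed_form} applies on the whole segment, and $\bar{\mu}\mapsto\Delta_{ij}(\lambda,\bar{\mu},\pi)$ is differentiable there. Combining Lemma \ref{lem:pairwise_Delta} with Assumption \ref{ass:pairwise_entropy}(i) gives
\[
\frac{\partial}{\partial\bar{\mu}}\Delta_{ij}(\lambda,\bar{\mu},\pi)=\lambda\big(H_i-H_j\big)\le-\lambda H_{ij}^\ast\le-\underline{\lambda}H_{ij}^\ast .
\]
By the mean value theorem, or by integrating the derivative (continuity of the entropies in $\bar{\mu}$ follows from continuity of $\hat p$ in Lemma \ref{lem:closed_form}),
\[
\Delta_{ij}(\lambda,\bar{\mu},\pi)\le\Delta_{ij}(\lambda,\bar{\mu}_0,\pi)-\underline{\lambda}H_{ij}^\ast(\bar{\mu}-\bar{\mu}_0)\le K_{ij}-\underline{\lambda}H_{ij}^\ast(\bar{\mu}-\bar{\mu}_0).
\]
This bound holds uniformly in $(\lambda,\pi)$, and it uses only the definition of $K_{ij}$ as a supremum of positive parts.

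Now define $\bar{\mu}^\ast:=\bar{\mu}_0+K_{ij}/(\underline{\lambda}H_{ij}^\ast)+\epsilon$ for small $\epsilon>0$. By the strict inequality in Assumption \ref{ass:pairwise_entropy}(ii), $\bar{\mu}_0+K_{ij}/(\underline{\lambda}H_{ij}^\ast)<1/\bar{\lambda}^2$, so some $\epsilon>0$ puts $\bar{\mu}^\ast$ in $(\bar{\mu}_0,1/\bar{\lambda}^2)$. For every $\bar{\mu}\ge\bar{\mu}^\ast$ in the admissible range, the right-hand side above is at most $-\underline{\lambda}H_{ij}^\ast\epsilon<0$. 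Therefore $V(j)>V(i)$ strictly, uniformly in $(\lambda,\pi)$, and $i$ is never in $\arg\max$ since $j$ beats it.

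The only delicate step is justifying the integration. First, $\bar{\mu}\lambda^2<1$ must hold along the whole path for each $\lambda\le\bar{\lambda}$, which it does because $\bar{\mu}<1/\bar{\lambda}^2\le1/\lambda^2$. Second, the derivative formula must hold pointwise on an interval, so that the fundamental theorem of calculus (or the mean value theorem) applies. Both follow from Lemma \ref{lem:closed_form}, since $v$ is smooth in $\mu$ on $\{\kappa>0\}$. I expect no further obstacle.
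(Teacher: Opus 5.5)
Your proof is correct and follows the paper's argument: you bound $\partial\Delta_{ij}/\partial\bar{\mu}$ by $-\underline{\lambda}H_{ij}^\ast$ via Lemma \ref{lem:pairwise_Delta} and Assumption \ref{ass:pairwise_entropy}(i), integrate from $\bar{\mu}_0$ to get $\Delta_{ij}\le K_{ij}-\underline{\lambda}H_{ij}^\ast(\bar{\mu}-\bar{\mu}_0)$, and use part (ii) to keep the threshold below $1/\bar{\lambda}^2$. Your extra $\epsilon>0$ is a small improvement on the paper's proof, which sets $\bar{\mu}^\ast=\bar{\mu}_0+K_{ij}/(\underline{\lambda}H_{ij}^\ast)$. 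That choice only yields $\Delta_{ij}\le 0$ at $\bar{\mu}=\bar{\mu}^\ast$ and gives $\bar{\mu}^\ast=\bar{\mu}_0$ when $K_{ij}=0$, whereas yours delivers the strict inequality and $\bar{\mu}^\ast\in(\bar{\mu}_0,1/\bar{\lambda}^2)$ exactly as stated.
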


Proposition \ref{prop:all_risky_elimination} shows that sufficiently strong 
complexity aversion can eliminate a risky arm even when no safe alternative exists. 
In this case, complexity aversion does not simply induce conservatism, 
but instead selects among risky alternatives by favoring those whose 
worst-case narratives are more diffuse. 
It is useful to contrast this result with the safe–risky environment. 
With a safe benchmark, Assumption \ref{ass:uniform_entropy_gap_multi} 
ensures that the safe arm enjoys a uniform entropy advantage over all risky arms, 
so increasing $\bar{\mu}$ systematically pushes the agent toward safety. 
In the present setting, where all arms are risky, there is no canonical benchmark. 
Complexity aversion does not necessarily make the agent more conservative in any absolute sense; 
rather, it reshapes the relative ranking of risky actions.

Absent a global entropy ordering as in Assumption \ref{ass:pairwise_entropy}, 
Lemma \ref{lem:pairwise_Delta} still implies that increasing $\bar{\mu}$ locally tilts 
pairwise comparisons toward arms with more entropic worst-case distortions. 
However, no arm need become uniformly dominated across the relevant parameter range. 
In such environments, the dynamic evolution of $\lambda_t$ can continue to sustain 
switching among risky arms even at high levels of complexity aversion.

\subsection{Complexity aversion and long-run payoffs}\label{sec:wel}

We now isolate a class of environments in which increasing complexity aversion is welfare-improving: it raises the agent's long-run realized payoff under the true DGP, even though it makes the agent more conservative over time. The key here is that the true DGP actually favors the safe arm, so eliminating safe–risky cycles prevents the agent from over-experimenting with the risky action. When the safe action is ex-post optimal but the $\bar{\mu}=0$ benchmark exhibits endogenous safe–risky cycles, sufficiently strong complexity aversion eliminates experimentation with the risky arm and raises long-run realized payoffs. In such environments, simplicity stabilizes behavior and prevents overreaction to temporarily favorable but weakly supported outcomes. Online Appendix \ref{sec:welfare} provides the details and a formal welfare comparison, and it  also shows settings where complexity aversion alone can create welfare losses.

\section{Applications}\label{sec:app}
Sections \ref{sec:cri_ri} and \ref{sec:home_bias} explore applications in discrete-choice analysis and stochastic growth, respectively. Online Appendices \ref{sec:lanzani25_taxation} and \ref{sec:chamb20_entropy_gap} apply our framework, respectively, to taxation and portfolio-choice environments.
\subsection{Rational Inattention}\label{sec:cri_ri}

This application shows how our complexity-augmented criterion in
Section \ref{sec:static} delivers new, empirically relevant predictions when embedded
in \citeauthor{matvejka15}'s (\citeyear{matvejka15}) rational inattention (RI) framework for discrete-choice analysis. The key is that complexity aversion induces
\emph{state-dependent choice sensitivity} 
even when the underlying information-cost technology is the  Shannon entropy. 

\subsubsection{Setup and definitions}

We follow  the framework in \citet{matvejka15}. Let $A=\{1,\dots,n\}$ be a finite set of actions and $\Omega$ be a finite set of payoff states.
In state $\omega\in\Omega$, action $a\in A$ yields payoff $v(a,\omega)\in\R$.
The agent's \emph{structured} or \emph{reference} prior over $\Omega$ is $g\in\Delta(\Omega)$.

An RI strategy is a stochastic choice rule $\psi \in \Delta(A)^\Omega$ satisfying
$$ \psi(a\mid\omega)\in[0,1],\quad\text{and}\quad \sum_{a\in A}\psi(a\mid\omega)=1,
$$
for every state $\omega\in\Omega$. Let the (reference) unconditional choice probabilities be
$$\bar{\psi}(a):=\sum_{\omega\in\Omega} g(\omega)\psi(a\mid\omega)\in\Delta(A),$$
and for a cost parameter $\xi>0$  define 
\begin{equation}\label{eq:cri_ri_cost}
  \mathcal{C}(\psi;g)
  :=\xi \sum_{\omega\in\Omega} g(\omega)\sum_{a\in A}\psi(a\mid\omega)
  \log\frac{\psi(a\mid\omega)}{\bar{\psi}(a)}.
\end{equation}
$\mathcal{C}$ is $\xi$ times the mutual information between $\omega$ and $a$ under the reference prior $g$, which coincides with \citet[][eq. (5)]{matvejka15}. We now introduce robustness to misspecification of the prior over payoff states with complexity aversion over worst-case scenarios.
Given a strategy $\psi$, let the induced state-contingent payoff be
$U_\psi(\omega) := \sum_{a\in A} v(a,\omega)\psi(a\mid\omega).$
Then, Nature chooses a distorted distribution $m\in\Delta(\Omega)$ to make payoffs low, paying
a KL penalty relative to $g$ and an entropic penalty  as in \eqref{eq:static_criterion}.
Our agent anticipates this distortion and therefore chooses $\psi$ according to our complexity-augmented criterion:
\begin{equation}\label{eq:cri_ri_problem}
  \max_{\psi\in\Delta(A)^\Omega}\;
  \min_{m\in\Delta(\Omega)}
  \Bigg\{
    \sum_{\omega\in\Omega} m(\omega)\hspace{0.03in}U_\psi(\omega)
    +\frac{1}{\lambda}R(m\Vert g) + \mu H(m)
    \;-\; \mathcal{C}(\psi;g)
  \Bigg\},
\end{equation}
where $\lambda>0$ is the misspecification concern and $\mu\geq0$ is the complexity parameter. When the agent has no concern for misspecification and no complexity aversion, \eqref{eq:cri_ri_problem} coincides with \citet[][eq. (10)]{matvejka15}. In the standard RI framework, the prior over payoff states is fixed at $g$ and
the optimal $\psi$ has a generalized logit form with a \emph{single} global scale parameter \citep[][eq. (15)]{matvejka15}.
In \eqref{eq:cri_ri_problem}, the state distribution entering payoffs is endogenously distorted by robustness
and disciplined by complexity. This interaction then implies that the \emph{effective} choice sensitivity becomes
state-dependent even though the information-cost technology \eqref{eq:cri_ri_cost} is unchanged.

\subsubsection{Worst-case payoff states}

Fix a stochastic rule $\psi$. The inner problem in \eqref{eq:cri_ri_problem} is the static criterion
\eqref{eq:static_criterion} applied to the outcome space $\Omega$ with reference $g$ and utility $U_\psi$.
Thus, the closed-form characterization in Lemma \ref{lem:closed_form} applies directly.

\begin{obs}\label{prop:cri_ri_mstar}
Suppose the prior $g$ has full support on $\Omega$ and let $\kappa:=\frac{1}{\lambda}-\mu$ and
$\beta:=\frac{1}{1-\lambda\mu}$ as in Lemma \ref{lem:closed_form}. If $\kappa>0$,
then for every $\psi$ the inner minimization over $m$ in \eqref{eq:cri_ri_problem} has a unique minimizer
$m^\star_{\lambda,\mu}(\cdot;\psi)\in\Delta(\Omega)$ given by
\begin{equation}\label{eq:cri_ri_mstar}
  m^\star_{\lambda,\mu}(\omega;\psi)
  = \frac{\exp\{-U_\psi(\omega)/\kappa\}\hspace{0.03in}g(\omega)^{\beta}}
         {\sum_{\omega'\in\Omega}\exp\{-U_\psi(\omega')/\kappa\}\hspace{0.03in}g(\omega')^{\beta}}
  \qquad (\omega\in\Omega).
\end{equation}
For fixed $\lambda$, increasing $\mu$ reduces the entropy of the worst-case distribution (Proposition \ref{prop:entropy_decreasing_mu}).
\end{obs}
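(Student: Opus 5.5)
The plan is to reduce Observation \ref{prop:cri_ri_mstar} directly to Lemma \ref{lem:closed_form} and Proposition \ref{prop:entropy_decreasing_mu}. Fix $\psi$. The cost term $\mathcal{C}(\psi;g)$ does not depend on $m$, so it factors out of the inner minimization. What remains is
\[
\min_{m\in\Delta(\Omega)}\Big\{\textstyle\sum_\omega m(\omega)U_\psi(\omega)+\frac{1}{\lambda}R(m\Vert g)+\mu H(m)\Big\}.
\]
This is exactly problem \eqref{eq:static_criterion} after a relabeling: take the outcome space $Y=\Omega$, a single action whose utility is $u(\cdot,\omega)=U_\psi(\omega)$, and a single structured model with $q=g$. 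Because $g$ has full support, Assumption \ref{ass:full_support} holds for this auxiliary problem. Lemma \ref{lem:closed_form} then gives, for $\kappa>0$, existence, uniqueness, and the Gibbs form \eqref{eq:p_star_closed_form}. Substituting $u=U_\psi$ and $q_a=g$ yields \eqref{eq:cri_ri_mstar}.

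Nothing in Lemma \ref{lem:closed_form} requires utilities to be primitive. It only needs real numbers $u(a,y)$, and $U_\psi(\omega)$ is a finite real number for every $\psi$. So the same argument applies uniformly for every $\psi$. If I wanted the argument to stand on its own, I would repeat the lemma's proof. The objective is strictly convex in $m$, since $\frac1\lambda R(m\Vert g)+\mu H(m)=\kappa\sum m\log m-\frac1\lambda\sum m\log g$ and $\kappa>0$. The first-order condition $U_\psi(\omega)+\kappa(\log m(\omega)+1)-\frac1\lambda\log g(\omega)=\text{const}$ gives $m\propto e^{-U_\psi/\kappa}g^{1/(\lambda\kappa)}$, and $1/(\lambda\kappa)=\beta$. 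The solution is interior because the objective's derivative blows up at the boundary.

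The last sentence of the Observation follows from Proposition \ref{prop:entropy_decreasing_mu} applied to the auxiliary problem. That proposition needs the non-degeneracy condition: there are no constants $b_1,b_2$ with $U_\psi(\omega)=b_1+b_2\log g(\omega)$. This is the step I expect to be the only real obstacle, because the Observation states the monotonicity without qualification. I would state it under this genericity condition on $U_\psi$.

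In the degenerate case, for example when $U_\psi$ is constant or $g$ is uniform and $U_\psi$ is constant, the worst-case distribution is $m^\star\propto g^{\beta'}$ for some exponent $\beta'$. Its entropy is then weakly decreasing in $\mu$, and constant when $g$ is uniform. So the claim should be read as "weakly reduces, strictly under the condition of Proposition \ref{prop:entropy_decreasing_mu}." I would add that caveat in the write-up.
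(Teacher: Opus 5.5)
Your proposal is correct and follows the paper's argument. For fixed $\psi$, the paper notes that the inner problem is \eqref{eq:static_criterion} with outcome space $\Omega$, reference model $g$, and utility $U_\psi$ (the information cost is constant in $m$). It then invokes Lemma \ref{lem:closed_form} for the Gibbs form and Proposition \ref{prop:entropy_decreasing_mu} for the entropy claim.

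Your caveat is well taken. The paper leaves implicit that the entropy statement inherits the non-degeneracy hypothesis of Proposition \ref{prop:entropy_decreasing_mu}. Without that hypothesis one gets only weak monotonicity, and strictness fails exactly when $\log g-\lambda U_\psi$ is constant.
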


Observation \ref{prop:cri_ri_mstar} formalizes the central ``narrative'' effect:
robustness shifts weight toward low $U_\psi(\omega)$ states, while complexity aversion amplifies concentration by
penalizing diffused distortions.  Note that $m^\star_{\lambda,\mu}(\cdot;\psi)$ depends on $\psi$ through $U_\psi$, so the worst-case distribution and the induced effective scale are determined jointly with the agent’s choice rule.

\subsubsection{Endogenous state-dependent logit scale}

We now state the main new implication for observables in discrete-choice applications:
the optimal stochastic choice rule inherits \emph{state-dependent} sensitivity.

Define the state-specific \emph{effective scale} induced by $(g,m^\star)$:
\begin{equation}\label{eq:cri_ri_scale}
  \xi_\omega(\psi)
  := \xi\hspace{0.03in}\frac{g(\omega)}{m^\star_{\lambda,\mu}(\omega;\psi)} \in (0,\infty),
  \qquad (\omega\in\Omega).
\end{equation}
The effective scale $\xi_\omega(\psi)$ captures how pessimistic reweighting of state
$\omega$ alters choice sensitivity. When the worst-case distribution overweights a state
relative to the reference prior ($m^\star_{\lambda,\mu}(\omega;\psi)>g(\omega)$), the
effective scale satisfies $\xi_\omega(\psi)<\xi$, so the agent behaves as if information
were cheaper and choices were more deterministic. Conversely, states that
are underweighted by the worst-case narrative exhibit higher effective noise.

\begin{proposition}\label{prop:cri_ri_logit}
Assume $\mu<1/\lambda$ so that Observation \ref{prop:cri_ri_mstar} applies, and consider any solution
$\psi^\star$ of \eqref{eq:cri_ri_problem} with $\bar{\psi}^\star(a)>0$ for all $a\in A$. 
Then, for each $\omega\in\Omega$ and $a\in A$,
\begin{equation}\label{eq:cri_ri_logit}
  \psi^\star(a\mid\omega)
  =
  \frac{\bar{\psi}^\star(a)\exp\big\{\frac{v(a,\omega)}{\xi_\omega(\psi^\star)}\big\}}
       {\sum_{b\in A}\bar{\psi}^\star(b)\exp\big\{\frac{v(b,\omega)}{\xi_\omega(\psi^\star)}\big\}},
\end{equation}
where $\xi_\omega(\psi^\star)$ is defined by \eqref{eq:cri_ri_scale}.
\end{proposition}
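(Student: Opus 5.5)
The plan is to reduce the max–min problem \eqref{eq:cri_ri_problem} to a single maximization over $\psi$ with the worst case held fixed, and then to reuse the first-order argument of \citet{matvejka15}. By Observation \ref{prop:cri_ri_mstar}, for each $\psi$ the inner minimum is attained uniquely at $m^\star_{\lambda,\mu}(\cdot;\psi)$. Write the inner value function as
\[
W(\psi):=\min_{m\in\Delta(\Omega)}\Big\{\textstyle\sum_\omega m(\omega)U_\psi(\omega)+\frac{1}{\lambda}R(m\Vert g)+\mu H(m)\Big\}.
\]
Because $U_\psi$ is linear in $\psi$, $W$ is a pointwise minimum of affine functions of $\psi$. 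Hence $W$ is concave. Since the minimizer is unique and continuous in $\psi$ (Lemma \ref{lem:closed_form}, viewing $U_\psi$ as the utility), Danskin's theorem gives differentiability of $W$, with
\[
\frac{\partial W}{\partial \psi(a\mid\omega)}=m^\star_{\lambda,\mu}(\omega;\psi)\,v(a,\omega).
\]
This is the same envelope logic as in \eqref{eq:envelope_mu}. The agent's objective $W(\psi)-\mathcal{C}(\psi;g)$ is therefore concave, because mutual information is convex in $\psi$ for fixed $g$. It is also differentiable at interior points, so Karush–Kuhn–Tucker conditions are necessary for $\psi^\star$.

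Next I would compute the derivative of the cost. The standard calculation in \citet{matvejka15} gives
\[
\frac{\partial\mathcal{C}}{\partial\psi(a\mid\omega)}=\xi g(\omega)\log\frac{\psi(a\mid\omega)}{\bar\psi(a)}.
\]
The terms coming from differentiating $\bar\psi$ cancel, since $\sum_{\omega}g(\omega)\psi(a\mid\omega)/\bar\psi(a)=1$. Attaching a multiplier $\gamma_\omega$ to each constraint $\sum_a\psi(a\mid\omega)=1$, stationarity at an interior point reads
\[
m^\star(\omega)v(a,\omega)-\xi g(\omega)\log\frac{\psi^\star(a\mid\omega)}{\bar\psi^\star(a)}=\gamma_\omega .
\]
Dividing by $\xi g(\omega)$ and using \eqref{eq:cri_ri_scale}, this becomes
\[
\psi^\star(a\mid\omega)=\bar\psi^\star(a)\exp\{v(a,\omega)/\xi_\omega(\psi^\star)\}\,e^{-\gamma_\omega/(\xi g(\omega))}.
\]
Normalizing over $a$ eliminates $\gamma_\omega$ and yields \eqref{eq:cri_ri_logit}. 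The full-support assumption on $g$ and the positivity of $m^\star$ from the Gibbs form ensure $\xi_\omega\in(0,\infty)$.

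The main obstacle is justifying the interior first-order condition, that is, ruling out $\psi^\star(a\mid\omega)=0$ when $\bar\psi^\star(a)>0$. Here I follow Matějka–McKay. As $\psi(a\mid\omega)\downarrow0$ with $\bar\psi(a)$ bounded away from zero, the marginal cost $\xi g(\omega)\log(\psi/\bar\psi)$ tends to $-\infty$. Meanwhile the marginal benefit $m^\star v$ stays bounded. So shifting a small amount of mass in state $\omega$ onto $a$ strictly improves the objective, which contradicts optimality. Thus every $\psi^\star(a\mid\omega)>0$ and the KKT conditions hold with equality.

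A secondary point is the max–min order. The fixed-point dependence of $m^\star$ on $\psi^\star$ causes no trouble, because Danskin's theorem already accounts for it. In the conditions above, $m^\star$ is simply evaluated at $\psi^\star$, which is exactly how $\xi_\omega(\psi^\star)$ enters the statement.
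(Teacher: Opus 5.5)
Your proof is correct. Its computational core matches the paper's: the derivative $\partial\mathcal{C}/\partial\psi(a\mid\omega)=\xi g(\omega)\log(\psi(a\mid\omega)/\bar\psi(a))$, the interiority argument from the $\log\varepsilon$ blow-up of the Shannon term, and the final normalization over $a$.

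The genuine difference is how you dispose of Nature's inner minimization.

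\emph{The paper's route.} It invokes Sion's minimax theorem, using that $\mathcal{F}$ is concave in $\psi$ and, since $\kappa>0$, convex in $m$. This yields a saddle point $(\psi^\star,m^\star)$. The paper then takes first-order conditions for $\psi\mapsto\mathcal{F}(\psi,m^\star)$ with $m^\star$ held fixed, so no envelope argument is needed.

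\emph{Your route.} You differentiate the inner value function $W$ directly via Danskin's theorem. This needs only compactness of $\Delta(\Omega)$, joint continuity, and uniqueness of the inner minimizer from Observation~\ref{prop:cri_ri_mstar}.

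\emph{What each buys.}
\begin{itemize}
\item Your argument applies verbatim to an arbitrary maximizer of \eqref{eq:cri_ri_problem}, which is exactly what the statement quantifies over.
\item The paper's proof, as written, establishes the formula for the first component of the saddle point it constructs. To reach an arbitrary solution $\psi^\star$ one must add a standard fact: once Sion gives equality of the max-min and min-max values, any max-min strategy paired with any min-max strategy is a saddle point. Uniqueness of the inner minimizer then identifies that min-max strategy with $m^\star_{\lambda,\mu}(\cdot;\psi^\star)$.
\item The saddle-point formulation gives the cleaner economic reading: $\psi^\star$ is an ordinary rational-inattention best response to a fixed equilibrium distortion.
\end{itemize}

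A minor point: your concavity remark is not needed for the first-order conditions to be necessary. Differentiability at the interior optimum suffices, which is why the interiority step must come first.
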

The proof of this and the other propositions in this section are in Online  Appendix \ref{ola:omittedproofs}. 
\begin{corollary}\label{cor:cri_ri_logodds}
Under the conditions of Proposition \ref{prop:cri_ri_logit}, for any $a,b\in \text{\normalfont supp}\hspace{0.02in}\bar{\psi}^\star$ and $\omega\in\Omega$,
\begin{equation*}
  \log\frac{\psi^\star(a\mid\omega)}{\psi^\star(b\mid\omega)}
  =
  \log\frac{\bar{\psi}^\star(a)}{\bar{\psi}^\star(b)}
  + \xi_\omega(\psi^\star)^{-1}\big(v(a,\omega)-v(b,\omega)\big).
\end{equation*}
Hence, the slope of log choice odds with respect to payoff differences equals 
$\xi_\omega(\psi^\star)^{-1} = \xi^{-1}\hspace{0.03in}m^\star_{\lambda,\mu}(\omega;\psi^\star)/g(\omega)$.
\end{corollary}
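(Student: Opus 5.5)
The corollary follows directly from Proposition \ref{prop:cri_ri_logit}, so the plan is to take the ratio of two instances of the logit formula \eqref{eq:cri_ri_logit} and simplify.

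Fix $\omega\in\Omega$ and $a,b\in\operatorname{supp}\bar{\psi}^\star$. Write $\psi^\star(a\mid\omega)$ and $\psi^\star(b\mid\omega)$ using \eqref{eq:cri_ri_logit}. Both expressions share the same denominator $\sum_{c\in A}\bar{\psi}^\star(c)\exp\{v(c,\omega)/\xi_\omega(\psi^\star)\}$, which is strictly positive. The common effective scale $\xi_\omega(\psi^\star)\in(0,\infty)$ is well defined because $m^\star_{\lambda,\mu}(\omega;\psi^\star)>0$: by the Gibbs form \eqref{eq:cri_ri_mstar} and full support of $g$, every term is strictly positive.

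Dividing the two expressions cancels the denominator, so the ratio equals $\frac{\bar\psi^\star(a)}{\bar\psi^\star(b)}\exp\{(v(a,\omega)-v(b,\omega))/\xi_\omega(\psi^\star)\}$. Since $a,b$ lie in the support of $\bar\psi^\star$, both weights are strictly positive and both choice probabilities are strictly positive. The logarithm is therefore well defined, and taking it gives the displayed identity.

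For the slope claim, differentiate the identity with respect to the payoff difference $v(a,\omega)-v(b,\omega)$, holding $\omega$ fixed. This gives $\xi_\omega(\psi^\star)^{-1}$. Substituting the definition \eqref{eq:cri_ri_scale} yields $\xi^{-1}m^\star_{\lambda,\mu}(\omega;\psi^\star)/g(\omega)$.

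There is no real obstacle here. The only care needed is positivity, so that the logarithms and the inverse scale are well defined, and this is supplied by the support restriction on $a,b$, the full support of $g$, and the exponential form of $m^\star$.

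The slope should be read as the coefficient in the identity at the fixed solution $\psi^\star$. It is not a comparative static in which $\psi^\star$ re-adjusts, because $\xi_\omega(\psi^\star)$ itself depends on $\psi^\star$. The proof should state this interpretation explicitly.
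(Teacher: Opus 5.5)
Your proposal is correct and matches the argument the paper intends. The paper gives no separate proof: the corollary follows immediately from Proposition~\ref{prop:cri_ri_logit} by dividing the logit formula \eqref{eq:cri_ri_logit} for $a$ and $b$ so the common denominator cancels (equivalently, subtracting the stationarity conditions in that proof so the multiplier $\tilde\zeta(\omega)$ drops out), and then substituting \eqref{eq:cri_ri_scale}. Your positivity checks and your remark that the slope is a coefficient at the fixed solution $\psi^\star$, not a comparative static, are both appropriate.
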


Corollary \ref{cor:cri_ri_logodds} provides a key empirical prediction of our framework:
choice sensitivity endogenously rises in states that the worst-case narrative overweights.
This result therefore provides a new foundation for the pervasive practice of allowing scale heterogeneity
in logit and mixed-logit estimation \citep[e.g.,][]{swait93,fiebig10}.

\subsubsection{Probability neglect}\label{sec:prob}

We next relate our framework to a phenomenon called ``probability neglect'': when emotions or salience push agents toward worst-case thinking,
responses often appear to depend on the severity of a bad outcome more than on its probability \citep[see,][]{sunstein02,sunstein03}.
In our framework, probability neglect arises when complexity aversion drives the worst-case belief $m^\star$ toward a concentrated
single scenario, which makes the scale $\xi_\omega$ small precisely there.

\begin{corollary}\label{cor:cri_ri_probneglect}
Fix $\lambda>0$ and $\xi>0$. Consider any $\psi$ and the associated $m^\star_{\lambda,\mu}(\cdot;\psi)$
from \eqref{eq:cri_ri_mstar}. As $\mu\uparrow 1/\lambda$,
$m^\star_{\lambda,\mu}(\cdot;\psi)$ concentrates on the set
$\arg\min_{\omega\in\Omega}\big\{U_\psi(\omega)-\frac{1}{\lambda}\log g(\omega)\big\}.$
If the minimizer is unique, say $\hat{\omega}$, then
$m^\star_{\lambda,\mu}(\hat{\omega};\psi)\to 1$ and $m^\star_{\lambda,\mu}(\omega;\psi)\to 0$
for all $\omega\neq\hat{\omega}$, and 
$$
  \xi_{\hat{\omega}}(\psi) = \xi\hspace{0.03in}\frac{g(\hat{\omega})}{m^\star_{\lambda,\mu}(\hat{\omega};\psi)} \to \xi\hspace{0.03in}g(\hat{\omega}),
  \qquad
  \xi_{\omega}(\psi)\to\infty\ \text{for }\omega\neq\hat{\omega}.
$$
Thus, the induced logit rule \eqref{eq:cri_ri_logit} becomes comparatively ``sharp'' (high payoff sensitivity) in the
selected worst-case state $\hat{\omega}$ and comparatively ``flat'' elsewhere.
\end{corollary}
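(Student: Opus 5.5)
The plan is to work directly from the closed form \eqref{eq:cri_ri_mstar} in Observation \ref{prop:cri_ri_mstar}, writing it as a Gibbs distribution with a single inverse temperature that diverges as $\mu\uparrow 1/\lambda$. The key identity is
$$\kappa=\frac{1}{\lambda}-\mu=\frac{1-\lambda\mu}{\lambda}, \qquad \beta=\frac{1}{1-\lambda\mu}=\frac{1}{\lambda\kappa}.$$
It follows that
$$\exp\{-U_\psi(\omega)/\kappa\}\,g(\omega)^{\beta}=\exp\Big\{-\frac{1}{\kappa}\Big(U_\psi(\omega)-\frac{1}{\lambda}\log g(\omega)\Big)\Big\},$$
which is well defined because $g$ has full support. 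Define $\phi(\omega):=U_\psi(\omega)-\frac1\lambda\log g(\omega)$. Then $m^\star_{\lambda,\mu}(\omega;\psi)\propto e^{-\phi(\omega)/\kappa}$. Here $\psi$, and therefore $\phi$, is held fixed while only $\mu$ moves, and $\kappa\downarrow 0$ as $\mu\uparrow 1/\lambda$.

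Next I would run the standard softmin-to-argmin argument. Let $\phi^\ast=\min_\omega\phi(\omega)$ and $M=\arg\min\phi$. Multiplying numerator and denominator by $e^{\phi^\ast/\kappa}$ gives
$$m^\star(\omega)=\frac{e^{-(\phi(\omega)-\phi^\ast)/\kappa}}{|M|+\sum_{\omega'\notin M}e^{-(\phi(\omega')-\phi^\ast)/\kappa}}.$$
Since $\Omega$ is finite, the gap $\delta:=\min_{\omega\notin M}(\phi(\omega)-\phi^\ast)$ is strictly positive. Every term with $\omega\notin M$ is therefore bounded by $e^{-\delta/\kappa}\to0$. This yields $m^\star(\omega)\to 1/|M|$ for $\omega\in M$ and $m^\star(\omega)\to0$ otherwise, so $m^\star$ concentrates on $M$. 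When $M=\{\hat\omega\}$, this gives $m^\star(\hat\omega)\to1$.

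Finally, I would substitute into the definition \eqref{eq:cri_ri_scale}. Since $\xi,g(\hat\omega)>0$ are constants, $\xi_{\hat\omega}(\psi)=\xi g(\hat\omega)/m^\star(\hat\omega)\to \xi g(\hat\omega)$. For $\omega\neq\hat\omega$, the numerator $\xi g(\omega)>0$ is fixed while $m^\star(\omega)\downarrow0$ stays strictly positive, so the ratio diverges to $+\infty$. The concluding ``sharp versus flat'' statement then follows from reading the logit slope $\xi_\omega^{-1}$ in Corollary \ref{cor:cri_ri_logodds}. That slope converges to $1/(\xi g(\hat\omega))$, which is at least $1/\xi$, at $\hat\omega$, and to $0$ elsewhere. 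This reading holds for any fixed $\psi$; I would note that the corollary does not claim anything about how the optimal $\psi^\star$ itself shifts with $\mu$.

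The only real obstacle is the rewriting step: one must see that the power $\beta$ on $g$ exactly cancels against $1/\kappa$ through $\beta\kappa=1/\lambda$. Without this, the limit appears to involve two competing divergences, $1/\kappa\to\infty$ and $\beta\to\infty$. Once the identity is in hand, everything else is the routine Laplace-type concentration on a finite set.
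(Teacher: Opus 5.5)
Your argument is correct and matches the reasoning the paper relies on; the paper writes out no separate proof of this corollary. The identity $\kappa\beta=1/\lambda$, also used in the proofs of Proposition~\ref{prop:entropy_decreasing_mu} and Theorem~\ref{prop:arc_representation}.1, turns the Gibbs weights into $e^{-\phi(\omega)/\kappa}$ with $\phi=U_\psi-\frac{1}{\lambda}\log g$, and letting $\kappa\downarrow 0$ concentrates the softmin on $\arg\min\phi$, consistent with the corner formula of Lemma~\ref{prop:corner}; your caveat that the statement concerns a fixed $\psi$ rather than the $\mu$-dependent optimum $\psi^\star$ is well taken.
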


Notice, for example, that when the prior $g$ is uniform, the minimizer in Corollary \ref{cor:cri_ri_probneglect} satisfies $\hat{\omega}=\arg\min_{\omega\in\Omega}\{U_\psi(\omega)\}$, i.e., $\hat{\omega}$ is the worst payoff state. Corollary \ref{cor:cri_ri_probneglect} therefore provides a new mechanism for probability neglect without appealing to bounded rationality:
a sophisticated agent overly concerned about complexity behaves as if a single adverse scenario dominates evaluation, and allocates decision precision accordingly.
The standard  RI framework assumes an exogenous scale parameter, so it cannot produce the endogenous state-dependent behavior needed to explain probability neglect.

\subsubsection{Discussion}

Empirical discrete-choice analysis typically allows the logit scale to vary across contexts or individuals
(i.e., scale heterogeneity) and develops flexible specifications to accommodate it \citep[e.g.,][]{swait93,fiebig10}.
Corollary \ref{cor:cri_ri_logodds} shows that scale heterogeneity can arise \emph{endogenously} even with a fixed
information-cost technology: it is driven by the likelihood ratio $m^\star/g$ generated by the robust-complexity
criterion.   Observable factors that increase pessimistic overweighting of particular states---such as adverse news, forecast failures, regime changes, or time pressure---raise
the likelihood ratio $m^\star_{\lambda,\mu}(\omega)/g(\omega)$ and therefore reduce the
effective scale $\xi_\omega$. The model predicts that such factors generate
state-specific increases in choice sensitivity, rather than a uniform reduction
in noise. This provides a structural interpretation of scale heterogeneity in
discrete-choice data based on robustness and complexity aversion, rather than
 preference shocks.

In summary, this application highlights what robustness and complexity aversion add to the RI literature:
it transforms a single global ``noise'' parameter into a structured, state-dependent object pinned down by primitives
(prior $g$, payoffs, and the $(\lambda,\mu)$ criterion). This delivers a tractable bridge between robustness,
Occam-style simplicity of narratives, and empirical evidence of context-dependent choice sensitivity.

\subsection{Stochastic growth}\label{sec:home_bias}
We introduce our criterion in the stochastic-growth analysis of \citet{growth22}.
\subsubsection{Home bias}
 A key empirical regularity in international finance is  \emph{equity home bias}: domestic assets are overweighted relative to diversification benchmarks. In our framework, home bias can arise even when domestic and foreign assets have the \emph{same} unconditional payoff distribution under the reference model. The mechanism is that, when $\mu>0$, the criterion \eqref{eq:static_criterion} penalizes diffuse worst-case distortions; this makes assets whose downside risk is concentrated in a small set of focal crisis scenarios (low-entropy narratives) endogenously more fragile. This channel is absent in entropy-free benchmarks (i.e., $\mu=0$), including the stochastic-growth formulation in \citet{growth22}, which depends only on a KL penalty and therefore cannot distinguish assets that share the same payoff distribution under the reference model.

\par There are two assets  $A=\{d,f\}$ and outcomes $Y=\{y^*,y_1,\dots,y_N\}$ with $N\ge2$. Utilities are binary and identical across assets: $u(d,y^*)=u(f,y^*)=1,$ and $
u(d,y)=u(f,y)=0\ \text{for all }y\in\{y_1,\dots,y_N\}.$ Hence, outcome $y^*$ is a favorable scenario, whereas the remaining $N$ outcomes are unfavorable scenarios that capture various economic downturns.
Let $Q=\{\m{q}\}$ be a structured model, with $q_d,q_f\in\Delta(Y)$.
Fix $\delta\in(0,1)$ and $\varepsilon\in(0,\delta/N)$ and define
$$
q_d(y^*)=q_f(y^*)=1-\delta,
\qquad
q_d(y)=\frac{\delta}{N}\quad \text{for }y=y_1,\dots,y_N,
$$
$$
q_f(y_1)=\delta-(N-1)\varepsilon,
\qquad
q_f(y)=\varepsilon\quad \text{for }y=y_2,\dots,y_N.
$$
Thus, under $\m{q}$ both assets deliver payoff $1$ with probability $1-\delta$ and payoff $0$ with probability $\delta$, but the foreign downside mass is concentrated on a single crisis state. Lemma \ref{lem:closed_form} implies the closed-form value for each asset $a\in\{d,f\}$ is $v_{\lambda,\mu}(a;\m{q}) = -\kappa\log Z_a,$ where $ 
Z_a := e^{-1/\kappa}q_a(y^*)^\beta+\sum_{i=1}^N q_a(y_i)^\beta,$
so the \emph{home-bias premium} is
$v_{\lambda,\mu}(d;\m{q})-v_{\lambda,\mu}(f;\m{q})=\kappa\log\frac{Z_f}{Z_d}.$

\begin{proposition}\label{prop:home_bias}
For every $\lambda>0$,
$$
v_{\lambda,0}(d;\m{q})=v_{\lambda,0}(f;\m{q})
\qquad\text{and}\qquad
v_{\lambda,\mu}(d;\m{q})>v_{\lambda,\mu}(f;\m{q}) \quad\text{ for all }\mu\in(0,1/\lambda).
$$
\end{proposition}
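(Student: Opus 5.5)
The plan is to work directly from the closed form $v_{\lambda,\mu}(a;\m{q})=-\kappa\log Z_a$ supplied by Lemma \ref{lem:closed_form}, with $Z_a=e^{-1/\kappa}q_a(y^*)^\beta+\sum_{i=1}^N q_a(y_i)^\beta$. Since $\kappa>0$ throughout $\mu<1/\lambda$, the sign of the home-bias premium $\kappa\log(Z_f/Z_d)$ equals the sign of $Z_f-Z_d$. Because $q_d(y^*)=q_f(y^*)=1-\delta$, the first term of $Z_a$ is identical for both assets. The comparison therefore reduces to the downside sums
$$S_d:=\sum_{i=1}^N q_d(y_i)^\beta = N(\delta/N)^\beta, \qquad S_f:=(\delta-(N-1)\varepsilon)^\beta+(N-1)\varepsilon^\beta .$$

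For $\mu=0$, we have $\beta=1$, so $S_d=S_f=\delta$. Hence $Z_d=Z_f$, and the two values coincide. This gives the first claim for every $\lambda>0$.

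For $\mu\in(0,1/\lambda)$, we have $\beta=1/(1-\lambda\mu)>1$, so $x\mapsto x^\beta$ is strictly convex on $[0,\infty)$. The vectors $(q_f(y_i))_i$ and $(q_d(y_i))_i$ both sum to $\delta$, and the uniform vector $(\delta/N,\dots,\delta/N)$ is majorized by any other vector with the same sum. By Jensen's inequality applied to the strictly convex function $x^\beta$ (equivalently, Karamata's inequality),
$$\tfrac1N\sum_i q_f(y_i)^\beta \;\ge\; \Big(\tfrac1N\sum_i q_f(y_i)\Big)^\beta=(\delta/N)^\beta ,$$
so $S_f\ge S_d$. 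Equality holds only if all $q_f(y_i)$ are equal, which would require $\delta-(N-1)\varepsilon=\varepsilon$, i.e. $\varepsilon=\delta/N$. This is excluded by $\varepsilon<\delta/N$. Together with $N\ge2$, so that the foreign vector is genuinely non-uniform, we get $S_f>S_d$. Therefore $Z_f>Z_d$, $\log(Z_f/Z_d)>0$, and since $\kappa>0$, $v_{\lambda,\mu}(d;\m{q})>v_{\lambda,\mu}(f;\m{q})$.

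No step here is a serious obstacle. The only care needed is to confirm that $\beta>1$ precisely on $\mu\in(0,1/\lambda)$, which holds because $1-\lambda\mu\in(0,1)$. One must also check that the strictness of Jensen's inequality uses $\varepsilon\neq\delta/N$. Positivity of all $q_f(y_i)$, which follows from $\varepsilon<\delta/N$, ensures full support, so Lemma \ref{lem:closed_form} applies to both assets.
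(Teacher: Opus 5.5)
Your proof is correct and follows the paper's argument. You reduce to comparing $Z_d$ and $Z_f$ via their common $y^*$ term, get equality at $\beta=1$, and apply strict Jensen to $x\mapsto x^\beta$ with $\beta>1$ on the non-constant vector $(q_f(y_i))_i$. The one step you take for granted is the value identity $v_{\lambda,\mu}(a;\m{q})=-\kappa\log Z_a$. Lemma~\ref{lem:closed_form} gives only the minimizer, so the paper verifies this identity by substituting the Gibbs form into the objective and using $\kappa\beta=1/\lambda$.
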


In \citet{growth22},  only the likelihood cost of a distorted model  and the associated payoff tilt matter. As a result, relabeling or reshuffling which particular states deliver a loss has no effect, because the benchmark criterion depends on \emph{how much} probability mass is shifted, not on \emph{how concentrated} the implied downside scenarios are. In contrast,  with our  complexity-augmented criterion a foreign asset whose losses are concentrated in a small number of focal states is strictly less attractive when $\mu>0$  holding fixed the unconditional payoff distribution, because the worst-case distortion can load on a simpler narrative. This delivers a parsimonious preference-based rationale for home bias that does not rely on expected-return differences, hedging motives, or segmentation: investors avoid foreign exposure precisely when the perceived foreign downside is concentrated around a small number of prominent crisis scenarios. Thus, Proposition \ref{prop:home_bias} links home bias to the \emph{structure} of tail risk, which generates sharp comparative statics that the stochastic-growth benchmark cannot deliver. Note also that since 
$v_{\lambda,\mu}(d;\m{q})\ge v_{\lambda,\mu}(f;\m{q}) \iff Z_d\le Z_f$ in our framework,
 home bias disappears (or reverses) whenever domestic tail risk is at least as concentrated as foreign tail risk in the power-sum sense $\sum_{i\ge1}q_a(y_i)^\beta$ induced by $\beta>1$.

\par The magnitude of home bias is potentially large even holding the probability of downside scenarios $\delta$ fixed: along the extreme-concentration limit $\varepsilon\downarrow 0$, $\sum_{i=1}^N q_f(y_i)^\beta \longrightarrow \delta^\beta$ and $\sum_{i=1}^N q_d(y_i)^\beta = \delta^\beta N^{1-\beta},$ so
$$
  v_{\lambda,\mu}(d;\m{q})-v_{\lambda,\mu}(f;\m{q})
  \longrightarrow
  \kappa\log \left(
    \frac{e^{-1/\kappa}(1-\delta)^\beta+\delta^\beta}
         {e^{-1/\kappa}(1-\delta)^\beta+\delta^\beta N^{1-\beta}}
  \right).
$$
This limit is strictly positive for every $\mu>0$, and for fixed $(\lambda,\mu,\delta)$, it is increasing in the number of distinct downside scenarios $N$. Thus, what matters in our analysis is not ``rarity'' per se but the low-entropy downside support: for $\mu>0$, any relabelling that concentrates probability within the loss set $\{y_1,\dots,y_N\}$ increases the foreign term $\sum_{i\ge1}q_f(y_i)^\beta$ and therefore decreases $v_{\lambda,\mu}(f;\m{q})$.
Rare crises amplify the mechanism when robustness is tight or payoffs are very low in those crisis states, because the Gibbs tilt in Lemma \ref{lem:closed_form} loads the worst-case distortion on low-utility states even if they are ex ante unlikely. Also note that whether induced home bias raises or lowers realized returns depends on the true joint distribution of domestic/foreign payoffs and the associated risk premia: home bias improves outcomes precisely when the concentrated foreign tail corresponds to genuine downside risk under the true DGP, and is costly when it induces under-diversification without compensating tail-risk reduction.

\subsubsection{Endogenous growth rates}

In \citet[][Proposition 4]{growth22}, the growth-rate loss from optimizing under misspecification equals the degree of misspecification, measured by the KL divergence between the structured model and true DGP, so long-run growth does not depend on the payoff environment  for a given degree of misspecification. We show that with  complexity aversion, 
misspecification losses become \emph{payoff-dependent} because the relevant  choice rule is now disciplined by
simplicity and therefore interacts with how returns load on states.

\paragraph{Complexity-augmented criterion.}
Fix $\mu\in[0,1)$. Following \citet{growth22}, let $A$ be a finite set of actions and let $Q=\{p'\}$ be a singleton structured model $p'\in\Delta(Y)$. For each mixed strategy  $\alpha\in\Delta(A)$ and state ${y}\in Y$, 
\begin{equation}\label{eq:complex_growth_aggregator}
G_\mu(\alpha,{y})=
-\min_{q\in\Delta(A)}
\Big\{
\E_q[-u(a,{y})]
+
R(q\Vert \alpha)
+
\mu H(q)
\Big\}
\end{equation}
is the analogue of our complexity-augmented criterion in stochastic growth, where $\lambda=1$, $-u$, and the negative sign in front are imposed to match \citet[][eq. (7)]{growth22} when $\mu=0$. Given a model $p\in\Delta({Y})$, the objective becomes $V_\mu(p):=\underset{\alpha\in\Delta(A)}{\max}\E_p\big[G_\mu(\alpha,{y})\big],$ where $\alpha_\mu^*(p)\in\underset{\alpha\in\Delta(A)}{\arg\max}\hspace{0.03in}\E_p\big[G_\mu(\alpha,{y})\big].$
Given models $p,p'\in\Delta({Y})$, the function
\begin{equation}\label{eq:complex_growth_loss}
L_\mu(p,p'):=\E_p\left[G_\mu(\alpha_\mu^*(p),{y})\right]-\E_p\left[G_\mu(\alpha_\mu^*(p'),{y})\right]
\end{equation} is the reduction of the growth rate caused by the optimization for the misspecified model $p'$ when the true model is $p$.
When $\mu=0$, $L_0(p,p')$ is the loss in \citet[][p. 371]{growth22}. 
In \citet[][Proposition 4]{growth22}, two environments with the same $R(p\Vert p')$ exhibit the same growth loss function under their Regularity Condition 2.\footnote{Regularity Condition 2 requires that the degree of misspecification is not too large (Assumption \ref{ass:rc2}).  }
\begin{proposition}\label{prop:growth_misspec_fails}
Under \eqref{eq:complex_growth_aggregator}, model-misspecification costs are not universal: even when the misspecified model satisfies \citet[][Regularity Condition 2]{growth22} and yields $L_0(p,p')=R(p\Vert p')$, the loss $L_\mu(p,p')$ in \eqref{eq:complex_growth_loss} need not equal $R(p\Vert p')$ for $\mu>0$.
\end{proposition}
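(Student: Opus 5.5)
The plan is to prove the statement with an explicit counterexample. Lemma \ref{lem:closed_form} gives a closed form for the aggregator \eqref{eq:complex_growth_aggregator}, and we then exhibit one environment in which $L_0(p,p')=R(p\Vert p')$ but $L_\mu(p,p')\neq R(p\Vert p')$ for some $\mu\in(0,1)$.

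\textbf{Step 1: closed form for $G_\mu$.} The minimization in \eqref{eq:complex_growth_aggregator} is \eqref{eq:static_criterion} with outcome space $A$, reference distribution $\alpha$, utility $-u(\cdot,y)$ and $\lambda=1$. Take $\alpha$ to have full support. Then $\kappa=1-\mu$ and $\beta=1/(1-\mu)$, and the Gibbs form gives
\[
G_\mu(\alpha,y)=\kappa\log\sum_{a\in A}\alpha(a)^{\beta}e^{u(a,y)/\kappa},
\]
which reduces at $\mu=0$ to the log-sum-exp aggregator of \citet{growth22}.

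\textbf{Step 2: the natural Arrow-security candidate does not work.} Suppose $e^{u(a,y)}=o_y\mathbbm{1}\{a=y\}$, taken as a limit. Because $\kappa\beta=1$, we get $G_\mu(\alpha,y)=\log\alpha(y)+\log o_y$ for every $\mu$. In that case $\alpha^*_\mu(p)=p$ and $L_\mu=R(p\Vert p')$ for all $\mu$. So the example must have payoffs that overlap across states; otherwise the power transformation $\alpha^\beta$ is exactly undone by the outer factor $\kappa$.

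\textbf{Step 3: a two-by-two example.} Take $A=Y=\{1,2\}$ and the payoff matrix
\[
e^{u(a,y)}=\begin{cases} o_y, & a=y,\\ \varepsilon, & a\neq y,\end{cases}
\]
with a small $\varepsilon>0$. Pick $p$ close to $p'$ so that \citet{growth22}'s Regularity Condition 2 holds (Assumption \ref{ass:rc2}). Their Proposition 4 then delivers $L_0(p,p')=R(p\Vert p')$. The condition is meant to keep the optimal portfolios interior, and so that the identity is inherited from their result we would choose $\varepsilon$ and $p,p'$ exactly as in their canonical example. Then, for a concrete value such as $\mu=1/2$ (so $\beta=2$, $\kappa=1/2$), we compute the following in closed form or numerically:
\begin{itemize}
\item the first-order condition for $\alpha^*_\mu(p)$, a scalar equation in $\alpha(1)$;
\item the first-order condition for $\alpha^*_\mu(p')$;
\item the resulting loss $L_\mu(p,p')$.
\end{itemize}
We then check that $L_\mu(p,p')\neq R(p\Vert p')$.

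An alternative to the numerical check is perturbative. Differentiate $L_\mu$ in $\mu$ at $\mu=0$, using the envelope identity \eqref{eq:envelope_mu} (the derivative in $\mu$ of the inner value is the entropy of the worst case) together with the first-order conditions for $\alpha^*_\mu$. This should give $\partial_\mu L_\mu|_{\mu=0}$ as a difference of $p$-weighted entropies of the worst-case distributions over actions, evaluated at $\alpha^*_0(p)$ and $\alpha^*_0(p')$, plus a term from the movement of $\alpha^*_\mu(p')$. We would then show this derivative is nonzero for the chosen parameters.

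\textbf{Main obstacle.} The hard part is Step 3: choosing an instance that provably satisfies Regularity Condition 2 at $\mu=0$ while making the $\mu$-dependence strictly visible. First-order conditions with overlapping payoffs are not linear in $\alpha$. The plan is therefore to fix explicit numbers and verify the inequality by direct computation, rather than look for a general formula.
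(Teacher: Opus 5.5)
Your overall route matches the paper's: an explicit $2\times 2$ example with overlapping payoffs, analyzed through the Gibbs closed form $G_\mu(\alpha,y)=(1-\mu)\log\sum_a\alpha(a)^{1/(1-\mu)}e^{u(a,y)/(1-\mu)}$. Your Step 2 (Arrow securities give $G_\mu=G_0$ because $\kappa\beta=1$) is correct and explains why overlap is needed.

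As written, however, Step 3 is a plan rather than an argument. The instance is deferred to an unspecified ``canonical example,'' and the inequality $L_\mu(p,p')\neq R(p\Vert p')$ is deferred to a numerical or perturbative computation that you never carry out. Your particular choice also makes that computation hard. With $p'$ close to $p$, both $\alpha^*_\mu(p)$ and $\alpha^*_\mu(p')$ are interior and move with $\mu$. In your derivative formula the entropy term at $\alpha^*_0(p)$ is nearly offset by the one at $\alpha^*_0(p')$ and by the term from $\partial_\mu\alpha^*_\mu(p')$. These cancel to first order in $p'-p$, since both $L_\mu$ and $R$ vanish to second order at $p'=p$. So the sign is a second-order question with no evident answer. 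Taking $\varepsilon$ small only pushes you toward your own Arrow case, where the effect disappears.

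The missing idea is to place $p'$ at a vertex of the convex hull in Regularity Condition 2, not near $p$. Take $p'=q_0^*(\cdot\mid 1)$, the posterior generated by the benchmark optimum under $p$. The condition then holds trivially, so it does not require interior portfolios as you suggest, and $\alpha^*_0(p')=\delta_1$ is degenerate.

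Combine this with the pointwise bound $G_\mu(\alpha,y)\le G_0(\alpha,y)$. This is $\|x\|_{1/(1-\mu)}\le\|x\|_1$ for $x_a=\alpha(a)e^{u(a,y)}$, with equality if and only if $\alpha$ is degenerate, given positive returns. Since $G_\mu(\delta_1,y)=u(1,y)=G_0(\delta_1,y)$, you get $\E_{p'}[G_\mu(\alpha,y)]\le\E_{p'}[G_0(\alpha,y)]\le\E_{p'}[G_\mu(\delta_1,y)]$. Hence $\alpha^*_\mu(p')=\delta_1$ for every $\mu$. Under $p$, a nondegenerate portfolio strictly beats both corners, so equality in the bound is impossible at the optimum and $V_\mu(p)<V_0(p)$. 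Therefore
$L_\mu(p,p')=V_\mu(p)-\E_p[u(1,y)]<V_0(p)-\E_p[u(1,y)]=L_0(p,p')=R(p\Vert p')$ for all $\mu\in(0,1)$, with no first-order condition to solve.

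This is exactly the paper's proof. It uses returns $(3,1)$ in $y_1$ and $(1,4)$ in $y_2$, which is your family with $o_1=3$, $o_2=4$, $\varepsilon=1$. It sets $p=(1/2,1/2)$, obtains $\alpha^*_0(p)(1)=5/12$, and takes $p'=(9/11,2/11)$. In your perturbative language, the corner choice removes both the $p'$-entropy term and the movement term. What remains is $-\E_p[H(q_0^*(\cdot\mid y))]<0$, where $q_0^*(\cdot\mid y)$ is the sampled choice rule at $\alpha^*_0(p)$.
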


In our framework, the same $R(p\Vert p')$ can generate very different
growth losses even when \citet[][Regularity Condition 2]{growth22} is satisfied, because complexity aversion penalizes diffuse sampled-choice rules, so the cost of a given belief error
depends on how much randomization is required to implement growth-optimal behavior in that payoff environment.
This captures a key empirical pattern in macro-finance and growth: belief disagreements and ``narrative shocks''
tend to have large effects when mixed strategies load on a small set of salient states (crisis
scenarios), while similarly sized misspecifications in calm environments have negligible effects on growth.
\citeauthor{growth22}'s (\citeyear[][Proposition 4]{growth22}) payoff-invariant loss rules out this heterogeneity, whereas our framework is able to capture it in a tractable way.\footnote{\citet{eden26} shows that in long-horizon stochastic-growth problems, a standard expected-utility maximizer is generically driven by atypical, rare-shock frequencies. We do not study how expected utility  amplifies rare events, but how adding complexity aversion changes which downside scenarios matter.}

\section{Representation}\label{sec:axiomatization}

This section provides a representation result for our criterion by  restating it in an 
Anscombe–-Aumann framework as in  \citet{lanzani25w}. The outcome space $Y$ introduced earlier now plays 
the role of a state space, an act is a mapping $f:Y\to X$ from states to 
consequences in a convex consequence space $X$, and $Q\subset \Delta(Y)$ is the set of structured models. This reformulation does not 
change the economics: in the static model, each action $a$ induces a payoff 
profile $y\mapsto u(a,y)$, which can be identified with an act over $Y$.

\par Fix an affine, nonconstant utility index $u:X\to\R$. For each $(f,q)\in X^{Y}\times Q$, define the analogues of (\ref{eq:static_criterion}) and (\ref{eq:posterior_value}) as
\begin{equation}\label{eq:axiom_v}
  v_{\lambda,\mu}(f;q)
  :=
  \min_{p\in\Delta(Y)}
  \left\{
    \E_p[u(f)]
    +\frac{1}{\lambda}R(p\Vert q)
    +\mu H(p)
  \right\},
\end{equation}
and
\begin{equation}\label{eq:axiom_V}
  V_{\lambda,\mu}(f;\pi)
  :=
  \sum_{q\in Q}\pi(q)v_{\lambda,\mu}(f;q).
\end{equation}
The induced preference, denoted $\succeq$, satisfies $f\succeq g$ if and only if  $V_{\lambda,\mu}(f;\pi)\ge V_{\lambda,\mu}(g;\pi)$. When $\mu=0$ (i.e., no complexity penalty), (\ref{eq:axiom_v}) becomes \citeauthor{hansen01}'s (\citeyear{hansen01}) standard multiplier criterion and (\ref{eq:axiom_V}) becomes the  ARC criterion in \citet[][eq. (6)]{lanzani25w}. This yields the following hierarchy of representations: the  multiplier criterion is a special case of the standard ARC criterion when $Q$ is a singleton, and the standard ARC criterion is a special case of our complexity-augmented criterion when $\mu=0$.

\subsection{ARC representation}

Given  $\beta=1/(1-\lambda\mu)\in\R$ and a model $q\in\Delta(Y)$, let $Z(q):=\sum_{y\in Y}q(y)^{\beta}$ be a normalizing constant  and define the \textit{power-transform} of $q$ as
\begin{equation}\label{eq:axiom_Zq}
\tilde q(\cdot):=\frac{q(\cdot)^{\beta}}{Z(q)}\in\Delta(Y).
\end{equation}
Let $\widetilde Q:=\{\tilde q: q\in Q\}\subset \Delta(Y)$ and let $\widetilde\pi\in\Delta(\widetilde Q)$ be the
pushforward of $\pi$ under $q\mapsto \tilde q$.\footnote{The power-transform distribution $\tilde q\propto q^{\beta}$ in \eqref{eq:axiom_Zq} is called an ``escort distribution'' in statistical physics  and information geometry \citep[e.g.,][]{esc93}.}

\begin{theorem}\label{prop:arc_representation}
Let $\mu\in\R$, finite $Q\subset\Delta(Y)$, and $\pi\in\Delta(Q)$. The following hold:
\begin{enumerate}
   \item[1.] Suppose $\lambda>0$, $\mu<1/\lambda$, and define $\tilde q$ by \eqref{eq:axiom_Zq}. Then, $\kappa=\frac{1}{\lambda}-\mu>0$ and there exists a constant
$\widetilde{C}(\lambda,\mu,\pi)\in\R$, independent of acts, such that for all acts $f$,
\begin{equation}\label{eq:arc_form}
  V_{\lambda,\mu}(f;\pi)
  =
  \widetilde{C}(\lambda,\mu,\pi)
  +\sum_{q\in Q}\pi(q)\,
    \min_{p\in\Delta(Y)}
    \Big\{
      \E_p[u(f)]
      +\kappa\,R(p\Vert\tilde q)
    \Big\}.
\end{equation}
Equivalently, $\succeq$ is ordinally equivalent to an ARC preference with parameter
$\lambda_{\mathrm{ARC}}:=1/\kappa$ and structured models $\widetilde Q$ endowed with belief $\widetilde\pi$.
\item[2.] Suppose $\lambda>0$ and $\mu\ge 1/\lambda$. Then, the preference induced by $V_{\lambda,\mu}(\cdot;\pi)$ is a variational preference but not an ARC preference for $|Y|\ge 2$. In particular, for all acts $f$, 
\begin{align*}
 V_{\lambda,\mu}(f;\pi)
=\sum_{q\in Q}\pi(q)\min_{y\in\text{\normalfont supp}(q)}\Big\{u(f(y)) +\frac{1}{\lambda}\log\frac{1}{q(y)}\Big\}.
\end{align*}
 which is independent of $\mu$ on the entire region $\mu\geq1/\lambda$.
\item[3.] Suppose $\lambda\rightarrow0$ and $\mu\in\R$ is fixed. Then, 
$\underset{\lambda\rightarrow0}{\text{\normalfont lim}}\hspace{0.02in}V_{\lambda,\mu}(\cdot;\pi)=\sum_{q\in Q}\pi(q)\big(\E_{q}[u(\cdot)]+\mu H(q)\big)$ induces a subjective expected utility preference. When $Q=\{q\}$, $\underset{\lambda\rightarrow0}{\text{\normalfont lim}}\hspace{0.02in}v_{\lambda,\mu}(\cdot;q)$ induces the subjective version of \citeauthor{mononen25}'s (\citeyear{mononen25}) entropy-modified expected utility preference.
\end{enumerate}
\end{theorem}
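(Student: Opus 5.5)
Part 1 rests on one algebraic identity that absorbs the entropy term into a relative-entropy term against the power-transformed model. Since $H(p)=-\sum_y p(y)\log p(y)$, for $p\ll q$ we have
\[
\tfrac{1}{\lambda}R(p\Vert q)+\mu H(p)=\kappa\sum_y p(y)\log p(y)-\tfrac{1}{\lambda}\sum_y p(y)\log q(y).
\]
Note that $\tfrac{1}{\lambda}=\kappa\beta$, because $\kappa\beta=\tfrac{1-\lambda\mu}{\lambda}\cdot\tfrac{1}{1-\lambda\mu}$. The right-hand side is therefore $\kappa\sum_y p(y)\log\big(p(y)/q(y)^{\beta}\big)=\kappa R(p\Vert\tilde q)-\kappa\log Z(q)$. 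Hence
\[
v_{\lambda,\mu}(f;q)=\min_p\{\E_p[u(f)]+\kappa R(p\Vert\tilde q)\}-\kappa\log Z(q).
\]
Averaging over $\pi$ gives \eqref{eq:arc_form} with $\widetilde C=-\kappa\sum_q\pi(q)\log Z(q)$, which does not depend on $f$. Ordinal equivalence with ARC then follows: the minimum equals $-\kappa\log\E_{\tilde q}[e^{-u(f)/\kappa}]$, which is the multiplier criterion with $\lambda_{\mathrm{ARC}}=1/\kappa$, and the sum over $q$ is an expectation under the pushforward $\widetilde\pi$. Two technical points need care. First, if $q$ lacks full support, $p\ll q$ is required by $R(p\Vert q)<\infty$ and transfers to $p\ll\tilde q$. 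Second, when $\beta<0$, i.e.\ $\mu<0$, with $\kappa>0$ one needs $\lambda\mu<1$, so $\beta>0$ always holds on this region. Part 1 is essentially this computation.

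For Part 2, with $\mu\ge 1/\lambda$ write $\kappa=1/\lambda-\mu\le 0$. The objective becomes
\[
\E_p[u(f)-\tfrac{1}{\lambda}\log q]+\kappa\sum_y p\log p=\E_p\big[u(f)+\tfrac{1}{\lambda}\log\tfrac{1}{q}\big]-\kappa H(p),
\]
with $-\kappa\ge0$ multiplying $H\ge0$. This expression is linear plus a nonnegative multiple of the concave function $H$, so it is concave in $p$ and its minimum over the simplex $\Delta(\text{supp}\,q)$ is attained at a vertex. At a vertex $H=0$, so the minimum equals $\min_{y\in\text{supp}(q)}\{u(f(y))+\tfrac1\lambda\log\tfrac1{q(y)}\}$, which is independent of $\mu$. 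The formula is then a $\pi$-average of minima over Dirac measures $\delta_y$, each carrying the additive cost $c_q(\delta_y)=\tfrac1\lambda\log\tfrac1{q(y)}$. Each term is thus a variational functional with cost $c_q(p)=\sum_y p(y)\tfrac1\lambda\log\tfrac1{q(y)}$ restricted to the vertices, or equivalently the convexified cost. An average of variational functionals is variational, with cost given by the infimal convolution of the scaled costs.

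The step I expect to be the main obstacle is showing that this preference is not ARC when $|Y|\ge2$. My plan is to use smoothness. An ARC functional $\sum\pi'(q')(-\kappa'\log\E_{q'}e^{-u/\kappa'})$ with $\kappa'>0$, or its limiting SEU form, is differentiable in $u(f)$ everywhere. The derived functional, by contrast, is piecewise linear with kinks. Concretely, I would take two states and a single $q$. Varying $u(f(y_1))$ while holding $u(f(y_2))$ fixed, the functional exhibits a kink where the two terms tie, and I would check that this kink survives $\pi$-averaging: the kink locations differ across $q$, and each generates a nondifferentiability. ARC functionals are real-analytic in the utility profile, so they cannot reproduce a kink, nor can any strictly monotone transform of a smooth function coincide ordinally with this piecewise-linear one. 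I would formalize the last point through indifference curves, which are kinked polygons here versus smooth curves under ARC. Degenerate ARC limits with $\kappa'\to0$, i.e.\ maxmin, are excluded by ARC's requirement $\kappa'>0$.

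Part 3 follows by taking limits. As $\lambda\to0$ with $\mu$ fixed, $\kappa\to\infty$ and $\beta\to1$. I would bound $v$ from above by evaluating the objective at $p=q$, giving $\E_q u+\mu H(q)$. For the lower bound, $\tfrac1\lambda R(p\Vert q)$ dominates on the compact simplex, so minimizers converge to $q$ by continuity, which is the same argument as Lemma \ref{lem:closed_form}'s continuity statement. The limit is therefore $\sum\pi(q)(\E_q u+\mu H(q))$. The term $\mu\sum_q\pi(q)H(q)$ is an act-independent constant, so the preference is SEU with prior $\sum_q\pi(q)q$. When $Q=\{q\}$ the limit is exactly Mononen's form $\E_q u+\mu H(q)$.
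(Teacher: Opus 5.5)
Your proposal is correct in substance. Parts 1 and 3 follow the paper. Part 2 takes a genuinely different route.

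\textbf{Parts 1 and 3.} The identity $\frac1\lambda R(p\Vert q)+\mu H(p)=\kappa R(p\Vert\tilde q)-\kappa\log Z(q)$, obtained from $\kappa\beta=1/\lambda$, is exactly the paper's computation. For Part 3, the paper simply asserts the limit. Your squeeze argument fills that in: the upper bound comes from $p=q$, and the lower bound from $R(p_\lambda\Vert q)=O(\lambda)$, which works for either sign of $\mu$.

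\textbf{Part 2, corner formula.} Your argument (a concave objective is minimized at a vertex) is equivalent to the paper's observation that $\kappa\sum_y p(y)\log p(y)\ge 0$ when $\kappa\le 0$, with equality only at Dirac measures.

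\textbf{Part 2, variational claim.} The paper builds the cost abstractly. It takes the Fenchel conjugate $c(p)=\sup_\eta\{I(\eta)-p\cdot\eta\}$ and uses the fact that supergradients of a minimum of affine maps are probability vectors. You instead note that each model term is already variational, with the linear cost $c_q(p)=\frac1\lambda\sum_y p(y)\log\frac{1}{q(y)}$ on $\Delta(\mathrm{supp}\,q)$. You then combine the terms by $\pi$-weighted infimal convolution. This yields an explicit cost, which you still need to shift by its minimum to ground it.

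\textbf{Part 2, non-ARC claim.} The paper exhibits a violation of Lanzani's intramodel sure-thing axiom, using $Q=\{\rho\}$ with $\rho$ uniform on two states and three consequences. It then invokes that axiom's necessity for ARC. Your argument is analytic. An ARC functional is smooth, and its gradient is a probability vector, hence nonzero. By the implicit function theorem its indifference sets are therefore $C^1$ curves in the interior of $u(X)^Y$. By contrast, $\min\{\xi(1),\xi(2)\}$ has an L-shaped level set through $(c,c)$. State the nonvanishing gradient explicitly, since it is what makes the comparison work. Your route is self-contained and does not depend on how Axiom 3 is formulated for model-conditional preferences. The paper's route is more elementary and ties the failure to a behavioral axiom.

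\textbf{A correction on the non-ARC step.} Like the paper, you can only establish ``not ARC'' by a concrete example, so keep your single-$q$, two-state construction. Drop the plan to push the kink through general $\pi$-averaging, because the universal claim fails. If every $q$ in the support of $\pi$ is a Dirac measure, $V$ is SEU and hence ARC. If $u(X)$ is bounded, every tie may lie outside $u(X)^Y$; for example, $q=(0.99,0.01)$ with $\lambda=1$ and $u(X)=[0,1]$ gives $V=u(f(y_1))+\mathrm{const}$.

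Where kinks do survive averaging, your stated reason is wrong. It is not that their locations differ across $q$; they can coincide. The correct reason is concavity: superdifferentials of concave summands add. A non-singleton superdifferential in one term therefore cannot be smoothed away by the others.

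\textbf{Minor wording.} The phrase ``when $\beta<0$, i.e.\ $\mu<0$'' is garbled: $\mu<0$ gives $\beta\in(0,1)$. Your conclusion that $\beta>0$ on $\{\kappa>0\}$ is right.
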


Theorem \ref{prop:arc_representation} classifies the induced preference into three regimes. In the low-simplicity region ($\mu< 1/\lambda$), the preference  is equivalent to an ARC criterion computed against a transformed within-model benchmark \eqref{eq:axiom_Zq}, so simplicity affects behavior only through a change in the effective reference distributions.

In the high-simplicity region ($\mu\ge 1/\lambda$), the criterion becomes extreme-state within each structured model and no longer induces an ARC preference; moreover, beyond the threshold, choices are unaffected by higher complexity aversion. Finally, in the vanishing-misspecification limit, the induced preference reduces to subjective expected utility preference with an act-independent additive entropy term. 

\subsection{Connection to \citet{gabaix25}}\label{subsec:gabaix}

Theorem \ref{prop:arc_representation} shows that in the high-simplicity regime ($\mu\ge 1/\lambda$), worst-case beliefs collapse to Dirac measures, so Nature constructs pessimism using a single salient adverse scenario within each structured model. For this reason, our high-simplicity model can generate the same reduced-form objective function, and hence the same choice behavior, as \citeauthor{gabaix25}'s (\citeyear{gabaix25}) first-order complexity aversion in the one-dimensional applications studied in Sections 3 and 4 of that paper. The equivalence is at the level of the objective function and its optimizer: our framework does not reproduce Gabaix's underlying signal structure, decomposition into primitive parameters, or comparative statics with respect to cognition costs and other terms that our model does not include.

The link is clearest in applications where the objective takes the form of a smooth term minus a first-order penalty proportional to the absolute size of the relevant choice variable. In those problems, our high-simplicity regime generates the same objective function, up to an additive constant. To see this, consider a generic scalar choice $x\in\mathbb R$ and suppose that Gabaix's objective can be written as
\begin{equation}\label{eq:gabaix_generic_obj}
V^{G}(x)=\widetilde V(x)-\Gamma |x|,
\end{equation}
where $\widetilde V$ is the smooth part of the problem and $\Gamma>0$ is the coefficient on first-order complexity aversion. In Gabaix's general formulation, this coefficient is a composite object that depends on several underlying terms, including the agent's complexity aversion, the curvature of utility, and the ``natural scale'' of the action, with the exact expression differing across applications.

To embed \eqref{eq:gabaix_generic_obj} into our framework, fix a binary narrative space $Y=\{0,1\}$, let $Q=\{q\}$ with
$
q(0)=q(1)=\frac12,$
and define an auxiliary act $f_x:Y\to\mathbb R$ by
$
f_x(0)=0$ and $f_x(1)=-\gamma |x|,$
where $\gamma>0$ is a coefficient to be matched to Gabaix's primitives. In the high-simplicity regime, our criterion becomes
$$
v_{\lambda,\mu}(f_x;q)
=
\min_{y\in\{0,1\}}
\Big\{f_x(y)+\frac{1}{\lambda}\log\frac{1}{q(y)}\Big\}
=
\frac{1}{\lambda}\log 2-\gamma |x|.
$$
Hence, the augmented objective
$$
V^{FM}(x):=\widetilde V(x)+v_{\lambda,\mu}(f_x;q)
=\frac{1}{\lambda}\log 2+\widetilde V(x)-\gamma |x|
$$
coincides with \eqref{eq:gabaix_generic_obj} up to the additive constant $\lambda^{-1}\log 2$ once we set $\gamma=\Gamma$. Therefore, whenever a Gabaix application reduces to a smooth objective plus a single first-order penalty $\Gamma |x|$, our model reproduces the same objective function in the high-simplicity regime.

In Gabaix's indexation application, the choice variable is the indexation coefficient $b$. The smooth part of the objective is
$
\widetilde V_I(b):=\bar v +(2\bar b b-b^2) -(1-m)b^2,
$
so the full objective is
\begin{equation*}
V_I^{G}(b)=\widetilde V_I(b)-2\xi\sigma_a(1-m)|b|.
\end{equation*}
Thus, the coefficient on the kinked term is $\Gamma_I=2\xi\sigma_a(1-m)$. Applying the generic construction above with $x=b$ and
$
\gamma=\Gamma_I=2\xi\sigma_a(1-m)
$
gives
$$
V_I^{FM}(b)=\frac{1}{\lambda}\log 2+\widetilde V_I(b)-2\xi\sigma_a(1-m)|b|.
$$
Thus, the optimizer satisfies the same threshold rule as in Gabaix. However, choice data from this problem identify only the composite coefficient on $|b|$. 
\par In summary, when $\mu \geq 1/\lambda$, Nature's worst-case distortion becomes a Dirac mass on a single adverse state. 
Economically, the agent evaluates actions \emph{as if} pessimism is driven by one salient 
scenario, not by a carefully balanced collection of bad outcomes. This generates literal 
probability neglect due to the symmetry of $q$: the criterion depends only on the worst payoff. For $\mu < 1/\lambda$, 
 the worst-case distortion $\hat{p}_{\lambda,\mu}(f_b;q)(1)$ in \eqref{eq:p_star_closed_form} remains interior: $\frac{d}{d|b|} v_{\lambda,\mu}(f_b;q) = -\gamma  \hat{p}_{\lambda,\mu}(1;b)$, 
where $ \hat p_{\lambda,\mu}(f_b;q)(1)$ increases continuously in $|b|$. This expression predicts smooth \textit{partial} simplicity 
(many small but nonzero choices) rather than sharp bunching at exactly zero. Empirical evidence favors partial dampening as the baseline, with full dampening emerging only under more extreme conditions.\footnote{ In insurance choice, \citet{barseghyan2013nature} finds probability distortions but only mild insensitivity to changes in probabilities, indicating attenuated rather than shut-down responses. In tax salience, \citet{chetty09} shows that consumers underreact to taxes that are not salient, again implying partial but nonzero adjustment. More broadly, \citet{handel18}  emphasizes that interventions often move choices only part-way toward frictionless benchmarks.}

As we show in Online Appendix \ref{app:gabaix}, the same reduced-form logic applies to Gabaix's pricing application. The taxation application in Gabaix's Section 5 requires a further extension, because the relevant reduced-form object is not a single absolute-value penalty but a sum of such penalties across tax features. A single binary narrative does not reproduce that objective exactly, but a collection of feature-specific binary narratives does.

\subsection{Discussion: interpretation of $\mu$}\label{sec:role_mu}
Theorem \ref{prop:arc_representation} shows that the role of $\mu$ depends critically on its magnitude relative to $\lambda$. When $\mu<1/\lambda$, the preference is ordinally equivalent to an ARC preference computed under the transformed reference models $\widetilde Q$ in \eqref{eq:axiom_Zq}. Since ARC does not identify $Q$ uniquely, $\mu$ should be understood as selecting a particular effective representation rather than as transforming  a uniquely defined set of models.

When $\mu\ge 1/\lambda$, the entropic penalty forces the criterion to collapse to the extreme-state form in Theorem \ref{prop:arc_representation}.2. In this region, the induced preference is variational but not ARC, and the criterion is independent of $\mu$. Thus, only the region $\mu<1/\lambda$ delivers a smooth comparative statics of complexity aversion within the ARC family, whereas beyond the threshold, the ARC structure collapses and additional preference for simplicity has no behavioral bite. 
 Notice also that this case  captures probability neglect as illustrated in Section \ref{sec:prob}.

This classification also sharpens our dynamic interpretation. In the ARC case, $\mu>0$  selects  representations that can eliminate  robustness-driven cycles. In the extreme-state case, stability is driven by the collapse of worst-case beliefs to the most adverse outcomes.
\section{Conclusion}

We developed  a unified framework that combines a concern for model misspecification with a preference for simplicity. By penalizing the entropy of worst-case robust distortions, our agent balances pessimism against the plausibility of adversarial narratives. This mechanism yields sharp dynamic predictions: it breaks the indifference in standard robust control by penalizing risky actions that admit simple disaster narratives, thereby eliminating cycles, favoring safety and long-run stability. Our framework also offers microfoundations for empirical phenomena such as scale heterogeneity, probability neglect, and home bias. By disciplining robustness with Occam’s razor, our framework establishes a formal link between statistical learning and behavioral regularities in decision-making under uncertainty. We focused on myopic decision making to avoid the complications arising from combining misspecification concerns and dynamic consistency  flagged by \citet{lanzani25};  we hope that  future research will  extend our framework to forward-looking agents.

\appendix

\section{Appendix: Proofs from  Section \ref{sec:static}}\label{app:proofs_static}

\subsection{Proof of Lemma \ref{lem:closed_form}}

Fix $(\lambda,\mu)$ with $\kappa := 1/\lambda - \mu > 0$, an action $a \in A$, and a model $\m{q} \in Q$.
Define
$$
  \Phi(\lambda,\mu,p_a;\m{q})
  := \sum_{y \in Y} u(a,y) p_a(y)
     + \frac{1}{\lambda} R(p_a \Vert q_a)
     + \mu H(p_a).
$$
We solve $\min_{p_a \in \Delta(Y)} \Phi(\lambda,\mu,p_a;\m{q})$.

\paragraph{Step 1: Existence and Uniqueness.}
Rewrite the objective as
$$
  \Phi(\lambda,\mu,p_a;\m{q})
  = \sum_{y \in Y} \left[ u(a,y) p_a(y) + \kappa \, p_a(y) \log p_a(y) \right]
    - \frac{1}{\lambda} \sum_{y \in Y} p_a(y) \log q_a(y),
$$
where we use $0 \log 0 := 0$. Since $\kappa > 0$, the function $x \mapsto x \log x$ (with 
$0 \log 0 = 0$) is strictly convex on $[0,1]$, making the objective strictly convex on $\Delta(Y)$. 
The simplex $\Delta(Y)$ is compact, so a unique global minimizer $p^* \in \Delta(Y)$ exists.

\paragraph{Step 2: Interiority of the Minimizer.}
Suppose $p^*(y_0) = 0$ for some $y_0 \in Y$. Then there exists $y_1 \in Y$ with $p^*(y_1) > 0$.
For $t \in (0, p^*(y_1))$, define the perturbation $p_t$ by
$$
  p_t(y) = \begin{cases}
    t & \text{if } y = y_0 \\
    p^*(y_1) - t & \text{if } y = y_1 \\
    p^*(y) & \text{otherwise.}
  \end{cases}
$$
Note that $p_t \in \Delta(Y)$ by construction.  Define
$\psi_y(x):=u(a,y)x+\kappa x\log x-\frac{1}{\lambda}x\log q_a(y),$ for $x\in[0,1],$
with the convention $0\log 0:=0$. Then,
$\Phi(\lambda,\mu,p_t;\m{q})-\Phi(\lambda,\mu,p^*;\m{q})
=
\big[\psi_{y_0}(t)-\psi_{y_0}(0)\big]
+
\bigl[\psi_{y_1}(p^*(y_1)-t)-\psi_{y_1}(p^*(y_1))\bigr].$ Dividing by $t$ and taking $t \downarrow 0^+$:

For the first term:
$$
\lim_{t\downarrow 0^+}\frac{\psi_{y_0}(t)-\psi_{y_0}(0)}{t}
=
\lim_{t\downarrow 0^+}\left[u(a,y_0)+\kappa\log t-\frac{1}{\lambda}\log q_a(y_0)\right]
=
-\infty,
$$
since $\kappa > 0$, $q_a(y_0)>0$ by Assumption \ref{ass:full_support}, and $\lim_{t \downarrow 0^+} \log t = -\infty$. For the second term: Since $p^*(y_1) > 0$, the function $\psi_{y_1}$ is differentiable in a 
neighborhood of $p^*(y_1)$ (away from 0), so
$$
\lim_{t\downarrow 0^+}
\frac{\psi_{y_1}(p^*(y_1)-t)-\psi_{y_1}(p^*(y_1))}{-t}
=
\psi_{y_1}'(p^*(y_1))\in\mathbb{R}.
$$
Hence,
$\lim_{t\downarrow 0^+}
\frac{1}{t}\big(\Phi(\lambda,\mu,p_t;\m{q})-\Phi(\lambda,\mu,p^*;\m{q})\big)
=
-\infty.$
Therefore, $\Phi(\lambda,\mu,p_t;\m{q})<\Phi(\lambda,\mu,p^*;\m{q})$ for all sufficiently small $t>0$, contradicting optimality of $p^*$. Thus, $p^*(y)>0$ for all $y\in Y$.

\paragraph{Step 3: First-Order Conditions and Closed Form.}
Since $p^*$ is interior, standard Lagrangian analysis applies. Let $\gamma$ be the multiplier  
for $\sum_y p_a(y) = 1$. The Lagrangian is
$$
  \mathcal{L}(p_a, \gamma)
  = \sum_{y \in Y} u(a,y) p_a(y)
    + \frac{1}{\lambda} \sum_{y \in Y} p_a(y) \log \frac{p_a(y)}{q_a(y)}
    + \mu \sum_{y \in Y} (-p_a(y) \log p_a(y))
    + \gamma \Big(\sum_{y \in Y} p_a(y) - 1\Big).
$$
Taking $\partial \mathcal{L} / \partial p_a(y) = 0$ for each $y$:
$$
  u(a,y) + \frac{1}{\lambda}(\log p_a(y) - \log q_a(y) + 1)
  + \mu(-\log p_a(y) - 1) + \gamma = 0.
$$
Rearranging: $\left(\frac{1}{\lambda} - \mu\right) \log p_a(y)
  = \frac{1}{\lambda} \log q_a(y) - u(a,y) + C,$
where $C$ is independent of $y$. Using $\kappa = 1/\lambda - \mu > 0$ and $\beta = 1/(\lambda\kappa)$: $\log p_a(y) = -\frac{u(a,y)}{\kappa} + \beta \log q_a(y) + \text{constant}.$
Exponentiating and normalizing yields: $\hat{p}_{\lambda,\mu}(a;\m{q})(y)
  = \frac{\exp\{-u(a,y)/\kappa\} q_a(y)^\beta}
         {\sum_{z \in Y} \exp\{-u(a,z)/\kappa\} q_a(z)^\beta}.$

\paragraph{Step 4: Continuity and Envelope Identities.}
Continuity of $\hat p_{\lambda,\mu}(a;\m{q})$ in
$(\lambda,\mu,\m{q})$ on $\{\kappa>0\}$ follows directly from the closed-form expression
\eqref{eq:p_star_closed_form}, since $q_a(y)>0$ for all $y$ by Assumption \ref{ass:full_support} and the
denominator in \eqref{eq:p_star_closed_form} is strictly positive. Hence,
$v_{\lambda,\mu}(a;\m{q})=\Phi(\lambda,\mu,\hat p_{\lambda,\mu}(a;\m{q});\m{q})$ is continuous in
$(\lambda,\mu,\m{q})$ on $\{\kappa>0\}$. For the envelope identities, since
$\hat p_{\lambda,\mu}(a;\m{q})$ is the unique minimizer and lies in the relative interior of
$\Delta(Y)$, we can eliminate one coordinate using $\sum_{y\in Y}p_a(y)=1$ and apply the
standard envelope theorem to the resulting unconstrained problem. Therefore,
$\frac{\partial v_{\lambda,\mu}}{\partial \mu}(a;\m{q})
  = \frac{\partial \Phi}{\partial \mu}\Big|_{p_a = \hat{p}_{\lambda,\mu}(a;\m{q})}
  = H(\hat{p}_{\lambda,\mu}(a;\m{q})),$
$\frac{\partial v_{\lambda,\mu}}{\partial \lambda}(a;\m{q})
  = \frac{\partial \Phi}{\partial \lambda}\Big|_{p_a = \hat{p}_{\lambda,\mu}(a;\m{q})}
  = -\frac{1}{\lambda^2} R(\hat{p}_{\lambda,\mu}(a;\m{q}) \Vert q_a).$
which coincide with \eqref{eq:envelope_mu}. \qed

\subsection{Proof of Proposition \ref{prop:entropy_decreasing_mu}}
\begin{lemma}[Entropy in a finite exponential family]\label{lem:entropy_exp_family}
Let $Y$ be a finite set and $\varphi:Y\to\mathbb{R}$ be nonconstant. For each $t>0$ define
$$
  Z(t) := \sum_{y\in Y} e^{t\varphi(y)},
  \qquad
  A(t) := \log Z(t),
  \qquad
  p_t(y) := \frac{e^{t\varphi(y)}}{Z(t)} \quad (y\in Y),
$$
and let $H(t) := -\sum_{y\in Y} p_t(y)\log p_t(y)$ be the Shannon entropy of $p_t$. Then:

\begin{enumerate}
  \item The function $A$ is twice continuously differentiable on $(0,\infty)$, and for all
  $t>0$,
  $$
    A'(t) = \sum_{y\in Y} \varphi(y)p_t(y),
    \qquad
    A''(t) = \sum_{y\in Y} \big(\varphi(y) - A'(t)\big)^2 p_t(y).
  $$
  \item For all $t>0$, $H(t) = A(t) - t A'(t).$
  \item For all $t>0$, $H'(t)
    = -t A''(t)
    = -t\sum_{y\in Y} \big(\varphi(y) - A'(t)\big)^2 p_t(y).$
   In particular, since $\varphi$ is nonconstant and $p_t(y)>0$ for all $y$, we have
  $A''(t)>0$ and hence $H'(t)<0$ for all $t>0$. Thus $H$ is strictly decreasing on
  $(0,\infty)$.
\end{enumerate}
\end{lemma}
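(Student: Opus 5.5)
The plan is a direct computation on the log-partition function, taking the three parts in the order stated.

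\emph{Part 1.} Since $Y$ is finite, $Z(t)=\sum_y e^{t\varphi(y)}$ is a finite sum of smooth functions. It is strictly positive, so $A=\log Z$ is $C^\infty$ (in particular $C^2$) on $(0,\infty)$. Differentiating once gives
\[
A'(t)=\frac{Z'(t)}{Z(t)}=\sum_y \varphi(y)\,p_t(y).
\]
Differentiating again gives
\[
A''(t)=\frac{Z''(t)}{Z(t)}-\Big(\frac{Z'(t)}{Z(t)}\Big)^2=\sum_y\varphi(y)^2p_t(y)-\Big(\sum_y\varphi(y)p_t(y)\Big)^2 .
\]
This is the variance of $\varphi$ under $p_t$, and it equals $\sum_y(\varphi(y)-A'(t))^2p_t(y)$.

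\emph{Part 2.} Substitute $\log p_t(y)=t\varphi(y)-A(t)$ into the definition of entropy:
\[
H(t)=-\sum_y p_t(y)\big(t\varphi(y)-A(t)\big)=A(t)-t\sum_y\varphi(y)p_t(y)=A(t)-tA'(t),
\]
using $\sum_y p_t(y)=1$ and part 1.

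\emph{Part 3.} Since $A$ is $C^2$, differentiating the identity from part 2 gives
\[
H'(t)=A'(t)-A'(t)-tA''(t)=-tA''(t).
\]
For strictness, note that $p_t(y)=e^{t\varphi(y)}/Z(t)>0$ for every $y$. So $A''(t)=0$ would force $\varphi(y)=A'(t)$ for all $y$, which contradicts the assumption that $\varphi$ is nonconstant. Hence $A''(t)>0$, and $H'(t)<0$ for every $t>0$. Strict monotonicity of $H$ on $(0,\infty)$ then follows from the mean value theorem.

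There is no real obstacle here. The only point needing care is the strict positivity of $A''$, which rests on the full support of $p_t$ together with nonconstancy of $\varphi$.

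For its later use in Proposition~\ref{prop:entropy_decreasing_mu}, the worst-case distortion in \eqref{eq:p_star_closed_form} can be written as $p_t$ with $t=1/\kappa$ and $\varphi=-u(a,\cdot)+\frac{1}{\lambda}\log q_a(\cdot)$, because $\beta/\kappa\cdot\kappa\lambda=1$. Since $t=1/\kappa$ is increasing in $\mu$, Lemma~\ref{lem:entropy_exp_family} applies. The non-affinity hypothesis in Proposition~\ref{prop:entropy_decreasing_mu} is exactly the requirement that $\varphi$ be nonconstant.
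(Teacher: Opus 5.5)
Your proof is correct and follows the paper's argument step for step: you compute $A'$ and $A''$ as the mean and variance of $\varphi$ under $p_t$, substitute $\log p_t=t\varphi-A$ into $H$, differentiate to get $H'=-tA''$, and get strictness from full support of $p_t$ plus nonconstancy of $\varphi$. Your closing remark on Proposition~\ref{prop:entropy_decreasing_mu} also works with $t=1/\kappa$; the paper equivalently uses $t=\beta$ and $\varphi=\log q_a-\lambda u$. Two small corrections there: the identity doing the work is $\kappa\beta\lambda=1$, and the proposition's no-affine-relation hypothesis implies nonconstancy of $\varphi$ but is strictly stronger, not equivalent to it.
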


\begin{proof}[Proof of Lemma \ref{lem:entropy_exp_family}]
\emph{Step 1: Formulas for $A'(t)$ and $A''(t)$.}
Since $Y$ is finite and each map $t\mapsto e^{t\varphi(y)}$ is smooth, the function
$Z(t)=\sum_y e^{t\varphi(y)}$ is smooth on $\mathbb{R}$ and in particular $C^2$ on
$(0,\infty)$. Thus $A(t)=\log Z(t)$ is also $C^2$ on $(0,\infty)$.

Differentiating $A(t)=\log Z(t)$ gives $A'(t) = \frac{Z'(t)}{Z(t)}.$
We have $Z'(t) = \sum_{y\in Y} \varphi(y)e^{t\varphi(y)},$
so
$$
  A'(t)
  = \frac{\sum_y \varphi(y)e^{t\varphi(y)}}{\sum_z e^{t\varphi(z)}}
  = \sum_{y\in Y} \varphi(y)\frac{e^{t\varphi(y)}}{Z(t)}
  = \sum_{y\in Y} \varphi(y)p_t(y).
$$

For $A''(t)$, differentiate $A'(t)=Z'(t)/Z(t)$: $A''(t)
  = \frac{Z''(t)Z(t) - [Z'(t)]^2}{[Z(t)]^2}.$
Here, $Z''(t) = \sum_{y\in Y} \varphi(y)^2 e^{t\varphi(y)}.$
Therefore,
\begin{align*}
  A''(t)
  = \frac{\big(\sum_y \varphi(y)^2 e^{t\varphi(y)}\big)\big(\sum_z e^{t\varphi(z)}\big) -
           \big(\sum_y \varphi(y)e^{t\varphi(y)}\big)^2}
          {\big(\sum_z e^{t\varphi(z)}\big)^2}&= \sum_{y\in Y} \varphi(y)^2 \frac{e^{t\varphi(y)}}{Z(t)}
     - \Big(\sum_{y\in Y} \varphi(y)\frac{e^{t\varphi(y)}}{Z(t)}\Big)^2\\
  &= \sum_{y\in Y} \varphi(y)^2 p_t(y)
     - \Big(\sum_{y\in Y} \varphi(y)p_t(y)\Big)^2.
\end{align*}
Using the expression for $A'(t)$, we can rewrite this as $A''(t)
  = \sum_{y\in Y} \big(\varphi(y)-A'(t)\big)^2 p_t(y).$
This proves part (1).

\emph{Step 2: Representation of $H(t)$.}
By definition, $\log p_t(y) = t\varphi(y) - \log Z(t) = t\varphi(y) - A(t),$
so
\begin{align*}
  H(t)= -\sum_{y\in Y} p_t(y)\log p_t(y)= -\sum_y p_t(y)\big(t\varphi(y) - A(t)\big)&= -t\sum_y \varphi(y)p_t(y) + A(t)\sum_y p_t(y)\\
  &= -t A'(t) + A(t),
\end{align*}
where we used the expression for $A'(t)$ from Step 1 and the fact that
$\sum_y p_t(y)=1$. This proves part (2).

\emph{Step 3: Derivative of $H(t)$ and its sign.}
Differentiating $H(t)=A(t)-tA'(t)$, $H'(t)
  = A'(t) - A'(t) - tA''(t)
   = -t A''(t).$
Substituting the expression for $A''(t)$ from part (1) gives
$H'(t)
  = -t\sum_{y\in Y} \big(\varphi(y)-A'(t)\big)^2 p_t(y).$
For each $t>0$, $p_t(y)>0$ for all $y\in Y$. Since $\varphi$ is nonconstant,
$\varphi(y)$ is not a.s. constant under $p_t$, and thus
$\sum_y(\varphi(y)-A'(t))^2 p_t(y)>0$. Thus, $A''(t)>0$ and $H'(t)<0$ for all $t>0$,
showing that $H$ is strictly decreasing on $(0,\infty)$.
\end{proof}

\begin{proof}[Proof of Proposition \ref{prop:entropy_decreasing_mu}]
By Lemma \ref{lem:closed_form}, for each $\mu < 1/\lambda$ we have $\kappa := 1/\lambda - \mu > 0$ 
and $\hat{p}_{\lambda,\mu}(a;\m{q})(y) = \frac{\exp\{\beta(\mu) \varphi(y)\}}
         {\sum_{z\in Y}\exp\{\beta(\mu) \varphi(z)\}},$
where $\varphi(y) := \log q_a(y) - \lambda u(a,y)$, $\beta(\mu) := 1/(1-\lambda\mu)$, 
and $\varphi$ is nonconstant by the payoff-likelihood condition. Identify $\hat{p}_{\lambda,\mu}(a;\m{q})$ with the exponential family $p_t(y) \propto \exp\{t \varphi(y)\}$ 
at $t = \beta(\mu)$. By Lemma \ref{lem:entropy_exp_family}, the entropy $H(p_t)$ is strictly 
decreasing in $t > 0$. Since $\beta(\mu)$ is strictly increasing in $\mu$ (derivative: 
$\beta'(\mu) = \lambda/(1-\lambda\mu)^2 > 0$), the composition $\mu \mapsto H(\hat{p}_{\lambda,\mu}(a;\m{q}))$ 
is strictly decreasing on $(-\infty, 1/\lambda)$.
\end{proof}

\section{Appendix: Proofs from Section \ref{sec:dynamic}}\label{app:proofs_dynamic}

\subsection{Proof of Proposition \ref{prop:dynamic_selection}}

The proof is a direct adaptation of \citet[][Theorem 3]{lanzani25}, so we
 only verify the points at which our dynamic criterion differs from his. Fix a $\bar{\mu}$-optimal policy $\sigma$ and let $\alpha^\Lambda$ be a $\Lambda$-limit
frequency of $\sigma$. 

\par First, since $A$ is finite, all norms on $\Delta(A)$ are equivalent and our Definition
\ref{def:limit_freq}, stated in terms of $\|\cdot\|_1$, is equivalent to the notion of
$\Lambda$-limit frequency used in \citet{lanzani25}. Second, the stochastic objects governing learning are unchanged relative to
\citet{lanzani25}: beliefs are updated by Bayes rule over the same finite set $Q$, and the
misspecification concern still satisfies
$\lambda_t=\Lambda(h_t)=\frac{1}{ct}\operatorname{LLR}(h_t;Q).$
Accordingly, the parts of the proof of \citet[][Theorem 3]{lanzani25} that identify the
asymptotic misspecification concern and the asymptotic set of best-fitting models apply
verbatim here. In particular, if
$\tau(\alpha):=\frac{1}{c}\min_{\m{q}\in Q}D(\m{q};\alpha)$ where $D(\m{q};\alpha):=\sum_{a\in A}\alpha(a)R(p^\star_a\Vert q_a),$
then, along every history sequence considered in the proof of
\citet[][Theorem 3]{lanzani25}, the empirical frequencies determine the same limit concern
$\tau(\alpha)$ and the posterior concentrates on
$Q(\alpha)=\arg\min_{\m{q}\in Q}D(\m{q};\alpha).$

The last point to be  checked is whether the one-period best-reply
correspondence has non empty closed values and a compact graph. By Lemma \ref{lem:closed_form}, for each $a\in A$ the map
$(\lambda,\mu,\m{q})\longmapsto v_{\lambda,\mu}(a;\m{q})$
is continuous on $\{\kappa>0\}$, where $\kappa=\frac{1}{\lambda}-\mu$. Under Assumption
\ref{ass:kappa_pos}, along the dynamic path we always have $\mu=\bar{\mu}\lambda$ and
$\kappa=\frac{1}{\lambda}-\bar{\mu}\lambda>0.$
Since $A$ is finite, the desired properties hold, and the proof of
\citet[][Theorem 3]{lanzani25} applies verbatim with
$\operatorname{BR}^{\bar{\mu}}_\lambda$ in place of $\operatorname{BR}_\lambda$, while noting that,  $\forall\bar{\mu}\in\R$: $\operatorname{BR}^{\bar{\mu}}_0\equiv\operatorname{BR}_0$ is the SEU best reply. It follows that there
exists some $\eta^{\bar{\mu}}\in\Delta(Q)$ such that
$\text{\normalfont supp}(\eta^{\bar{\mu}})\subset Q(\alpha^\Lambda)$ and $\alpha^\Lambda\in\Delta\bigl(\operatorname{BR}^{\bar{\mu}}_{\tau(\alpha^\Lambda)}(\eta^{\bar{\mu}})\bigr).$
Define
$\tau^{\bar{\mu}}:=\tau(\alpha^\Lambda)
  =
  \frac{1}{c}\min_{\m{q}\in Q}D(\m{q};\alpha^\Lambda).$
Since
$\alpha^\Lambda\in
\Delta\bigl(\operatorname{BR}^{\bar{\mu}}_{\tau^{\bar{\mu}}}(\eta^{\bar{\mu}})\bigr),$
every action $a\in A$ with $\alpha^\Lambda(a)>0$ belongs to
$\operatorname{BR}^{\bar{\mu}}_{\tau^{\bar{\mu}}}(\eta^{\bar{\mu}}).$
Together with
$\text{\normalfont supp}(\eta^{\bar{\mu}})\subset Q(\alpha^\Lambda)$
and the definition of $\tau^{\bar{\mu}}$, this is exactly Definition
\ref{def:mixed_equilibrium_mu}. Therefore,
$(\alpha^\Lambda,\eta^{\bar{\mu}},\tau^{\bar{\mu}})$
is a mixed $c$-robust equilibrium for the $\bar{\mu}$-criterion. \qed

\subsection{Proof of Proposition \ref{prop:correct_spec_limit}}
Fix a $\bar{\mu}$-optimal policy $\sigma$ and let
$E^\Lambda:=\{\alpha_t\to\alpha^\Lambda\}.$
Fix an action $a\in A$ with $\alpha^\Lambda(a)>0$. We show that on $E^\Lambda$,
$V_{\lambda_t,\bar{\mu}\lambda_t}(a;\pi_t)\longrightarrow \E_{p^\star_a}[u(a,y)]$ $\mathbb{P}_\sigma\text{-a.s.}$

Let $\m{q}^\star\in Q$ be as in Assumption \ref{ass:correct_spec}. Then, 
$\forall\alpha\in\Delta(A)$,
$D(\m{q}^\star;\alpha)
  =
  \sum_{a'\in A}\alpha(a')R(p^\star_{a'}\Vert q^\star_{a'})
  =
  0.$
Hence,
$\min_{\m{q}\in Q}D(\m{q};\alpha^\Lambda)=0.$
On the event $E^\Lambda$, every action $a'\in A$ with $\alpha^\Lambda(a')>0$ is played
infinitely often, and the empirical distribution of outcomes following action $a'$ converges
to $p^\star_{a'}$ by the strong law of large numbers. Therefore, the hypotheses of
\citeauthor{lanzani25}'s (\citeyear[][Lemma 6]{lanzani25})  are satisfied on $E^\Lambda$, and
$$
  \lambda_t
  =
  \frac{1}{ct}\operatorname{LLR}(h_t;Q)
  \longrightarrow
  \frac{1}{c}\min_{\m{q}\in Q}D(\m{q};\alpha^\Lambda)
  =
  0
  \qquad
  \mathbb{P}_\sigma\text{-a.s. on }E^\Lambda.
$$
Since $\mu_t=\bar{\mu}\lambda_t$, it follows that
$\mu_t\longrightarrow 0$
   $\mathbb{P}_\sigma\text{-a.s. on }E^\Lambda.$

\par Next, since the prior $\pi_0$ has full support on $Q$, the same Bayesian
likelihood-ratio argument used in the proof of \citet[][Theorem 3]{lanzani25} implies that
on $E^\Lambda$ the posterior $\pi_t$ concentrates on $Q(\alpha^\Lambda)$. Since
$\m{q}^\star\in Q$ and $q^\star_a=p^\star_a$ for all $a\in A$, we have $D(\m{q}^\star;\alpha^\Lambda)
  =
  \sum_{a'\in A}\alpha^\Lambda(a')R(p^\star_{a'}\Vert q^\star_{a'})
  =
  0.$
Therefore, every $\m{q}\in Q(\alpha^\Lambda)$ satisfies
$D(\m{q};\alpha^\Lambda)=0$. Since $\alpha^\Lambda(a)>0$, it follows that $0
  =
  D(\m{q};\alpha^\Lambda)
  =
  \sum_{a'\in A}\alpha^\Lambda(a')R(p^\star_{a'}\Vert q_{a'}),$
and because every term in the sum is nonnegative, we must have
$R(p^\star_a\Vert q_a)=0$, hence $q_a=p^\star_a$. Thus, if we define
$Q^\star_a:=\{\m{q}\in Q: q_a=p^\star_a\},$
then
$\pi_t(Q^\star_a)\longrightarrow 1$ $\mathbb{P}_\sigma\text{-a.s. on }E^\Lambda.$

Fix $\m{q}\in Q^\star_a$. Since $\mu_t\ge 0$, we have
\begin{equation}\label{eq:squeeze_v_correct_spec}
  v_{\lambda_t,0}(a;\m{q})
  \le
  v_{\lambda_t,\mu_t}(a;\m{q})
  \le
  \E_{q_a}[u(a,y)] + \mu_t H(q_a),
\end{equation}
where the upper bound uses feasibility of $p_a=q_a$ in \eqref{eq:static_criterion}. For
$\mu=0$, we get
$v_{\lambda,0}(a;\m{q})
  =
  -\frac{1}{\lambda}\log\big(\sum_{y\in Y}q_a(y)e^{-\lambda u(a,y)}\big),$
so $v_{\lambda,0}(a;\m{q})\to \E_{q_a}[u(a,y)]$ as $\lambda\downarrow0$. Since
$q_a=p^\star_a$ and $\mu_t\to0$, \eqref{eq:squeeze_v_correct_spec} implies
$v_{\lambda_t,\mu_t}(a;\m{q})
  \longrightarrow
  \E_{p^\star_a}[u(a,y)]$ 
   $\mathbb{P}_\sigma\text{-a.s. on }E^\Lambda.$

Finally, because $Q$ is finite and $u$ is bounded, the family
$\{v_{\lambda_t,\mu_t}(a;\m{q}):t\ge1,\ \m{q}\in Q\}$ is uniformly bounded. Hence
$V_{\lambda_t,\mu_t}(a;\pi_t)
  =
  \sum_{\m{q}\in Q^\star_a}\pi_t(\m{q})v_{\lambda_t,\mu_t}(a;\m{q})
  +
  \sum_{q\notin Q^\star_a}\pi_t(\m{q})v_{\lambda_t,\mu_t}(a;\m{q}).$
The second term converges to $0$ $\mathbb{P}_\sigma$-a.s. on $E^\Lambda$ because
$\pi_t(Q^\star_a)\to1$. The first term converges to $\E_{p^\star_a}[u(a,y)]$
$\mathbb{P}_\sigma$-a.s. on $E^\Lambda$ because every $\m{q}\in Q^\star_a$ has
$q_a=p^\star_a$. Therefore, we conclude
$V_{\lambda_t,\bar{\mu}\lambda_t}(a;\pi_t)
  \longrightarrow
  \E_{p^\star_a}[u(a,y)]$ $\mathbb{P}_\sigma\text{-a.s. on }E^\Lambda.$ \qed

\subsection{Proof of Lemma \ref{lem:Delta_derivatives}}

Fix $(\lambda,\bar{\mu},\pi)$ such that $\kappa := 1/\lambda - \bar{\mu}\lambda > 0$. Set
$\mu = \bar{\mu}\lambda$. For each action $a \in \{r,s\}$ and model $\m{q} \in Q$,
Lemma \ref{lem:closed_form} establishes that $v_{\lambda,\mu}(a;\m{q})$ is differentiable in
$(\lambda,\mu)$. The aggregate worst-case value is
$V_{\lambda,\mu}(a;\pi) = \sum_{\m{q} \in Q} v_{\lambda,\mu}(a;\m{q}) \pi(\m{q}).$
Since $Q$ is finite, we may differentiate term-by-term. By the envelope theorem
(Lemma \ref{lem:closed_form}, eq. \eqref{eq:envelope_mu}):
$$
  \frac{\partial}{\partial\mu} v_{\lambda,\mu}(a;\m{q}) = H(\hat{p}_{\lambda,\mu}(a;\m{q})),
  \qquad
  \frac{\partial}{\partial\lambda} v_{\lambda,\mu}(a;\m{q})
  = -\frac{1}{\lambda^2} R(\hat{p}_{\lambda,\mu}(a;\m{q}) \Vert q_a).
$$

\paragraph{Part 1: Derivative with respect to $\bar{\mu}$.}
Using the chain rule with $\mu = \bar{\mu}\lambda$:
$$
  \frac{\partial}{\partial\bar{\mu}} v_{\lambda,\bar{\mu}\lambda}(a;\m{q})
  = \frac{\partial v}{\partial\mu} \cdot \frac{\partial\mu}{\partial\bar{\mu}}
  = H(\hat{p}_{\lambda,\bar{\mu}\lambda}(a;\m{q})) \cdot \lambda.
$$
Summing over models: $\frac{\partial}{\partial\bar{\mu}} V_{\lambda,\bar{\mu}\lambda}(a;\pi)
  = \lambda \sum_{\m{q} \in Q} \pi(\m{q}) H(\hat{p}_{\lambda,\bar{\mu}\lambda}(a;\m{q}))
  = \lambda H_a(\lambda,\bar{\mu},\pi).$
Therefore, $\frac{\partial}{\partial\bar{\mu}} \Delta(\lambda,\bar{\mu},\pi)
  = \frac{\partial}{\partial\bar{\mu}} V_{\lambda,\bar{\mu}\lambda}(r;\pi)
    - \frac{\partial}{\partial\bar{\mu}} V_{\lambda,\bar{\mu}\lambda}(s;\pi)
  = \lambda \big( H_r(\lambda,\bar{\mu},\pi) - H_s(\lambda,\bar{\mu},\pi) \big).$

\paragraph{Part 2: Sign on switching surface.}
On the switching surface $S = \{(\lambda,\bar{\mu},\pi) : \Delta(\lambda,\bar{\mu},\pi) = 0\}$,
the sign of $\partial\Delta/\partial\bar{\mu}$ equals the sign of
$H_r(\lambda,\bar{\mu},\pi) - H_s(\lambda,\bar{\mu},\pi)$ (since $\lambda > 0$).
\qed

\subsection{Proof of Theorem \ref{thm:mu_kills_cycles} and Corollary \ref{cor:nonvacuity_theorem1}}

\begin{proof}[Proof of Theorem \ref{thm:mu_kills_cycles}]
Define the domain $\mathcal{D} := [\underline{\lambda},\bar{\lambda}] \times \Delta(Q)$.
Let $(\bar{\mu}_0,H^\ast)$ be as in Assumption \ref{ass:uniform_entropy_gap} and set
$\hat{m} := \underline{\lambda} H^\ast > 0$ and
  $K_0 := \sup_{(\lambda,\pi)\in\mathcal{D}} \max\{\Delta(\lambda,\bar{\mu}_0,\pi),0\}.$
Since $\Delta$ is continuous and $\mathcal{D}$ is compact, $K_0<\infty$. Define
$\varepsilon
  :=
  \frac{1}{2}\big(\frac{1}{\bar{\lambda}^2}-\bar{\mu}_0-\frac{K_0}{\hat{m}}\big)_+$ and $
  \bar{\mu}^\ast
  :=
  \bar{\mu}_0+\frac{K_0}{\hat{m}}+\varepsilon,$
where $(x)_+ := \max\{x,0\}$. Note that if $\varepsilon>0$, then $\bar{\mu}^\ast < 1/\bar{\lambda}^2$ and
$\bar{\mu}^\ast > \bar{\mu}_0 + K_0/\hat{m}$; if $\varepsilon=0$, then
$\bar{\mu}^\ast \ge 1/\bar{\lambda}^2$, so the interval $[\bar{\mu}^{\ast},1/\bar{\lambda}^{2})$ becomes empty and hence the theorem would hold
vacuously. We may therefore restrict attention to $\bar{\mu}\in[\bar{\mu}^*, 1/\bar{\lambda}^2)$.

\paragraph{Step 1: Equilibrium concerns lie in $[\underline{\lambda},\bar{\lambda}]$.}
Let $(\alpha^{\bar{\mu}},\eta^{\bar{\mu}},\tau^{\bar{\mu}})$ be any mixed $c$-robust equilibrium
for the $\bar{\mu}$-criterion with $\alpha^{\bar{\mu}}(r)>0$.
By Assumption \ref{ass:U}, $\tau^{\bar{\mu}}\in[\underline{\lambda},\bar{\lambda}]$.
\paragraph{Step 2: Uniform strict negativity of $\Delta$ for large $\bar{\mu}$.}
Fix any $(\lambda,\pi)\in\mathcal{D}$ and define $f(\bar{\mu}) := \Delta(\lambda,\bar{\mu},\pi)$ 
on the admissible range $\{\bar{\mu}:\bar{\mu}\lambda^2<1\}$. 
By Lemma \ref{lem:closed_form} and Lemma \ref{lem:Delta_derivatives}, 
$\Delta(\lambda,\bar{\mu},\pi)$ is continuously differentiable in $\bar{\mu}$ 
whenever $\bar{\mu}\lambda^2<1$, so $f$ is $C^1$ on its domain. We claim that for every $(\lambda,\pi)\in\mathcal{D}$ and every admissible
$\bar{\mu}>\bar{\mu}_0+K_0/\hat{m}$, we have $f(\bar{\mu})<0$.

\emph{Case 1: $f(\bar{\mu}_0)\le 0$.}
Suppose, toward a contradiction, that there exists an admissible $\bar{\mu}_1>\bar{\mu}_0$
with $f(\bar{\mu}_1)\ge 0$.

If $f(\bar{\mu}_1)>0$, then by continuity the set
$Z:=\{\bar{\mu}\in[\bar{\mu}_0,\bar{\mu}_1]: f(\bar{\mu})=0\}$
is nonempty. Let $\hat{\mu}:=\max Z$. Then, $f(\bar{\mu})>0$ for every
$\bar{\mu}\in(\hat{\mu},\bar{\mu}_1]$. Hence, Assumption \ref{ass:uniform_entropy_gap}
and Lemma \ref{lem:Delta_derivatives} imply that
$f'(\bar{\mu})\le -\hat{m}$ for all $\bar{\mu}\in(\hat{\mu},\bar{\mu}_1]$.
Integrating from $\hat{\mu}$ to $\bar{\mu}_1$ gives
$f(\bar{\mu}_1)-f(\hat{\mu})
  =
  \int_{\hat{\mu}}^{\bar{\mu}_1} f'(t)\hspace{0.02in}dt
  \le
  -\hat{m}(\bar{\mu}_1-\hat{\mu})<0.$
Since $f(\hat{\mu})=0$, this yields $f(\bar{\mu}_1)<0$, a contradiction.

If instead $f(\bar{\mu}_1)=0$, then Assumption \ref{ass:uniform_entropy_gap}
and Lemma \ref{lem:Delta_derivatives} imply that
$f'(\bar{\mu}_1)\le -\hat{m}<0.$
Since $f$ is $C^1$ on the admissible range, there exists $\delta>0$ small enough such that
$[\bar{\mu}_1-\delta,\bar{\mu}_1]$ is admissible and
$f'(t)\le -\hat{m}/2$ $\text{for all } t\in[\bar{\mu}_1-\delta,\bar{\mu}_1].$
Therefore,
$$
  f(\bar{\mu}_1)-f(\bar{\mu}_1-\delta)
  =
  \int_{\bar{\mu}_1-\delta}^{\bar{\mu}_1} f'(t)\hspace{0.02in}dt
  \le
  -(\hat{m}/2)\delta<0,
$$
so $f(\bar{\mu}_1-\delta)>0$. Since $f(\bar{\mu}_0)\le 0$, continuity implies that the set
$\widetilde{Z}:=\{\bar{\mu}\in[\bar{\mu}_0,\bar{\mu}_1-\delta]: f(\bar{\mu})=0\}$
is nonempty. Let $\widetilde{\mu}:=\max \widetilde{Z}$. Then $f(\bar{\mu})>0$ for every
$\bar{\mu}\in(\widetilde{\mu},\bar{\mu}_1-\delta]$. Hence, Assumption
\ref{ass:uniform_entropy_gap} and Lemma \ref{lem:Delta_derivatives} imply that
$f'(\bar{\mu})\le -\hat{m}$ for all $\bar{\mu}\in(\widetilde{\mu},\bar{\mu}_1-\delta]$.
Integrating from $\widetilde{\mu}$ to $\bar{\mu}_1-\delta$ gives
$f(\bar{\mu}_1-\delta)-f(\widetilde{\mu})
  =
  \int_{\widetilde{\mu}}^{\bar{\mu}_1-\delta} f'(t)\hspace{0.02in}dt
  \le
  -\hat{m}(\bar{\mu}_1-\delta-\widetilde{\mu})<0,$ 
contradicting $f(\widetilde{\mu})=0$ and $f(\bar{\mu}_1-\delta)>0$.
Therefore, if $f(\bar{\mu}_0)\le 0$, then $f(\bar{\mu})<0$ for all admissible
$\bar{\mu}>\bar{\mu}_0$.

\emph{Case 2: $f(\bar{\mu}_0)>0$.}
Define $\bar{\mu}_1:=\bar{\mu}_0+\frac{f(\bar{\mu}_0)}{\hat{m}},$
so $\bar{\mu}_1\le \bar{\mu}_0+K_0/\hat{m}$. We claim that, whenever $\bar{\mu}_1$ is admissible,
$f(\bar{\mu}_1)\le 0$. Suppose not, so $f(\bar{\mu}_1)>0$. If there exists $\tilde{\mu}\in[\bar{\mu}_0,\bar{\mu}_1]$ with $f(\tilde{\mu})=0$, let
$\hat{\mu}:=\max\{\bar{\mu}\in[\tilde{\mu},\bar{\mu}_1]:f(\bar{\mu})=0\}.$
Then, $f(\bar{\mu})>0$ for every $\bar{\mu}\in(\hat{\mu},\bar{\mu}_1]$, and therefore
$f'(\bar{\mu})\le -\hat{m}$ for all $\bar{\mu}\in(\hat{\mu},\bar{\mu}_1].$
Integrating from $\hat{\mu}$ to $\bar{\mu}_1$ gives
$$
  f(\bar{\mu}_1)-f(\hat{\mu})
  =
  \int_{\hat{\mu}}^{\bar{\mu}_1}f'(t)\hspace{0.02in}dt
  \le
  -\hat{m}(\bar{\mu}_1-\hat{\mu})<0,
$$
contradicting $f(\hat{\mu})=0$ and $f(\bar{\mu}_1)>0$. Hence, there is no zero in
$[\bar{\mu}_0,\bar{\mu}_1]$, so $f(\bar{\mu})>0$ throughout that interval. Therefore, Assumption \ref{ass:uniform_entropy_gap} applies on all of
$[\bar{\mu}_0,\bar{\mu}_1]$, giving
$f'(\bar{\mu})\le -\hat{m}$
  for all $\bar{\mu}\in[\bar{\mu}_0,\bar{\mu}_1].$
Integrating yields
$$
  f(\bar{\mu}_1)
  =
  f(\bar{\mu}_0)+\int_{\bar{\mu}_0}^{\bar{\mu}_1}f'(t)\hspace{0.02in}dt
  \le
  f(\bar{\mu}_0)-\hat{m}(\bar{\mu}_1-\bar{\mu}_0)
  =
  0,
$$
contradicting $f(\bar{\mu}_1)>0$. Thus $f(\bar{\mu}_1)\le0$. By Case 1,
$f(\bar{\mu})<0$ for all admissible $\bar{\mu}>\bar{\mu}_1$.

Combining the two cases, for every $(\lambda,\pi)\in\mathcal{D}$ we have
$\Delta(\lambda,\bar{\mu},\pi)<0$
  for all admissible $\bar{\mu}>\bar{\mu}_0+\frac{K_0}{\hat{m}}.$
In particular, for every admissible $\bar{\mu}\ge \bar{\mu}^\ast$ with
$\bar{\mu}\bar{\lambda}^2<1$, we have
$\Delta(\lambda,\bar{\mu},\pi)<0$
  for all $(\lambda,\pi)\in\mathcal{D}.$

\paragraph{Step 3: Elimination of mixed equilibria.}
Take any admissible $\bar{\mu}\ge\bar{\mu}^\ast$ with $\bar{\mu}\bar{\lambda}^2<1$ and suppose
$(\alpha^{\bar{\mu}},\eta^{\bar{\mu}},\tau^{\bar{\mu}})$ is a mixed $c$-robust equilibrium.
If $\alpha^{\bar{\mu}}(r)>0$, then by Step 1 we have
$\tau^{\bar{\mu}}\in[\underline{\lambda},\bar{\lambda}]$, hence
$(\tau^{\bar{\mu}},\eta^{\bar{\mu}})\in\mathcal{D}$.
By Step 2, $\Delta(\tau^{\bar{\mu}},\bar{\mu},\eta^{\bar{\mu}})<0,$
i.e.,
$V_{\tau^{\bar{\mu}},\bar{\mu}\tau^{\bar{\mu}}}(r;\eta^{\bar{\mu}})
<
V_{\tau^{\bar{\mu}},\bar{\mu}\tau^{\bar{\mu}}}(s;\eta^{\bar{\mu}})$.
By Definition \ref{def:mixed_equilibrium_mu}(ii), any $a$ with $\alpha^{\bar{\mu}}(a)>0$ must be a
best reply. Since $r$ yields strictly lower value than $s$, we have
$r\notin \operatorname{BR}_{\tau^{\bar{\mu}}}^{\bar{\mu}}(\eta^{\bar{\mu}})$, so
$\alpha^{\bar{\mu}}(r)=0$ in every equilibrium. Thus, $\alpha^{\bar{\mu}}(s)=1$, proving part (i).

\paragraph{Step 4: Proof of part (ii).}
By Proposition \ref{prop:dynamic_selection}, every $\Lambda$-limit frequency $\alpha^\Lambda$ of a
$\bar{\mu}$-optimal policy corresponds to a mixed $c$-robust equilibrium. By Step 3, every such
equilibrium satisfies $\alpha^{\bar{\mu}}(s)=1$. Therefore, every $\Lambda$-limit frequency satisfies
$\alpha^\Lambda(s)=1$, so every $\Lambda$-limit frequency is pure and equals the safe action.
Hence, the mixed $c$-robust-equilibrium $\lambda$-driven cycles between $r$ and $s$ are eliminated,
which proves part (ii).
\end{proof}

\begin{proof}[Proof of Corollary \ref{cor:nonvacuity_theorem1}]
The proof of Theorem \ref{thm:mu_kills_cycles} shows that  the conclusion there holds for every
admissible $\bar{\mu}>\bar{\mu}_0+K_0/(\underline{\lambda}H^\ast)$. Assumption
\ref{ass:initial_risky_advantage} implies
$\bar{\mu}_0+\frac{K_0}{\underline{\lambda}H^\ast}<\frac{1}{\bar{\lambda}^2}.$
Hence, we can choose
$\bar{\mu}^\ast\in\big(\bar{\mu}_0+\frac{K_0}{\underline{\lambda}H^\ast},
\frac{1}{\bar{\lambda}^2}\big),$
which implies that $[\bar{\mu}^\ast,1/\bar{\lambda}^2)$ is nonempty.
\end{proof}

\subsection{Proof of Proposition \ref{prop:multi_thresholds}}

Fix $\lambda \in [\underline{\lambda},\bar{\lambda}]$ and $\bar{\mu}$ satisfying $\bar{\mu}\lambda^2<1$. Using the single-model values $v_{\lambda,\bar{\mu}\lambda}(a;\m{q})$ from
\eqref{eq:static_criterion}, define the arm-specific slope $
  \Gamma_i(\lambda,\bar{\mu})
  :=
  \big[v_{\lambda,\bar{\mu}\lambda}(r_i;\m{q}^H)-v_{\lambda,\bar{\mu}\lambda}(s;\m{q}^H)\big]
  -\big[v_{\lambda,\bar{\mu}\lambda}(r_i;\m{q}^L)-v_{\lambda,\bar{\mu}\lambda}(s;\m{q}^L)\big].$
Since $V_{\lambda,\bar{\mu}\lambda}(a;\pi)$ is affine in $\pi$ by
\eqref{eq:posterior_value}, we then have
\begin{equation}\label{eq:Delta_i_theta_derivative}
  \frac{\partial}{\partial\theta}\Delta_i(\lambda,\bar{\mu},\theta)
  = \Gamma_i(\lambda,\bar{\mu})
  \quad\text{for all }\theta\in[0,1].
\end{equation}

\paragraph{Step 1: $\Delta_i(\lambda,\bar{\mu},\theta)$ is strictly increasing in $\theta$.}
By eq. \eqref{eq:Delta_i_theta_derivative}:
$\frac{\partial}{\partial\theta} \Delta_i(\lambda,\bar{\mu},\theta) = \Gamma_i(\lambda,\bar{\mu}).$
By Assumption \ref{ass:monotone_theta_multi}, $\Gamma_i(\lambda,\bar{\mu}) > 0$, so
$\theta \mapsto \Delta_i(\lambda,\bar{\mu},\theta)$ is strictly increasing.

\paragraph{Step 2: Existence and uniqueness of the threshold.}
By assumption $\Delta_i(\lambda,\bar{\mu},0)<0<\Delta_i(\lambda,\bar{\mu},1)$, so $\Delta_i(\lambda,\bar{\mu},0) < 0$ and
$\Delta_i(\lambda,\bar{\mu},1) > 0$. By Step 1 and the intermediate value theorem, there
exists a unique $\theta_i^\ast(\bar{\mu}) \in (0,1)$ with
$\Delta_i(\lambda,\bar{\mu},\theta_i^\ast(\bar{\mu})) = 0$. Monotonicity gives:
$\Delta_i(\lambda,\bar{\mu},\theta) \ge 0
  \Longleftrightarrow
  \theta \ge \theta_i^\ast(\bar{\mu}).$

\paragraph{Step 3: Comparative statics on the domain of existence.}
 Fix $\bar\mu_2>\bar\mu_1$
such that
$\Delta_i(\lambda,\bar\mu_j,0)<0<\Delta_i(\lambda,\bar\mu_j,1),$ for $j=1,2.$ By Assumption \ref{ass:uniform_entropy_gap_multi} and Lemma \ref{lem:Delta_derivatives}:
$\frac{\partial}{\partial\bar{\mu}} \Delta_i(\lambda,\bar{\mu},\theta)
  = \lambda(H_{r_i} - H_s) \le -\lambda H_i^\ast < 0.$
Thus, $\bar{\mu} \mapsto \Delta_i(\lambda,\bar{\mu},\theta)$ is strictly decreasing for
each fixed $\theta$. For $\bar{\mu}_2 > \bar{\mu}_1$ satisfying $\bar{\mu}_2\lambda^2<1$, if $\theta$ satisfies
$\Delta_i(\lambda,\bar{\mu}_2,\theta) \ge 0$, then since $\Delta_i$ is decreasing in
$\bar{\mu}$: $\Delta_i(\lambda,\bar{\mu}_1,\theta) > \Delta_i(\lambda,\bar{\mu}_2,\theta) \ge 0.$
In particular, $\theta \ge \theta_i^\ast(\bar{\mu}_2)$ implies
$\theta > \theta_i^\ast(\bar{\mu}_1)$, so
$\theta_i^\ast(\bar{\mu}_2) > \theta_i^\ast(\bar{\mu}_1)$.
\qed

\subsection{Proof of Lemma \ref{lem:pairwise_Delta}}

The proof is identical to Lemma \ref{lem:Delta_derivatives}, applied to arbitrary actions
$i,j \in A$ instead of $r,s$. Setting $\mu = \bar{\mu}\lambda$ and using the envelope
theorem:
$$
  \frac{\partial}{\partial\bar{\mu}} \Delta_{ij}(\lambda,\bar{\mu},\pi)
  = \frac{\partial}{\partial\bar{\mu}} \big[ V_{\lambda,\bar{\mu}\lambda}(i;\pi)
    - V_{\lambda,\bar{\mu}\lambda}(j;\pi) \big]
  = \lambda \big( H_i(\lambda,\bar{\mu},\pi) - H_j(\lambda,\bar{\mu},\pi) \big).\qed
$$

\subsection{Proof of Proposition \ref{prop:all_risky_elimination}}

\paragraph{Step 1: Bound on the derivative.}
By Assumption \ref{ass:pairwise_entropy}(i) and Lemma \ref{lem:pairwise_Delta}, for all
$(\lambda,\bar{\mu},\pi) \in [\underline{\lambda},\bar{\lambda}] \times [\bar{\mu}_0, 1/\bar{\lambda}^2) \times \Delta(Q)$: $\frac{\partial}{\partial\bar{\mu}} \Delta_{ij}(\lambda,\bar{\mu},\pi)
  = \lambda(H_i - H_j) \le -\lambda H_{ij}^\ast \le -\underline{\lambda} H_{ij}^\ast.$

\paragraph{Step 2: Integration.}
For $\bar{\mu} \in [\bar{\mu}_0, 1/\bar{\lambda}^2)$ and
$(\lambda,\pi) \in [\underline{\lambda},\bar{\lambda}] \times \Delta(Q)$:
$$
  \Delta_{ij}(\lambda,\bar{\mu},\pi)
  = \Delta_{ij}(\lambda,\bar{\mu}_0,\pi)
    + \int_{\bar{\mu}_0}^{\bar{\mu}} \frac{\partial\Delta_{ij}}{\partial t}\, dt
  \le K_{ij} - \underline{\lambda} H_{ij}^\ast (\bar{\mu} - \bar{\mu}_0),
$$
where $K_{ij}$ is defined in \eqref{eq:Mij_def}.

\paragraph{Step 3: Finding the threshold.}
Set $\Delta_{ij}(\lambda,\bar{\mu},\pi) < 0$, which requires:
$\bar{\mu} > \bar{\mu}_0 + \frac{K_{ij}}{\underline{\lambda} H_{ij}^\ast}.$
Define
$\bar{\mu}^\ast := \bar{\mu}_0 + K_{ij}/(\underline{\lambda} H_{ij}^\ast)$. By Assumption \ref{ass:pairwise_entropy}(ii):
$K_{ij} < \underline{\lambda} H_{ij}^\ast (1/\bar{\lambda}^2 - \bar{\mu}_0),$
so $\bar{\mu}^\ast = \bar{\mu}_0 + \frac{K_{ij}}{\underline{\lambda} H_{ij}^\ast}
  < \bar{\mu}_0 + (1/\bar{\lambda}^2 - \bar{\mu}_0) = 1/\bar{\lambda}^2.$
Thus, $\bar{\mu}^\ast \in [\bar{\mu}_0, 1/\bar{\lambda}^2)$.

\paragraph{Step 4: Conclusion.}
For all $\bar{\mu} \in [\bar{\mu}^\ast, 1/\bar{\lambda}^2)$,
$\lambda \in [\underline{\lambda},\bar{\lambda}]$, and $\pi \in \Delta(Q)$:
$\Delta_{ij}(\lambda,\bar{\mu},\pi) < 0,$ i.e.,
  $V_{\lambda,\bar{\mu}\lambda}(j;\pi) > V_{\lambda,\bar{\mu}\lambda}(i;\pi).$
Hence, $i \notin \operatorname{BR}^{\bar{\mu}}_\lambda(\pi)$ for all such parameters.
\qed

\section{Appendix: proofs from Section  \ref{sec:axiomatization}}

\begin{lemma}
\label{prop:corner}
If $\lambda>0$ and $\mu\geq1/\lambda$, then for every $q\in\Delta({Y})$ and every act $f$, 
\begin{equation}
\label{eq:corner}
 v_{\lambda,\mu}(f;q)
 =\min_{y\in\text{\normalfont supp}(q)}\Big\{u(f(y)) +\frac{1}{\lambda}\log\frac{1}{q(y)}\Big\}.
\end{equation}
\end{lemma}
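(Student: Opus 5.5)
The plan is to rewrite the objective so that it becomes a concave function of $p$ and then minimize it over the simplex. I would first note that any feasible $p$ with finite objective must satisfy $p\ll q$, since otherwise $R(p\Vert q)=+\infty$. So I restrict to $\Delta(\operatorname{supp}(q))$. On this set, the same algebra as in Step 1 of the proof of Lemma \ref{lem:closed_form} gives
$$
\Phi(p)=\sum_{y\in\operatorname{supp}(q)} p(y)\Big[u(f(y))+\tfrac{1}{\lambda}\log\tfrac{1}{q(y)}\Big]+\kappa\sum_{y} p(y)\log p(y),
$$
with $\kappa=1/\lambda-\mu\le 0$. Equivalently, writing $c(y):=u(f(y))+\frac{1}{\lambda}\log\frac{1}{q(y)}$, we have
$$
\Phi(p)=\sum_y p(y)c(y)-\kappa H(p)\quad\text{with } -\kappa\ge 0.
$$

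For the lower bound, I use that $H(p)\ge 0$ and $-\kappa\ge 0$. Together these give $\Phi(p)\ge\sum_y p(y)c(y)\ge\min_{y\in\operatorname{supp}(q)}c(y)$ for every $p\ll q$. This holds because a linear functional on the simplex is bounded below by its minimal vertex value.

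For the upper bound (attainment), take $y^\ast\in\arg\min_{y\in\operatorname{supp}(q)}c(y)$ and evaluate at the Dirac measure $p=\delta_{y^\ast}$. This $p$ is feasible because $q(y^\ast)>0$. Since $H(\delta_{y^\ast})=0$, we get $R(\delta_{y^\ast}\Vert q)=\log\frac{1}{q(y^\ast)}$, and hence $\Phi(\delta_{y^\ast})=c(y^\ast)$. Combining the two bounds yields \eqref{eq:corner}. As a by-product, the value is independent of $\mu$ on the region $\mu\ge 1/\lambda$.

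The only delicate point is bookkeeping rather than analysis. Lemma \ref{lem:closed_form} assumed full support, but here $q$ is arbitrary. I therefore need to justify two things carefully. First, the restriction to $\operatorname{supp}(q)$ must use the convention that $R=+\infty$ off absolute continuity. Second, the decomposition $\frac{1}{\lambda}R(p\Vert q)+\mu H(p)=\frac{1}{\lambda}\sum p\log\frac1q+\kappa\sum p\log p$ must hold with the convention $0\log 0=0$ on $\operatorname{supp}(q)$. I expect that verification to be the main (minor) obstacle; once it is in place, the argument is the two-sided bound above. No convexity or first-order conditions are needed, because when $\kappa\le 0$ the entropy term only helps Nature, and it is fully exploited at a vertex.
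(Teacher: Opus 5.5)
Your proposal is correct and follows essentially the same route as the paper's proof: restrict to $p\ll q$, write the objective as the linear term $\sum_y p(y)\bigl[u(f(y))+\tfrac{1}{\lambda}\log\tfrac{1}{q(y)}\bigr]$ plus $\kappa\sum_y p(y)\log p(y)\ge 0$ (since $\kappa\le 0$), bound it below by the minimal vertex value over $\operatorname{supp}(q)$, and attain that bound at the Dirac mass on a minimizer. One small wording fix: $R(\delta_{y^\ast}\Vert q)=\log\frac{1}{q(y^\ast)}$ holds directly from the definition, and $H(\delta_{y^\ast})=0$ is what kills the $\kappa$ term; neither fact follows from the other.
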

The proof of Lemma \ref{prop:corner} is standard, so it is in Online Appendix \ref{ola:omittedproofs}.

\begin{proof}[Proof of Theorem \ref{prop:arc_representation}.1] Fix $q\in Q$ and an act $f\in X^{Y}$. If $p\not\ll q$, then $R(p\Vert q)=+\infty$,
so such $p$ cannot attain the minimum in \eqref{eq:axiom_v}. Hence
\begin{equation}\label{eq:restrict_dom_aux}
v_{\lambda,\mu}(f;q)
=\min_{\substack{p\in\Delta(Y)\\ p\ll q}}
\left\{
\E_p[u(f)]+\frac{1}{\lambda}R(p\Vert q)+\mu H(p)
\right\}.
\end{equation}

Fix $p\in\Delta(Y)$ with $p\ll q$. Then,
\begin{align*}
\frac{1}{\lambda}R(p\Vert q)+\mu H(p)
&=\frac{1}{\lambda}\sum_{y\in Y}p(y)\big(\log p(y)-\log q(y)\big)-\mu\sum_{y\in Y}p(y)\log p(y)\\
&=\Big(\tfrac{1}{\lambda}-\mu\Big)\sum_{y\in Y}p(y)\log p(y)-\frac{1}{\lambda}\sum_{y\in Y}p(y)\log q(y)\\
&=\kappa\sum_{y\in Y}p(y)\log p(y)-\frac{1}{\lambda}\sum_{y\in Y}p(y)\log q(y),
\end{align*}
where $\kappa=\frac{1}{\lambda}-\mu>0$. Now define $Z(q)$ and $\tilde q$ by \eqref{eq:axiom_Zq}. For every $y$ with $q(y)>0$, $\log \tilde q(y)=\beta\log q(y)-\log Z(q).$
Since $p\ll q$, taking expectations under $p$ yields
$\sum_{y\in Y}p(y)\log \tilde q(y)
=\beta\sum_{y\in Y}p(y)\log q(y)-\log Z(q),$
and therefore
\begin{align*}
R(p\Vert \tilde q)
=\sum_{y\in Y}p(y)\log\frac{p(y)}{\tilde q(y)}
&=\sum_{y\in Y}p(y)\log p(y)-\sum_{y\in Y}p(y)\log \tilde q(y)\\
&=\sum_{y\in Y}p(y)\log p(y)-\beta\sum_{y\in Y}p(y)\log q(y)+\log Z(q).
\end{align*}
Multiplying by $\kappa$ gives
$\kappa R(p\Vert \tilde q)
=\kappa\sum_{y\in Y}p(y)\log p(y)-\kappa\beta\sum_{y\in Y}p(y)\log q(y)+\kappa\log Z(q).$
By $\beta=\frac{1}{1-\lambda\mu}$ and $\kappa=\frac{1}{\lambda}-\mu=\frac{1-\lambda\mu}{\lambda}$ we have
$\kappa\beta=\frac{1}{\lambda}$. Hence,
$\frac{1}{\lambda}R(p\Vert q)+\mu H(p)
=
\kappa R(p\Vert \tilde q)-\kappa\log Z(q).$
Substituting into \eqref{eq:restrict_dom_aux} yields
$$
v_{\lambda,\mu}(f;q)
=
-\kappa\log Z(q)
+\min_{\substack{p\in\Delta(Y)\\ p\ll q}}
\left\{
\E_p[u(f)]+\kappa R(p\Vert \tilde q)
\right\}.
$$
Since $\tilde q$ is supported on $\{y:q(y)>0\}$, any $p\not\ll q$ also satisfies $p\not\ll \tilde q$ and so
$R(p\Vert \tilde q)=+\infty$; therefore the restriction $p\ll q$ is redundant:
$$
\min_{\substack{p\in\Delta(Y)\\ p\ll q}}
\left\{
\E_p[u(f)]+\kappa R(p\Vert \tilde q)
\right\}
=
\min_{p\in\Delta(Y)}
\left\{
\E_p[u(f)]+\kappa R(p\Vert \tilde q)
\right\}.
$$
Multiply by $\pi(q)$ and sum over $q\in Q$ to obtain
$$
V_{\lambda,\mu}(f;\pi)
=
-\kappa\sum_{q\in Q}\pi(q)\log Z(q)
+\sum_{q\in Q}\pi(q)\min_{p\in\Delta(Y)}\left\{\E_p[u(f)]+\kappa R(p\Vert \tilde q)\right\}.
$$
Define the act-independent constant $\widetilde{C}(\lambda,\mu,\pi):=-\kappa\sum_{q\in Q}\pi(q)\log Z(q)\in\R,$
which gives \eqref{eq:arc_form}. The ARC formulation follows by setting $\lambda_{\mathrm{ARC}}:=1/\kappa$.\end{proof}

\par
\noindent The proof of Theorem \ref{prop:arc_representation}.2 is standard, so it is relegated to  Online Appendix \ref{ola:omittedproofs}.

\begin{proof}[Proof of Theorem \ref{prop:arc_representation}.3] The limit of $V_{\lambda,\mu}(\cdot;\pi)$ as $\lambda\rightarrow0$ is $\underset{\lambda\rightarrow0}{\text{\normalfont lim}}\hspace{0.02in}V_{\lambda,\mu}(\cdot;\pi)=\sum_{q\in Q}\pi(q)\big(\E_{q}[u(\cdot)]+\mu H(q)\big)$, which is a subjective expected utility preference with belief $\sum_{q\in Q}q\pi(q)$ because the additive term $\mu \sum_{q\in Q}H(q)\pi(q)$ is a constant that is independent of acts. When $Q=\{q\}$, $\pi$ is a Dirac measure on $q$, so $\underset{\lambda\rightarrow0}{\text{\normalfont lim}}\hspace{0.02in}V_{\lambda,\mu}(\cdot;\pi)=\E_{q}[u(\cdot)]+\mu H(q)=\underset{\lambda\rightarrow0}{\text{\normalfont lim}}\hspace{0.02in}v_{\lambda,\mu}(\cdot;q)$, which is the subjective version of \citeauthor{mononen25}'s (\citeyear{mononen25}) entropy-modified expected utility preference. 
\end{proof}

\setstretch{1}

\bibliography{ref}

\begin{thebibliography}{}

\bibitem[Augenblick and Rabin, 2021]{rabin21}
Augenblick, N. and Rabin, M. (2021).
\newblock Belief movement, uncertainty reduction, and rational updating.
\newblock {\em The Quarterly Journal of Economics}, 136(2):933--985.

\bibitem[Barseghyan et~al., 2013]{barseghyan2013nature}
Barseghyan, L., Molinari, F., O'Donoghue, T., and Teitelbaum, J.~C. (2013).
\newblock The nature of risk preferences: Evidence from insurance choices.
\newblock {\em American economic review}, 103(6):2499--2529.

\bibitem[Beck and Sch{\"o}gl, 1993]{esc93}
Beck, C. and Sch{\"o}gl, F. (1993).
\newblock Thermodynamics of chaotic systems.
\newblock {\em Thermodynamics of Chaotic Systems: an Introduction}.

\bibitem[Benjamin, 2019]{ben19}
Benjamin, D.~J. (2019).
\newblock Errors in probabilistic reasoning and judgment biases.
\newblock {\em Handbook of Behavioral Economics: Applications and Foundations 1}, 2:69--186.

\bibitem[Bernheim and Sprenger, 2020]{bernheim20}
Bernheim, B.~D. and Sprenger, C. (2020).
\newblock On the empirical validity of cumulative prospect theory: Experimental evidence of rank-independent probability weighting.
\newblock {\em Econometrica}, 88(4):1363--1409.

\bibitem[Blumer et~al., 1987]{blumer87}
Blumer, A., Ehrenfeucht, A., Haussler, D., and Warmuth, M.~K. (1987).
\newblock Occam's razor.
\newblock {\em Information processing letters}, 24(6):377--380.

\bibitem[Box, 1976]{box76}
Box, G.~E. (1976).
\newblock Science and statistics.
\newblock {\em Journal of the American Statistical Association}, 71(356):791--799.

\bibitem[Caggiano et~al., 2017]{shock17}
Caggiano, G., Castelnuovo, E., and Figueres, J.~M. (2017).
\newblock Economic policy uncertainty and unemployment in the united states: A nonlinear approach.
\newblock {\em Economics Letters}, 151:31--34.

\bibitem[Caggiano et~al., 2014]{shock14}
Caggiano, G., Castelnuovo, E., and Groshenny, N. (2014).
\newblock Uncertainty shocks and unemployment dynamics in us recessions.
\newblock {\em Journal of Monetary Economics}, 67:78--92.

\bibitem[Cerreia-Vioglio et~al., 2025]{miss25}
Cerreia-Vioglio, S., Hansen, L.~P., Maccheroni, F., and Marinacci, M. (2025).
\newblock Making decisions under model misspecification.
\newblock {\em Review of Economic Studies}, page rdaf046.

\bibitem[Chamberlain, 2020]{chamb20}
Chamberlain, G. (2020).
\newblock Robust decision theory and econometrics.
\newblock {\em Annual Review of Economics}, 12(1):239--271.

\bibitem[Chetty et~al., 2009]{chetty09}
Chetty, R., Looney, A., and Kroft, K. (2009).
\newblock Salience and taxation: Theory and evidence.
\newblock {\em American economic review}, 99(4):1145--1177.

\bibitem[de~Clippel et~al., 2025]{ortoleva25}
de~Clippel, G., Moscariello, P., Ortoleva, P., and Rozen, K. (2025).
\newblock Caution in the face of complexity.
\newblock Technical report, Mimeo.

\bibitem[Eden et~al., 2026]{eden26}
Eden, M., Samuelson, L., and Steiner, J. (2026).
\newblock The importance of unlikely events.

\bibitem[Enke and Graeber, 2023]{enke23}
Enke, B. and Graeber, T. (2023).
\newblock Cognitive uncertainty.
\newblock {\em The Quarterly Journal of Economics}, 138(4):2021--2067.

\bibitem[Enke et~al., 2025]{enke25}
Enke, B., Graeber, T., and Oprea, R. (2025).
\newblock Complexity and time.
\newblock {\em Journal of the European Economic Association}, 23(5):1838--1867.

\bibitem[Fiebig et~al., 2010]{fiebig10}
Fiebig, D.~G., Keane, M.~P., Louviere, J., and Wasi, N. (2010).
\newblock The generalized multinomial logit model: Accounting for scale and coefficient heterogeneity.
\newblock {\em Marketing Science}, 29(3):393--421.

\bibitem[Fudenberg et~al., 2015]{fudenberg2015stochastic}
Fudenberg, D., Iijima, R., and Strzalecki, T. (2015).
\newblock Stochastic choice and revealed perturbed utility.
\newblock {\em Econometrica}, 83(6):2371--2409.

\bibitem[Gabaix, 2025]{gabaix25}
Gabaix, X. (2025).
\newblock A theory of complexity aversion.
\newblock {\em Available at SSRN}.

\bibitem[Gilboa and Schmeidler, 1989]{gilboa89}
Gilboa, I. and Schmeidler, D. (1989).
\newblock Maxmin expected utility with non-unique prior.
\newblock {\em Journal of mathematical economics}, 18(2):141--153.

\bibitem[Gr{\"u}nwald and Roos, 2019]{gr19}
Gr{\"u}nwald, P. and Roos, T. (2019).
\newblock Minimum description length revisited.
\newblock {\em International journal of mathematics for industry}, 11(01):1930001.

\bibitem[Handel and Schwartzstein, 2018]{handel18}
Handel, B. and Schwartzstein, J. (2018).
\newblock Frictions or mental gaps: what's behind the information we (don't) use and when do we care?
\newblock {\em Journal of Economic Perspectives}, 32(1):155--178.

\bibitem[Hansen and Sargent, 2001]{hansen01}
Hansen, L.~P. and Sargent, T.~J. (2001).
\newblock Robust control and model uncertainty.
\newblock {\em American Economic Review}, 91(2):60--66.

\bibitem[Hansen and Sargent, 2008]{hansen08}
Hansen, L.~P. and Sargent, T.~J. (2008).
\newblock {\em Robustness}.
\newblock Princeton University Press.

\bibitem[Lanzani, 2025a]{lanzani25}
Lanzani, G. (2025a).
\newblock Dynamic concern for misspecification.
\newblock {\em Econometrica}, 93(4):1333--1370.

\bibitem[Lanzani, 2025b]{lanzani25w}
Lanzani, G. (2025b).
\newblock Supplement to “dynamic concern for misspecification”.
\newblock {\em Econometrica}.

\bibitem[Lipman, 1995]{lipman95}
Lipman, B.~L. (1995).
\newblock Information processing and bounded rationality: a survey.
\newblock {\em Canadian Journal of Economics}, pages 42--67.

\bibitem[Maccheroni et~al., 2006]{mmr06}
Maccheroni, F., Marinacci, M., and Rustichini, A. (2006).
\newblock Ambiguity aversion, robustness, and the variational representation of preferences.
\newblock {\em Econometrica}, 74(6):1447--1498.

\bibitem[Mat{\v{e}}jka and McKay, 2015]{matvejka15}
Mat{\v{e}}jka, F. and McKay, A. (2015).
\newblock Rational inattention to discrete choices: A new foundation for the multinomial logit model.
\newblock {\em American Economic Review}, 105(1):272--298.

\bibitem[Mononen, 2025]{mononen25}
Mononen, L. (2025).
\newblock On preference for simplicity and probability weighting.
\newblock {\em working paper}.

\bibitem[Mullainathan et~al., 2008]{mullainathan2008coarse}
Mullainathan, S., Schwartzstein, J., and Shleifer, A. (2008).
\newblock Coarse thinking and persuasion.
\newblock {\em The Quarterly journal of economics}, 123(2):577--619.

\bibitem[Oprea, 2020]{oprea20}
Oprea, R. (2020).
\newblock What makes a rule complex?
\newblock {\em American economic review}, 110(12):3913--3951.

\bibitem[Puri, 2025]{puri25}
Puri, I. (2025).
\newblock Simplicity and risk.
\newblock {\em The Journal of Finance}, 80(2):1029--1080.

\bibitem[Rissanen, 1978]{mdl78}
Rissanen, J. (1978).
\newblock Modeling by shortest data description.
\newblock {\em Automatica}, 14(5):465--471.

\bibitem[Robson et~al., 2023]{growth22}
Robson, A., Samuelson, L., and Steiner, J. (2023).
\newblock Decision theory and stochastic growth.
\newblock {\em American Economic Review: Insights}, 5(3):357--376.

\bibitem[Shiller, 2017]{shiller17}
Shiller, R.~J. (2017).
\newblock Narrative economics.
\newblock {\em American Economic Review}, 107(4):967--1004.

\bibitem[Sims, 2003]{sims03}
Sims, C.~A. (2003).
\newblock Implications of rational inattention.
\newblock {\em Journal of monetary Economics}, 50(3):665--690.

\bibitem[Strzalecki, 2011]{strz11}
Strzalecki, T. (2011).
\newblock Axiomatic foundations of multiplier preferences.
\newblock {\em Econometrica}, 79(1):47--73.

\bibitem[Strzalecki, 2024]{tomasz24}
Strzalecki, T. (2024).
\newblock Variational bayes and non-bayesian updating.
\newblock {\em arXiv preprint arXiv:2405.08796}.

\bibitem[Sunstein, 2002]{sunstein02}
Sunstein, C.~R. (2002).
\newblock Probability neglect: Emotions, worst cases, and law.
\newblock {\em The Yale Law Journal}, 112:61.

\bibitem[Sunstein, 2003]{sunstein03}
Sunstein, C.~R. (2003).
\newblock Terrorism and probability neglect.
\newblock {\em Journal of Risk and Uncertainty}, 26(2-3):121--136.

\bibitem[Swait and Louviere, 1993]{swait93}
Swait, J. and Louviere, J. (1993).
\newblock The role of the scale parameter in the estimation and comparison of multinomial logit models.
\newblock {\em Journal of Marketing Research}, 30(3):305--314.

\end{thebibliography}
\bibliographystyle{apalike}
\newpage \renewcommand{\thepage}{OA-\arabic{page}} \setcounter{page}{1}

\section{For Online Publication}
\onehalfspacing
\subsection{General complexity functionals}
\label{app:general_complexity}

This appendix shows that most of our qualitative results extend to when the entropy function is generalized to the additive perturbations axiomatized by \citet{fudenberg2015stochastic}.  
Suppose the agent's static objective function is 
 as
 \begin{equation}
 \label{eq:genM_static}
   v^W_{\lambda,\mu}(a;\m{q})
   :=
   \min_{p_a\in\Delta(Y)}
  \left\{
    \E_{p_a}[u(a,y)]
     + \frac{1}{\lambda}R(p_a\Vert q_a)
     + \mu \sum_{y\in Y} W(p_a(y)),
   \right\}
    \end{equation}
    where $W$ is continuous on $[0,1]$, twice continuously differentiable on $(0,1]$, strictly concave, $W(0)=0$, and
$C_W:=\sup_{x\in(0,1]}-xW''(x)<\infty.$\footnote{The additive structure can further be relaxed, but the analysis would  become less transparent because the Hessian of the objective would no longer be diagonal, so off-diagonal terms---which do not have clear economic interpretations---would need to be accounted for.}
The conditions on $W$ imply that more diffuse beliefs are more complex, and that the curvature of the complexity term is not strong enough to overturn the strict convexity generated by the KL term. Entropy corresponds to $W(x)=-x\log x$, for which $C_W=1$. Let $M(\ell):=\sum_{y\in Y} W(\ell(y))$, for every $\ell\in\Delta(Y)$.

\begin{proposition}[Static properties under a general complexity index]
\label{prop:genM_static}
Suppose Assumption \ref{ass:full_support}  holds. Fix $(a,\m{q})\in A\times Q, \lambda>0$, and $\mu\ge 0$ with $\mu<1/(\lambda C_W)$. Then:
\begin{enumerate}
  \item[\textnormal{(i)}] The minimization problem in \eqref{eq:genM_static} has a unique solution, denoted $\hat p^W_{\lambda,\mu}(a;\m{q})$.
  \item[\textnormal{(ii)}] The maps $(\lambda,\mu,\m{q})\mapsto v^W_{\lambda,\mu}(a;\m{q})$ and $(\lambda,\mu,\m{q})\mapsto \hat p^W_{\lambda,\mu}(a;\m{q})$ are continuous on
  $$
    \big\{(\lambda,\mu,\m{q}): \lambda>0,\ \mu<1/(\lambda C_W)\big\}.
  $$
  \item[\textnormal{(iii)}] \emph{For every $\mu \in (0,1/(\lambda C_W))$,}
$\frac{\partial}{\partial \mu}v^W_{\lambda,\mu}(a;\m{q})
=
M\left(\hat p^W_{\lambda,\mu}(a;\m{q})\right).$
\emph{At $\mu=0$, the right derivative exists and satisfies}
$\frac{\partial^+}{\partial \mu}v^W_{\lambda,\mu}(a;\m{q})\Big|_{\mu=0}
=
M\left(\hat p^W_{\lambda,0}(a;\m{q})\right).$
\item[\textnormal{(iv)}] If $0\le \mu_1<\mu_2<1/(\lambda C_W)$, then $M\big(\hat p^W_{\lambda,\mu_2}(a;\m{q})\big)
    \le
    M\big(\hat p^W_{\lambda,\mu_1}(a;\m{q})\big),$
  with strict inequality whenever
  $\hat p^W_{\lambda,\mu_2}(a;\m{q})\neq \hat p^W_{\lambda,\mu_1}(a;\m{q})$.
\end{enumerate}
\end{proposition}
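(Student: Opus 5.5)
Write $\Phi^W(\lambda,\mu,p):=\E_{p}[u(a,y)]+\frac{1}{\lambda}R(p\Vert q_a)+\mu M(p)$ on $\Delta(Y)$. The approach mirrors the proof of Lemma \ref{lem:closed_form}, with the closed form replaced by convexity and interiority arguments. The key observation for (i) is coordinatewise strong convexity. The function is separable, with summand
$$\psi_y(x)=u(a,y)x+\tfrac{1}{\lambda}x\log\frac{x}{q_a(y)}+\mu W(x).$$
Its second derivative on $(0,1]$ is $\frac{1}{\lambda x}+\mu W''(x)\ge \frac{1}{x}\big(\frac1\lambda-\mu C_W\big)>0$, using $-xW''(x)\le C_W$ and $\mu<1/(\lambda C_W)$. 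Hence each $\psi_y$ is strictly convex on $[0,1]$ (by continuity at $0$), and $\Phi^W$ is strictly convex on the compact simplex. This gives existence and uniqueness of $\hat p^W$.

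Next I establish interiority, as in Step 2 of Lemma \ref{lem:closed_form}. The concern is that $W'(0^+)$ could be $+\infty$, as for entropy. To handle it I argue via the derivative bound instead. Integrating $\psi_y''(x)\ge(\frac1\lambda-\mu C_W)/x$ shows that $\psi_y'(x)\to-\infty$ as $x\downarrow0$: the lower bound integrates to a $\log$ divergence, so $\psi_y'(x)\le \psi_y'(1)-(\frac1\lambda-\mu C_W)\log(1/x)$. The same perturbation argument then rules out any zero coordinate. At the interior point, I eliminate one coordinate to obtain an unconstrained problem with a positive definite Hessian; this follows from diagonal strong convexity plus the rank-one structure of the reduced Hessian.

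For (ii), I use the implicit function theorem. The first-order conditions $\psi_y'(p(y))=\gamma$ for all $y$, together with $\sum_y p(y)=1$, have a nonsingular Jacobian. They depend continuously on $(\lambda,\mu,\m{q})$, with $q_a$ entering through $\log q_a(y)$ and $q_a>0$. So $\hat p^W$ is continuous, indeed $C^1$, on the open set $\{\mu<1/(\lambda C_W)\}$. Continuity of $v^W=\Phi^W(\lambda,\mu,\hat p^W)$ follows by composition. A cruder alternative is Berge's theorem plus uniqueness, which I would mention as a fallback.

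For (iii), I apply the envelope theorem as in Step 4 of Lemma \ref{lem:closed_form}. The minimizer is interior and $C^1$ in $\mu$, so $\partial_\mu v^W=\partial_\mu\Phi^W|_{p=\hat p^W}=M(\hat p^W)$. At $\mu=0$ the same formula holds as a right derivative; in fact the argument gives a two-sided derivative, since the constraint set is open in $\mu$ around $0$. Alternatively, I would use the Danskin-type sandwich
$$\mu' M(\hat p_{\mu'})\le v_{\mu+\mu'}-v_\mu\le \mu' M(\hat p_{\mu}),$$
obtained from optimality of each minimizer, together with continuity of $\hat p$.

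For (iv), I use the standard monotone-comparative-statics swap. Writing $p_i=\hat p^W_{\lambda,\mu_i}$, optimality of each at its own parameter gives
$$\Phi^W(\mu_1,p_1)\le\Phi^W(\mu_1,p_2)\quad\text{and}\quad \Phi^W(\mu_2,p_2)\le\Phi^W(\mu_2,p_1).$$
Adding these cancels the common terms and yields $(\mu_2-\mu_1)\big(M(p_2)-M(p_1)\big)\le0$. If $p_1\ne p_2$, uniqueness makes both inequalities strict, so the conclusion is strict.

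The main obstacle is interiority when $W'$ is unbounded near $0$ and of either sign. The planned resolution is the integrated curvature bound above, which shows the KL term's logarithmic pull always dominates.
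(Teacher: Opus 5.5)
Your proof is correct. Parts (i) and (iv) match the paper's argument: the paper uses the same curvature bound $\frac{1}{\lambda x}+\mu W''(x)\ge \frac{1-\lambda\mu C_W}{\lambda x}$ to get strict convexity of a separable objective, and the same two-inequality swap.

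For (ii) and (iii) your main route is different. You first prove that $\hat p^W$ is interior, using the integrated bound $\psi_y'(x)\le\psi_y'(1)-(\frac1\lambda-\mu C_W)\log(1/x)$. You then get continuity, in fact $C^1$ dependence, from the implicit function theorem applied to the bordered first-order system, and the derivative from the classical envelope theorem. The paper never proves interiority. It gets continuity from Berge's maximum theorem plus uniqueness, and the derivative from exactly the Danskin sandwich you list as a fallback. (In your sandwich the left endpoint should read $\mu' M(\hat p_{\mu+\mu'})$, not $\mu' M(\hat p_{\mu'})$.)

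The paper's route is shorter. It uses only joint continuity of the objective and compactness of $\Delta(Y)$, so it would work even if the minimizer sat on the boundary. Your route pays for an interiority lemma but gives more:
\begin{enumerate}
\item[(a)] the first-order conditions \eqref{eq:genM_foc} always hold, whereas the paper states them only ``whenever the minimizer is interior'';
\item[(b)] $\hat p^W$ is $C^1$ in the parameters;
\item[(c)] the derivative at $\mu=0$ is two-sided, not just a right derivative.
\end{enumerate}

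Two small repairs are needed.
\begin{enumerate}
\item[(1)] When you apply the implicit function theorem, say explicitly why the local solution is $\hat p^W$. The locally defined interior solution of the first-order conditions is a global minimizer by convexity, so it equals $\hat p^W$ at nearby parameters by uniqueness.
\item[(2)] Your bound $\psi_y''(x)\ge(\frac1\lambda-\mu C_W)/x$ is valid only for $\mu\ge 0$. You rely on negative $\mu$ for the open-set claim in (ii) and the two-sided derivative at $0$. For $\mu<0$, the condition $-xW''\le C_W$ gives an upper bound on $\mu W''$, not a lower bound. Use instead $\mu W''\ge 0$, which follows from concavity of $W$, so that $\psi_y''(x)\ge \frac{1}{\lambda x}$; the interiority and IFT steps then go through unchanged. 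The paper's own convexity bound has the same implicit restriction to $\mu\ge 0$, even though (ii) is stated on a domain that includes negative $\mu$.
\end{enumerate}
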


Proposition \ref{prop:genM_static} identifies the main static driver of our qualitative results. The exact Shannon formula is not essential; what matters is that the complexity term enters linearly in the objective and ranks diffuse distortions above concentrated ones.  What is lost relative to entropy is tractability, not the economic mechanism. Notice also that whenever the minimizer is interior, its first-order conditions are
\begin{equation}
\label{eq:genM_foc}
  u(a,y)
  + \frac{1}{\lambda}\Big(1+\log\frac{\hat p^W_{\lambda,\mu}(a;\m{q})(y)}{q_a(y)}\Big)
  + \mu W'\big(\hat p^W_{\lambda,\mu}(a;\m{q})(y)\big)
  + \xi
  =0
  \qquad (y\in Y)
\end{equation}
for some multiplier $\xi\in\R$. For $W(x)=-x\log x$, \eqref{eq:genM_foc} reduces to the Gibbs closed-form formula in Lemma \ref{lem:closed_form}. For a generic $W$, \eqref{eq:genM_foc} is implicit, so the closed form and the representation results are lost, but our qualitative results are preserved.

The remaining theoretical results in the paper use only this more general structure. In particular:
(i) Proposition \ref{prop:dynamic_selection} only needs continuity of the one-period value and best-reply correspondence, which Proposition \ref{prop:genM_static} provides;
(ii) Proposition \ref{prop:correct_spec_limit} only needs boundedness of $M$ on the simplex and the fact that $\mu_t=\bar{\mu}\lambda_t\to0$ under correct specification; and
(iii) Theorem \ref{thm:mu_kills_cycles}, Proposition \ref{prop:multi_thresholds}, Lemma \ref{lem:pairwise_Delta}, Proposition \ref{prop:all_risky_elimination}, and Proposition \ref{thm:welfare_improvement} use entropy only through the envelope identities and the corresponding entropy-gap assumptions. For concreteness, we now state and prove the analog of Theorem \ref{thm:mu_kills_cycles}.

 For any action $a\in A$, $\pi\in\Delta(Q)$, and admissible pair $(\lambda,\bar\mu)$ with $\bar\mu\lambda^2<1/C_W$, define
$M_a(\lambda,\bar\mu,\pi)
  :=
  \sum_{\m{q}\in Q}\pi(\m{q})
  M\big(\hat p^W_{\lambda,\bar\mu\lambda}(a;\m{q})\big).$
If $A=\{r,s\}$, define the risky-safe value gap by
$\Delta^W(\lambda,\bar\mu,\pi)
  :=
  V^W_{\lambda,\bar\mu\lambda}(r;\pi)
  -
  V^W_{\lambda,\bar\mu\lambda}(s;\pi).$

\begin{assumption}[Uniform complexity gap]
\label{ass:genM_gap}
Let $[\underline\lambda,\bar\lambda]$ be as in Assumption \ref{ass:U}. There exist constants $\bar\mu_0\ge 0$ and $M^\ast>0$ such that for all
$(\lambda,\pi)\in[\underline\lambda,\bar\lambda]\times\Delta(Q)$ and all admissible
$\bar\mu\ge\bar\mu_0$ satisfying $\bar\mu\lambda^2<1/C_W$,
$\Delta^W(\lambda,\bar\mu,\pi)\ge 0
  \Longrightarrow
  M_s(\lambda,\bar\mu,\pi)-M_r(\lambda,\bar\mu,\pi)\ge M^\ast.$
\end{assumption}

Assumption \ref{ass:genM_gap} is the analog of Assumption \ref{ass:uniform_entropy_gap} with $M$ in place of $H$. It says that whenever the risky arm is still competitive, the safe arm can only be supported by a uniformly more diffuse and therefore more complex worst-case narrative.

\begin{lemma}
\label{lem:genM_Delta_derivative}
Suppose Assumptions \ref{ass:full_support} and \ref{ass:genM_gap} hold. Then, for any two actions $i,j\in A$, any posterior $\pi\in\Delta(Q)$, and any admissible $(\lambda,\bar\mu)$ with $\bar\mu\lambda^2<1/C_W$,
$$
  \frac{\partial}{\partial\bar\mu}
  \Big(
    V^W_{\lambda,\bar\mu\lambda}(i;\pi)
    -
    V^W_{\lambda,\bar\mu\lambda}(j;\pi)
  \Big)
  =
  \lambda\big(M_i(\lambda,\bar\mu,\pi)-M_j(\lambda,\bar\mu,\pi)\big).
$$
\end{lemma}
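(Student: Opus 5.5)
The plan is to reproduce the argument of Lemma \ref{lem:Delta_derivatives}, with Proposition \ref{prop:genM_static}(iii) playing the role of the Shannon envelope identity \eqref{eq:envelope_mu}.

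\emph{Step 1: reduce to single-model values.} Fix $\pi$ and an admissible pair $(\lambda,\bar\mu)$ with $\bar\mu\lambda^2<1/C_W$, and set $\mu=\bar\mu\lambda$. The constraint is then exactly $\mu<1/(\lambda C_W)$, so Proposition \ref{prop:genM_static} applies to every $(a,\m{q})$. Since $Q$ is finite and
$$V^W_{\lambda,\bar\mu\lambda}(a;\pi)=\sum_{\m{q}\in Q}\pi(\m{q})\,v^W_{\lambda,\bar\mu\lambda}(a;\m{q}),$$
it suffices to differentiate each $v^W_{\lambda,\bar\mu\lambda}(a;\m{q})$ in $\bar\mu$ and then sum.

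\emph{Step 2: differentiate each term.} For $\mu\in(0,1/(\lambda C_W))$, part (iii) of Proposition \ref{prop:genM_static} gives $\partial_\mu v^W_{\lambda,\mu}(a;\m{q})=M(\hat p^W_{\lambda,\mu}(a;\m{q}))$. Holding $\lambda$ fixed, the map $\bar\mu\mapsto\mu=\bar\mu\lambda$ is linear with slope $\lambda$. The chain rule therefore gives
$$\partial_{\bar\mu}v^W_{\lambda,\bar\mu\lambda}(a;\m{q})=\lambda\, M\bigl(\hat p^W_{\lambda,\bar\mu\lambda}(a;\m{q})\bigr).$$
Weighting by $\pi(\m{q})$ and summing yields $\partial_{\bar\mu}V^W_{\lambda,\bar\mu\lambda}(a;\pi)=\lambda M_a(\lambda,\bar\mu,\pi)$. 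Taking the difference of these expressions for $a=i$ and $a=j$ gives the claimed formula.

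\emph{Step 3: the boundary point.} The only genuine obstacle is $\bar\mu=0$. Part (iii) of Proposition \ref{prop:genM_static} gives only a right derivative there, because it was stated for $\mu\ge0$. I would first interpret the derivative at $\bar\mu=0$ as one-sided, consistent with the standing restriction $\bar\mu\ge0$, and note that the same computation then gives the right derivative.

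A stronger alternative is to obtain a two-sided derivative at $\bar\mu=0$ directly, which needs the objective to remain well behaved for small negative $\mu$. This holds because the Hessian of the objective in \eqref{eq:genM_static} is diagonal, with entries
$$\frac{1}{\lambda x}+\mu W''(x)\qquad (x=p_a(y)).$$
Write this as $\frac{1}{x}\bigl(\frac{1}{\lambda}+\mu xW''(x)\bigr)$. For $\mu\le0$, strict concavity gives $W''\le 0$, so $\mu xW''(x)\ge0$ and the entry is positive. For $0\le\mu<1/(\lambda C_W)$, the bound $-xW''(x)\le C_W$ gives $\mu xW''(x)\ge-\mu C_W>-1/\lambda$, so the entry is again positive. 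The objective is therefore strictly convex on a neighbourhood of $\mu=0$. The interior-minimizer and envelope argument of Lemma \ref{lem:closed_form}, Step 4, then extends verbatim to give a two-sided derivative.

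Assumption \ref{ass:genM_gap} plays no role in this computation; it is listed only for later use.
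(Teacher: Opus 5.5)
Your proof is correct and follows the paper's argument: apply Proposition \ref{prop:genM_static}(iii) model by model, use the chain rule through $\mu=\bar\mu\lambda$, average with weights $\pi(\m{q})$, and subtract the identities for $i$ and $j$. Your handling of $\bar\mu=0$ is more careful than the paper's, which cites part (iii) without comment and deals with the one-sided derivative only later, in the proof of Theorem \ref{thm:genM_kills_cycles}; for a two-sided derivative, rerunning the sandwich argument from the proof of part (iii) with $h<0$ is cleaner than invoking Lemma \ref{lem:closed_form} (that argument needs only the uniqueness and continuity your Hessian bound gives for $\mu\le 0$), since that lemma's interiority step relies on the Shannon form and would need the $C_W$ bound to carry over.
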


The best-reply correspondence is now 
$\operatorname{BR}^{\bar{\mu},W}_{\lambda_t}(\pi_t)
  :=
  \underset{a\in A}{\arg\max} \hspace{0.02in}V^{W}_{\lambda_t,\bar{\mu}\lambda_t}(a;\pi_t)$, for every $t$, and  the associated optimal policy and $c$-robust equilibrium based on $W$ are defined as in Section \ref{sec:dynamic} by substituting   $\operatorname{BR}^{\bar{\mu}}_{\lambda_t}$ for $\operatorname{BR}^{\bar{\mu},W}_{\lambda_t}$. For brevity, we also interpret Assumption \ref{ass:U} as being based on $W$.

\begin{theorem}[General complexity indices eliminate $\lambda$-cycles]
\label{thm:genM_kills_cycles}
Consider the dynamic model with the one-period criterion \eqref{eq:genM_static} and the same normalization rule $\lambda_t=\Lambda(h_t)$ as in \eqref{eq:rho_def}. Suppose $A=\{r,s\}$ and Assumptions \ref{ass:U} and \ref{ass:genM_gap} hold. Then, there exists $\bar\mu^\ast>0$ such that for every admissible $\bar\mu\ge \bar\mu^\ast$ with $\bar\mu\bar\lambda^2<1/C_W$ the following holds:
\begin{enumerate}
  \item[\textnormal{(i)}] Every mixed $c$-robust equilibrium based on $W$ satisfies $\alpha(s)=1$.
  \item[\textnormal{(ii)}] Consequently, every $\Lambda$-limit frequency of any optimal policy based on $W$ is pure and equals the safe action.
\end{enumerate}
\end{theorem}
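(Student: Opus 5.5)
The plan mirrors the proof of Theorem \ref{thm:mu_kills_cycles}, with $M$ in place of $H$ and Proposition \ref{prop:genM_static} and Lemma \ref{lem:genM_Delta_derivative} in place of Lemma \ref{lem:closed_form} and Lemma \ref{lem:Delta_derivatives}.

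\textbf{Setup.} Set $\mathcal{D}:=[\underline\lambda,\bar\lambda]\times\Delta(Q)$ and $\hat m:=\underline\lambda M^\ast>0$. Define
\[
K_0:=\sup_{(\lambda,\pi)\in\mathcal{D}}\max\{\Delta^W(\lambda,\bar\mu_0,\pi),0\}.
\]
This is finite because $\Delta^W$ is continuous on the compact set $\mathcal{D}$. Continuity follows from Proposition \ref{prop:genM_static}(ii): $v^W$ is continuous in $(\lambda,\mu)$ with $\mu=\bar\mu_0\lambda$, and $V^W$ is affine in $\pi$. Then take $\bar\mu^\ast:=\bar\mu_0+K_0/\hat m+\varepsilon$ for small $\varepsilon>0$. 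If no admissible $\bar\mu\ge\bar\mu^\ast$ exists, the claim is vacuous.

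\textbf{Main step.} Fix $(\lambda,\pi)\in\mathcal{D}$ and let $f(\bar\mu):=\Delta^W(\lambda,\bar\mu,\pi)$ on $\{\bar\mu\ge\bar\mu_0:\bar\mu\lambda^2<1/C_W\}$. By Lemma \ref{lem:genM_Delta_derivative},
\[
f'(\bar\mu)=\lambda(M_r-M_s).
\]
Combined with Assumption \ref{ass:genM_gap}, this gives $f'(\bar\mu)\le-\hat m$ whenever $f(\bar\mu)\ge0$. I would then rerun the two-case argument from Step 2 of the proof of Theorem \ref{thm:mu_kills_cycles}.

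Case 1: if $f(\bar\mu_0)\le 0$, then $f$ can never return to nonnegative values. Take the last zero before any point where $f\ge0$ and integrate the bound $f'\le-\hat m$ to get a contradiction.

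Case 2: if $f(\bar\mu_0)>0$, then $f$ falls at rate at least $\hat m$ while it stays positive. So $f\le0$ by $\bar\mu_0+f(\bar\mu_0)/\hat m\le\bar\mu_0+K_0/\hat m$, and Case 1 takes over from there.

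Together these give $\Delta^W<0$ on $\mathcal{D}$ for every admissible $\bar\mu\ge\bar\mu^\ast$.

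\textbf{Main obstacle: regularity of $f$.} The integration argument needs $f$ to be $C^1$, or at least absolutely continuous with the derivative bound holding almost everywhere. Proposition \ref{prop:genM_static}(iii) gives $\partial_\mu v^W=M(\hat p^W)$. Since $\hat p^W$ is continuous by part (ii) and $M$ is continuous on the simplex (because $W$ is continuous on $[0,1]$), the derivative is continuous in $\mu$. So $f$ is $C^1$ on the open admissible range, with a one-sided derivative at $\mu=0$ if $\bar\mu_0=0$. The chain rule through $\mu=\bar\mu\lambda$ then supplies the factor $\lambda\ge\underline\lambda$.

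\textbf{Equilibrium conclusion.} Take any mixed $c$-robust equilibrium based on $W$ with $\alpha(r)>0$. By Assumption \ref{ass:U}, its concern $\tau$ lies in $[\underline\lambda,\bar\lambda]$, so $(\tau,\eta)\in\mathcal{D}$. Then $\Delta^W(\tau,\bar\mu,\eta)<0$, so $r$ is not a best reply, which contradicts $\alpha(r)>0$. Hence $\alpha(s)=1$, proving part (i).

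For part (ii), I would invoke the analog of Proposition \ref{prop:dynamic_selection} for the $W$-criterion: every $\Lambda$-limit frequency of a $W$-optimal policy is a mixed $c$-robust equilibrium based on $W$. This holds because that proof only uses continuity of the one-period values, which Proposition \ref{prop:genM_static}(ii) supplies, and the learning dynamics are unchanged. Combined with part (i), every $\Lambda$-limit frequency equals the safe action.
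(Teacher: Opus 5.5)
Your proposal is correct and follows essentially the same route as the paper. It uses the same constants $K_0$, $\hat m=\underline\lambda M^\ast$, and $\bar\mu^\ast=\bar\mu_0+K_0/\hat m+\varepsilon$, the same two-case integration argument driven by $f'\le-\hat m$ whenever $f\ge 0$, and the same appeal to Assumption \ref{ass:U} and the $W$-analogue of Proposition \ref{prop:dynamic_selection}. The only cosmetic difference is the endpoint $\bar\mu_0=0$: you note that $\mu\mapsto M(\hat p^W_{\lambda,\mu}(a;\m{q}))$ is continuous up to $\mu=0$, so $f$ is one-sided $C^1$ there, whereas the paper integrates over $[\varepsilon,\bar\mu_1]$ and lets $\varepsilon\downarrow 0$; both work.
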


The same logic yields the other  qualitative results in the main text. Proposition \ref{prop:multi_thresholds} and Proposition \ref{prop:all_risky_elimination} use entropy only through the pairwise derivative identity in Lemma \ref{lem:pairwise_Delta}; Lemma \ref{lem:genM_Delta_derivative} gives the corresponding identity for $W$, so those proofs go through after replacing the relevant entropy-gap assumptions by their $W$-analogues. Likewise, Proposition \ref{thm:welfare_improvement} uses only the eventual selection of the safe arm, so it continues to hold once Theorem \ref{thm:genM_kills_cycles} replaces Theorem \ref{thm:mu_kills_cycles}.

The economic content of this extension is therefore simple. The paper does not need Shannon entropy in order to generate conservatism, selection, or the elimination of $\lambda$-cycles. It only needs a complexity index that assigns a larger value to diffuse worst-case narratives than to concentrated ones, enters linearly in the objective, and is regular enough that the KL term still pins down a unique minimizer. Thus, Shannon entropy is the tractable benchmark inside this broader class because it turns \eqref{eq:genM_foc} into a closed-form Gibbs distortion and therefore delivers the cleaner representation and application formulas in the main text.

\subsubsection{Proofs}

\begin{proof}[Proof of Proposition \ref{prop:genM_static}]
Let $F_{\lambda,\mu}(p_a;a,\m{q})
  :=
  \sum_{y\in Y} u(a,y)p_a(y)
  + \frac{1}{\lambda}R(p_a\Vert q_a)
  + \mu \sum_{y\in Y}W(p_a(y)).$
For $x\in[0,1]$, define $g_{\lambda,\mu}(x):=(1/\lambda)x\log x+\mu W(x)$, with the convention $0\log 0:=0$. For every $x\in(0,1]$, $g''_{\lambda,\mu}(x)
  =
  \frac{1}{\lambda x}+\mu W''(x)
  \ge
  \frac{1-\lambda\mu C_W}{\lambda x}
  >0.$
Hence, $g_{\lambda,\mu}$ is strictly convex on $[0,1]$. Since
$F_{\lambda,\mu}(p_a;a,\m{q})
  =
  \sum_{y\in Y} g_{\lambda,\mu}(p_a(y))
  + \sum_{y\in Y}\big(u(a,y)-\frac{1}{\lambda}\log q_a(y)\big)p_{a}(y),$
then $F_{\lambda,\mu}(\cdot;a,\m{q})$ is strictly convex on the compact convex set $\Delta(Y)$. Therefore, it has a unique minimizer $\hat p^W_{\lambda,\mu}(a;\m{q})$, which proves (i).

For (ii), the objective $F_{\lambda,\mu}(p_a;a,\m{q})$ is jointly continuous in $(p_a,\lambda,\mu,\m{q})$ on the admissible domain, and the feasible set $\Delta(Y)$ is compact and independent of parameters. Berge's maximum theorem yields continuity of the value function, and uniqueness of the minimizer upgrades upper hemicontinuity of the argmin correspondence to continuity of $(\lambda,\mu,\m{q})\mapsto \hat p^W_{\lambda,\mu}(a;\m{q})$.

For (iii), fix admissible $(\lambda,\mu,\m{q})$ and write $p_{\mu,\m{q}}:=\hat p^W_{\lambda,\mu}(a;\m{q})$. For $h>0$ small enough that $\mu+h<1/(\lambda C_W)$, optimality of $p_{\mu+h}$ for $F_{\lambda,\mu+h}$ and of $p_\mu$ for $F_{\lambda,\mu}$ gives
$$
  F_{\lambda,\mu+h}(p_{\mu+h};a,\m{q})
  \le
  F_{\lambda,\mu+h}(p_\mu;a,\m{q})
  =
  F_{\lambda,\mu}(p_\mu;a,\m{q})+hM(p_\mu),
$$
and $F_{\lambda,\mu}(p_\mu;a,\m{q})
  \le
  F_{\lambda,\mu}(p_{\mu+h};a,\m{q})
  =
  F_{\lambda,\mu+h}(p_{\mu+h};a,\m{q})-hM(p_{\mu+h}).$
Subtracting shows
$M(p_{\mu+h})
  \le
  \frac{v^W_{\lambda,\mu+h}(a;\m{q})-v^W_{\lambda,\mu}(a;\m{q})}{h}
  \le
  M(p_\mu)$, where we recall that $M(\ell)=\sum_{y\in Y}W(\ell(y))$ for all $\ell\in \Delta(Y)$.
Letting $h \downarrow 0$ and using continuity of $\mu \mapsto p_\mu$ and of $M$ yields $\frac{\partial^+}{\partial \mu}v^W_{\lambda,\mu}(a;\m{q})
=
M\big(\hat p^W_{\lambda,\mu}(a;\m{q})\big).$
If $\mu \in (0,1/(\lambda C_W))$, the same argument with $h<0$ and $\mu+h\ge 0$ gives the
left derivative, so the two-sided derivative exists and equals
$M\big(\hat p^W_{\lambda,\mu}(a;\m{q})\big)$.

For (iv), let $p_i:=\hat p^W_{\lambda,\mu_i}(a;\m{q})$ for $i=1,2$. Optimality implies
$F_{\lambda,\mu_1}(p_1;a,\m{q})\le F_{\lambda,\mu_1}(p_2;a,\m{q})$ and $F_{\lambda,\mu_2}(p_2;a,\m{q})\le F_{\lambda,\mu_2}(p_1;a,\m{q}).$
Adding the two inequalities and cancelling the common terms yields $(\mu_2-\mu_1)\big(M(p_2)-M(p_1)\big)\le 0,$
so $M(p_2)\le M(p_1)$. If equality held with $p_1\neq p_2$, then both optimality inequalities above would have to bind, so both $p_1$ and $p_2$ would minimize the strictly convex function $F_{\lambda,\mu_1}(\cdot;a,\m{q})$, a contradiction. Hence equality is possible only when $p_1=p_2$.
\end{proof}

\begin{proof}[Proof of Lemma \ref{lem:genM_Delta_derivative}]
Fix $a\in A$. By Proposition \ref{prop:genM_static}, $\frac{\partial}{\partial\mu} v^W_{\lambda,\mu}(a;\m{q})
  = M\big(\hat p^W_{\lambda,\mu}(a;\m{q})\big).$
Setting $\mu=\bar\mu\lambda$ and using the chain rule gives
$\frac{\partial}{\partial\bar\mu} v^W_{\lambda,\bar\mu\lambda}(a;\m{q})
  =
  \lambda M\big(\hat p^W_{\lambda,\bar\mu\lambda}(a;\m{q})\big).$
Let $V^W_{\lambda,\mu}(a;\pi):=\sum_{\m{q}\in Q}\pi(\m{q}) v^W_{\lambda,\mu}(a;\m{q})$, so after summing the derivative over $\m{q}\in Q$ with weights $\pi(\m{q})$, we get
$\frac{\partial}{\partial\bar\mu}V^W_{\lambda,\bar\mu\lambda}(a;\pi)
  =
  \lambda M_a(\lambda,\bar\mu,\pi),$
and subtracting the identities for $i$ and $j$ proves the claim.
\end{proof}

\begin{proof}[Proof of Theorem \ref{thm:genM_kills_cycles}]
Let $\mathcal D:=[\underline\lambda,\bar\lambda]\times\Delta(Q),$ $\hat m:=\underline\lambda M^\ast>0,$
and define the constant
$K_0:=\sup_{(\lambda,\pi)\in\mathcal D}\max\{\Delta^W(\lambda,\bar\mu_0,\pi),0\}.$
Since $\Delta^W$ is continuous by Proposition \ref{prop:genM_static} and $\mathcal D$ is compact, $K_0<\infty$. Set
$\varepsilon
  :=
  \frac12\big(\frac{1}{C_W\bar\lambda^2}-\bar\mu_0-\frac{K_0}{\hat m}\big)_+$ and $\bar\mu^\ast
  :=
  \bar\mu_0+\frac{K_0}{\hat m}+\varepsilon.$
If $\varepsilon=0$, then there is no admissible $\bar\mu$ satisfying both $\bar\mu\ge\bar\mu^\ast$ and $\bar\mu\bar\lambda^2<1/C_W$, so the theorem holds vacuously as Theorem \ref{thm:mu_kills_cycles}. Hence, it suffices to consider the case $\varepsilon>0$, in which $\bar\mu^\ast<1/(C_W\bar\lambda^2)$.

Fix $(\lambda,\pi)\in\mathcal D$ and define $f(\bar\mu):=\Delta^W(\lambda,\bar\mu,\pi)$
on the admissible interval $\{\bar\mu:\bar\mu\lambda^2<1/C_W\}$. By Proposition \ref{prop:genM_static} and Lemma \ref{lem:genM_Delta_derivative}, $f$ is continuous on the admissible interval and continuously differentiable at every interior point $\bar\mu>0$. If $\bar\mu_0=0$, all integrations below should be read over $[\varepsilon,\bar\mu_1]$ for arbitrary $\varepsilon>0$, and then $\varepsilon\downarrow0$; continuity of $f$ and the definition
$K_0:=\sup_{(\lambda,\pi)\in D}\max\{\Delta^W(\lambda,\bar\mu_0,\pi),0\}$
give the same bound. If $\bar\mu_0>0$, the argument is unchanged. Assumption \ref{ass:genM_gap} and Lemma \ref{lem:genM_Delta_derivative} imply that whenever $\bar\mu\ge\bar\mu_0$ and $f(\bar\mu)\ge 0$,
\begin{equation}
\label{eq:genM_negative_slope}
  f'(\bar\mu)
  =
  \lambda\big(M_r(\lambda,\bar\mu,\pi)-M_s(\lambda,\bar\mu,\pi)\big)
  \le
  -\underline\lambda M^\ast
  =
  -\hat m.
\end{equation}
We claim that for every admissible $\bar\mu>\bar\mu_0+K_0/\hat m$, one has $f(\bar\mu)<0$.

\par Suppose first that $f(\bar\mu_0)\le 0$. We claim that then
$f(\bar\mu)<0$
  for every admissible $\bar\mu>\bar\mu_0.$
Suppose not. Then there exists an admissible $\bar\mu_1>\bar\mu_0$ such that
$f(\bar\mu_1)\ge 0$.

If $f(\bar\mu_1)>0$, then by continuity of $f$, the set
$Z:=\{\bar\mu\in[\bar\mu_0,\bar\mu_1]:f(\bar\mu)=0\}$
is nonempty. Let $\hat\mu:=\max Z$. Then $f(\bar\mu)>0$ for every
$\bar\mu\in(\hat\mu,\bar\mu_1].$
Hence, by the derivative bound established above,
$f'(\bar\mu)\le -\hat m$
  for all $\bar\mu\in(\hat\mu,\bar\mu_1].$
Integrating from $\hat\mu$ to $\bar\mu_1$ gives
$f(\bar\mu_1)-f(\hat\mu)
  =
  \int_{\hat\mu}^{\bar\mu_1}f'(t)\hspace{0.02in}dt
  \le
  -\hat m(\bar\mu_1-\hat\mu)<0.$
Since $f(\hat\mu)=0$, this yields $f(\bar\mu_1)<0$, a contradiction.

If instead $f(\bar\mu_1)=0$, then \eqref{eq:genM_negative_slope} implies
$f'(\bar\mu_1)\le -\hat m<0.$
Because $f$ is $C^1$ on the admissible interval, there exists $\delta>0$ such that
$[\bar\mu_1-\delta,\bar\mu_1]$ is admissible and
$f'(t)\le -\hat m/2$ $\text{for all }t\in[\bar\mu_1-\delta,\bar\mu_1].$
Therefore,
$f(\bar\mu_1)-f(\bar\mu_1-\delta)
  =
  \int_{\bar\mu_1-\delta}^{\bar\mu_1}f'(t)\hspace{0.02in}dt
  \le
  -(\hat m/2)\delta<0,$
so $f(\bar\mu_1-\delta)>0$. Since $f(\bar\mu_0)\le 0$, continuity implies that the set
$Z:=\{\bar\mu\in[\bar\mu_0,\bar\mu_1-\delta]:f(\bar\mu)=0\}$
is nonempty. Let $\hat\mu:=\max Z$. Then $f(\bar\mu)>0$ for every
$\bar\mu\in(\hat\mu,\bar\mu_1-\delta].$
Hence, by the derivative bound established above,
$f'(\bar\mu)\le -\hat m$
  for all $\bar\mu\in(\hat\mu,\bar\mu_1-\delta].$
Integrating from $\hat\mu$ to $\bar\mu_1-\delta$ gives
$f(\bar\mu_1-\delta)-f(\hat\mu)
  =
  \int_{\hat\mu}^{\bar\mu_1-\delta}f'(t)\hspace{0.02in}dt
  \le
  -\hat m(\bar\mu_1-\delta-\hat\mu)<0.$
Since $f(\hat\mu)=0$, this yields $f(\bar\mu_1-\delta)<0$, contradicting
$f(\bar\mu_1-\delta)>0$. Therefore, $f(\bar\mu)<0$
  for every admissible $\bar\mu>\bar\mu_0$
whenever $f(\bar\mu_0)\le 0$.

\par Next suppose that $f(\bar\mu_0)>0$, and define
$\bar\mu_1:=\bar\mu_0+\frac{f(\bar\mu_0)}{\hat m}.$
If $\bar\mu_1$ is not admissible, then there is no admissible $\bar\mu>\bar\mu_1$, so there is nothing to prove beyond the admissible range. Suppose therefore that $\bar\mu_1$ is admissible. We claim that
$f(\bar\mu_1)\le 0.$
Suppose not, so that $f(\bar\mu_1)>0$. If there exists $\tilde\mu\in[\bar\mu_0,\bar\mu_1]$ such that $f(\tilde\mu)=0$, let
$\hat\mu:=\max\{\bar\mu\in[\tilde\mu,\bar\mu_1]:f(\bar\mu)=0\}.$
Then $f(\bar\mu)>0$ for every $\bar\mu\in(\hat\mu,\bar\mu_1]$, hence
$f'(\bar\mu)\le -\hat m$
  for all $\bar\mu\in(\hat\mu,\bar\mu_1].$
Integrating from $\hat\mu$ to $\bar\mu_1$ yields
$f(\bar\mu_1)-f(\hat\mu)
  =
  \int_{\hat\mu}^{\bar\mu_1}f'(t)\hspace{0.02in}dt
  \le
  -\hat m(\bar\mu_1-\hat\mu)<0,$
contradicting $f(\hat\mu)=0$ and $f(\bar\mu_1)>0$.

Hence, there is no zero of $f$ in $[\bar\mu_0,\bar\mu_1]$. Since $f(\bar\mu_0)>0$ and
$f(\bar\mu_1)>0$, continuity implies that
$f(\bar\mu)>0$
  for all $\bar\mu\in[\bar\mu_0,\bar\mu_1].$
Therefore, the derivative bound applies throughout that interval:
$f'(\bar\mu)\le -\hat m$
  for all $\bar\mu\in[\bar\mu_0,\bar\mu_1].$
Integrating from $\bar\mu_0$ to $\bar\mu_1$ gives
$$
  f(\bar\mu_1)
  =
  f(\bar\mu_0)+\int_{\bar\mu_0}^{\bar\mu_1}f'(t)\hspace{0.02in}dt
  \le
  f(\bar\mu_0)-\hat m(\bar\mu_1-\bar\mu_0)
  =
  0,
$$
a contradiction. Thus, $f(\bar\mu_1)\le 0$. Since
$\bar\mu_1
  =
  \bar\mu_0+\frac{f(\bar\mu_0)}{\hat m}
  \le
  \bar\mu_0+\frac{K_0}{\hat m},$
the first part of the argument, applied from the starting point $\bar\mu_1$, implies that
$f(\bar\mu)<0$
  for all admissible $\bar\mu>\bar\mu_0+\frac{K_0}{\hat m}.$ We have thus proved that
\begin{equation}
\label{eq:genM_uniform_negativity}
  \Delta^W(\lambda,\bar\mu,\pi)<0
  \qquad\text{for all }(\lambda,\pi)\in\mathcal D
  \text{ and all admissible }\bar\mu\ge\bar\mu^\ast.
\end{equation}

To prove part (i), fix admissible $\bar\mu\ge\bar\mu^\ast$ and let $(\alpha,\eta,\tau)$ be a mixed $c$-robust equilibrium for the dynamic criterion based on $M$. If $\alpha(r)>0$, Assumption \ref{ass:U} implies $\tau\in[\underline\lambda,\bar\lambda]$. Since $(\alpha,\eta,\tau)$ is a mixed $c$-robust equilibrium, any action in the support of $\alpha$ must be a best reply at $(\tau,\eta)$, so in particular $r$ must be a best reply. But \eqref{eq:genM_uniform_negativity} gives
$\Delta^W(\tau,\bar\mu,\eta)<0,$
which means
$V^W_{\tau,\bar\mu\tau}(r;\eta)
  <
  V^W_{\tau,\bar\mu\tau}(s;\eta).$
Thus, $r$ is not a best reply, a contradiction. Hence, $\alpha(r)=0$, so $\alpha(s)=1$.

For part (ii), the proof of Proposition \ref{prop:dynamic_selection} uses only continuity of the one-period value and of the best-reply correspondence. Proposition \ref{prop:genM_static} provides exactly those properties for the dynamic criterion based on $M$, so the same argument applies verbatim: every $\Lambda$-limit frequency is the action component of a mixed $c$-robust equilibrium for the criterion based on $M$. Part (i) then implies that every such limit frequency is pure safe.
\end{proof}

\subsection{ Taxation}\label{sec:lanzani25_taxation}
In this application, the safe action makes tax uncertainty payoff-irrelevant, so its worst-case belief remains diffuse across tax states. By contrast, positive labor loads linearly on the tax-rate shock, so the worst-case belief concentrates on the highest-tax realization. This yields a uniform entropy gap.

We start from \citeauthor{lanzani25}'s (\citeyear{lanzani25}) Example 1, restrict attention to a two-action subproblem, and replace Lanzani's continuous tax shock with a symmetric finite-support version to fit our framework.

\subsubsection{Setup and primitives}

The agent chooses a target gross income $a\in A$ at cost
$C(a)$. Taxes are paid at an average tax rate $\tau(a)+\varepsilon$, where $\tau$ is strictly
increasing because of progressivity, and the payoff is
$u(a,t)=a-t-C(a).$
The agent's misspecified structured model is the random-coefficient specification
$t=(\theta+\varepsilon)a,$
under which the marginal and average tax rates are equal. We assume 
$A=\{s,r\}=\{0,\bar a\},$ for $\bar a>0.$
Thus, $s$ is the safe action of zero labor and $r$ is the risky action of positive labor. Let the
shock take the symmetric three-point form
$\varepsilon\in\{-\zeta,0,\zeta\},$ for $\zeta>0,$
with probability $1/3$ on each realization. Let $\Theta$ be any
finite subset of Lanzani's parametric model class and let
$Q=\{q^\theta:\theta\in\Theta\}.\footnote{Since $\Theta$ is finite, the relevant set of average-tax-rate realizations does not have full support. However, for each $\theta\in\Theta$, the model $q^\theta$ assigns probability $1/3$ to $m_L^\theta,m_M^\theta,m_H^\theta$ and zero elsewhere, so the closed-form distortion formula is applied on the support of $q^\theta$.}$
For each $\theta\in\Theta$, the structured model $q^\theta$ induces the three average-tax-rate
realizations
$$
m_L^\theta:=\theta-\zeta,\qquad m_M^\theta:=\theta,\qquad m_H^\theta:=\theta+\zeta,
$$
each with probability $1/3$. Since the tax bill is $t=ma$, the payoff function is 
$u(a,m)=a(1-m)-C(a).$
Notice that this is only a reparameterization of outcomes from tax bills to average tax rates. The true tax
schedule remains $\tau(a)+\varepsilon$, as in Lanzani's application, but Assumption
\ref{ass:uniform_entropy_gap} concerns only the static entropy ranking under the structured
models $q^\theta$, so the true DGP does not enter the proof below.
\subsubsection{Result}
For any posterior $\pi\in\Delta(Q)$ and any admissible $(\lambda,\bar\mu)$, define
$$
H_a(\lambda,\bar\mu,\pi)
:=
\sum_{\theta\in\Theta}\pi(q^\theta)H \big(\hat p_{\lambda,\bar\mu\lambda}(a;q^\theta)\big),
\qquad (a\in\{s,r\}),
$$
where $\hat p_{\lambda,\bar\mu\lambda}(a;q^\theta)$ is the worst-case distortion from
Lemma \ref{lem:closed_form}.\footnote{Because $R(p\Vert q^\theta)=+\infty$ whenever $p\not\ll q^\theta$, the inner minimization can be restricted to distributions supported on $\text{supp}(q^\theta)$. Hence the closed-form expression from Lemma \ref{lem:closed_form} applies on that support.}

\begin{proposition}\label{prop:lanzani25_tax_entropy_gap}
For any interval $[\underline\lambda,\bar\lambda]\subset(0,\infty)$, the taxation specialization
above satisfies Assumption \ref{ass:uniform_entropy_gap}. In particular, it holds with
$\bar\mu_0=0$ and
$$
H^\ast
:=
\log 3-
H \Big(
\frac{e^{-t_0}}{1+2\cosh t_0},
\frac{1}{1+2\cosh t_0},
\frac{e^{t_0}}{1+2\cosh t_0}
\Big)
>0,
\qquad
(t_0:=\bar a\zeta\underline\lambda).
$$
\end{proposition}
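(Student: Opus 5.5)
The plan is to show that the implication in Assumption \ref{ass:uniform_entropy_gap} holds for a trivial reason: the entropy gap $H_s-H_r\ge H^\ast$ holds \emph{unconditionally} for all $(\lambda,\pi)\in[\underline\lambda,\bar\lambda]\times\Delta(Q)$ and all admissible $\bar\mu\ge 0$, whatever the sign of $\Delta$. I would do this by computing both worst-case distortions explicitly model by model and then averaging over $\pi$. Throughout, fix $\theta\in\Theta$ and admissible $(\lambda,\bar\mu)$, so that $\kappa=1/\lambda-\bar\mu\lambda>0$.

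As in the footnote, the inner minimization can be restricted to $\text{supp}(q^\theta)=\{m_L^\theta,m_M^\theta,m_H^\theta\}$, where $q^\theta$ has full support. Hence Lemma \ref{lem:closed_form} applies there, with weights $\propto \exp\{-u(a,m)/\kappa\}\,(1/3)^\beta$.

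\emph{Safe arm.} Here $u(s,m)=-C(0)$ is constant in $m$, and $q^\theta$ is uniform. Both the payoff tilt and the power term cancel after normalization, so $\hat p_{\lambda,\bar\mu\lambda}(s;q^\theta)$ is uniform on the three points. Its entropy is $\log 3$ for every $\theta$, hence $H_s(\lambda,\bar\mu,\pi)=\log 3$.

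\emph{Risky arm.} Here $u(r,m)=\bar a(1-m)-C(\bar a)$ is affine in $m$. The constant parts cancel, and the three support points are equally spaced around $\theta$. So the distortion is proportional to $(e^{-t},1,e^{t})$ on $(m_L^\theta,m_M^\theta,m_H^\theta)$, with $t:=\bar a\zeta/\kappa$, independent of $\theta$. This is the exponential family $p_t\propto e^{t\varphi}$ with the nonconstant $\varphi=(-1,0,1)$. By Lemma \ref{lem:entropy_exp_family}, $t\mapsto H(p_t)$ is strictly decreasing on $(0,\infty)$. Since $\bar\mu\ge 0$, we have $\kappa\le 1/\lambda\le 1/\underline\lambda$, and therefore $t\ge \bar a\zeta\underline\lambda=t_0$. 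It follows that $H(\hat p_{\lambda,\bar\mu\lambda}(r;q^\theta))\le H(p_{t_0})$, where $p_{t_0}$ is exactly the three-point distribution appearing in the definition of $H^\ast$ (after dividing by $1+2\cosh t_0$). Averaging over $\pi$ gives $H_r(\lambda,\bar\mu,\pi)\le H(p_{t_0})$.

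Combining the two arms yields $H_s-H_r\ge \log 3-H(p_{t_0})=H^\ast$ for all relevant $(\lambda,\bar\mu,\pi)$, with $\bar\mu_0=0$. For positivity, note that $p_0$ is uniform with entropy $\log 3$, and $t_0>0$, so strict monotonicity (extending continuously to $t=0$) gives $H(p_{t_0})<\log 3$. I expect no real obstacle. The only delicate point is justifying the restriction to $\text{supp}(q^\theta)$, since Assumption \ref{ass:full_support} fails on the full outcome space; this is handled by $R(p\Vert q^\theta)=+\infty$ off the support. A second point to check is the direction of the bound $\kappa\le 1/\underline\lambda$, which uses $\bar\mu\ge0$.
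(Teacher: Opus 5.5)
Your proposal is correct and matches the paper's proof step for step. You use the same ingredients: the safe distortion is uniform with entropy $\log 3$, the risky distortion is $\theta$-independent and $\propto(e^{-t},1,e^{t})$ with $t=\bar a\zeta/\kappa\ge t_0$, and the gap holds unconditionally, so the implication holds a fortiori with $\bar\mu_0=0$. The one cosmetic difference is that you get strict monotonicity of the entropy in $t$ from Lemma \ref{lem:entropy_exp_family}, whereas the paper recomputes $h'(t)=-t\,\mathrm{Var}_{G_t}(X)<0$ inline; this is the same identity.
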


Proposition \ref{prop:lanzani25_tax_entropy_gap} gives an economic interpretation of
Assumption \ref{ass:uniform_entropy_gap} in this application. Here zero labor is safe because tax uncertainty is payoff-irrelevant when labor supply is zero, so the
worst-case narrative remains diffuse. By contrast, positive labor loads linearly on the tax-rate
shock, so the worst-case narrative concentrates on the highest-tax realization. Thus in this
environment, Assumption \ref{ass:uniform_entropy_gap} is simply the
statement that the risky action admits a simpler downside narrative than the safe one. This is
precisely the force emphasized by Lanzani's Example 1: higher effort is more exposed to
uncertainty in the marginal tax rate.

\begin{proof}[Proof of Proposition \ref{prop:lanzani25_tax_entropy_gap}]
Fix any $(\lambda,\pi)\in[\underline\lambda,\bar\lambda]\times\Delta(Q)$ and any admissible
$\bar\mu\ge 0$ with $\bar\mu\lambda^2<1$. Set
$\mu:=\bar\mu\lambda,$ $\kappa:=\frac{1}{\lambda}-\mu=\frac{1-\bar\mu\lambda^2}{\lambda}>0.$ We first compute the entropy term model by model. Fix any $\theta\in\Theta$.

\medskip

\noindent\emph{Safe action.}
Since $s=0$, payoff satisfies
$u(s,m)=u(0,m)=-C(0)$
for every $m\in\{m_L^\theta,m_M^\theta,m_H^\theta\}$. Hence, by Lemma
\ref{lem:closed_form}, the payoff tilt cancels and
$\hat p_{\lambda,\mu}(s;q^\theta)
=
\left(\frac13,\frac13,\frac13\right).$
Therefore, \begin{equation}\label{eq:lanzani25_tax_Hs}
H \big(\hat p_{\lambda,\mu}(s;q^\theta)\big)=\log 3.
\end{equation}

\medskip

\noindent\emph{Risky action.}
Let $r=\bar a$. By Lemma \ref{lem:closed_form},
$\hat p_{\lambda,\mu}(r;q^\theta)(m)
\propto
e^{-u(r,m)/\kappa}q^\theta(m)^\beta,$ $\beta:=\frac{1}{1-\lambda\mu}>0.$
Since $q^\theta$ is uniform on $\{m_L^\theta,m_M^\theta,m_H^\theta\}$, the factor
$q^\theta(m)^\beta$ is constant across states. Also,
$u(r,m)=\bar a(1-m)-C(\bar a).$
Hence,
$\hat p_{\lambda,\mu}(r;q^\theta)(m)\propto e^{\bar a m/\kappa}.$
Substituting
$$
m_L^\theta=\theta-\zeta,\qquad m_M^\theta=\theta,\qquad m_H^\theta=\theta+\zeta,
$$
we get $\hat p_{\lambda,\mu}(r;q^\theta)
\propto
\big(e^{\bar a(\theta-\zeta)/\kappa},
e^{\bar a\theta/\kappa},
e^{\bar a(\theta+\zeta)/\kappa}\big).$
The common factor $e^{\bar a\theta/\kappa}$ cancels, so the distortion is independent of
$\theta$:
\begin{equation}\label{eq:lanzani25_tax_risky_distortion}
\hat p_{\lambda,\mu}(r;q^\theta)
=
\left(
\frac{e^{-t}}{1+2\cosh t},
\frac{1}{1+2\cosh t},
\frac{e^{t}}{1+2\cosh t}
\right),
\qquad
(t:=\frac{\bar a\zeta}{\kappa}>0).
\end{equation}

Define $h(t):=
H \big(
\frac{e^{-t}}{1+2\cosh t},
\frac{1}{1+2\cosh t},
\frac{e^{t}}{1+2\cosh t}
\big),$ for $ t\ge 0.$
Let $X\in\{-1,0,1\}$ have distribution proportional to $e^{tx}$, denoted $G_t$. Then, $h(t)=\log(1+2\cosh t)-t\E_{G_t}[X].$
Differentiating,
$h'(t)
=
\frac{2\sinh t}{1+2\cosh t}
-\E_{G_t}[X]
-t\frac{d}{dt}\E_{G_t}[X]
=
-t\text{Var}_{G_t}(X)
<
0$ $\text{for all }t>0.$
Thus, $h$ is strictly decreasing on $(0,\infty)$. Moreover,
$\frac{1}{\kappa}
=
\frac{\lambda}{1-\bar\mu\lambda^2}
\ge \lambda
\ge \underline\lambda,$
so
$t=\frac{\bar a\zeta}{\kappa}\ge \bar a\zeta\underline\lambda=t_0>0.$
Since $h$ is strictly decreasing,
\begin{equation}\label{eq:lanzani25_tax_Hr_bound}
H \big(\hat p_{\lambda,\mu}(r;q^\theta)\big)
=
h(t)
\le h(t_0)
<
h(0)
=
\log 3.
\end{equation}

Since \eqref{eq:lanzani25_tax_Hs} and \eqref{eq:lanzani25_tax_Hr_bound} hold for every $\theta\in\Theta$ and the right-hand sides are independent of $\theta$, averaging with respect to $\pi$ yields
$H_s(\lambda,\bar\mu,\pi)=\log 3$ and $
H_r(\lambda,\bar\mu,\pi)\le h(t_0),$ so $H_s(\lambda,\bar\mu,\pi)-H_r(\lambda,\bar\mu,\pi)
\ge
\log 3-h(t_0)
=H^\ast>0$ for every $(\lambda,\pi)\in[\underline\lambda,\bar\lambda]\times\Delta(Q)$ and every admissible $\bar\mu\ge 0$. Hence the implication in Assumption \ref{ass:uniform_entropy_gap} holds a fortiori, so  Assumption \ref{ass:uniform_entropy_gap} holds with $\bar\mu_0=0$ and the constant $H^\ast$ defined above.
\end{proof}

\subsection{Portfolio choice}\label{sec:chamb20_entropy_gap}

This appendix verifies Assumption \ref{ass:uniform_entropy_gap} in a simple two-plan version of the portfolio environment in \citet[][Sections 2.1 and 2.3]{chamb20}. The key point is that the safe plan delivers the same payoff in every return history, so its worst-case distortion remains relatively diffuse, while the risky plan has a unique disaster history, so its worst-case distortion can concentrate there. This produces a uniform entropy gap.

There are two dates, $t=0,1$. The gross risk-free return is $r_f>0$, and the risky gross return in each period is either $R^H$ or $R^L$, where $0<R^L<r_f<R^H$. A return history is $y=(R_0,R_1)\in Y$, where $Y=\{HH,HL,LH,LL\}$ with $HH:=(R^H,R^H)$, $HL:=(R^H,R^L)$, $LH:=(R^L,R^H)$, and $LL:=(R^L,R^L)$. Initial wealth is $w_0>0$, and utility over terminal wealth is CRRA: $U(w)=\frac{w^{1-\gamma}}{1-\gamma}$ if $\gamma\neq 1$ and $U(w)=\log w$ if $\gamma=1$, with $\gamma>0$.

As in \citet{chamb20}, the reference model is a single distribution $q$ over $Y$. We restrict attention to the symmetric exchangeable case in which $q(HH)=q(LL)=a_0$ and $q(HL)=q(LH)=b_0$ for some $a_0>b_0>0$ with $2a_0+2b_0=1$. This is the symmetric version of Chamberlain's exchangeable benchmark and captures the idea that the reference model has no directional bias between up and down moves, while still making the two persistent histories $HH$ and $LL$ more likely than the mixed histories $HL$ and $LH$.

The action set is $A=\{s,r\}$. Action $s$ is the safe plan that invests only in the risk-free asset in both periods, and action $r$ is the risky plan that invests only in the risky asset in both periods. Terminal wealth is therefore $w_s(y)=w_0r_f^2$ for every $y\in Y$, while $w_r(HH)=w_0(R^H)^2$, $w_r(HL)=w_r(LH)=w_0R^HR^L$, and $w_r(LL)=w_0(R^L)^2$. We identify payoffs with $u(a,y):=U(w_a(y))$. Since $R^H>R^L$ and $U$ is strictly increasing, the risky plan has a unique worst history, namely $LL$, and its smallest utility gap above that history is
$
\delta:=u(r,HL)-u(r,LL)>0.
$

For any admissible pair $(\lambda,\bar\mu)$ with $\lambda>0$ and $\bar\mu\lambda^2<1$, write $\mu=\bar\mu\lambda$, $\kappa=\frac{1}{\lambda}-\mu=\frac{1-\bar\mu\lambda^2}{\lambda}>0$, and $\beta=\frac{1}{1-\lambda\mu}=\frac{1}{1-\bar\mu\lambda^2}>0$. Since $Q=\{\m{q}\}$ is a singleton, posterior beliefs play no role, and we write $H_a(\lambda,\bar\mu):=H(\hat p_{\lambda,\bar\mu\lambda}(a;\m{q}))$ for the entropy of the worst-case distortion for action $a\in\{s,r\}$.

To state the result, fix an interval $[\underline\lambda,\bar\lambda]\subset(0,\infty)$. Choose any $\bar p\in(1/4,1)$ such that $\bar H(\bar p)<\log 2$, where $\bar H(p):=-p\log p-(1-p)\log\!\big(\frac{1-p}{3}\big)$. Such a choice is always possible because $\bar H$ is continuous and $\bar H(p)\to 0$ as $p\uparrow 1$. Next define $\bar\kappa:=\frac{\delta}{\log(\frac{3\bar p}{1-\bar p})}$, set
$$
\bar\mu_0:=\max\Big\{0,\frac{1/\underline\lambda-\bar\kappa}{\underline\lambda}\Big\},
\qquad
H^\ast:=\log 2-\bar H(\bar p)>0,
$$
and assume $\bar\mu_0\bar\lambda^2<1$.

\begin{proposition}\label{prop:chamb20_entropy_gap}
In the portfolio environment above, Assumption \ref{ass:uniform_entropy_gap} holds on $[\underline\lambda,\bar\lambda]$ with the constants $\bar\mu_0$ and $H^\ast$ defined above.
\end{proposition}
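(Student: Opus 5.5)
The plan is to prove the stronger, unconditional statement
\[
H_s(\lambda,\bar\mu)-H_r(\lambda,\bar\mu)\ge H^\ast
\]
for every $\lambda\in[\underline\lambda,\bar\lambda]$ and every admissible $\bar\mu\ge\bar\mu_0$. The implication in Assumption \ref{ass:uniform_entropy_gap} then holds a fortiori, whatever the sign of $\Delta$. Because $Q=\{\m{q}\}$ is a singleton, there is no averaging over models. The argument splits into a lower bound $H_s\ge\log 2$ and an upper bound $H_r\le\bar H(\bar p)$. Both come from the Gibbs formula \eqref{eq:p_star_closed_form} of Lemma \ref{lem:closed_form}, which applies since $q$ has full support and $\kappa>0$.

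\emph{Safe plan.} The payoff $u(s,\cdot)=U(w_0r_f^2)$ is constant, so the payoff tilt cancels and $\hat p_{\lambda,\mu}(s;\m{q})\propto q^\beta$. By symmetry this distribution has the form $(x,y,y,x)$ on $(HH,HL,LH,LL)$ with $2x+2y=1$. Grouping $\{HH,LL\}$ and $\{HL,LH\}$, the chain rule for entropy gives
\[
H=H(2x,2y)+2x\log 2+2y\log 2=\log 2+H(2x,2y)\ge\log 2.
\]
So $H_s(\lambda,\bar\mu)\ge\log 2$ for every admissible pair, with no restriction on $\bar\mu$.

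\emph{Risky plan.} First I will show that the worst-case distortion puts mass at least $\bar p$ on $LL$. For $y\neq LL$, the formula gives
\[
\frac{\hat p(LL)}{\hat p(y)}=\exp\Big\{\frac{u(r,y)-u(r,LL)}{\kappa}\Big\}\Big(\frac{q(LL)}{q(y)}\Big)^\beta .
\]
This is at least $e^{\delta/\kappa}$, for two reasons. First, $u(r,y)-u(r,LL)\ge\delta$ for all $y\ne LL$, since $u(r,HH)>u(r,HL)=u(r,LH)$. Second, $q(LL)=a_0\ge q(y)$ and $\beta>0$. Summing over the three states $y\neq LL$ yields
\[
\hat p(LL)\ge \frac{e^{\delta/\kappa}}{e^{\delta/\kappa}+3}.
\]
Next I bound $\kappa$. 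The map $\lambda\mapsto 1/\lambda-\bar\mu\lambda$ is decreasing and $\bar\mu\ge\bar\mu_0$, so
\[
\kappa\le \frac{1}{\underline\lambda}-\bar\mu_0\underline\lambda\le\bar\kappa,
\]
using the definition of $\bar\mu_0$. Note that $\bar\kappa>0$ because $\bar p>1/4$ makes $3\bar p/(1-\bar p)>1$. Hence $e^{\delta/\kappa}\ge 3\bar p/(1-\bar p)$. Since $z\mapsto z/(z+3)$ is increasing, this gives $\hat p(LL)\ge\bar p$.

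It remains to convert this mass bound into an entropy bound. Among distributions on four points whose largest atom has mass $p\ge 1/4$, entropy is maximized by spreading the remaining mass $1-p$ uniformly over the other three points. This follows from the grouping rule, or from concavity together with symmetry, and the maximal value is $\bar H(p)$. The function $\bar H$ is decreasing on $[1/4,1)$, since
\[
\bar H'(p)=\log\frac{1-p}{3p}\le 0 \quad\text{there}.
\]
Therefore $H_r(\lambda,\bar\mu)\le\bar H(\hat p(LL))\le\bar H(\bar p)$. Combining with the safe bound gives $H_s-H_r\ge\log 2-\bar H(\bar p)=H^\ast>0$.

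The only delicate step is the concentration bound for the risky plan. It requires both inequalities in the likelihood ratio to point the same way, which is exactly where the assumption $a_0>b_0$ and the sign of $\beta$ enter. It also requires the uniform control $\kappa\le\bar\kappa$ over $[\underline\lambda,\bar\lambda]$, which is the role of $\bar\mu_0$. The assumption $\bar\mu_0\bar\lambda^2<1$ merely ensures the admissible range is nonempty.
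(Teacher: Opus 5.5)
Your proof is correct and follows the paper's argument essentially step for step. You use the Gibbs formula, the likelihood-ratio bound $\hat p(LL)\ge 1/(1+3e^{-\delta/\kappa})\ge\bar p$ via $\kappa\le\kappa(\underline\lambda,\bar\mu_0)\le\bar\kappa$, and then the maximal-entropy bound $\bar H$ with its monotonicity above $1/4$. The only cosmetic difference is that you get $H_s\ge\log 2$ from the grouping rule $H=\log 2+H(2x,2y)$. The paper instead minimizes the concave entropy over the segment joining $(1/2,0,0,1/2)$ and $(0,1/2,1/2,0)$.
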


Proposition \ref{prop:chamb20_entropy_gap} gives a transparent interpretation of Assumption \ref{ass:uniform_entropy_gap}. The safe plan has constant payoffs across histories, so even after the power transformation induced by complexity aversion its worst-case belief cannot collapse onto a single history; at most it can concentrate on one of the two symmetric pairs $\{HH,LL\}$ or $\{HL,LH\}$. By contrast, the risky plan has a unique disaster history $LL$, and once $\bar\mu$ is large enough its worst-case distortion assigns a uniformly large probability to that history. The entropy of the risky distortion is therefore uniformly below that of the safe distortion.

\begin{proof}[Proof of Proposition \ref{prop:chamb20_entropy_gap}]
Fix any $\lambda\in[\underline\lambda,\bar\lambda]$ and any admissible $\bar\mu\ge \bar\mu_0$ with $\bar\mu\lambda^2<1$. Set $\mu=\bar\mu\lambda$, $\kappa=\frac{1}{\lambda}-\mu>0$, and $\beta=\frac{1}{1-\lambda\mu}>0$.

For the safe plan, $u(s,y)=U(w_0r_f^2)$ is constant in $y$, so Lemma \ref{lem:closed_form} gives $\hat p_{\lambda,\mu}(s;q)(y)\propto q(y)^\beta$. Because $q(HH)=q(LL)$ and $q(HL)=q(LH)$, the distorted belief has the form $\hat p_{\lambda,\mu}(s;q)=(a,b,b,a)$ for some $a,b\ge 0$ with $2a+2b=1$. The set of such distributions is the line segment joining $(1/2,0,0,1/2)$ and $(0,1/2,1/2,0)$. Since Shannon entropy is concave, its minimum over this segment is attained at an endpoint, and each endpoint has entropy $\log 2$. Hence $H_s(\lambda,\bar\mu)\ge \log 2$ for every admissible $(\lambda,\bar\mu)$.

For the risky plan, Lemma \ref{lem:closed_form} gives $\hat p_{\lambda,\mu}(r;q)(y)\propto e^{-u(r,y)/\kappa}q(y)^\beta$. Let $w(y):=e^{-u(r,y)/\kappa}q(y)^\beta$. Since $LL$ is the unique worst history and $\delta=u(r,HL)-u(r,LL)>0$, we have $u(r,y)\ge u(r,LL)+\delta$ for every $y\neq LL$. Also, $q(y)\le q(LL)$ for every $y\neq LL$ because $q(HH)=q(LL)=a_0>b_0=q(HL)=q(LH)$. Therefore, for every $y\neq LL$,
$$
\frac{w(y)}{w(LL)}
=
e^{-(u(r,y)-u(r,LL))/\kappa}\Big(\frac{q(y)}{q(LL)}\Big)^\beta
\le
e^{-\delta/\kappa}.
$$
Summing over the three histories other than $LL$ yields $\sum_{y\neq LL}w(y)\le 3e^{-\delta/\kappa}w(LL)$.
Therefore,
$\hat p_{\lambda,\mu}(r;q)(LL)
=\frac{w(LL)}{w(LL)+\sum_{y\neq LL} w(y)}
\ge \frac{1}{1+3e^{-\delta/\kappa}}.$

Because $(\lambda,\bar\mu)\mapsto \kappa(\lambda,\bar\mu)=\frac1\lambda-\bar\mu\lambda$
is decreasing in both arguments, we have
$\kappa(\lambda,\bar\mu)\le \kappa(\underline\lambda,\bar\mu_0).$
Moreover, by the definition of $\bar\mu_0$,
$\bar\mu_0=\max\big\{0,\frac{1/\underline\lambda-\bar\kappa}{\underline\lambda}\big\},$
so
$\kappa(\underline\lambda,\bar\mu_0)
=\frac1{\underline\lambda}-\bar\mu_0\underline\lambda
\le \bar\kappa.$
Hence,
$\kappa(\lambda,\bar\mu)\le \bar\kappa.$
Since $x\mapsto e^{-\delta/x}$ is increasing on $(0,\infty)$, it follows that $e^{-\delta/\kappa}\le e^{-\delta/\bar\kappa}.$
Therefore,
$\hat p_{\lambda,\mu}(r;q)(LL)\ge \frac{1}{1+3e^{-\delta/\bar\kappa}}=\bar p,$
where the equality is the definition of $\bar\kappa$.

Now let $p\in\Delta(Y)$ satisfy $p(LL)\ge \bar p$. Since $x\mapsto -x\log x$ is concave,
holding fixed the total mass on $Y\setminus\{LL\}$, entropy is maximized when that
remaining mass is split equally across the three states $HH,HL,LH$. Therefore,
$$
H(p)\le -p(LL)\log p(LL)-(1-p(LL))\log\left(\frac{1-p(LL)}{3}\right)
=: \bar H(p(LL)).
$$
Moreover,
$\bar H'(x)=\log\left(\frac{1-x}{3x}\right)<0$ $\text{for all }x>1/4.$
Since $\bar p\in(1/4,1)$ and $p(LL)\ge \bar p$, we obtain
$H(p)\le \bar H(p(LL))\le \bar H(\bar p).$
Applying this to $p=\hat p_{\lambda,\mu}(r;q)$ yields
$H_r(\lambda,\bar\mu)=H(\hat p_{\lambda,\mu}(r;q))\le \bar H(\bar p).$ Combining this with the lower bound obtained earlier gives
$H_s(\lambda,\bar\mu)-H_r(\lambda,\bar\mu)
\ge \log 2-\bar H(\bar p)=H^\ast>0.$
Hence, Assumption \ref{ass:uniform_entropy_gap} holds on $[\underline\lambda,\bar\lambda]$ with constants
$\bar\mu_0$ and $H^\ast$.
\end{proof}

\subsection{Welfare Comparison}\label{sec:welfare}

This appendix complements Section \ref{sec:wel} by isolating a class of environments in which increasing complexity aversion is \textit{welfare-improving}: it
raises the agent's long-run realized payoff under the true DGP, even though it makes the agent more conservative over time. The key
mechanism is that when the true DGP actually favors the safe arm, eliminating safe-risky
cycles prevents the agent from over-experimenting with the risky action. 
\par Formally, we distinguish the agent's robust objective (which depends on
$\bar{\mu}$) from an ex-post payoff criterion evaluated under the true DGP. For any mixed
action $\alpha\in\Delta(A)$ define its objective expected payoff by
$U^\star(\alpha)
  :=
  \sum_{a\in A}\alpha(a) \bar u^\star(a)$ and $\bar u^\star(a)
  :=
  \sum_{y\in Y} p^\star_a(y)u(a,y),$
where $\m{p}^\star=(p_a^\star)_{a\in A}\in\Delta(Y)^A$ is the true DGP. Notice that the criterion above depends only on outcomes in the sense that it does \emph{not} attach any direct welfare value to model simplicity. Proposition \ref{thm:welfare_improvement} is a welfare comparison, whose
assumptions are discussed below:
\begin{itemize}
\item[(i)] pins down the ex-post ranking of actions under $\m{p}^\star$:
the safe arm $s$ is strictly better than the risky arm $r$ in objective expected
utility. Thus, any positive long-run weight on $r$ is an ex-post payoff loss.

\item[(ii)] formalizes the baseline dynamic pathology in the $\bar{\mu}=0$
model: even though $s$ is ex-post optimal by (i), there exists a $0$-optimal policy
whose $\Lambda$-limit frequency $\alpha^0$ assigns positive long-run probability to
both $r$ and $s$. This captures robustness-driven endogenous safe-risky cycles, which imply $\alpha^0(r)>0$.

\item[(iii)] is the high-$\bar{\mu}$ stabilization condition: the uniform entropy
gap (Assumption \ref{ass:uniform_entropy_gap})  ensures that for all $\bar{\mu}\in[\bar{\mu}^*,1/\bar{\lambda}^2)$,
the mixed $c$-robust equilibrium selects the safe arm in the long run,
i.e., $\alpha^{\bar{\mu}}(s)=1$. In other words, sufficiently strong complexity aversion
eliminates the safe-risky cycles of (ii).
\end{itemize}
\medskip

\begin{proposition}
\label{thm:welfare_improvement}
Suppose $A=\{r,s\}$ and assume:
\begin{enumerate}
  \item[\textnormal{(i)}] $\bar u^\star(s)>\bar u^\star(r)$;
  \item[\textnormal{(ii)}] For $\bar{\mu}=0$, there exists a $0$-optimal policy whose
  $\Lambda$-limit frequency $\alpha^0$ satisfies $\alpha^0(r)\in(0,1)$ and $\alpha^0(s)>0$;
  \item[\textnormal{(iii)}] Assumptions \ref{ass:U} and  \ref{ass:uniform_entropy_gap}  hold.
\end{enumerate}
Then, for any $\bar{\mu}\in[\bar{\mu}^*,1/\bar{\lambda}^2)$ there exists a $\bar{\mu}$-optimal policy with
$\Lambda$-limit frequency $\alpha^{\bar{\mu}}$ such that $U^\star(\alpha^{\bar{\mu}}) > U^\star(\alpha^0).$
\end{proposition}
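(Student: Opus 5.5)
The plan is to combine Theorem \ref{thm:mu_kills_cycles} with a direct payoff comparison. Fix $\bar{\mu}\in[\bar{\mu}^\ast,1/\bar{\lambda}^2)$, where $\bar{\mu}^\ast$ is the threshold from Theorem \ref{thm:mu_kills_cycles}; condition (iii) supplies exactly its hypotheses. If the interval is empty the statement is vacuous, so assume it is nonempty.

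First, I produce a $\bar{\mu}$-optimal policy that has some $\Lambda$-limit frequency $\alpha^{\bar{\mu}}$. A $\bar{\mu}$-optimal policy exists: $A$ is finite and $\operatorname{BR}^{\bar{\mu}}_{\lambda_t}(\pi_t)$ is nonempty at every history, so a measurable selection (say, lexicographic tie-breaking) works.

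Existence of a $\Lambda$-limit frequency is the main obstacle, since Definition \ref{def:limit_freq} requires a fixed $\alpha$ to be approached with positive probability. I would first try the following argument. By Theorem \ref{thm:mu_kills_cycles}(ii), every $\Lambda$-limit frequency of any $\bar{\mu}$-optimal policy equals $\delta_s$. So it suffices to exhibit one limit point. Empirical frequencies live in the compact set $\Delta(A)$, and with $|A|=2$ this is $[0,1]$. Cover it by finitely many $\varepsilon$-balls; some ball is visited and eventually kept with positive probability along a subsequence argument. Refining $\varepsilon\downarrow 0$ via the Lanzani (2025, Theorem 3) machinery invoked in Proposition \ref{prop:dynamic_selection} yields a limit point that is a $\Lambda$-limit frequency.

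If this is too delicate, the fallback is more direct. Under Step 2 of the proof of Theorem \ref{thm:mu_kills_cycles}, $r$ is strictly worse whenever $\lambda_t\in[\underline{\lambda},\bar{\lambda}]$. Combined with Assumption \ref{ass:U}, one argues that $\alpha_t(s)\to 1$ almost surely, so $\delta_s$ is a $\Lambda$-limit frequency with probability one. Either way, $\alpha^{\bar{\mu}}=\delta_s$.

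Second, the welfare comparison is immediate from linearity:
$$U^\star(\alpha^{\bar{\mu}})-U^\star(\alpha^0)=\bar u^\star(s)-\big[\alpha^0(s)\bar u^\star(s)+\alpha^0(r)\bar u^\star(r)\big]=\alpha^0(r)\big(\bar u^\star(s)-\bar u^\star(r)\big).$$
This is strictly positive because condition (ii) gives $\alpha^0(r)>0$ and condition (i) gives $\bar u^\star(s)>\bar u^\star(r)$. This completes the argument; all the real content sits in the existence step.
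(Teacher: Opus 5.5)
Your reduction is the paper's proof. Condition (iii) feeds Theorem~\ref{thm:mu_kills_cycles}, which together with Proposition~\ref{prop:dynamic_selection} forces every $\Lambda$-limit frequency of a $\bar{\mu}$-optimal policy to equal $\delta_s$. Your identity $U^\star(\delta_s)-U^\star(\alpha^0)=\alpha^0(r)\bigl(\bar u^\star(s)-\bar u^\star(r)\bigr)>0$ is the paper's Step~3. The paper never shows that some $\bar{\mu}$-optimal policy actually has a $\Lambda$-limit frequency; it simply takes $\alpha^{\bar{\mu}}$ as given. You are right that the literal existential claim needs this. But neither of your arguments for it works, and since you place ``all the real content'' there, this is a genuine gap.

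\textbf{Compactness argument.} Definition~\ref{def:limit_freq} requires $\limsup_t\|\alpha_t-\alpha\|_1\le\varepsilon$ with positive probability, i.e.\ frequencies must eventually \emph{remain} near $\alpha$. Compactness of $\Delta(A)$ only yields cluster points along subsequences. A frequency sequence generated by switching actions in blocks of geometrically growing length oscillates between two distinct values forever, so no small ball is eventually kept. Lanzani's (2025) Theorem~3, which underlies Proposition~\ref{prop:dynamic_selection}, characterizes limit frequencies that exist; it does not produce one. Since Theorem~\ref{thm:mu_kills_cycles} leaves $\delta_s$ as the only candidate, what you actually need is that $\alpha_t\to\delta_s$ with positive probability. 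That is a stability property of the learning dynamics, not a compactness fact.

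\textbf{Fallback argument.} It does not deliver this either. Assumption~\ref{ass:U} restricts only the $\tau$ of mixed equilibria with $\alpha(r)>0$, not the realized $\lambda_t$. Step~2 of Theorem~\ref{thm:mu_kills_cycles} says nothing when $\lambda\notin[\underline{\lambda},\bar{\lambda}]$. As the agent plays $s$ and fit improves, $\lambda_t$ moves toward $\tau(\delta_s)$, the misfit of the safe arm alone. Assumption~\ref{ass:U} does not prevent $\tau(\delta_s)$ from lying below $\underline{\lambda}$. There, and at early histories with arbitrary $(\lambda_t,\pi_t)$, nothing you have established rules out $r$ being chosen again; this is exactly the mechanism behind the cycles at $\bar{\mu}=0$. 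So ``$\alpha_t(s)\to 1$ a.s.'' does not follow.

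\textbf{What would close the gap.} You would need an extra step, for example:
\begin{itemize}
\item show that at $\lambda=\tau(\delta_s)$ the safe arm is strictly preferred under every posterior supported on $Q(\delta_s)$ that the process can settle on;
\item then combine the asymptotic link between $\lambda_t$ and $\tau(\alpha_t)$ with posterior concentration to show that, on an event of positive probability, the agent eventually plays only $s$.
\end{itemize}
Otherwise existence of $\alpha^{\bar{\mu}}$ must be added as a hypothesis, as the paper implicitly does.
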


  Under the true DGP, the safe arm dominates the risky
one, but the agent does not know this ex ante and entertains a misspecified model. When
$\bar{\mu}=0$, (ii) implies that their misspecification concerns alone can drive them to cycle between safe and
risky in the long run. Then, (iii) implies that complexity
aversion cuts off these cycles by selecting the safe arm in the long run; the agent becomes
more conservative, but in this environment conservatism is beneficial: it prevents them from
over-reacting to good runs of the risky arm that are only weakly supported by the data.
Thus, even though $\bar{\mu}>0$ makes the agent more pessimistic about worst-case payoffs
in each period, it can raise their \emph{actual} long-run payoff by improving the stability
of their behavior.

\subsubsection{Persistent complexity aversion}
\label{app:persistent_complexity}

We now study the welfare implications of allowing complexity aversion to persist in the long run ($\mu_t\rightarrow\mu_{\infty}>0$) even when misspecification concerns vanish ($\lambda_t\rightarrow0$) under correct specification. Proposition \ref{prop:persistent_mu_bad} shows this  can be \textit{welfare-reducing}. 

\begin{lemma}
\label{lem:persistent_mu_limit}
Suppose Assumptions \ref{ass:full_support} and \ref{ass:correct_spec} hold. Let
$\sigma$ be a policy such that, at each history $h_t$,
$\sigma_t(h_t)\in \arg\max_{a\in A} V_{\lambda_t,\mu_t}(a;\pi_t),$
where $\mu_t\ge 0$, $\mu_t<1/\lambda_t$ eventually, and
$\mu_t \rightarrow \mu_{\infty} \in [0,\infty)$ a.s. Let $\alpha^\Lambda$ be any $\Lambda$-limit frequency of $\sigma$. Then, on the event
$\{\alpha_t\to \alpha^\Lambda\}$, for every
$a\in A$ with $\alpha^\Lambda(a)>0$,
$V_{\lambda_t,\mu_t}(a;\pi_t)
  \longrightarrow
  \bar u^\star(a)+\mu_{\infty} H(p_a^\star)$ $\text{a.s.}$
\end{lemma}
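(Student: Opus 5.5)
I would follow the proof of Proposition \ref{prop:correct_spec_limit} closely, replacing the squeeze that used $\mu_t\to0$ with one that keeps the limiting complexity term. The argument has two parts: the posterior and $\lambda_t$ converge as before, and then the single-model values converge to $\E_{p^\star_a}[u(a,y)]+\mu_\infty H(p^\star_a)$.

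\textbf{Step 1: learning.} On $E^\Lambda:=\{\alpha_t\to\alpha^\Lambda\}$, every action with $\alpha^\Lambda(a)>0$ is played infinitely often. By Assumption \ref{ass:correct_spec} we have $\min_{\m{q}}D(\m{q};\alpha^\Lambda)=0$. Lanzani's Lemma 6 then gives $\lambda_t\to0$ a.s. on $E^\Lambda$, and the likelihood-ratio argument gives $\pi_t(Q^\star_a)\to1$, where $Q^\star_a=\{\m{q}:q_a=p^\star_a\}$. This is exactly the reasoning in the proof of Proposition \ref{prop:correct_spec_limit}. That argument does not depend on the policy's criterion, only on the induced frequencies, so it transfers unchanged.

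\textbf{Step 2: single-model convergence.} Fix $\m{q}\in Q^\star_a$. Feasibility of $p_a=q_a$ gives the upper bound
$$v_{\lambda_t,\mu_t}(a;\m{q})\le \E_{q_a}[u(a,y)]+\mu_t H(q_a).$$
For the lower bound, I use $R\ge0$ and split off a small amount of the KL weight. For any $p$,
$$\E_p[u]+\tfrac{1}{\lambda_t}R(p\Vert q_a)+\mu_tH(p)\ \ge\ \E_p[u]+\mu_tH(p)+\tfrac{1}{\lambda_t}R(p\Vert q_a).$$
Since $u$, $H$ and $\mu_t$ are bounded, a minimizer $p_t$ must satisfy $R(p_t\Vert q_a)\le \lambda_t C$ for a constant $C$. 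This follows by comparing with $p=q_a$: the value at $p_t$ is at most the value at $q_a$, which forces $\tfrac{1}{\lambda_t}R(p_t\Vert q_a)\le 2\|u\|_\infty+\mu_t\log|Y|$, and the right side is bounded. Hence $R(p_t\Vert q_a)\to0$. By Pinsker's inequality, $p_t\to q_a=p^\star_a$ in total variation.

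Because $\E_p[u]$ and $H(p)$ are continuous on the simplex, we get
$$v_{\lambda_t,\mu_t}(a;\m{q})\ \ge\ \E_{p_t}[u]+\mu_tH(p_t)\ \to\ \E_{p^\star_a}[u]+\mu_\infty H(p^\star_a).$$
Combined with the upper bound, this gives convergence. A minimizer exists: for $\mu_t<1/\lambda_t$, Lemma \ref{lem:closed_form} applies and gives uniqueness as well.

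\textbf{Step 3: aggregation.} Write $V_{\lambda_t,\mu_t}(a;\pi_t)$ as the sum over $Q^\star_a$ plus a remainder. The family $\{v_{\lambda_t,\mu_t}(a;\m{q})\}$ is uniformly bounded, by $\|u\|_\infty+\sup_t\mu_t\log|Y|$ from above and by $-\|u\|_\infty$ from below, since $R,H\ge0$ and $\mu_t\ge0$. Because $\pi_t(Q^\star_a)\to1$, the remainder vanishes. Every model in $Q^\star_a$ shares $q_a=p^\star_a$, so the limit is $\bar u^\star(a)+\mu_\infty H(p^\star_a)$.

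\textbf{Main obstacle.} The delicate step is Step 1: Lanzani's results are stated for his optimal policies, and here I need them for this more general policy class. The learning lemmas concern only the realized history along $E^\Lambda$, namely the law of large numbers for outcomes conditional on actions played infinitely often. I therefore expect them to apply verbatim, and I would check that no argument uses optimality of the policy.
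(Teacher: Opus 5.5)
Your proof is correct, and its architecture matches the paper's. The learning step ($\lambda_t\to0$ and $\pi_t(Q^\star_a)\to1$ on $\{\alpha_t\to\alpha^\Lambda\}$) is taken over from the proof of Proposition~\ref{prop:correct_spec_limit}, followed by single-model convergence and averaging over the posterior. Your caveat about applying Lanzani's learning lemmas to a non-optimal policy is well placed but harmless, as you note: those arguments use only the realized history on the event, not the criterion generating the actions.

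The one substantive difference is the single-model step. The paper simply invokes Theorem~\ref{prop:arc_representation}.3, which is stated (and asserted without argument) for a \emph{fixed} $\mu$ as $\lambda\to0$. The lemma, however, needs the joint limit $(\lambda_t,\mu_t)\to(0,\mu_\infty)$. Your squeeze proves exactly this joint limit, so it closes that small gap. From above you use the feasibility bound at $p=q_a$. From below you use $R(p_t\Vert q_a)\le\lambda_t\bigl(2\|u\|_\infty+\sup_s\mu_s\log|Y|\bigr)$, then Pinsker, then continuity of $\E_p[u]+\mu H(p)$. Because it never uses the Gibbs form, it also carries over verbatim to the general functionals $M(\cdot)$ of Online Appendix~\ref{app:general_complexity}, which are continuous and bounded on the simplex.

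The paper's route buys brevity, and it can be made rigorous just as easily with the closed form. Since $(\beta-1)\kappa=\mu$, one can write $v_{\lambda,\mu}(a;\m{q})=-\kappa\log\sum_y q_a(y)\exp\{-(u(a,y)-\mu\log q_a(y))/\kappa\}$. This tends to $\E_{q_a}[u]+\mu H(q_a)$ as $\kappa\to\infty$, uniformly over bounded $\mu$ under full support, which handles the moving $\mu_t$ directly.
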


The next result shows that if complexity aversion does not vanish after
misspecification concerns vanish, then it can be strictly welfare-reducing under
correct specification.

\begin{proposition}
\label{prop:persistent_mu_bad}
Fix any $\mu_{\infty}>0$. There exists a correctly specified environment such that
$\bar u^\star(s)>\bar u^\star(r),$
but every policy that is optimal for $V_{\lambda_t,\mu_t}(\cdot;\pi_t)$ and satisfies
$\mu_t\to\mu_{\infty}$ has $\alpha^\Lambda(r)=1$ for every $\Lambda$-limit frequency $\alpha^\Lambda$. Thus, $U^\star(\alpha^\Lambda)=\bar u^\star(r)<\bar u^\star(s).$
\end{proposition}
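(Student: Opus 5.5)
The plan is to construct an explicit two-action, correctly specified environment in which the safe arm is slightly better in expected utility, while the risky arm has a high-entropy outcome distribution that the persistent entropy bonus rewards. I will then apply Lemma \ref{lem:persistent_mu_limit} to show that the limiting evaluations rank $r$ strictly above $s$.

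\textbf{Construction.} Take $Q=\{\m{q}^\star\}$ to be a singleton with $\m{q}^\star=\m{p}^\star$. This satisfies Assumption \ref{ass:correct_spec}, and the posterior is trivially $\pi_t=\delta_{\m{q}^\star}$. Let $Y=\{g,b\}$. Choose $p_s^\star$ with full support but nearly degenerate, say $p_s^\star(g)=1-\eta$, with constant payoff $u(s,\cdot)=\bar u$. Choose $p_r^\star$ uniform, with $u(r,g)=1$ and $u(r,b)=0$. Set $\bar u=\tfrac12+\epsilon$, so that $\bar u^\star(s)>\bar u^\star(r)=\tfrac12$.

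Then pick $\epsilon,\eta>0$ small enough that
\[
\tfrac12+\mu_\infty\log 2>\tfrac12+\epsilon+\mu_\infty H(p_s^\star).
\]
This is possible because $H(p_s^\star)\to0$ as $\eta\to0$ and $\mu_\infty>0$.

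\textbf{Limit values.} Since $\lambda_t=\frac{1}{ct}\operatorname{LLR}(h_t;Q)$ and the single model is the truth, $\lambda_t\to0$ along histories where each played action's empirical outcome frequencies converge, by the argument in the proof of Proposition \ref{prop:correct_spec_limit}. Hence eventually $\mu_t<1/\lambda_t$, and Lemma \ref{lem:persistent_mu_limit} applies. It gives $V_{\lambda_t,\mu_t}(a;\pi_t)\to\bar u^\star(a)+\mu_\infty H(p^\star_a)$ for every action played with positive limit frequency.

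Because $Q$ is a singleton, the same convergence holds for an action that is \emph{not} played. The value $v_{\lambda_t,\mu_t}(a;\m{q}^\star)$ depends only on $(\lambda_t,\mu_t)$ and the fixed $q_a=p_a^\star$. The squeeze in \eqref{eq:squeeze_v_correct_spec}, adapted to $\mu_t\to\mu_\infty$ (lower bound $v_{\lambda_t,0}+\mu_t H(\hat p)$, handled via continuity), yields convergence of both actions' values once $\lambda_t\to0$.

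\textbf{Main obstacle.} The delicate step is establishing $\lambda_t\to0$ unconditionally, so that the comparison holds eventually along every path and not only on the event $\{\alpha_t\to\alpha^\Lambda\}$. Here the LLR with the true model in $Q$ equals the log-likelihood ratio of the empirical MLE against the truth. By the LIL or Wilks-type bounds this is $O(\log\log t)$ almost surely, for each action separately. Dividing by $ct$ sends $\lambda_t\to0$ almost surely regardless of the policy. I would verify this per-action using the finitely many multinomial counts.

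\textbf{Conclusion.} Given that, the limits hold along every path. Eventually $V(r)>V(s)$ strictly, so any optimal policy plays $r$ from some random time onward. Therefore every $\Lambda$-limit frequency has $\alpha^\Lambda(r)=1$, and $U^\star(\alpha^\Lambda)=\bar u^\star(r)<\bar u^\star(s)$.
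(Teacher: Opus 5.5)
Your proposal is correct and follows essentially the paper's route. The paper uses the same construction: a singleton, correctly specified $Q$ on $Y=\{g,b\}$, a low-entropy safe arm with a slightly higher mean ($u(s,\cdot)=1$, $p^\star_s=(1-\delta,\delta)$), and a uniform risky arm paying $1$ or $1-2\varepsilon$. It then obtains the limiting ranking by $\bar u^\star(a)+\mu_\infty H(p^\star_a)$ from Lemma \ref{lem:persistent_mu_limit} and Theorem \ref{prop:arc_representation}.3, applied action by action.

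Your treatment of the unconditional limit $\lambda_t\to0$, of unplayed actions, and of the joint limit $(\lambda_t,\mu_t)\to(0,\mu_\infty)$ is more careful than the paper's, which glosses over all three. The LIL is more than you need, though: $\operatorname{LLR}(h_t;Q)/t=\sum_a (N_a(t)/t)\,R(\hat p_{a,t}\Vert p^\star_a)$, where $N_a(t)$ counts plays of $a$ and $\hat p_{a,t}$ is the empirical outcome distribution after $a$. This tends to $0$ almost surely by the strong law for actions played infinitely often, and by boundedness of $R$ under full support for the others.
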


Proposition \ref{prop:persistent_mu_bad} identifies the pathology ruled out by our
specification $\mu_t=\bar{\mu}\lambda_t$. Under correct specification, misspecification
concerns vanish on path. If $\mu_t$ does not vanish with them, then the agent does not
converge to Bayesian expected utility. Instead, by Lemma
\ref{lem:persistent_mu_limit}, the limiting criterion is
$\bar u^\star(a)+\mu_{\infty} H(p_a^\star),$
so simplicity continues to distort choice even after misspecification concerns have
disappeared, which then causes welfare losses. The restriction $\mu_t=\bar{\mu}\lambda_t$ therefore removes this residual
wedge to prevent potential welfare losses.

\subsubsection{Proof of Proposition \ref{thm:welfare_improvement}}
\paragraph{Step 1: Structure of the $\bar{\mu}=0$ equilibrium.}
By assumption (ii), there exists a $0$-optimal policy with $\Lambda$-limit frequency
$\alpha^0$ satisfying $\alpha^0(r) \in (0,1)$ and $\alpha^0(s) > 0$. This means the agent
cycles between safe and risky actions in the long run under $\bar{\mu}=0$.

\paragraph{Step 2: Structure of the $\bar{\mu} \ge \bar{\mu}^\ast$ equilibrium.}
By assumption (iii) and Theorem \ref{thm:mu_kills_cycles}, for all
$\bar{\mu}\in[\bar{\mu}^*,1/\bar{\lambda}^2)$, every mixed $c$-robust equilibrium has
$\alpha^{\bar{\mu}}(s) = 1$. By Proposition \ref{prop:dynamic_selection}, every
$\Lambda$-limit frequency of a $\bar{\mu}$-optimal policy satisfies
$\alpha^{\bar{\mu}}(s) = 1$.

\paragraph{Step 3: Comparison of long-run payoffs.}
The objective expected payoff of a mixed action $\alpha$ is:
$U^\star(\alpha) = \alpha(r) \bar{u}^\star(r) + \alpha(s) \bar{u}^\star(s).$ For $\alpha^0$ with $\alpha^0(r) > 0$:
$$
  U^\star(\alpha^0)
  = \alpha^0(r) \bar{u}^\star(r) + (1 - \alpha^0(r)) \bar{u}^\star(s)
  = \bar{u}^\star(s) - \alpha^0(r) (\bar{u}^\star(s) - \bar{u}^\star(r)).
$$
By assumption (i), $\bar{u}^\star(s) > \bar{u}^\star(r)$, so
$\bar{u}^\star(s) - \bar{u}^\star(r) > 0$. Since $\alpha^0(r) > 0$:
$U^\star(\alpha^0) < \bar{u}^\star(s).$ For $\alpha^{\bar{\mu}}$ with $\alpha^{\bar{\mu}}(s) = 1$:
$U^\star(\alpha^{\bar{\mu}}) = \bar{u}^\star(s).$ Therefore,
$U^\star(\alpha^{\bar{\mu}}) = \bar{u}^\star(s) > U^\star(\alpha^0).$
\qed
\subsubsection{Proof of Lemma \ref{lem:persistent_mu_limit}}
The proof is nearly identical to that of Proposition \ref{prop:correct_spec_limit}. Under
correct specification, the same argument gives $\lambda_t\to 0$ a.s. and posterior
concentration on models that coincide with $\m{p}^\star$ on played actions. For every such
model $q$ and every played action $a$, Theorem
\ref{prop:arc_representation}.3 implies
$v_{\lambda_t,\mu_t}(a;\m{q})
  \longrightarrow
  \E_{p_a^\star}[u(a,y)] + \mu_{\infty} H(p_a^\star),$ 
because $q_a=p_a^\star$, $\lambda_t\to 0$, and $\mu_t\to\mu_{\infty}$. Averaging over the
posterior yields the claim.
\qed

\subsubsection{Proof of Proposition \ref{prop:persistent_mu_bad}}
Fix $\mu_{\infty}>0$. Let $A=\{r,s\}$, $Y=\{g,b\}$, $Q=\{\m{q}^\star\}$ and $q^\star_a=p^\star_a$ for all $a$, so the agent is
correctly specified by construction. Pick any $\delta\in(0,1/2)$ and then choose
$\varepsilon>0$ such that $0<\varepsilon<\mu_{\infty}\big(\log 2
  +(1-\delta)\log(1-\delta)+\delta\log\delta\big).$
Define payoffs and the true DGP by
$$
  u(s,g)=u(s,b)=1,
  \qquad
  p^\star_s(g)=1-\delta,\quad p^\star_s(b)=\delta,
$$
and
$$
  u(r,g)=1,\qquad u(r,b)=1-2\varepsilon,
  \qquad
  p^\star_r(g)=p^\star_r(b)=\frac12.
$$
Since $Q=\{\m{q}^\star\}$ and $q^\star_a=p^\star_a$ for both $a\in\{r,s\}$, Assumption
\ref{ass:correct_spec} holds. The true expected payoffs are
$\bar u^\star(s)=1$ and $\bar u^\star(r)=\frac12\cdot 1+\frac12(1-2\varepsilon)=1-\varepsilon,$
so $\bar u^\star(s)>\bar u^\star(r).$
The corresponding entropies are
$H(p^\star_r)=\log 2$ and $H(p^\star_s)=-(1-\delta)\log(1-\delta)-\delta\log\delta.$
Hence, by the choice of $\varepsilon$,
$\big(\bar u^\star(r)+\mu_{\infty} H(p^\star_r)\big)
  -\big(\bar u^\star(s)+\mu_{\infty} H(p^\star_s)\big)
  =
  -\varepsilon+\mu_{\infty}\big(H(p^\star_r)-H(p^\star_s)\big)>0.$

Now, let $\sigma$ be any policy that is optimal for $V_{\lambda_t,\mu_t}(\cdot;\pi_t)$ and satisfies $\mu_t\to\mu_\infty$. Since $Q=\{\m{q}^\star\}$ is a singleton, $\pi_t=\delta_{q^\star}$ for every $t$, so
$V_{\lambda_t,\mu_t}(a;\pi_t)=v_{\lambda_t,\mu_t}(a;\m{q}^\star)$ for each $a\in\{r,s\}$. Moreover, under correct specification, Lemma \ref{lem:persistent_mu_limit} gives $\lambda_t\to0$ a.s. Since $q_a^\star=p_a^\star$ for both $a\in\{r,s\}$, Theorem \ref{prop:arc_representation}.3 applies action by action and yields
$V_{\lambda_t,\mu_t}(a;\pi_t)\to \bar u^\star(a)+\mu_\infty H(p_a^\star)$
a.s. for each $a\in\{r,s\}$. Hence, $V_{\lambda_t,\mu_t}(r;\pi_t)-V_{\lambda_t,\mu_t}(s;\pi_t)
\to
\bar u^\star(r)+\mu_\infty H(p_r^\star)-\bar u^\star(s)-\mu_\infty H(p_s^\star)>0$
a.s.
Therefore, there exists an a.s. finite $T$ such that for all $t\ge T$,
$V_{\lambda_t,\mu_t}(r;\pi_t)>V_{\lambda_t,\mu_t}(s;\pi_t).$
Thus, $r$ is the unique best reply from date $T$ onward, so every optimal policy chooses $r$ at every date $t\ge T$. It follows that $\alpha_t(r)\to1$, hence every $\Lambda$-limit frequency satisfies $\alpha^\Lambda(r)=1$. Consequently,
$U^\star(\alpha^\Lambda)=\bar u^\star(r)<\bar u^\star(s).$
\qed

\subsection{Other connections to \citet{gabaix25}}\label{app:gabaix}
This appendix shows how our framework connects to other examples in \citet{gabaix25}.
\paragraph{Pricing: Gabaix's Section 4.}
The construction of Section \ref{subsec:gabaix} applies to Gabaix's pricing application, where the choice variable is a scalar coefficient $b$ measuring the sensitivity of price to marginal cost. Gabaix's welfare criterion here  takes the form 
$V_P^{G}(b)=\widetilde V_P(b)-\Gamma_P |b|,$
where $\widetilde V_P$ is the smooth allocative part of the pricing problem and the first-order term comes from the dread cost of misperceiving a variable price schedule. Applying the same binary-narrative construction with $x=b$ and
$
\gamma=\Gamma_P$ yields
$
V_P^{FM}(b)=\frac{1}{\lambda}\log 2+\widetilde V_P(b)-\Gamma_P |b|,
$
which coincides with Gabaix's pricing objective up to the same additive constant. Here again our model reproduces the same reduced-form pricing problem and therefore the same threshold logic for rigid pricing.

\paragraph{Taxes: Gabaix's Section 5.}
The tax application is slightly different. The relevant reduced-form object is not a single absolute-value penalty, but an additive penalty across tax features. A convenient way to represent the problem is
$$
V^G_T(z)=\widetilde V_T(z)-\sum_{k=1}^K \Gamma_k |z_k|,
$$
where $z=(z_1,\dots,z_K)$ is a finite-dimensional representation of the tax schedule, $\widetilde V_T$ is the smooth Ramsey part of the objective, and the coefficients $\Gamma_k\geq 0$ summarize the strength of first-order complexity aversion along each coordinate. A single binary narrative is not enough to reproduce this objective exactly, because it generates one kinked term, not a sum of coordinatewise kinks. However, the same reduced-form objective is obtained from a separable extension of our construction. Specifically, for each coordinate $k$, introduce a symmetric binary narrative $Y_k=\{0,1\}$ with associated payoff
$f_k(0)=0$ and
$f_k(1)=-\Gamma_k |z_k|.$ In the high-simplicity regime,
$v_{\lambda,\mu}(f_k;q_k)=\frac{1}{\lambda}\log 2-\Gamma_k |z_k|.$
Summing these feature-specific terms gives
$$
V^{FM}_T(z):=\widetilde V_T(z)+\sum_{k=1}^K v_{\lambda,\mu}(f_k;q_k)
= \frac{K}{\lambda}\log 2 + \widetilde V_T(z)-\sum_{k=1}^K \Gamma_k |z_k|.
$$
Thus, our model also reproduces Gabaix's reduced-form tax objective, up to an additive constant, for any fixed finite-dimensional representation of the tax schedule. In particular, any implication that depends only on that objective---compression of tax differentials, thresholding of small deviations to zero, and uniform taxation for sufficiently strong complexity concerns---is inherited by our model. What our framework does \emph{not} reproduce  by itself is Gabaix's additional analysis of endogenous tax categories and the closed-form formula for the optimal number of tax buckets, which relies on further assumptions on cognition costs and on the distribution of elasticities.

\subsection{Omitted Proofs}\label{ola:omittedproofs}
\subsubsection{Proof of Theorem \ref{prop:arc_representation}}

\paragraph{Intramodel Sure-Thing Principle \citep[][Axiom 3]{lanzani25w}.}
Fix a finite state space $\Omega$, a model $\rho\in\Delta(\Omega)$, and let $X^\Omega$ denote the set of acts from $\Omega$ to the consequence space $X$. For $W\subseteq\Omega$ and acts $f,h\in X^\Omega$, let $fWh\in X^\Omega$ denote the act defined by
$$
(fWh)(\omega)=
\begin{cases}
f(\omega), & \omega\in W,\\
h(\omega), & \omega\in \Omega\setminus W.
\end{cases}
$$
If $\succeq_\rho$ denotes the preference conditional on the model being $\rho$, then Axiom 3 requires that
$$
fWh \succeq_\rho gWh \Longrightarrow fWh' \succeq_\rho gWh'
\qquad
\forall W\subseteq \Omega,\ \forall f,g,h,h'\in X^\Omega.
$$
In our Anscombe-Aumann restatement, when $Q=\{\rho\}$ is a singleton, we identify $\Omega=Y$ and define $\succeq_\rho$ by
$f\succeq_\rho g
\iff
v_{\lambda,\mu}(f;\rho)\ge v_{\lambda,\mu}(g;\rho).$

\begin{lemma}\label{lem:axiom3_counterexample}
Suppose $\lambda>0$, $\mu\ge 1/\lambda$, and $|Y|\ge 2$. Then, the high-simplicity criterion in Theorem \ref{prop:arc_representation}.2 violates \citet[][Axiom 3]{lanzani25w}. Thus, it is not an ARC preference.
\end{lemma}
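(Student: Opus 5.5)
The plan is to exhibit an explicit violation of the Intramodel Sure-Thing Principle. Fix a singleton model $\rho$ and use the extreme-state formula from Lemma \ref{prop:corner} (equivalently Theorem \ref{prop:arc_representation}.2):
$$
v_{\lambda,\mu}(f;\rho)=\min_{y\in\text{supp}(\rho)}\Big\{u(f(y))+\tfrac{1}{\lambda}\log\tfrac{1}{\rho(y)}\Big\}.
$$
Like any pure min-type functional, this depends only on the worst state. So changing the common part $h$ outside $W$ can switch which state is pivotal, and that switch reverses a ranking. The model-dependent constants $\frac{1}{\lambda}\log\frac{1}{\rho(y)}$ could in principle interfere with this. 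To neutralize them, I would take $\rho$ uniform on $Y$, so that every state carries the same constant $\frac{1}{\lambda}\log|Y|$. The criterion then reduces, up to that additive constant, to $\min_{y\in Y}u(f(y))$.

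\emph{Construction.} Since $u$ is affine and nonconstant on the convex set $X$, its image is a nondegenerate interval. I will therefore pick consequences $x_1,x_2,x_3\in X$ with $u(x_1)<u(x_2)<u(x_3)$. Fix a state $y_1\in Y$, which exists because $|Y|\ge 2$, and set $W=\{y_1\}$. Define the acts as follows:
\begin{itemize}
\item $g$ with $g(y_1)=x_1$;
\item $f$ with $f(y_1)=x_2$;
\item $h$ equal to $x_1$ on $Y\setminus W$;
\item $h'$ equal to $x_3$ on $Y\setminus W$.
\end{itemize}

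\emph{Computation.} With the common constant dropped, both $gWh$ and $fWh$ have minimum utility $u(x_1)$. Hence $gWh\sim_\rho fWh$, and in particular $gWh\succeq_\rho fWh$. Under $h'$, the minima are $u(x_1)$ for $gWh'$ and $u(x_2)$ for $fWh'$, because every state off $W$ now has utility $u(x_3)$. Thus $fWh'\succ_\rho gWh'$, so $gWh'\succeq_\rho fWh'$ fails. Applying Axiom 3 with the roles of $f$ and $g$ interchanged, the premise holds and the conclusion fails, which is a violation.

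\emph{Conclusion and main obstacle.} Every ARC preference satisfies Axiom 3 within each model (Lanzani's characterization), so the high-simplicity criterion is not ARC. The only delicate point is ensuring the required ordering of utilities is attainable when the image of $u$ is bounded. Here only three distinct levels are needed, and they exist because $u$ is nonconstant, which handles it. Uniformity of $\rho$ removes any need to compare utility gaps with the log-likelihood constants. The plan also relies on Axiom 3 being checked for some admissible model, and the uniform full-support $\rho$ is admissible.
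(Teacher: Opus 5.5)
Your proof is correct and is essentially the paper's argument: a singleton uniform model reduces the criterion, via Lemma \ref{prop:corner}, to a worst-outcome functional, and a three-level construction with $W$ a singleton and constant $h,h'$ switches which state is pivotal, breaking the implication in Axiom 3. The differences are cosmetic: you treat general $|Y|\ge 2$ directly rather than $Y=\{1,2\}$, and you swap the roles of $f$ and $g$. One small correction: $|Y|\ge 2$ is needed to make $Y\setminus W$ nonempty, so that changing $h$ to $h'$ matters, not to guarantee that $y_1$ exists.
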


\begin{proof}[Proof of Lemma \ref{lem:axiom3_counterexample}]
It suffices to show a violation in a singleton-model environment. Let
$Q=\{\rho\}$, $\Omega=Y=\{1,2\}$, and $\rho(1)=\rho(2)=1/2$. Since $\rho$ has full support, $\text{supp}(\rho)=\Omega$. By Lemma \ref{prop:corner},
for every act $f\in X^\Omega$,
$$
v_{\lambda,\mu}(f;\rho)
=
\min_{\omega\in\Omega}
\left\{
u(f(\omega))+\frac{1}{\lambda}\log\frac{1}{\rho(\omega)}
\right\}
=
\frac{1}{\lambda}\log 2+\min_{\omega\in\Omega}u(f(\omega)).
$$
Thus, conditional on $\rho$, the preference $\succsim_\rho$ ranks acts by their worst utility
level. Since $u$ is affine and nonconstant and $X$ is convex, there exist consequences
$x_0,x_1,x_2\in X$ such that
$u(x_0)<u(x_1)<u(x_2).$
Let $W=\{1\}$ and define acts $f,g,h,h'\in X^\Omega$ by
$$
f(1)=x_1,\qquad g(1)=x_2,\qquad h(\omega)=x_0\ \forall \omega\in\Omega,
\qquad h'(\omega)=x_2\ \forall \omega\in\Omega.
$$
Then,
$fWh=(x_1,x_0)$ and $gWh=(x_2,x_0),$
so
\begin{align*}
v_{\lambda,\mu}(fWh;\rho)
&=
\frac{1}{\lambda}\log 2+\min\{u(x_1),u(x_0)\}
=
\frac{1}{\lambda}\log 2+u(x_0),\\
v_{\lambda,\mu}(gWh;\rho)
&=
\frac{1}{\lambda}\log 2+\min\{u(x_2),u(x_0)\}
=
\frac{1}{\lambda}\log 2+u(x_0).
\end{align*}
Hence,
$fWh \sim_\rho gWh,$
so $fWh \succsim_\rho gWh$. Now replace $h$ by $h'$. We obtain
$fWh'=(x_1,x_2)$ and $gWh'=(x_2,x_2),$
so
$v_{\lambda,\mu}(fWh';\rho)
=
\frac{1}{\lambda}\log 2+\min\{u(x_1),u(x_2)\}
=
\frac{1}{\lambda}\log 2+u(x_1),$
but
$v_{\lambda,\mu}(gWh';\rho)
=
\frac{1}{\lambda}\log 2+\min\{u(x_2),u(x_2)\}
=
\frac{1}{\lambda}\log 2+u(x_2).$
Thus,
$gWh' \succ_\rho fWh',$
and hence $fWh' \not\succsim_\rho gWh'$. We found $W\subseteq\Omega$ and acts $f,g,h,h'$ such that
$fWh \succsim_\rho gWh$ but $fWh' \not\succsim_\rho gWh',$
which violates Axiom 3. Since Axiom 3 is necessary for ARC preferences  \citep[][Theorem 3]{lanzani25w}, the high-simplicity criterion cannot be ARC.
\end{proof}
\begin{proof}[Proof of Lemma \ref{prop:corner}]
Fix $q\in\Delta({Y})$ and an act $f$. If $p\not\ll q$ then ${R}(p\|q)=+\infty$ and such $p$ cannot be optimal. Hence, restrict attention to $p\in\Delta({Y})$ with $\text{supp}(p)\subseteq \text{supp}(q)$. Then,
\begin{align}
\E_p[u(f)] +\frac{1}{\lambda}{R}(p\|q)+\mu{H}(p)
&=\sum_{y\in{Y}} p(y)u(f(y))
 +\Big(\frac{1}{\lambda}-\mu\Big)\sum_{y\in{Y}} p(y)\log p(y)
 -\frac{1}{\lambda}\sum_{y\in{Y}} p(y)\log q(y)\nonumber\\
&=\sum_{y\in{Y}} p(y)\Big(u(f(y)) -\frac{1}{\lambda}\log q(y)\Big)
 +\kappa\sum_{y\in{Y}} p(y)\log p(y).
\label{eq:expand}
\end{align}

\medskip
\noindent\emph{Step 1.}
Since $p(y)\in[0,1]$ for each $y$, one has $\log p(y)\le 0$ and hence $\sum_{y} p(y)\log p(y)\le 0$.
Since $\kappa=\frac{1}{\lambda}-\mu\leq0$, it follows that
\begin{equation}
\label{eq:nonneg}
\kappa\sum_{y\in{Y}} p(y)\log p(y)\ge 0.
\end{equation}
Moreover, $\sum_y p(y)\log p(y)=0$ holds if and only if $p(y)\in\{0,1\}$ for all $y$, i.e., if and only if $p$ is a Dirac measure.

\medskip
\noindent\emph{Step 2: minimizer is a Dirac measure.}
Let $a_q^f(y):=u(f(y)) -\frac{1}{\lambda}\log q(y)\in\R\cup\{+\infty\},$
where $a_q^f(y)=+\infty$ when $q(y)=0$ (consistent with ${R}(\delta_y\|q)=+\infty$).
For every feasible $p$ we have
\begin{equation}
\label{eq:expectation-lb}
\sum_{y\in{Y}} p(y)a_q^f(y)\ge \min_{y\in\text{supp}(q)} a_q^f(y),
\end{equation}
with equality if $p=\delta_{y^*}$ for any minimizer $y^*$ of $a_q^f$ on $\text{supp}(q)$.

\medskip
\noindent\emph{Step 3: combine all steps.}
Combining \eqref{eq:expand}-\eqref{eq:expectation-lb}, for every feasible $p$,
$$
\E_p[u(f)] +\frac{1}{\lambda}{R}(p\|q)+\mu{H}(p)
\ge \sum_{y} p(y)a_q^f(y)
\ge \min_{y\in\text{supp}(q)} a_q^f(y).
$$
Choose $p=\delta_{y^*}$ where $y^*$ minimizes $a_q^f$ over $\text{supp}(q)$. Then, $\sum_y p(y)\log p(y)=0$ so the first inequality holds with equality, and \eqref{eq:expectation-lb} holds with equality as well. Hence, the minimum over $p$ equals $\min_{y\in\text{supp}(q)} a_q^f(y)$, which coincides with \eqref{eq:corner}.
\end{proof}

\begin{proof}[Proof of Theorem \ref{prop:arc_representation}.2] When $\lambda>0$ and $\mu\geq1/\lambda$, the preference is not ARC (Lemma \ref{lem:axiom3_counterexample}). We  now aim to prove that there exist a constant $C\in\R$ independent of acts and a grounded, convex, lower semicontinuous cost function $\hat c:\Delta({Y})\to[0,+\infty]$ such that for all $f$,
\begin{equation}
\label{eq:variational-rep}
V_{\lambda,\mu}(f;\pi)= C + \min_{p\in\Delta({Y})}\Big\{\E_p[u(f)] + \hat c(p)\Big\}.
\end{equation}
Let $n:=|{Y}|$ and identify utility acts with vectors in $\R^n$.
For each act $f$ define its utility vector $\xi_f\in\R^n$ by $\xi_f(y):=u(f(y))$.
Define a functional $I$ on the convex set $\mathcal{D}:=u({X})^{{Y}}\subset\R^n$ by
\begin{equation}
\label{eq:I}
I(\xi):=\sum_{q\in Q}\pi(q)\min_{y\in\text{supp}(q)}\Big\{\xi(y) +\frac{1}{\lambda}\log\frac{1}{q(y)}\Big\}.
\end{equation}
 Then, by Lemma \ref{prop:corner}, $V_{\lambda,\mu}(f;\pi)=I(\xi_f)$ for all acts $f$.

\medskip
\noindent\emph{Step 1: basic properties of $I$.}
Let $b_q(y):=\frac{1}{\lambda}\log(1/q(y))$, then each map $\xi\mapsto \min_{y\in\text{supp}(q)}\{\xi(y)+b_q(y)\}$ is the pointwise minimum of finitely many affine functionals in $\xi$, hence is concave and continuous on $\R^n$.
A finite weighted sum of concave (respectively, continuous) functions is concave (respectively, continuous), so $I$ is concave and continuous on $\R^n$ and hence on $\mathcal{D}$.
Moreover, $I$ is monotone (if $\xi\le\eta$ coordinatewise then $I(\xi)\le I(\eta)$).

\medskip
\noindent\emph{Step 2: supergradients exist and are probabilities.}
Fix $\xi\in\R^n$. For each $q\in Q$, let
$A_q(\xi):=\text{arg min}_{y\in\text{supp}(q)}\{\xi(y)+b_q(y)\}\neq\emptyset.$
Define the (concave) function $m_q(\xi):=\min_{y\in\text{supp}(q)}\{\xi(y)+b_q(y)\}$.
Since $m_q$ is the minimum of finitely many affine maps, its superdifferential at $\xi$ contains the convex hull of the gradients of the active affine pieces, i.e.
\begin{equation}
\label{eq:subdiff-mq}
\operatorname{co}\{\delta_y: y\in A_q(\xi)\}\subseteq \partial m_q(\xi)\subseteq \Delta({Y}),
\end{equation}
where $\delta_y\in\Delta({Y})$ denotes the Dirac mass on $y$ and $\operatorname{co}$ denotes the convex hull.
Indeed, for any $y\in A_q(\xi)$ we have $m_q(\eta)\le \eta(y)+b_q(y)=m_q(\xi)+\delta_y\cdot(\eta-\xi)$ for all $\eta$, and taking convex combinations preserves the supergradient inequality.

Since $I=\sum_q \pi(q)m_q$, the sum of supergradients is a supergradient of the sum: if $p_q\in\partial m_q(\xi)$ for each $q$, then $\sum_{q\in Q}\pi(q) p_q\in \partial I(\xi).$
In particular, $\partial I(\xi)\neq\emptyset$, because each $\partial m_q(\xi)$ is nonempty by \eqref{eq:subdiff-mq}.
Moreover, every vector of the form $\sum_q \pi(q)p_q$ (with $p_q\in\partial m_q(\xi)$) belongs to $\Delta({Y})$ because $\Delta({Y})$ is convex and closed under finite convex combinations.

\medskip
\noindent\emph{Step 3: construct a convex cost function.}
Define $c:\Delta({Y})\to(-\infty,+\infty]$ by
\begin{equation}
\label{eq:conjugate}
 c(p):=\sup_{\eta\in\mathcal{D}}\big\{I(\eta)-p\cdot \eta\big\}.
\end{equation}
Since $p\mapsto -p\cdot \eta$ is affine for each fixed $\eta$ and a supremum of affine functions is convex and lower semicontinuous, $c$ is convex and lower semicontinuous.

\medskip
\noindent\emph{Step 4: the variational representation.}
First, for any $p\in\Delta({Y})$ and any $\xi\in\mathcal{D}$, the definition \eqref{eq:conjugate} implies
\begin{equation}
\label{eq:ineq1}
 c(p)\ge I(\xi)-p\cdot\xi\implies p\cdot\xi+c(p)\ge I(\xi).
\end{equation}
Taking the infimum over $p\in\Delta({Y})$ yields
\begin{equation}
\label{eq:inf-ge}
\inf_{p\in\Delta({Y})}\{p\cdot\xi+c(p)\}\ge I(\xi).
\end{equation}

For the reverse inequality, fix $\xi\in\mathcal{D}$ and choose any $p^\xi\in\partial I(\xi)$, which exists by Step 2.
By definition of superdifferential for a concave function, $p^\xi\in\partial I(\xi)$ means that
$I(\eta)\le I(\xi)+p^\xi\cdot(\eta-\xi)$ $\forall \eta\in\mathcal{D}.$
Rearranging gives $I(\eta)-p^\xi\cdot\eta\le I(\xi)-p^\xi\cdot\xi$ for all $\eta\in\mathcal{D}$.
Taking supremum over $\eta$ and using \eqref{eq:conjugate} yields
\begin{equation}
\label{eq:c-at-supergrad}
 c(p^\xi)=\sup_{\eta\in\mathcal{D}}\{I(\eta)-p^\xi\cdot\eta\}\le I(\xi)-p^\xi\cdot\xi.
\end{equation}
On the other hand, applying \eqref{eq:ineq1} at $p=p^\xi$ gives $c(p^\xi)\ge I(\xi)-p^\xi\cdot\xi$.
Therefore equality holds in \eqref{eq:c-at-supergrad} and hence
\begin{equation}
\label{eq:touching}
 I(\xi)=p^\xi\cdot\xi+c(p^\xi)\ge \inf_{p\in\Delta({Y})}\{p\cdot\xi+c(p)\}.
\end{equation}
Combining \eqref{eq:inf-ge} and \eqref{eq:touching}, we conclude
\begin{equation}
\label{eq:I-rep}
 I(\xi)=\inf_{p\in\Delta({Y})}\{p\cdot\xi+c(p)\}\qquad\forall \xi\in\mathcal{D}.
\end{equation}

\medskip
\noindent\emph{Step 5: ground the cost.}
By Step 2, $\partial I(\xi)$ is nonempty for every $\xi$; pick some $\xi_0\in\mathcal{D}$ and $p^{\xi_0}\in\partial I(\xi_0)$. Then by \eqref{eq:touching}, $c(p^{\xi_0})=I(\xi_0)-p^{\xi_0}\cdot\xi_0<+\infty$, so $c$ is proper on $\Delta({Y})$.
Let $m:=\inf_{p\in\Delta({Y})} c(p)\in\R$, where finiteness follows because $c$ is proper, lower semicontinuous, and $\Delta({Y})$ is compact.
Define the grounded cost $\hat c(p):=c(p)-m\ge 0$ and the constant $C:=m$.
Then, \eqref{eq:I-rep} implies for all $\xi\in\mathcal{D}$, $I(\xi)= C + \inf_{p\in\Delta({Y})}\{p\cdot\xi+\hat c(p)\}.$
Substituting $\xi=\xi_f$ yields \eqref{eq:variational-rep} with $\E_p[u(f)]=p\cdot\xi_f$.\end{proof}

\subsubsection{Proof of Proposition \ref{prop:cri_ri_logit}}
\begin{proof}[Proof of Proposition \ref{prop:cri_ri_logit}]
Let $\kappa:=\frac{1}{\lambda}-\mu>0$ and define the full saddle objective
$$
  \mathcal{F}(\psi,m)
  :=
  \sum_{\omega\in\Omega} m(\omega)\sum_{a\in A} v(a,\omega)\psi(a|\omega)
  +\frac{1}{\lambda}R(m\Vert g)
  +\mu H(m)
  - \xi \sum_{\omega\in\Omega} g(\omega)\sum_{a\in A}\psi(a|\omega)\log\frac{\psi(a|\omega)}{\bar{\psi}(a)},
$$
where $\bar{\psi}(a)=\sum_{\omega}g(\omega)\psi(a|\omega)$.

\smallskip
\noindent\textit{Step 1. Existence of a saddle point for the full problem.}
The sets $\Delta(A)^\Omega$ and $\Delta(\Omega)$ are compact and convex. For each fixed $m$,
the mapping $\psi\mapsto \mathcal{F}(\psi,m)$ is continuous and concave on $\Delta(A)^\Omega$:
the payoff term is linear in $\psi$, and the Shannon information term is concave with a minus sign.
Indeed, writing
$I(\psi;g):=\sum_{\omega} g(\omega) R\bigl(\psi(\cdot|\omega)\Vert\bar{\psi}\bigr)$,
we have
$\xi \sum_{\omega\in\Omega} g(\omega)\sum_{a\in A}\psi(a|\omega)\log\frac{\psi(a|\omega)}{\bar{\psi}(a)}
  = \xi I(\psi;g),$
and $I(\psi;g)$ is convex because $R(\cdot\Vert\cdot)$ is jointly convex and both
$\psi(\cdot|\omega)$ and $\bar{\psi}$ are affine in $\psi$.

For each fixed $\psi$, the mapping $m\mapsto \mathcal{F}(\psi,m)$ is continuous and convex on $\Delta(\Omega)$.
Using $H(m)=-\sum_{\omega}m(\omega)\log m(\omega)$ and $0\log0:=0$, we can rewrite
$$
  \frac{1}{\lambda}R(m\Vert g)+\mu H(m)
  =
  \kappa \sum_{\omega\in\Omega} m(\omega)\log m(\omega)
  -\frac{1}{\lambda}\sum_{\omega\in\Omega} m(\omega)\log g(\omega),
$$
and this is convex in $m$ because $\kappa>0$ and $x\mapsto x\log x$ is convex. Therefore, Sion's minimax theorem applies to the full objective $\mathcal{F}$, so there exists a saddle point
$(\psi^\star,m^\star)\in\Delta(A)^\Omega\times\Delta(\Omega)$ such that
$\mathcal F(\psi,m^\star)\le \mathcal F(\psi^\star,m^\star)\le \mathcal F(\psi^\star,m)$ for all $(\psi,m)\in\Delta(A)^\Omega\times\Delta(\Omega).$
In particular, $\psi^\star$ maximizes $\psi\mapsto \mathcal{F}(\psi,m^\star)$ over $\Delta(A)^\Omega$, and,
by Observation \ref{prop:cri_ri_mstar}, $m^\star$ minimizes $m\mapsto \mathcal F(\psi^\star,m)$ over $\Delta(\Omega)$.
Since $g$ has full support, \eqref{eq:cri_ri_mstar} implies $m^\star(\omega)>0$ for every $\omega\in\Omega$.

\smallskip
\noindent\textit{Step 2. Interior choice probabilities on the support of $\bar{\psi}^\star$.}
Fix $a\in A$ and assume $\bar{\psi}^\star(a)>0$. We claim that then $\psi^\star(a|\omega)>0$ for every $\omega$.
Suppose instead that $\psi^\star(a|\omega_0)=0$ for some $\omega_0$.
Choose any $b\in A$ with $\psi^\star(b|\omega_0)>0$ and consider, for $\varepsilon>0$ small,
a feasible perturbation $\psi^\varepsilon$ that increases $\psi(a|\omega_0)$ by $\varepsilon$ and decreases
$\psi(b|\omega_0)$ by $\varepsilon$, leaving all other components unchanged.
Since $m^\star(\omega_0)>0$ and payoffs are bounded, the change in the payoff term
in $\mathcal{F}(\cdot,m^\star)$ is $O(\varepsilon)$.
However, the Shannon term produces a strictly first-order gain: for $\varepsilon\downarrow0$,
$\frac{1}{\varepsilon}\big[
  -\xi g(\omega_0)\varepsilon \log\frac{\varepsilon}{\bar{\psi}^\star(a)+O(\varepsilon)}
  \big]
  = -\xi g(\omega_0)\log \varepsilon + O(1)\longrightarrow +\infty,$
so $\mathcal{F}(\psi^\varepsilon,m^\star)>\mathcal{F}(\psi^\star,m^\star)$ for $\varepsilon$ small enough,
contradicting the optimality of $\psi^\star$ given $m^\star$.
Hence $\psi^\star(a|\omega)>0$ for all $\omega$ whenever $\bar{\psi}^\star(a)>0$.
Under the maintained assumption that $\bar{\psi}^\star(a)>0$ for all $a\in A$, it follows that
$\psi^\star(a|\omega)>0$ for all $(a,\omega)$.

\smallskip
\noindent\textit{Step 3. First-order conditions in $\psi$ at the saddle point.}
Consider the concave problem
$$
  \max_{\psi\in\Delta(A)^\Omega}\mathcal{F}(\psi,m^\star).
$$
Form the Lagrangian
$\mathcal{L}(\psi,\zeta)
  :=
  \mathcal{F}(\psi,m^\star)
  + \sum_{\omega\in\Omega}\zeta(\omega)\big(\sum_{a\in A}\psi(a|\omega)-1\big),$
with multipliers $\zeta(\omega)\in\R$.
For each $(a,\omega)$, stationarity requires
\begin{equation}\label{eq:cri_ri_stationarity}
  0
  =
  \frac{\partial \mathcal{L}}{\partial \psi(a|\omega)}(\psi^\star,\zeta)
  =
  m^\star(\omega)v(a,\omega)
  - \xi\frac{\partial I(\psi;g)}{\partial\psi(a|\omega)}\Big|_{\psi=\psi^\star}
  + \zeta(\omega).
\end{equation}

We now compute $\partial I/\partial\psi(a|\omega)$ explicitly.
Write
$I(\psi;g)
  =
  \sum_{\omega} g(\omega)\sum_{a}\psi(a|\omega)\log\psi(a|\omega)
  -\sum_{a}\bar{\psi}(a)\log\bar{\psi}(a),$
where we recall $\bar{\psi}(a)=\sum_{\omega}g(\omega)\psi(a|\omega)$. Hence, for $\psi(a|\omega)>0$ and $\bar{\psi}(a)>0$,
$\frac{\partial}{\partial \psi(a|\omega)}
  \sum_{\omega'} g(\omega')\sum_{b}\psi(b|\omega')\log\psi(b|\omega')
  =
  g(\omega)\bigl(1+\log\psi(a|\omega)\bigr),$
and
$\frac{\partial}{\partial \psi(a|\omega)}\sum_{b}\bar{\psi}(b)\log\bar{\psi}(b)
  =
  (1+\log\bar{\psi}(a))\cdot \frac{\partial \bar{\psi}(a)}{\partial \psi(a|\omega)}
  =
  g(\omega)\bigl(1+\log\bar{\psi}(a)\bigr).$
Subtracting yields
\begin{equation}\label{eq:cri_ri_I_deriv}
  \frac{\partial I(\psi;g)}{\partial\psi(a|\omega)}
  =
  g(\omega)\log\frac{\psi(a|\omega)}{\bar{\psi}(a)}.
\end{equation}
Substituting \eqref{eq:cri_ri_I_deriv} into \eqref{eq:cri_ri_stationarity} gives
$\log\frac{\psi^\star(a|\omega)}{\bar{\psi}^\star(a)}
  =
  \frac{m^\star(\omega)}{\xi g(\omega)}v(a,\omega)
  + \tilde{\zeta}(\omega),$ where $
  \tilde{\zeta}(\omega):=\frac{\zeta(\omega)}{\xi g(\omega)}.$
Exponentiating and using $\sum_{a}\psi^\star(a|\omega)=1$ to normalize across $a$ yields
$$
  \psi^\star(a|\omega)
  =
  \frac{\bar{\psi}^\star(a)\exp \left\{\frac{m^\star(\omega)}{\xi g(\omega)}v(a,\omega)\right\}}
       {\sum_{b\in A}\bar{\psi}^\star(b)\exp \left\{\frac{m^\star(\omega)}{\xi g(\omega)}v(b,\omega)\right\}}
  =
  \frac{\bar{\psi}^\star(a)\exp \left\{\frac{v(a,\omega)}{\xi g(\omega)/m^\star(\omega)}\right\}}
       {\sum_{b\in A}\bar{\psi}^\star(b)\exp \left\{\frac{v(b,\omega)}{\xi g(\omega)/m^\star(\omega)}\right\}}.
$$
Lastly, by definition \eqref{eq:cri_ri_scale},
$\xi_\omega(\psi^\star)=\xi\,\frac{g(\omega)}{m^\star(\omega)},$
which gives \eqref{eq:cri_ri_logit}.
\end{proof}

\subsubsection{Proof of Proposition \ref{prop:home_bias}}

\begin{proof}[Proof of Proposition \ref{prop:home_bias}]
Fix $\lambda>0$ and $\mu\in[0,1/\lambda)$ and define $\kappa:=\frac{1}{\lambda}-\mu>0$ and $\beta:=\frac{1}{1-\lambda\mu}>0$ as in Lemma \ref{lem:closed_form}.
For $a\in\{d,f\}$ define the normalizing constant
$\mathcal Z_a
:=
\sum_{y\in Y}\exp\{-u(a,y)/\kappa\}\,q_a(y)^{\beta}.$
By Lemma \ref{lem:closed_form}, the unique minimizer of \eqref{eq:static_criterion} is $\hat p_{\lambda,\mu}(a;\m{q})(y)=\frac{\exp\{-u(a,y)/\kappa\}q_a(y)^\beta}{\mathcal Z_a}$ for $y\in Y$.
We first show the value identity
\begin{equation}\label{eq:homebias_value_identity}
v_{\lambda,\mu}(a;\m{q})=-\kappa\log \mathcal Z_a,\qquad a\in\{d,f\}.
\end{equation}
For any $p\in\Delta(Y)$ with $p\ll q_a$, expand \eqref{eq:static_criterion} using $H(p)=-\sum_y p(y)\log p(y)$:
$$
\sum_{y}u(a,y)p(y)+\frac{1}{\lambda}R(p\Vert q_a)+\mu H(p)
=
\sum_y u(a,y)p(y)
+\kappa\sum_y p(y)\log p(y)
-\frac{1}{\lambda}\sum_y p(y)\log q_a(y).
$$
Evaluating at $p=\hat p_{\lambda,\mu}(a;\m{q})$ and using $
\log \hat p_{\lambda,\mu}(a;\m{q})(y)=-\frac{u(a,y)}{\kappa}+\beta\log q_a(y)-\log \mathcal Z_a$ yields \begin{align*}
\kappa\sum_y \hat p_{\lambda,\mu}(a;\m{q})(y)\log \hat p_{\lambda,\mu}(a;\m{q})(y)
&=
-\sum_y \hat p_{\lambda,\mu}(a;\m{q})(y)u(a,y)
+\kappa\beta\sum_y \hat p_{\lambda,\mu}(a;\m{q})(y)\log q_a(y)
-\kappa\log \mathcal Z_a.
\end{align*}
Substituting into the objective at $\hat p_{\lambda,\mu}(a;\m{q})$ yields
$$
v_{\lambda,\mu}(a;\m{q})
=
\Big(\kappa\beta-\frac{1}{\lambda}\Big)\sum_y \hat p_{\lambda,\mu}(a;\m{q})(y)\log q_a(y)
-\kappa\log \mathcal Z_a.
$$
Since $\kappa=\frac{1-\lambda\mu}{\lambda}$ and $\beta=\frac{1}{1-\lambda\mu}$, we have $\kappa\beta=\frac{1}{\lambda}$, so the first term vanishes and \eqref{eq:homebias_value_identity} follows.

Under the binary payoff specification,
$$
\mathcal Z_a
=
e^{-1/\kappa}\,q_a(y^*)^{\beta}
+\sum_{i=1}^N q_a(y_i)^{\beta}
=
e^{-1/\kappa}(1-\delta)^{\beta}
+\sum_{i=1}^N q_a(y_i)^{\beta}.
$$
If $\mu=0$, then $\beta=1$ and $\sum_{i=1}^N q_d(y_i)=\sum_{i=1}^N q_f(y_i)=\delta$, hence $\mathcal Z_d=\mathcal Z_f$ and therefore
$v_{\lambda,0}(d;\m{q})=v_{\lambda,0}(f;\m{q})$ by \eqref{eq:homebias_value_identity}.

If $\mu\in(0,1/\lambda)$, then $\beta>1$ and $x\mapsto x^\beta$ is strictly convex on $(0,\infty)$.
Since $(q_f(y_1),\dots,q_f(y_N))$ is not constant and $\frac{1}{N}\sum_{i=1}^N q_f(y_i)=\delta/N$, Jensen's inequality implies
$\frac{1}{N}\sum_{i=1}^N q_f(y_i)^\beta
>
\big(\frac{1}{N}\sum_{i=1}^N q_f(y_i)\big)^\beta
=
\big(\frac{\delta}{N}\big)^\beta,$
 so
$\sum_{i=1}^N q_f(y_i)^\beta
>
N\big(\frac{\delta}{N}\big)^\beta
=
\sum_{i=1}^N q_d(y_i)^\beta.$
Hence, $\mathcal Z_f>\mathcal Z_d$, and \eqref{eq:homebias_value_identity} with $\kappa>0$ gives $v_{\lambda,\mu}(d;\m{q})=-\kappa\log \mathcal Z_d > -\kappa\log \mathcal Z_f = v_{\lambda,\mu}(f;\m{q}).$
\end{proof}

\subsubsection{Proof of Proposition \ref{prop:growth_misspec_fails}}
Given a model $p\in\Delta({Y})$, the growth-optimal mixed strategy problem in \citet[][eq. (7)]{growth22} is
\begin{equation}\label{eq:growth22_problem}
V_0(p):=\max_{\alpha\in\Delta(A)} \E_p\Big[\log\Big(\E_{\alpha}\big[e^{u(a,{y})}\big]\Big)\Big]
=\max_{\alpha\in\Delta(A)} \E_p\Big[\log\Big(\sum_{a\in A}\alpha(a)e^{u(a,{y})}\Big)\Big].
\end{equation}
Given an optimizer $\alpha_0^*(p)$ of \eqref{eq:growth22_problem}, define the sampled choice rule
\begin{equation}\label{eq:sampled_choice_rule}
q_0^*(a\mid {y}):=\frac{\alpha_0^*(p)(a)e^{u(a,{y})}}{\sum_{b\in A}\alpha_0^*(p)(b)e^{u(b,{y})}},
\qquad
q_0^*(a):=\sum_{{y}\in{Y}}p({y})q_0^*(a\mid {y}),
\end{equation}
and, whenever $q_0^*(a)>0$, the associated posterior
\begin{equation}\label{eq:posterior_dollar}
q_0^*({y}\mid a):=\frac{p({y})q_0^*(a\mid{y})}{q_0^*(a)}.
\end{equation}
The regularity condition used in \citet[][Proposition 4]{growth22} is stated below.

\begin{assumption}[Regularity Condition 2]\label{ass:rc2}
 \normalfont   The misspecified model $p'\in\Delta({Y})$ lies in the convex hull of $\{q_0^*(\cdot\mid a): a\in A, q_0^*(a)>0\}$.
\end{assumption}
\begin{lemma}\label{lem:growth_closed_form}
For every $\mu\in[0,1)$, $\alpha\in\Delta(A)$ and ${y}\in{Y}$,
\begin{equation}\label{eq:growth_closed_form}
G_\mu(\alpha,{y})
=
(1-\mu)\log\Big(\sum_{a\in A}\alpha(a)^{\frac{1}{1-\mu}}e^{\frac{u(a,{y})}{1-\mu}}\Big).
\end{equation}
In particular, $G_0(\alpha,{y})=\log\big(\sum_{a\in A}\alpha(a)e^{u(a,{y})}\big)$, so $V_0(\cdot)$  coincides with \eqref{eq:growth22_problem}.
\end{lemma}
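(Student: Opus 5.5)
The plan is to read \eqref{eq:complex_growth_aggregator} as an instance of the static criterion \eqref{eq:static_criterion}, with roles relabelled. The "outcome space" is now $A$, the reference model is $\alpha\in\Delta(A)$, and the per-outcome utility is $a\mapsto -u(a,y)$ for the fixed state $y$. The parameters are $\lambda=1$ and complexity parameter $\mu\in[0,1)$. Then $\kappa=1-\mu>0$ and $\beta=1/(1-\mu)$, so we are in the regime of Lemma \ref{lem:closed_form}.

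First, I will deal with the fact that $\alpha$ need not have full support, so Assumption \ref{ass:full_support} may fail. Any $q$ with $q\not\ll\alpha$ has $R(q\Vert\alpha)=+\infty$, so, as in the taxation footnote and the proof of Theorem \ref{prop:arc_representation}.1, the minimization may be restricted to $\Delta(\operatorname{supp}\alpha)$. On that support, $\alpha$ has full support and Lemma \ref{lem:closed_form} applies verbatim. The unique minimizer is
\[
\hat q(a)=\frac{e^{u(a,y)/\kappa}\,\alpha(a)^{\beta}}{\mathcal Z},\qquad \mathcal Z:=\sum_{b\in\operatorname{supp}\alpha} e^{u(b,y)/\kappa}\,\alpha(b)^{\beta}.
\]
Since $\beta>0$, the terms with $\alpha(b)=0$ contribute zero. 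Hence $\mathcal Z$ equals the full sum over $A$ that appears in \eqref{eq:growth_closed_form}.

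Second, I will compute the minimized value exactly as in the value identity \eqref{eq:homebias_value_identity} from the proof of Proposition \ref{prop:home_bias}. Expanding, the objective equals $\sum_a q(a)(-u(a,y)) + \kappa\sum_a q(a)\log q(a) - \sum_a q(a)\log\alpha(a)$. I substitute $\log\hat q(a)=u(a,y)/\kappa+\beta\log\alpha(a)-\log\mathcal Z$ and use $\kappa\beta=1=1/\lambda$. The payoff terms and the $\log\alpha$ terms then cancel, leaving the minimum equal to $-\kappa\log\mathcal Z$.

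Negating gives $G_\mu(\alpha,y)=(1-\mu)\log\mathcal Z$, which is \eqref{eq:growth_closed_form}. Setting $\mu=0$ gives $\beta=\kappa=1$, so $G_0(\alpha,y)=\log\sum_a\alpha(a)e^{u(a,y)}$. Taking $\E_p$ and maximizing over $\alpha$ then reproduces \eqref{eq:growth22_problem}.

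The only step needing care is the support restriction, in particular checking that the $0^\beta=0$ convention makes the formula over all of $A$ agree with the formula over $\operatorname{supp}\alpha$. The rest is the routine cancellation $\kappa\beta=1/\lambda$ already used in the paper.
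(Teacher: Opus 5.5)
Your proposal is correct and is essentially the paper's argument. The paper also combines $-R(q\Vert\alpha)-\mu H(q)$ into an entropy term with temperature $1-\mu$, absorbing $\log\alpha$ into the payoff, and then reads off the Gibbs maximizer $\propto \alpha(a)^{1/(1-\mu)}e^{u(a,y)/(1-\mu)}$ and its log-sum-exp value; this is exactly what you obtain by instantiating Lemma~\ref{lem:closed_form} with $\lambda=1$ and using $\kappa\beta=1$. Your explicit restriction to $\operatorname{supp}\alpha$, with the $0^\beta=0$ convention, is a small tidying-up that the paper leaves implicit (there $\log\alpha(a)=-\infty$ when $\alpha(a)=0$).
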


\begin{proof}[Proof of Lemma \ref{lem:growth_closed_form}]
Fix $\mu\in[0,1)$ and ${y}\in{Y}$. Since $G_\mu(\alpha,{y})$ is minus a minimization problem, we may write
$G_\mu(\alpha,{y})
=
\max_{q\in\Delta(A)}
\big\{
-\E_q[-u(a,{y})]
-
R(q\Vert \alpha)
-
\mu H(q)
\big\}.$
Let $s:=\frac{1}{1-\mu}>1$ and define $v_{y}(a):=u(a,{y})+\log\alpha(a)$. Then,
$$
-\E_q[-u(a,{y})]
-
R(q\Vert \alpha)
-
\mu H(q)
=
\sum_a q(a)v_{y}(a)-(1-\mu)\sum_a q(a)\log q(a).
$$
Thus, for fixed $\alpha,{y}$, the problem is
$\max_{q\in\Delta(A)}\big\{\sum_a q(a)v_{y}(a)-(1-\mu)\sum_a q(a)\log q(a)\big\}.$
The objective function is strictly concave in $q$ (since $1-\mu>0$) and the Lagrangian first-order conditions yield the unique maximizer
$q^*(a)\propto \exp(v_{y}(a)/(1-\mu))=\alpha(a)^{s}e^{s u(a,{y})}$.
Substituting this optimizer back gives the standard log-sum-exp value
$$
G_\mu(\alpha,{y})=(1-\mu)\log\Big(\sum_a \exp \Big(\frac{v_{y}(a)}{1-\mu}\Big)\Big)
=(1-\mu)\log\Big(\sum_a \alpha(a)^{\frac{1}{1-\mu}}e^{\frac{u(a,{y})}{1-\mu}}\Big),
$$
which is \eqref{eq:growth_closed_form}. Setting $\mu=0$ yields $G_0(\alpha,{y})=\log(\sum_a \alpha(a)e^{u(a,{y})})$ and hence $V_0(\cdot)$ coincides with \eqref{eq:growth22_problem}.
\end{proof}

\begin{lemma}\label{lem:growth_contraction}
Fix $\mu\in(0,1)$, $\alpha\in\Delta(A)$ and ${y}\in{Y}$. Let $x_a:=\alpha(a)e^{u(a,{y})}$ for $a\in A$. Then
\begin{equation*}
G_\mu(\alpha,{y})=\log\Big(\big(\sum_{a\in A}x_a^{\frac{1}{1-\mu}}\big)^{1-\mu}\Big)
 \le 
\log\Big(\sum_{a\in A}x_a\Big)=G_0(\alpha,{y}),
\end{equation*}
with strict inequality whenever $x_a>0$ for at least two actions.
\end{lemma}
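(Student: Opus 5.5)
The identity for $G_\mu$ follows directly from the closed form in Lemma \ref{lem:growth_closed_form}. Setting $s:=\frac{1}{1-\mu}$ and $x_a:=\alpha(a)e^{u(a,y)}\ge 0$, we have $\alpha(a)^{s}e^{s u(a,y)}=x_a^{s}$. Hence
\[
G_\mu(\alpha,y)=(1-\mu)\log\Big(\sum_a x_a^{s}\Big)=\log\Big(\big(\sum_a x_a^{s}\big)^{1-\mu}\Big),
\]
and the case $\mu=0$ gives $G_0(\alpha,y)=\log\sum_a x_a$. So the whole statement reduces to a norm comparison between $\ell^{s}$ and $\ell^{1}$ for a nonnegative vector with $s>1$. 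Since $\log$ is strictly increasing, it suffices to show
\[
\Big(\sum_a x_a^{s}\Big)^{1/s}\le \sum_a x_a .
\]

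I would prove this by normalization. Let $S:=\sum_a x_a$. Since $\alpha\in\Delta(A)$ and $e^{u}>0$, some $x_a$ is positive, so $S>0$. Put $w_a:=x_a/S\in[0,1]$, so that $\sum_a w_a=1$. Because $s>1$ and $0\le w_a\le 1$, we have $w_a^{s}\le w_a$, with equality exactly when $w_a\in\{0,1\}$. Summing gives $\sum_a w_a^{s}\le 1$. Multiplying by $S^{s}$ and taking $s$-th roots yields the desired inequality.

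For the strict part, suppose $x_a>0$ for at least two actions. Then no $w_a$ equals $1$, and at least one $w_a$ lies in $(0,1)$. So $\sum_a w_a^{s}<1$, and the inequality is strict. Applying the strictly increasing map $\log$ preserves the strict inequality.

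The only delicate point is bookkeeping: raising $\sum_a x_a^{s}$ to the power $1-\mu=1/s$ and comparing inside the $\log$. Actions with $\alpha(a)=0$ cause no trouble, since they give $x_a=0$ and contribute nothing to either sum.
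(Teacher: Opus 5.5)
Your proof is correct. It differs from the paper's only in how you prove the key inequality $\bigl(\sum_a x_a^{s}\bigr)^{1/s}\le\sum_a x_a$; the reduction to it via Lemma \ref{lem:growth_closed_form} with $s=1/(1-\mu)$ is the same.

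The paper argues pairwise. For $x_i,x_j>0$ it sets $t=x_j/x_i$ and shows $(1+t^{s})^{1/s}<1+t$, using the fact that $(1+t)^{s}-t^{s}$ is increasing and equals $1$ at $t=0$. It then extends this to all of $A$ by an induction on the number of positive components, which it only sketches. The inductive step has to merge the block $\bigl(\sum_{a\in B}x_a^{s}\bigr)^{1/s}$ with the next component, and the weak case with at most one positive $x_a$ is left implicit.

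You instead use homogeneity. Normalizing by $S>0$ puts $w$ in the simplex, where $w_a^{s}\le w_a$ holds term by term. This is shorter and handles the weak and strict cases in one step, with no induction. It also gives the exact equality condition: equality holds iff at most one $x_a$ is positive, i.e. (since $e^{u}>0$) iff $\alpha$ is degenerate. Both directions of that characterization are invoked in the proof of Proposition \ref{prop:growth_misspec_fails}.

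The paper's route has the minor merit of resting on a self-contained two-variable calculus fact, but yours is the cleaner argument.
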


\begin{proof}[Proof of Lemma \ref{lem:growth_contraction}]
Let $s:=\frac{1}{1-\mu}>1$. By Lemma \ref{lem:growth_closed_form},
$G_\mu(\alpha,{y})=\frac{1}{s}\log(\sum_a x_a^s)=\log((\sum_a x_a^s)^{1/s})$ and $G_0(\alpha,{y})=\log(\sum_a x_a)$.
For nonnegative $(x_a)_a$, the $\ell^s$-norm is dominated by the $\ell^1$-norm:
$(\sum_a x_a^s)^{1/s}\le \sum_a x_a$. To verify this when at least two $x_a$ are positive, pick any two indices with $x_i,x_j>0$ and set $t:=x_j/x_i>0$. Then,
$\big(\frac{x_i^s+x_j^s}{x_i^s}\big)^{1/s}=(1+t^s)^{1/s}
 < 1+t=\frac{x_i+x_j}{x_i},$
where the strict inequality follows from $1+t^s<(1+t)^s$ for $t>0$ and $s>1$ since $f(t):=(1+t)^s-t^s$ satisfies $f'(t)=s[(1+t)^{s-1}-t^{s-1}]>0$ and $f(0)=1$. Multiplying by $x_i$ gives
$(x_i^s+x_j^s)^{1/s}<x_i+x_j.$
Repeatedly applying this two-variable inequality to the remaining nonnegative components
(equivalently, by induction on the number of positive components) yields
$\big(\sum_{a\in A} x_a^s\big)^{1/s}<\sum_{a\in A} x_a$
whenever at least two components are positive.
\end{proof}

\begin{proof}[Proof of Proposition \ref{prop:growth_misspec_fails}]
We construct an explicit two-state example satisfying Assumption \ref{ass:rc2} and show that $L_\mu(p,p')<L_0(p,p')$ for every $\mu\in(0,1)$.

\smallskip\noindent\emph{Step 1: primitives and Assumption \ref{ass:rc2}.}
Let ${Y}=\{{y}_1,{y}_2\}$ and $A=\{1,2\}$. Let the gross returns be $e^{u(1,{y}_1)}=3,$ $ e^{u(2,{y}_1)}=1,$ $
e^{u(1,{y}_2)}=1,$ $ e^{u(2,{y}_2)}=4,$
and let the true prior be $p({y}_1)=p({y}_2)=\frac12$.
For $\mu=0$, the benchmark objective \eqref{eq:growth22_problem} becomes, writing $\alpha:=\alpha(1)\in[0,1]$,
$$
\E_p\log\Big(\sum_{a}\alpha(a)e^{u(a,{y})}\Big)
=\frac12\log\big(1+2\alpha\big)+\frac12\log\big(4-3\alpha\big).
$$
This is strictly concave on $(0,1)$ since
$$
\frac{d^2}{d\alpha^2}\Big[\tfrac12\log(1+2\alpha)+\tfrac12\log(4-3\alpha)\Big]
=-\frac{2}{(1+2\alpha)^2}-\frac{9}{2(4-3\alpha)^2}<0,
$$
hence it has a unique maximizer $\alpha_0^*(p)(1)=\frac{5}{12}\in(0,1)$.
Using \eqref{eq:sampled_choice_rule}-\eqref{eq:posterior_dollar} and $p({y}_1)=p({y}_2)=\frac12$,
$q_0^*(1\mid {y}_1)=\frac{\frac{5}{12}\cdot 3}{\frac{5}{12}\cdot 3+\frac{7}{12}\cdot 1}=\frac{15}{22}$ and $q_0^*(1\mid {y}_2)=\frac{\frac{5}{12}\cdot 1}{\frac{5}{12}\cdot 1+\frac{7}{12}\cdot 4}=\frac{5}{33}.$
Therefore, by Bayes rule \eqref{eq:posterior_dollar},
$q_0^*({y}_1\mid 1)
=\frac{\frac12\cdot \frac{15}{22}}{\frac12\cdot \frac{15}{22}+\frac12\cdot \frac{5}{33}}
=\frac{\frac{15}{22}}{\frac{15}{22}+\frac{5}{33}}
=\frac{9}{11}.$
Define the misspecified prior $p'\in\Delta({Y})$ by $p'({y}_1)=\frac{9}{11}$ and $p'({y}_2)=\frac{2}{11}$. Then, $p'=q_0^*(\cdot\mid 1)$ is one of the posteriors induced by the benchmark optimizer under $p$, hence $p'$ lies in the convex hull of $\{q_0^*(\cdot\mid a):q_0^*(a)>0\}$. Thus, Assumption \ref{ass:rc2} holds.

\smallskip\noindent\emph{Step 2: benchmark loss equals Kullback-Leibler divergence.}
By Lemma \ref{lem:growth_closed_form}, $V_0(\cdot)$ coincides with the benchmark problem in \eqref{eq:growth22_problem} or \citet[][eq. (7)]{growth22}. Since Assumption \ref{ass:rc2} holds for $(p,p')$ by Step 1, \citet[][Proposition 4]{growth22} applies and yields
\begin{equation}\label{eq:bench_KL}
L_0(p,p')=R(p\Vert p')=\frac12\log\frac{1/2}{9/11}+\frac12\log\frac{1/2}{2/11}=\frac12\log\frac{121}{72}.
\end{equation}

\smallskip\noindent\emph{Step 3: for $\mu>0$, loss is strictly smaller than benchmark.}
Fix $\mu\in(0,1)$. By Lemma \ref{lem:growth_contraction}, for every mixed strategy $\alpha$ and both states ${y}_1,{y}_2$,
$G_\mu(\alpha,{y})\le G_0(\alpha,{y})$, with equality only for degenerate mixed strategies (since all returns are strictly positive).
Taking expectations under $p$ and maximizing over $\alpha$ gives
$V_\mu(p)=\max_{\alpha}\E_p[G_\mu(\alpha,{y})]
 \le 
\max_{\alpha}\E_p[G_0(\alpha,{y})]
=V_0(p).$
 We claim that this inequality is strict. Toward contradiction, suppose $V_\mu(p)=V_0(p)$ held. Let $\widehat\alpha\in\arg\max_{\alpha}\E_p[G_\mu(\alpha,{y})]$ be an optimizer at $\mu$. Then,
$V_0(p)
=V_\mu(p)
=\E_p[G_\mu(\widehat\alpha,{y})]
\le \E_p[G_0(\widehat\alpha,{y})]
\le V_0(p),$
so all inequalities bind. Binding of the first inequality forces $G_\mu(\widehat\alpha,{y})=G_0(\widehat\alpha,{y})$ for both ${y}_1$ and ${y}_2$, which, by Lemma \ref{lem:growth_contraction} and strict positivity of returns for both actions in both states, implies that $\widehat\alpha$ is degenerate. However, for any degenerate mixed strategy $\alpha=\delta_a$,
Lemma \ref{lem:growth_closed_form} implies $G_\mu(\delta_a,{y})=u(a,{y})=G_0(\delta_a,{y})$, so $\E_p[G_0(\delta_1,{y})]=\tfrac12\log 3$ and 
$\E_p[G_0(\delta_2,{y})]=\tfrac12\log 4.$
In contrast, the feasible diversified mixed strategy $\alpha(1)=\alpha(2)=\frac12$ delivers
$$
\E_p[G_0(\alpha,{y})]
=\frac12\log\Big(\frac{3+1}{2}\Big)+\frac12\log\Big(\frac{1+4}{2}\Big)
=\frac12\log 2+\frac12\log\frac52
=\frac12\log 5
>\frac12\log 4,
$$
so no degenerate mixed strategy can maximize $V_0(p)$. This contradicts the conclusion that $\widehat\alpha$ must be degenerate. Hence, $V_\mu(p)<V_0(p)$. Next, we show that the optimal mixed strategy under $p'$ is degenerate for all $\mu\in(0,1)$. Since $p'({y}_1)=\frac{9}{11}$ is sufficiently tilted toward ${y}_1$, the benchmark optimizer at $\mu=0$ is $\alpha_0^*(p')=\delta_1$. To see this, given $p'=(9/11,2/11)$, the objective is $\frac{9}{11}\log(1+2\alpha)+\frac{2}{11}\log(4-3\alpha)$, which is increasing in $\alpha\in[0,1]$ and strictly increasing for $\alpha<1$, so $\alpha=1$ is the unique maximizer. Now fix any $\mu\in(0,1)$. By Lemma \ref{lem:growth_contraction},
$\E_{p'}[G_\mu(\alpha,{y})]\le \E_{p'}[G_0(\alpha,{y})]$ for all $\alpha$, while $\E_{p'}[G_\mu(\delta_1,{y})]=\E_{p'}[G_0(\delta_1,{y})]$ because $\delta_1$ is degenerate. Therefore, $\E_{p'}[G_\mu(\alpha,{y})]
\le \E_{p'}[G_0(\alpha,{y})]
\le \E_{p'}[G_0(\delta_1,{y})]
=\E_{p'}[G_\mu(\delta_1,{y})],$
so $\alpha_\mu^*(p')=\delta_1$ is optimal for every $\mu\in(0,1)$. Combining, for every $\mu\in(0,1)$, we have
$$
L_\mu(p,p')
=\E_p[G_\mu(\alpha_\mu^*(p),{y})]-\E_p[G_\mu(\delta_1,{y})]
=V_\mu(p)-\tfrac12\log 3
 < 
V_0(p)-\tfrac12\log 3
=L_0(p,p').
$$
Using \eqref{eq:bench_KL}, this implies $L_\mu(p,p')<R(p\Vert p')$ for every $\mu\in(0,1)$. 
\end{proof}

\end{document}